\documentclass[11pt, twoside]{article} % 'twoside' enables distinct odd/even pages

\usepackage[a4paper, margin=.7in]{geometry} 

\usepackage{fancyhdr}

\newcommand{\papercitation}{T\'{o}th et al. (2026)}
\newcommand{\shorttitle}{WFOMC over Ordered Domains}

\makeatletter
\let\@fnsymbol\@arabic
\makeatother

\newcommand{\singleauthor}[3]{%
    \textsc{#1}, \textit{#2}\thanks{\texttt{#3}} \\[1ex]
}

\title{Weighted First-Order Model Counting over Ordered Domains}
\author{
    \singleauthor{Jan T\'oth}{Czech Technical University in Prague}{tothjan2@fel.cvut.cz}
    \singleauthor{Qipeng Kuang}{The University of Hong Kong}{kuangqipeng@connect.hku.hk}
    \singleauthor{Kuncheng Zou}{Jilin University}{zoukc25@mails.jlu.edu.cn}
    \singleauthor{V\'aclav K\r{u}la}{Czech Technical University in Prague}{kulavacl@fel.cvut.cz}
    \singleauthor{Yuyi Wang}{CRRC Zhuzhou Institute \& Tengen Intelligence Institute}{yuyiwang920@gmail.com}
    \singleauthor{Yuanhong Wang}{Jilin University}{lucienwang@jlu.edu.cn}
    \singleauthor{Ond\v{r}ej Ku\v{z}elka}{Czech Technical University in Prague}{ondrej.kuzelka@fel.cvut.cz}
}   
\date{}

\usepackage{natbib}
\setcitestyle{authoryear,round}
\usepackage[T1]{fontenc}
\usepackage{lmodern}

\usepackage{amssymb}

\providecommand{\Description}[1]{}
\providecommand{\rev}[1]{#1}

\usepackage{amsmath}
\usepackage{amsthm}
\usepackage{bm}
\usepackage{braket}
\usepackage[basic]{complexity}
\usepackage{nicefrac}               % compact symbols for 1/2, etc.
\usepackage{thm-restate}
\usepackage{xspace}

\usepackage[linesnumbered,ruled,vlined]{algorithm2e}

\usepackage{booktabs}               % professional-quality tables
\usepackage{braket}
\usepackage{multirow}
\usepackage{subcaption}
\usepackage{tabulary}

\usepackage{graphicx}
\usepackage{tikz}
\usetikzlibrary{calc, decorations.pathreplacing, arrows.meta, positioning, patterns}
\usepackage{pgffor}

\usepackage{hyperref}
\usepackage[capitalise]{cleveref}
\usepackage{soul}
\usepackage[normalem]{ulem}
\usepackage[dvipsnames]{xcolor}

\usepackage{lipsum} % for dummy text

\newtheorem{theorem}{Theorem}
\newtheorem{lemma}{Lemma}
\newtheorem{corollary}{Corollary}

\newtheorem{definition}{Definition}
\newtheorem{remark}{Remark}
\newtheorem{example}{Example}

\newcommand{\fomc}{FOMC\xspace}
\newcommand{\symbfomc}{\ensuremath{\mathsf{FOMC}}}
\newcommand{\wfomc}{WFOMC\xspace}
\newcommand{\symbwfomc}{\ensuremath{\mathsf{WFOMC}}}
\newcommand{\wmc}{WMC\xspace}
\newcommand{\symbwmc}{\ensuremath{\mathsf{WMC}}}

\newcommand{\fomodels}[2]{\ensuremath{\mathcal{M}_{#1, #2}}}
\newcommand{\dom}{\Delta}
\newcommand{\negw}{\overline{w}}
\newcommand{\weights}{w, \negw}
\newcommand{\pred}[1]{\mathsf{pred}\left( #1 \right)}

\newcommand{\fo}{\ensuremath{\mathbf{FO}}}
\newcommand{\fotwo}{\ensuremath{\mathbf{FO}^2}\xspace}
\newcommand{\ctwo}{\ensuremath{\mathbf{C}^2}\xspace}

\newcommand{\fotwoformula}{\psi}

\newcommand{\nat}{\mathbb{N}}
\newcommand{\real}{\mathbb{R}}

\newcommand{\vecdelta}{\bm{\delta}}

\newcommand{\hb}{HB\space}

\newcommand{\preds}[1]{\ensuremath{\mathcal{P}_{#1}}}

\newcommand{\loaxiom}{\mathcal{L}}
\newcommand{\lonat}{\mathcal{L}_\nat}
\newcommand{\gridaxiom}{\mathcal{G}}
\newcommand{\succaxiom}{\mathcal{S}}
\newcommand{\lopred}{\ensuremath{\leq}}
\newcommand{\acyclicityaxiom}{\mathcal{A}}

\newcommand{\rlo}{\widehat{r}}
\newcommand{\wlo}{\widehat{w}}
\newcommand{\rpred}{\widetilde{r}}
\newcommand{\evidence}[1]{\ensuremath{\textup{Evidence}[#1]}}
\newcommand\tiling[1]{\mathsf{11NMCT}_{(#1)}}
\newcommand{\segment}[2]{{#1}\rightsquigarrow{#2}}

\begin{document}

\maketitle

\begin{abstract}
    The Weighted First-Order Model Counting Problem (WFOMC) asks to compute the weighted sum of models of a given first-order logical sentence over a given domain.
    WFOMC is a fundamental problem in the area of statistical relational learning, as many inference problems in that field are reducible to it.
    Its applications extend further, including a general framework for analyzing and automatically solving problems in enumerative combinatorics, as well as contributions to graph polynomials.
    A long stream of research has investigated the tractability boundary of WFOMC with respect to the domain size.
    It is known that computing WFOMC for the three-variable fragment of first-order logic is \class{\#P_1}-hard, while polynomial-time algorithms exist for computing WFOMC for the two-variable fragment and its extensions by cardinality constraints, counting quantifiers, and also certain axioms.
    In this work, we explore the possibility of computing WFOMC in polynomial time over linearly ordered domains.
    Having such a capability would allow us to model and reason in a tractable manner across various inference scenarios and combinatorial problems involving sequences.
    Since encoding the linear order using standard first-order logic would require three variables, which would immediately negate our polynomial-time aspirations, we add a linear order axiom to our first-order language instead.
    The linear order axiom forces one of the predicates to impose a linear (total) ordering on the domain elements in each model, and this is a modeling construct already studied in formal logic.
    We start with a positive result, proving that WFOMC with the linear order axiom can be solved in time polynomial in the domain size.
    We proceed by extending the result to ordered domains with access to the successor relations.
    While that is easy to show when we explicitly define the successors in terms of the linear order, we demonstrate an alternative approach with successors being defined implicitly (i.e., the successor relations are considered a part of the linear order axiom), which exhibits much better performance on all tested instances, sometimes even offering exponential runtime improvements.
    We conclude by analyzing the scenario with two distinct linear orders.
    Unfortunately, we show that WFOMC over the two-variable fragment with two linear order relations is \class{\#P_1}-hard.
    Nevertheless, we also develop a polynomial-time algorithm for WFOMC with one linear order relation and a successor relation of another linear order, pushing the intractability barrier further and leaving an open question of how close to the fully-fledged second linear order we can get.
\end{abstract}

\newpage
\section{Introduction}
We study the Weighted First-Order Model Counting Problem (WFOMC), which asks to compute the weighted sum of models of a given first-order logic sentence over a specified domain, along with a pair of weighting functions that assign a weight to each model.
WFOMC serves as a fundamental problem in statistical relational learning (SRL), where we often need to perform inference over models defined both in terms of first-order theories and probabilities \citep{getoor07:srl-intro}.
For computational reasons, we try to perform the inference without grounding the first-order sentences involved.
Techniques attempting to avoid the grounding procedure are commonly referred to as \emph{lifted inference} \citep{broecketal21:lifting-intro}.
The significance of WFOMC stems from the discovery that it can be used to perform lifted inference over models as diverse as Markov Logic Networks \citep{richardson06:mlns}, parfactor graphs \citep{poole03:fove}, probabilistic logic programs \citep{deraedt15:plp}, and probabilistic databases \citep{cavallo87:pdb}.
For example, a social network encoding the smoking habit transition over individuals could be represented by a sentence $\phi = \forall x\forall y: sm(x) \land fr(x,y) \to sm(y)$, and the WFOMC of $\phi\land sm(c)$ for any domain element $c$ would give us the (unnormalized) probability of $c$ being a smoker.
Recent works also reveal the potential of WFOMC to contribute to enumerative combinatorics by providing a general framework for encoding counting problems or integer sequences, and computing graph polynomials on specific graphs \citep{svatos23:fluffy,kuangetal24:wfomc-polynomials}.

However, it is unlikely to find an algorithm computing WFOMC in time polynomial in the size of the sentence since WFOMC encodes the well-known \class{\#P}-complete problem \#SAT, even if the symbols of relations are fixed \citep{beame15:wfomc-fo3}.
Consequently, most studies focus on the complexity in terms of the domain size, which is analogous to the concept of \emph{data complexity} in database theory \citep{vardi82:datacomplexity}.
The logical fragments (subsets of full first-order logic) enjoying polynomial time complexity in the domain size are called \emph{domain-liftable} \citep{broeck11:wfomc-ufo2}.

Previous results have shown that the two-variable fragment (\fotwo) is domain-liftable \citep{broeck11:wfomc-ufo2,broecketal14:wfomc-skolem}, while the three-variable fragment ($\fo^3$) is not, unless \class{\#P_1 \subseteq FP} \citep{beame15:wfomc-fo3}.
Here, \class{\#P_1} defined by \citet{valiant79:counting-complexity} is the class of counting problems whose input is in unary.
The work of \citet{kuzelka21:wfomc-c2} extended the domain-liftability to the fragment of \ctwo, i.e., the fragment of \fotwo extended by counting quantifiers and possibly cardinality constraints.

To further advance domain-liftability, recent studies have focused on augmenting \ctwo with \emph{additional axioms} which may not necessarily be finitely expressible in first-order logic.
Starting from the result in \citet{kuusisto18:function-axiom} showing that \fotwo remains domain-liftable with the addition of one functionality axiom,%
\footnote{Such a logical fragment is actually a subset of \ctwo; however, domain-liftability of the entire \ctwo was not yet established at the time.}
 subsequent results have identified various tractable axioms such as the tree axiom \citep{bremen23:tree-axiom}, the connectedness axiom \citep{malhotra23:wfomc-axioms}, and the acyclicity axiom \citep{malhotra23:wfomc-axioms}.
\citet{kuangetal24:wfomc-polynomials} later proposed a general approach to prove domain-liftability of axioms expressible by graph polynomials (e.g., the bipartite axiom and the strong connectedness axiom) and even their combinations (e.g., the combination of the acyclicity and the connectedness axioms on a single relation).

Note that in many applications of lifted inference, the domain of interest naturally follows a total order, such as time series data or sequences of events~\citep{le20:probabilistic-time-series,vlasselaer14:dynamic-relational-models}, which leads to a need for tractable inference over ordered domains.
For instance, consider the sentence $\forall x\forall y: sm(x) \land older(x,y) \to sm(y)$ which aims to capture that the smoking habits of an older person may influence a younger one.
The relation $older$ essentially orders the domain of people we work with by age.

In this work, we continue the study of domain-liftability, focusing on scenarios with ordered domains.
We start by extending \ctwo with the \emph{linear order axiom} (i.e., we force one distinguished binary relation in our language to define a total ordering of the domain).
Note that studying such a logical fragment is not beneficial only to the domain of lifted inference, since it has also received attention from logicians \citep{charatonik15:logic-c2+lo} and, recently, the graph neural networks community as well \citep{hauke25:gnns-c2+lo}.
We prove \ctwo with the linear order axiom to be another domain-liftable fragment by providing a polynomial-time algorithm for computing WFOMC over it.
However, as it will turn out, accessing a pair of consecutive elements or a pair of elements with $k$ elements in between (both of which are reasonable modeling constructs on ordered domains, e.g., when modeling hidden Markov models) may still incur a considerable computational overhead.
Thus, we continue by presenting an \emph{extended linear order axiom} specifying not only the linear order relation itself but also its successor relations.
We also devise a specialized algorithm that significantly reduces the overhead of accessing the predecessors or successors of a domain element.
While the immediate successor relation can be useful in many applications, the $k$-th successor relation is less common.
However, one potential application is to encode a $k \times n$ \emph{grid} structure within the WFOMC framework.
As a consequence, many probabilistic inference tasks on grids, e.g., the 2-dimensional Ising model with constant interaction strength, which have been shown to be equivalent to WFOMC~\citep{broecketal21:lifting-intro}, can be proved to be tractable with our algorithm if the grid's dimension $k$ is bounded by a constant.

Finally, we address whether two linear order axioms can also be domain-liftable.
That is a natural question arising from results on decidability of the finite satisfiability problem, which state that \ctwo with one linear order is decidable while \ctwo with two linear orders is not \citep{charatonik15:logic-c2+lo}.
However, such a question also goes beyond most of the research in the area of lifted inference to date.
All existing results only allow axioms on a single distinguished relation.
The techniques involved in the works outlined above, such as the Matrix-Tree Theorem for the tree axiom \citep{bremen23:tree-axiom}, the recursion formulas for the connectedness axiom and the acyclicity axiom \citep{malhotra23:wfomc-axioms}, and polynomials for WFOMC with axioms \citep{kuangetal24:wfomc-polynomials}, cannot be generalized to axioms on multiple relations trivially.
In investigating the domain-liftability of two linear orders, we lay the foundations for answering questions about the boundary of domain-liftability for WFOMC in the presence of axioms on two distinguished binary relations.

This article is an extended version of two previously published conference papers.
Domain-liftability of \ctwo with the linear order axiom was first published as ``Lifted Inference with Linear Order Axiom, AAAI'23'' and the efficient algorithm for accessing successor relations was published as ``Faster Lifting for Ordered Domains with Predecessor Relations, ECAI'25''.
We extend those contributions by analyzing domain-liftability in the presence of two linear order axioms, as well as their weakened (successor axiom) and strengthened (acyclicity axiom) counterparts.
The new contributions are available as a standalone electronic preprint at \url{https://arxiv.org/abs/2508.11515}.

\subsection{Our Contributions}

We consider WFOMC over \ctwo with linear order axioms and successor axioms on one or two distinguished binary relations, and obtain both positive and negative results on their computational complexity in terms of domain-liftability.

\begin{itemize}
    \item First, {\bf we prove that the fragment of \ctwo with a linear order axiom on a single binary relation is domain-liftable}.
    We present an algorithm computing WFOMC over the aforementioned logical fragment in time polynomial in the domain size.

    \item Second, we note that rather than the domain ordering itself, successor relationships defined implicitly by the ordering may be more important for modeling purposes.
    However, accessing the immediate successors, or even the $k$-th successors, using only the linear order relation, while domain-liftable, may incur excessive computational overhead.
    {\bf We devise another algorithm which natively supports access to a domain element's immediate successor or even the $k$-th successor, where $k$ is a constant with respect to the domain size}.

    \item Third, {\bf we prove that WFOMC over \fotwo with two linear order relations is \class{\#P_1}-hard}.
    We obtain the hardness in three steps.
    First, we construct a variant of the tiling problem, and show that it is \class{\#P_1}-hard to compute.
    Then, we introduce another (intermediate) axiom, specifically the \emph{grid axiom}, which encodes a general grid.
    We then encode WFOMC with the grid axiom as WFOMC with two linear order axioms.
    Since a linear order axiom can in fact be encoded using the acyclicity axiom,%
    \footnote{We note that while the domain-liftability of the acyclicity axiom (which subsumes the linear order axiom) has been established in \citet{malhotra23:wfomc-axioms}, the results on linear order presented here predate that development and offer a direct, specialized algorithmic approach.}
    requiring a relation to represent a directed acyclic graph, we also prove that WFOMC for \fotwo{} with two acyclic relations is \class{\#P_1}-hard.
    Clearly, since $\fotwo \subset \ctwo$, the negative results also hold in the presence of counting quantifiers.
    
    \item Last, {\bf we prove that \ctwo with a linear order relation and a successor relation (of another linear order) is domain-liftable}.
    We present yet another algorithm solving WFOMC over such a fragment in time polynomial in the domain size.
    Combining that with our first two positive results also implies domain-liftability of \ctwo with a linear order relation, its successor relations, and another successor relation, possibly with cardinality constraints.
\end{itemize}

Our theoretical results indicate the boundary of domain-liftability for WFOMC with linear order axioms.
From the perspective of the number of linear order axioms, our results clearly show that we can achieve domain-liftability with one linear order but not with two.
From the perspective of the power of linear order axioms, our results show that the intractability of two linear order axioms arises from the extra information obtained somewhere beyond the successor relation of the second linear order.

\subsection{Related Work}
Our work builds on a long stream of research in the area of lifted inference, investigating the tractability of WFOMC.
Primarily, we base our work on the results of \citet{broeck11:wfomc-ufo2,broecketal14:wfomc-skolem,kuzelka21:wfomc-c2} establishing the domain-liftability of \ctwo and \citet{beame15:wfomc-fo3} providing a hardness result for $\fo^3$.
Apart from the complexity results themselves, we also build all of our algorithms on the same principles as the so-called \emph{domain recursion rule} \citep{broeck11:wfomc-ufo2,kazemietal16:new-liftable-classes}, originally used to prove the domain-liftability of \fotwo.
We continue in a string of techniques based on augmenting the language of WFOMC with various axioms, such as the functionality axiom \citep{kuusisto18:function-axiom}, the tree axiom \citep{bremen23:tree-axiom}, the acyclicity and connectedness axioms \citep{malhotra23:wfomc-axioms}, or any axiom expressible using graph polynomials \citep{kuangetal24:wfomc-polynomials}.

Another closely related line of research is the exploration of decidability in the finite satisfiability problem for first-order logics, whose frontier also falls between \fotwo{} and $\fo^3$ due to the decidability of the former fragment \citep{graedel97:fo2-sat} and the undecidability of latter fragment \citep{moore62:fo3-unsat}.
It has also been proved that \fotwo{} with 8 linear order axioms \citep{otto01:unsat-fo2+8order}, \fotwo{} with three linear order axioms \citep{kieronski11-unsat-fo2+lo}, \fotwo{} with two linear order axioms and their successor axioms \citep{manuel10:unsat-fo2+2successor} and \ctwo{} with two linear order axioms \citep{charatonik15:logic-c2+lo} are all undecidable.
We also utilize the encoding of an undecidable tiling problem using a two-variable first-order sentence with multiple axioms that restrict the structures to be grid-like, i.e., a trick commonly used in finite satisfiability \citep{otto01:unsat-fo2+8order,kieronski11-unsat-fo2+lo,schwentick12:unsat-fo2+2order,kieronski05:unsat-fo2+3eq,kieronski12-unsat-fo2+2eq}

Last but not least, we also focus on the practical scalability of our polynomial-time algorithms.
In that regard, we relate to efforts to design efficient algorithms for computing \wfomc, such as the works of \citet{vanbremen21:fast-wfomc,meng24:recursive-wfomc,kidambi25:practical-fomc}.

\section{Background}
\label{sec:2-background}
Throughout this paper, we denote the set of natural numbers from 1 to $n$ as $[n]$.
We assume the set of all natural numbers $\nat$ to contain zero, and we also assume that $0^0=1$.

We use boldface letters such as $\bm{k}$ to denote a vector and $k_i$ to denote the $i$-th element of the given vector $\bm{k}$.
Since all of our vectors only have non-negative integer entries, we denote the sum of vector elements by $|\bm{k}|$, i.e., given a vector $\bm{k}$ of length $d$, we have $|\bm{k}| = \sum_{i=1}^d k_i$.
When it comes to operating on vectors, we use the standard element-wise addition and subtraction, i.e., $\bm{a} + \bm{b} = (a_1 + b_1, a_2 + b_2, \dots)$ and $\bm{a} - \bm{b} = (a_1 - b_1, a_2 - b_2, \dots)$.
We use $\vecdelta_i$ to denote the $i$-th basis vector, where the $i$-th element is $1$, and the rest are $0$.
For a non-negative integer vector $\bm{k} = (k_1, k_2,\ldots,k_d)$ with $|\bm{k}|=n$, we use $\binom{n}{\bm{k}}$ to denote its multinomial coefficient, i.e.,
\begin{align*}
   \binom{n}{\bm{k}} = \binom{n}{k_1, k_2,\ldots,k_d} = \frac{n!}{\prod_{i=1}^{d} k_i!}.
\end{align*}

As we will also study complexity, recall that we define complexity classes in terms of Turing machines, often abbreviated as TM \citep{turing36:tm,book74:languages-and-complexities}.
Let us review a few properties of TMs that are critical to our proofs; we leave the formal definitions of a TM and its computation to more specialized literature, e.g., \citet{hopcroft06:automata-theory}.
A TM is a theoretical model of computation that takes place on a \emph{tape} or possibly on several tapes.
Each tape consists of infinitely many discrete cells, each containing one symbol from either the input alphabet or a set of special \emph{tape symbols} which necessarily include the \emph{blank (empty) symbol}.
Without loss of generality, we may assume that the tapes are \emph{one-sided}, meaning that each tape has a starting cell and spans to infinity on only one side.
Each tape also has an associated \emph{head} determining which position on the tape is currently being read.
Apart from the currently read symbols, TM also has a finite set of states.
A transition mapping then determines the subsequent action of a TM at a particular state, given the specific symbols read by each head.
The transition specifies a new state of the TM, changes to each currently read cell, and the movement of each head.
If the transition mapping is a function, we call the TM \emph{deterministic}, otherwise we refer to the machine as \emph{nondeterministic}.
The time complexity of a TM is then the total number of transitions before the machine halts, expressed as a function of the input length.

The complexity of a particular problem $P$ may not be directly described in terms of a TM.
We may instead start with a problem $P'$ whose hardness has already been established.
Then, we \emph{reduce} $P'$ of known complexity to the problem $P$, i.e., we solve $P'$ using an oracle for $P$ (we also say that we \emph{encode} $P'$ using $P$).
If the reduction does not increase the complexity (usually, we aim for a polynomial-time reduction), we prove that $P$ is at least as hard as $P'$.

\subsection{First-Order Logic with Axioms}
We consider a function-free fragment of first-order logic.
The language is defined by a finite set of \emph{variables}, a finite set of \emph{constants}, and a finite set of \emph{predicates} (also called \emph{relations}).
If we have a predicate symbol $P$ of arity $k$, we also write $P/k$.
A formula $\alpha$ is then defined inductively as
\begin{align*}
    \alpha ::= P(x_1, \dots, x_k) \mid (\alpha \land \alpha) \mid (\alpha \lor \alpha) \mid (\neg \alpha) \mid (\alpha \to \alpha) \mid (\alpha \leftrightarrow \alpha) \mid (\exists x: \alpha) \mid (\forall x: \alpha),
\end{align*}
where $P/k$ is a predicate symbol and the \emph{terms} $x_1, \dots, x_k$ are logical variables or constants.%
\footnote{Whenever possible, without introducing ambiguities, we will drop additional parentheses in logical formulas for readability purposes.}

We call the formula in the form $P(x_1, \dots, x_k)$ an \emph{atom}, and an atom or its negation a \emph{literal}.
A formula is called a \emph{sentence} if quantifiers bind all variables in the formula.
A formula is called \emph{ground} if it contains no variables.

Given a first-order logic formula $\phi$, we write $\preds{\phi}$ for the set of all predicates that appear in $\phi$ (i.e., the \emph{vocabulary of $\phi$}).
To define truth, we adopt definitions from the so-called \emph{Herbrand logic} \citep{hinrichs06:herbrand-logic}.
The Herbrand universe $\dom$ is the set of constants.
The Herbrand base \hb is the set of all ground atoms that can be formed from the predicates in $\mathcal{P}_{\phi}$ and the constants in $\dom$.
A \emph{possible world} (also called an \emph{interpretation}) $\omega$ is an arbitrary subset of the \hb.
Elements of $\omega$ are interpreted to be true, and all other elements from the \hb are interpreted to be false.
The satisfaction relation $\omega \models \phi$ is then defined in the usual way.
If $\omega$ satisfies a sentence $\phi$, it is called a \emph{model} of $\phi$.
We denote $\fomodels{\phi}{n}$ as the set of all models of the sentence $\phi$ over the domain $[n]$.
Note that the interpretation of a binary relation $R$ can be regarded as a directed graph $G(R)$ where the domain is the vertex set and the true ground literals of $R$ are the edges.

In this work, we are especially interested in the following fragments of first-order sentences:
Sentences with at most two logical variables which form the so-called \fotwo fragment, and
\fotwo sentences with \emph{counting quantifiers} (i.e., $\exists^{=k}$, $\exists^{\le k}$ and $\exists^{\ge k}$) which are then called \ctwo sentences.
Counting quantifiers generalize the traditional existential quantifier, which only counts \emph{at least one}. The quantifier $\exists^{=k}$ restricts the number of assignments of the quantified variable satisfying the subsequent formula to exactly $k$; similarly for $\exists^{\le k}$ and $\exists^{\ge k}$.
Sentences from the fragments above may be further augmented with \emph{cardinality constraints}, which are expressions of the form $(|P| \bowtie k)$ where $P$ is a relation and $\bowtie$ is a comparison operator from $\{<, \le, =, \ge, >\}$.
Such constraints are imposed on the number of distinct positive ground literals of $P$ in a model.
Note that while cardinality constraints and counting quantifiers may be easily interchangeable in some cases, there are many scenarios where we benefit from having both these syntactic constructs.

\begin{example}
Consider the following formulas:
\begin{align*}
    \phi_1 &= (\forall x \exists y: A(x) \to B(x, y)) \land (|A| = 3),\\
    \phi_2 &= (\forall x: \neg E(x, x)) \land (\forall x \exists^{=1} y: E(x, y)).
\end{align*}

$\phi_1$ contains a cardinality constraint requiring exactly three atoms on the predicate $A$ to be present in each of its models, i.e., $\omega_1 = \{A(1), A(2), A(4), B(1, 1), B(2, 1), B(4, 1)\}$ is a model of $\phi_1$, while $\omega_1' = \{A(1), B(1, 1), B(2, 1)\}$ is not one of its models, even though $\omega_1'$ is still a model of $(\forall x \exists y: A(x) \to B(x, y)$).

On the other hand, $\phi_2$ contains a counting quantifier. The subformula $(\forall x \exists^{=1} y: E(x, y))$ essentially requires the $E/2$ relation to be a function, i.e., every element from the domain must be mapped to exactly one element. Together with the first part of the formula, we can interpret $\phi_2$ as a directed graph without loops such that each vertex has exactly one outgoing edge.

Note that the cardinality constraint $(|A| = 3)$ could be easily replaced by a formula with a counting quantifier, specifically $(\exists^{=3} x: A(x))$.
However, the counting subformula in $\phi_2$ does not permit such a simple transformation to cardinality constraints.
\end{example}

Any fragment we work with can additionally be extended by some \emph{axioms}.
An \emph{axiom} is a special constraint on a binary relation $R$ requiring the graph $G(R)$ to be a specific combinatorial structure.
Some axioms explored in other works include the \emph{tree axiom} \citep{bremen23:tree-axiom}, requiring $G(R)$ to be a directed rooted tree, or the acyclicity axiom \citep{malhotra23:wfomc-axioms,kuangetal24:wfomc-polynomials}, requiring the graph to be acyclic.
Formally speaking, when talking about axioms, we should say that there is ``an axiom on a distinguished binary relation'', e.g., there is an acyclicity axiom on the distinguished binary relation $R$ (which restricts $G(R)$ to be an acyclic graph), however, we often abbreviate that statement and talk about ``a constrained relation'', e.g., we have an acyclic relation $R$.

\subsection{Weighted-First Order Model Counting}
The \emph{first-order model counting} problem (\fomc) asks to compute the number of models of a first-order sentence over the domain of a given size.

\begin{definition}[First-Order Model Counting]
\label{def:fomc}
The \fomc of a first-order sentence $\Psi$ over a finite domain of size $n$ is defined as
\begin{equation*}
  \symbfomc(\Psi, n) = |\fomodels{\Psi}{n}|.
\end{equation*}
\end{definition}

Note that given \cref{def:fomc}, we are computing isomorphic models as distinct.

\begin{example}
    Consider the sentence $\phi = \exists x: P(x)$.
    
    For an arbitrary $n\in\nat$, any non-empty subset of $\hb = \{P(1), P(2), \ldots, P(n)\}$ is an element of the set \fomodels{\phi}{n}.
    Therefore, we have $$\symbfomc(\phi, n) = 2^n - 1.$$
    Specifically, $\omega_1 = \{P(1)\}$ and $\omega_2 = \{P(2)\}$ are distinct models, and each contributes 1 to the overall model count.
\end{example}

The \emph{weighted first-order model counting} problem (\wfomc) additionally expects a pair of weighting functions $(\weights)$, both mapping relations in $\mathcal{P}_{\Psi}$ to real weights.%
\footnote{In general, the weights $(\weights)$ could map the predicates to any ring including complex numbers, such as in \citet{kuzelka21:wfomc-c2}, and polynomials. Our assumption of real weights is mainly to achieve shorter descriptions.}
Given a set $L$ containing literals with predicates from $\mathcal{P}_{\Psi}$, the weight of $L$ is defined as
\begin{equation*}
W(L, \weights):= \prod_{l \in L_T}w(\pred{l}) \cdot \prod_{l \in L_F}\negw(\pred{l}),
\end{equation*}
where $L_T$ (resp. $L_F$) denotes the set of positive (resp. negative) literals in $L$, and $\pred{l}$ maps a literal $l$ to its corresponding predicate.
We omit the symbols $\weights$ and write $W(L)$ for brevity when the weighting functions are apparent from the context.

\begin{example}
\label{ex:model-weight}
Consider the sentence $\Gamma = \forall x \forall y: (S(x) \to R(x,y))$ and the weighting functions $w(S) = 3, w(R) = 2, \negw(S) = \negw(R) = 1$.

The weight of the literal set
$L = \{S(1), \lnot S(2), R(1,1), R(1,2), R(2,1), \lnot R(2,2)\}$
is
\begin{equation*}
  W(L) = w(S) \cdot \negw(S) \cdot \left(w(R)\right)^3  \cdot \negw(R) = 24.
\end{equation*}
\end{example}

\begin{definition}[Weighted First-Order Model Counting]
    \label{def:wfomc}
    The \wfomc of a first-order sentence $\Psi$ over a finite domain of size $n$ under weighting functions $(\weights)$ is defined as
    \begin{equation*}
        \wfomc(\Psi, n, \weights) := \sum_{\mu \in \fomodels{\Psi}{n}} W(\mu, \weights).
    \end{equation*}
\end{definition}

\begin{example}
Consider the sentence $\Gamma = \forall x \forall y: (S(x) \to R(x,y))$ along with the same weighting functions as in \cref{ex:model-weight} again.
Then, we have
\begin{equation*}
\wfomc(\Gamma, n, \weights) = (3 \cdot 2^n + 3^n)^n.
\end{equation*}

In fact, for each domain element $i \in [n]$, either $S(i)$ is true and $R(i,j)$ is true for all $j \in [n]$ which contributes the weight $3 \cdot 2^n$, or $S(i)$ is false and $R(i,j)$ is not limited which contributes the weight $(1+2)^n$. Multiplying by the contributed weight for every $i$ yields the value above.
\end{example}

As the weighting functions $(\weights)$ are defined in terms of relations, all positive ground literals of the same relation are assigned the same weight, and so are all the negative ground literals of the same relation.
Therefore, the \wfomc we consider is also referred to as \emph{symmetric} in other literature~\citep{beame15:wfomc-fo3}.
Furthermore, if the sentence $\Psi$ is ground, \wfomc becomes the well-known \emph{weighted model counting} problem~\citep{chavira08:wmc}, specifically,
$$\symbwmc(\Psi, \weights) = \sum_{\mu\models\Psi}W(\mu, \weights).$$
Finally, it is worth noting that \fomc can be viewed as a special case of \wfomc where the weighting functions map all predicates to $1$.

\subsection{Complexity of WFOMC}
In this work, we consider the complexity of \wfomc with respect to the domain size.
That is, when measuring the complexity of \wfomc, the problem can be regarded as fixing the sentence and the weighting functions, with the domain size $n$ as the only (unary) input.
A sentence, or a class of sentences, is said to be \emph{domain-liftable}~\citep{broeck11:wfomc-ufo2} if for any pair of weighting functions, \wfomc can be computed in time polynomial in the domain size $n$.%
\footnote{Such complexity measure is also referred to as \emph{data complexity} of \wfomc in other literature, e.g., in \citet{beame15:wfomc-fo3}.}
Such a notion of \emph{tractability} makes sense from a practical standpoint, since in any real-world \wfomc applications, the sentence and the weighting functions are usually fixed, and the domain size is the only varying input \citep{broecketal21:lifting-intro}.

WFOMC is a counting problem where we can specify the input in unary.
A complexity class relevant for such tasks is the class \class{\#P_1} defined in \citet{valiant79:counting-complexity}.
A function $f: \{1\}^* \to \nat$ is in \class{\#P_1} if and only if there is a polynomial-time nondeterministic Turing machine $M$ with unary input alphabet such that $f(x)$ equals the number of accepting paths of $M$ on the input $x$.
A problem is \class{\#P_1}-hard if all \class{\#P_1} functions can be computed in polynomial time with an oracle for this problem.
Specifically, \wfomc of a sentence or a class of sentences is \class{\#P_1}-hard if all \class{\#P_1} functions can be computed in polynomial time with an oracle for \wfomc of the sentence(s) with any pair of weighting functions.
The following corollary provides evidence that \class{\#P_1} contains functions that are hard to compute.

\begin{corollary}{(Corollary from \citet[Theorem 1]{book74:languages-and-complexities})}
If \class{\#P_1 \subseteq FP}, then \class{E=NE}.
\end{corollary}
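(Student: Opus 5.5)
The plan is to follow Book's classical padding argument and show the contrapositive in its positive form: assuming \class{\#P_1 \subseteq FP}, every language $L \in \class{NE}$ lies in \class{E}. Fix $L \in \class{NE}$, decided by a nondeterministic TM $N$ in time $2^{c\cdot m}$ on binary inputs of length $m$. The idea is to encode binary strings as natural numbers, and hence as unary strings, so that a nondeterministic machine working on the unary input runs in time polynomial in the unary length.

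\textbf{Step 1 (tally encoding).} Map a binary string $x$ to the integer $\nu(x)$ whose binary representation is $1x$. This map is injective, and $\nu(x) < 2^{|x|+1}$. Define the tally language $T_L = \{1^{\nu(x)} : x \in L\}$.

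\textbf{Step 2 (a \class{\#P_1} function).} Build a nondeterministic TM $M$ with unary input alphabet. On input $1^k$, $M$ first computes the binary representation of $k$ in time polynomial in $k$. If that representation does not start with a $1$, or is otherwise malformed, $M$ rejects on every path. Otherwise $M$ strips the leading $1$ to recover $x$ and simulates $N$ on $x$. This simulation takes $2^{c|x|} \le k^{c}$ steps, which is polynomial in the input length $k$. $M$ accepts exactly on the accepting paths of $N$. Define $f(1^k)$ as the number of accepting paths of $M$. Then $f \in \class{\#P_1}$, and $f(1^{\nu(x)}) > 0$ if and only if $x \in L$. One point needs care: the accepting paths of $N$ should correspond one-to-one with those of $M$. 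The deterministic preprocessing adds no branching, so this correspondence holds.

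\textbf{Step 3 (decide in \class{E}).} By hypothesis, some deterministic machine computes $f$ in time $p(k)$ for a polynomial $p$. Given a binary $x$ of length $m$, compute $k = \nu(x) < 2^{m+1}$ and write down $1^k$, which takes $O(2^{m+1})$ steps. Then run the machine for $f$ and accept if and only if the output is nonzero. The total running time is $p(2^{m+1}) = 2^{O(m)}$, so $L \in \class{E}$. Hence $\class{NE} \subseteq \class{E}$, and since $\class{E} \subseteq \class{NE}$ trivially, $\class{E} = \class{NE}$.

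The only real obstacle is bookkeeping on the input encoding. We must check that converting between unary and binary stays within the polynomial (respectively exponential) time budgets. We must also check that the output of the \class{FP} machine, a binary number of polynomially many bits, can be tested for being nonzero within the same bound; it can, since the output length is at most $p(k)$. Alternatively, one can simply cite \citet[Theorem 1]{book74:languages-and-complexities}, which states that \class{E=NE} if and only if every tally language in \class{NP} is in \class{P}. Then it suffices to observe that membership of a tally language in \class{NP} amounts to nonzeroness of a \class{\#P_1} function, which \class{FP} computes under the hypothesis.
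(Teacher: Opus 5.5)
Your proposal is correct and follows essentially the same route as the paper. The paper gives no separate argument: it invokes Book's theorem that $\class{E}=\class{NE}$ iff every tally language in \class{NP} lies in \class{P}, together with the implicit observation that membership in a tally \class{NP} language is the nonzeroness of a \class{\#P_1} function, which is exactly your closing alternative. Your Steps 1--3 just spell out the padding argument behind the needed direction of Book's theorem, and the bounds check out: $2^{c|x|}\le k^c$ because $k=\nu(x)\ge 2^{|x|}$, and $p(2^{m+1})=2^{O(m)}$.
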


Previous work \citep{beame15:wfomc-fo3} showed that there is a universal Turing machine that can simulate all Turing machines representing \class{\#P_1} problems.

\begin{lemma}{(Universal Turing Machine for \class{\#P_1} \citep{beame15:wfomc-fo3})}
\label{lemma:utm}
There is a multi-tape linear-time nondeterministic Turing machine $M$ such that the input alphabet of $M$ is unary and computing its number of accepting paths on a given input is \class{\#P_1}-hard.
\end{lemma}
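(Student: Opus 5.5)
The plan is to build $M$ as a clocked universal simulator that reads its unary input length $n$ as a code for a triple $(i,c,m)$. Here $i$ indexes a nondeterministic TM $M_i$ with unary input alphabet, $c$ is an exponent for a polynomial clock, and $m$ is the input length for $M_i$. On such an input, $M$ simulates $M_i$ on $1^m$ for at most $m^c$ steps, making the same nondeterministic choices. Hardness then follows by reduction. Take any $f \in \class{\#P_1}$, witnessed by some $M_i$ running in time $m^c$. On input $1^m$, compute the code $n$ in time polynomial in $m$, make one oracle query for the number of accepting paths of $M$ on $1^n$, and return the answer.

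\textbf{Step 1: the encoding.} I would write $n$ in binary as a self-delimiting concatenation of $\mathrm{bin}(i)$, $\mathrm{bin}(c)$ and $\mathrm{bin}(m)$, followed by a block of padding bits. The padding is chosen so that $n \ge s_i \cdot m^c$, where $s_i$ is the constant slowdown of a multi-tape universal simulation of $M_i$; this constant depends on the number of tapes and the alphabet of $M_i$. For fixed $f$ the code has $O(\log m)$ bits and value polynomial in $m$, so $1^n$ can be written down in polynomial time. Inputs whose length is not a valid code are rejected deterministically, which contributes $0$ accepting paths.

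\textbf{Step 2: linear-time preprocessing.} $M$ first scans the input once while incrementing a binary counter on a work tape. Binary increment costs amortized $O(1)$ per step, so this phase is $O(n)$ and yields $\mathrm{bin}(n)$. It then parses $i$, $c$ and $m$ from $\mathrm{bin}(n)$ in polylogarithmic time, writes $1^m$ and a binary clock for $m^c$ on separate tapes, and copies the description of $M_i$ onto a tape (all in time $O(\log^{O(1)} n)$ plus $O(m) \le O(n)$). All of this is deterministic, so no extra computation paths are created.

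\textbf{Step 3: path-preserving simulation.} $M$ now simulates $M_i$ step by step. I would first normalize every $M_i$, in the enumeration itself, so that each configuration has either exactly one or exactly two successors; this does not change the number of accepting paths. $M$ then branches exactly when $M_i$ branches, one branch per transition. This gives a bijection between accepting paths of $M$ and accepting paths of $M_i$ on $1^m$ within the clock, and the clock is irrelevant because $M_i$ halts within $m^c$ steps. Each simulated step costs $O(s_i)$ steps of $M$ (looking up the transition table, updating the simulated tapes and decrementing the clock), so the total is $O(s_i m^c) = O(n)$ by the choice of padding.

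The main obstacle is making the running time genuinely linear in $n$ \emph{uniformly}, with a constant independent of $i$, while keeping the path count exact. The per-step overhead of universal simulation depends on $i$, and the fix is to absorb it into the padding $s_i$. The delicate point is that clock decrements and transition-table lookups must be amortized $O(1)$ or bounded by $s_i$, so that the hidden constant in ``linear time'' belongs to $M$ alone. I would handle this by storing the transition table of $M_i$ on a dedicated tape and allowing $O(|M_i|)$ sweeps per simulated step, charging all of it to $s_i$.
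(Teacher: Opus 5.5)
The paper imports this lemma from \citet{beame15:wfomc-fo3} without proof. Your overall plan is a sound route: encode $(i,c,m)$ plus padding in the input length, run a clocked path-preserving simulation, and make one oracle query. The gap is in Step~3, in the claim that each simulated step of $M_i$ costs only $O(s_i)$ steps of $M$, with $s_i$ independent of $m$. $M$ is a single fixed machine with some fixed number $K$ of tapes, but your enumeration ranges over all multi-tape NTMs. So in general $M_i$ has more than $K$ tapes, and several simulated tapes must share one physical tape of $M$ (e.g.\ as tracks). Their heads then sit at unrelated positions, up to about $m^c$ cells apart. Reading and updating the symbols under all of them costs $\Theta(m^c)$ steps of $M$ per simulated step in the worst case. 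Keeping the transition table on its own tape and sweeping it $O(|M_i|)$ times pays only for the table lookup, not for this head travel.

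Hence the simulation time is $O(s_i m^{2c})$, not $O(s_i m^c)$, and your validity test $n \ge s_i m^c$ does not make $M$ linear-time. This matters on every input, including valid codes of arbitrary $M_i$, where the clock bounds only the number of simulated steps. Step-by-step simulation does not give constant slowdown here. The general deterministic bound for simulating many tapes on a fixed number is $O(T\log T)$ (Hennie--Stearns). The linear-overhead nondeterministic simulation of Book, Greibach and Wegbreit is guess-and-verify, so you would have to prove separately that it preserves the number of accepting paths.

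The repair is easy. It also shows that the uniform-linearity issue you single out as the main obstacle is not delicate, because padding absorbs any polynomial overhead. You have two options:
\begin{itemize}
\item Let $M$ check a stronger condition such as $n \ge s_i m^{2c}$ before simulating. Any computable polynomial bound on the cost of the naive multi-track simulation will do.
\item Restrict the enumeration to single-tape NTMs. This loses nothing: every function in $\#\mathrm{P}_1$ has a single-tape polynomial-time witness, because the interleaving simulation is deterministic between the simulated nondeterministic choices and so preserves the number of accepting paths. The paper uses exactly this in the proof of \cref{lemma:counting-tiling}. For a single-tape $M_i$, keeping its tape on one work tape of $M$ and the table on another really does give $O(s_i)$ steps per simulated step.
\end{itemize}
With either change, the rest of your argument goes through: the linear-time preprocessing, the binary-branching normalization giving a bijection of accepting paths, and the one-query reduction. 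One small adjustment: use a clock such as $(m+1)^c$ instead of $m^c$, so that a witness with running time $p(m)$ fits under it for all $m \ge 1$.
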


\begin{remark}
\label{remark:wfomc-membership}
\fomc of a fixed first-order logic sentence is a \class{\#P_1} problem.
In fact, a nondeterministic Turing machine can guess a model of a sentence and then verify it in time polynomial in the domain size.
Therefore, the number of accepting paths of this Turing machine equals the number of models of the sentence.
However, \wfomc of a fixed first-order logic sentence and fixed weighting functions is not necessarily in \class{\#P_1} since the number of accepting paths of any Turing machine must be a natural number. However, the weights in \wfomc can be negative or even non-integers.
\end{remark}

Another previous work \citep{kuzelka21:wfomc-c2} has shown that WFOMC of \ctwo{} with cardinality constraints can be computed by WFOMC of \fotwo{}, regardless of the axioms involved in the sentence.

\begin{lemma}{(Eliminating Counting Quantifiers and Cardinality Constraints \citep[Proposition 5 and Theorem 4]{kuzelka21:wfomc-c2})}
\label{lemma:c2+cc}
For any \ctwo{} sentence $\Psi$ (possibly with cardinality constraints) and the conjunction of axioms $A$, there is an \fotwo{} sentence $\Psi'$ without counting quantifiers or cardinality constraints such that $\symbwfomc(\Psi \land A, n, \weights)$ can be reduced to $\symbwfomc(\Psi' \land A, n, \weights)$ for any domain size $n$ and any pair of weighting functions $(\weights)$.
\end{lemma}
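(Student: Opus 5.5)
The plan follows the two-stage reduction of \citet{kuzelka21:wfomc-c2}: first remove cardinality constraints by polynomial interpolation, then remove counting quantifiers by rewriting them as cardinality constraints on fresh predicates. Throughout, the conjunction of axioms $A$ is left untouched. The only point that needs care is that every step is a purely syntactic or algebraic manipulation that never inspects the axiom-constrained relation.

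\textbf{Step 1: cardinality constraints.} Suppose $\Psi = \Phi \land \bigwedge_j (|P_j| \bowtie_j k_j)$, where $\Phi$ contains no constraints. Replace the weight $w(P_j)$ by $w(P_j)\cdot x_j$ with a symbolic variable $x_j$. Then $\symbwfomc(\Phi \land A, n, \weights)$ becomes a multivariate polynomial in the $x_j$, and the coefficient of $\prod_j x_j^{m_j}$ is the weighted count of models of $\Phi\land A$ in which $|P_j| = m_j$. Each degree is at most $n^{\mathrm{ar}(P_j)}$, so there are polynomially many coefficients. We recover all of them by evaluating at a grid of distinct real points and solving the Vandermonde system; each evaluation is a single oracle call with modified weights. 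Summing the coefficients whose exponents satisfy the constraints gives the answer. This step treats $A$ as a black box.

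\textbf{Step 2: counting quantifiers.} We proceed as in \citet[Proposition 5]{kuzelka21:wfomc-c2}.
\begin{itemize}
  \item First normalize so that every counting quantifier occurs as $\forall x \exists^{=k} y: \varphi(x,y)$. The forms $\exists^{\le k}$ and $\exists^{\ge k}$ are disjunctions or negations of finitely many $\exists^{=i}$. Nested subformulas are replaced by fresh unary predicates, Scott-style.
  \item Introduce fresh binary predicates $f_1,\dots,f_k$. Add $\forall x\forall y: \varphi(x,y)\leftrightarrow \bigvee_i f_i(x,y)$, pairwise disjointness of the $f_i$, $\forall x\exists y: f_i(x,y)$, and the cardinality constraints $|f_i| = n$. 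Together these force each $f_i$ to be a function.
  \item Every model of the original sentence then extends in exactly $(k!)^n$ ways, one for each ordering of the $k$ witnesses of each $x$. We therefore divide by $(k!)^n$, or assign compensating weights.
  \item The ordinary existentials that remain in $\Psi'$ are still allowed in \fotwo{}. If a universal-only form is wanted, they can be Skolemized with negative weights \citep{broecketal14:wfomc-skolem}.
\end{itemize}
This step produces cardinality constraints, which Step 1 then removes. All new predicates are fresh, so $A$ is still just conjoined and its models correspond bijectively.

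\textbf{Main obstacle.} The delicate point is making sure the interpolation and the model correspondence are valid \emph{with} $A$ present. Interpolation is fine because the modified weights only rescale $P_j$. The bijection in Step 2 holds because the fresh predicates are not the axiom's distinguished relation, and $A$ constrains only that relation. Finally, we must check that the numbers of oracle calls and of variables are polynomial in $n$ for a fixed $\Psi$, which holds since the number of constraints and the arities are constants.
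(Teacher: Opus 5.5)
Your Step 1 and the core of Step 2 match the cited argument of \citet{kuzelka21:wfomc-c2}. Interpolation in symbolic weights removes cardinality constraints. For an \emph{unguarded} $\forall x\exists^{=k}y:\varphi(x,y)$, the disjoint functional relations $f_1,\dots,f_k$ with division by $(k!)^n$ are correct. Your remark that fresh predicates leave $A$ untouched is also correct.

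The gap is the first bullet of Step 2, where you assert that every \ctwo{} sentence can be normalized so that counting quantifiers occur only in that form. The tools you list do not achieve this in a count-preserving way.
\begin{itemize}
\item Scott-style renaming produces $\forall x: U(x)\leftrightarrow\exists^{=k}y:\varphi(x,y)$. Its right-to-left half says that every $x$ with $\neg U(x)$ has a number of $\varphi$-witnesses \emph{different} from $k$.
\item Such negated occurrences are unavoidable, since by your own remark $\exists^{\ge k}$ is $\neg\bigvee_{i<k}\exists^{=i}$.
\item ``Count $\neq k$'' is not a finite disjunction of $\exists^{=i}$, because the count ranges up to $n$.
\item The $f_i$ construction cannot certify it either. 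Witnessing ``at least $k+1$'' by disjoint functions multiplies each model by the falling factorial $c(c-1)\cdots(c-k)$ of the actual count $c$, which is not uniform across models.
\item Even the left-to-right half is guarded by $U(x)$. So $\forall x\exists y: f_i(x,y)$ and $|f_i|=n$ are wrong as stated, and the multiplier becomes $(k!)^{|U|}$ rather than $(k!)^n$.
\end{itemize}
The normal forms for \ctwo{} from the satisfiability literature are built to preserve satisfiability, not model counts, so they cannot simply be cited here either.

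The missing ingredient is a relativization plus a subtraction step.
\begin{itemize}
\item \emph{Relativization.} Relativize your construction to a guard predicate $G$: require $(G(x)\land\varphi(x,y))\leftrightarrow\bigvee_i f_i(x,y)$, $f_i(x,y)\to G(x)$, and $G(x)\to\exists y: f_i(x,y)$. Replace $|f_i|=n$ by $|f_i|=|G|$; your interpolation still handles this, since it recovers the whole coefficient table. Cancel the factor $(k!)^{|G|}$ by dividing $w(G)$ by $k!$.
\item \emph{Subtraction.} For the negative half, introduce a fresh unary $B$ with $w(B)=-1$, $\negw(B)=1$ and $\forall x: B(x)\to\neg U(x)$, and use $G\leftrightarrow U\lor B$ as the guard. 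An element with $\neg U(x)$ and exactly $k$ witnesses then contributes $1+(-1)=0$, and every other element contributes $1$. This is inclusion--exclusion over the violators.
\end{itemize}
With this normalization supplied, the rest of your proof goes through unchanged.
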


The proof of the lemma ultimately relies on an oracle for computing the \wfomc of an \fotwo sentence, hence, in the remainder of the text, we primarily focus on the \fotwo fragment. At the same time, all the results are also applicable to the \ctwo{} fragment with cardinality constraints.

\subsection{Domain-Liftability of the Two-Variable Fragment}
The fragment of \fotwo is known to be domain-liftable.
There are two proofs of that fact available in the literature, one making use of first-order knowledge compilation \citep{broeck11:wfomc-ufo2, broecketal14:wfomc-skolem} and one making use of the concept of \emph{cells} \citep{beame15:wfomc-fo3}, which are also referred to as \emph{1-types} in the literature on finite logics \citep{graedel97:fo2-sat}.
Let us now review the proof by \citet{beame15:wfomc-fo3}, as several of the concepts therein will be helpful to us later.

By the normalization in~\citet{graedel97:fo2-sat} and the technique of eliminating existential quantifiers in~\citet{broecketal14:wfomc-skolem}, computing WFOMC for any sentence in \fotwo can be reduced to WFOMC of a sentence having the form $\;\forall x\forall y: \psi(x,y)$, where $\psi(x, y)$ is a quantifier-free formula and it only contains relations of arity at most 2.%
\footnote{In this work, we focus on sentences containing relations of arity $1$ or $2$. All of our results are also applicable to cases with relations of arity $0$ using the \emph{Shannon expansion} as explained in \citet{beame15:wfomc-fo3}.}
The reduction respects the \wfomc value.

\begin{definition}[Cell a.k.a. 1-type]
  \label{def:cell}
  A \emph{cell} of a first-order sentence $\Psi$ is a maximal consistent set of literals formed from relations in $\preds{\Psi}$ using only a single variable $x$.
\end{definition}

\begin{definition}[2-table]
A \emph{2-table} of a first-order sentence $\Psi$ is a maximal consistent set of literals formed from relations in $\preds{\Psi}$ where each literal uses exactly two variables $x, y$.
\end{definition}

Intuitively, a cell (a 1-type) interprets unary and reflexive binary relations for a single domain element, and a~2-table interprets binary relations for a pair of distinct domain elements.
Alternatively, a domain element realizes a cell, or a pair of elements realizes a 2-table.
Note that, depending on the context, we shall treat cells and 2-tables as either a set of literals per the definitions above, or as a conjunction over the literals in the set.

\begin{example}
\label{ex:cells}
Consider the sentence $\Gamma = \forall x \forall y: S(x) \to R(x, y)$, which has four cells and four 2-tables, specifically
\begin{equation*}
\begin{aligned}[c]
C_1(x) &= \neg S(x) \wedge \neg R(x,x), \\
C_2(x) &= \neg S(x) \wedge R(x,x),\\
C_3(x) &= S(x) \wedge \neg R(x,x),\\
C_4(x) &= S(x) \wedge R(x,x),
\end{aligned}
\qquad
\begin{aligned}[c]
\pi_1(x, y) &= \lnot R(x,y) \land \lnot R(y,x)\\
\pi_2(x, y) &= \lnot R(x,y) \land R(y,x)\\
\pi_3(x, y) &= R(x,y) \land \lnot R(y,x)\\
\pi_4(x, y) &= R(x,y) \land R(y,x).
\end{aligned}
\end{equation*}
\end{example}

Given a WFOMC instance $\symbwfomc(\Psi, n, \weights)$ for an \fotwo sentence $\Psi$, the Skolemization trick in \citet{broecketal14:wfomc-skolem} produces a universally quantified \fotwo sentence $\Psi'$ of size $O(|\Psi|)$ and a pair of weighting functions $(w', \negw')$ such that $\symbwfomc(\Psi, n, \weights) = \symbwfomc(\Psi', n, w', \negw')$.
Specifically, $\Psi'$ is in the form $\forall x \forall y: \psi(x,y)$ where $\psi(x,y)$ is a quantifier-free \fotwo sentence.
Therefore, $\Psi'$ can be expanded as the conjunction of ground formulas over the domain $[n]$:
\begin{equation*}
  \begin{aligned}
    \Psi' = \left( \bigwedge_{1 \le a < b \le n} \psi(a,b) \land \psi(b,a) \right) \land \left( \bigwedge_{c\in[n]} \psi(c,c) \right).
  \end{aligned}
\end{equation*}

Let $C = \{C_1, C_2, \cdots, C_p\}$ be the set of all cells of $\Psi'$ and $D$ be the set of all its 2-tables.
Note that $p$ solely depends on the number of distinct predicates in $\preds{\Psi'}$ and is thus constant with respect to $n$.
Let $C_{\tau_a}$ denote the cell realized by the element $a$ ($\tau_a \in [p]$).
Note that if we know the realized cell $C_{\tau_a}$ for a particular domain element $a$, only possible worlds containing positive literals from the formula $C_{\tau_a}(a)$ can be models.
In other words, $C_{\tau_a}(a)$ becomes \emph{evidence} that we condition on.
Hence, we may substitute unary and reflexive binary literals in $\psi(a,b) \land \psi(b, a)$ with true or false according to $C_{\tau_a}$ and $C_{\tau_b}$.
For a pair of elements $(a,b)$, the resulting formula will not share any ground literals with any other pair, hence the 2-tables between each pair of elements can be selected independently.
Moreover, the formula only depends on $C_{\tau_a}$ and $C_{\tau_b}$.
Therefore, WFOMC can be computed as follows:
\begin{equation}
\label{eq:basic-wfomc}
  \begin{aligned}
    & \symbwfomc(\Psi', n, w', \negw')
    = \sum_{\tau_1, \cdots, \tau_n \in [p]} \ \prod_{i=1}^n W(C_{\tau_i}) \prod_{1 \le i < j \le n} r_{\tau_i,\tau_j},
  \end{aligned}
\end{equation}
where
\begin{equation*}
  r_{s,t} = \sum_{\substack{\pi \in D, \\ C_{s}(a) \land C_{t}(b) \land \pi(a,b) \models \psi(a,b) \land \psi(b,a)}} W(\pi),
\end{equation*}
and $a,b$ are arbitrary domain elements.
Informally, since the simplification of $\Psi'$ depends only on $C_{\tau_a}$ and $C_{\tau_b}$, we sum over all possible cell realizations and construct models for each such realization independently.
For a fixed $(\tau_1, \cdots, \tau_n) \in [p]^n$, the model will naturally contain the unary and binary reflexive literals from $C_{\tau_1}, \cdots, C_{\tau_n}$, hence the factors $W(C_{\tau_i})$.
Moreover, it will contain 2-tables such that, together with the fixed cell realization, they satisfy the formula $\forall x\forall y: \psi(x,y)$.
There may be more 2-tables to choose from to construct a single model, so we sum over all of those to obtain the factors $r_{\tau_i,\tau_j}$.

Let us now rewrite \Cref{eq:basic-wfomc} so that we may clearly evaluate it in polynomial time, i.e., prove that \fotwo is, indeed, domain-liftable.
Consider possible partitions of $[n]$ into $p$ disjoint sets; each partition representing an assignment of a subset of $[n]$ to a particular cell.
A particular assignment of elements to the cell $C_l$ means that those elements realize the cell $C_l$.
Naturally, empty partitions are also allowed, meaning that some cells may not be realized by any domain elements.

For each model $\mu$ of $\Psi'$ over the domain $[n]$, there is a unique partition $\mathcal{C} = (C_1, C_2, \dots, C_p)$ consistent with $\mu$ on the unary and reflexive binary atoms.%
\footnote{We slightly abuse notation here. $C_l$ may now refer to a particular cell or a set of domain elements assigned to that cell.}
We call $\mathcal{C}$ the \emph{partition} of $\mu$, and $(|C_1|, \dots, |C_p|)$ the \emph{cell configuration} of $\mu$.
Observe that the model $\mu$ is also a model of the ground formula 
$$\Phi_{\mathcal{C}} = \bigwedge_{i,j\in[p]}\bigwedge_{a\in C_i, b\in C_j} \psi(a,b),$$
and thus we can write 
$$\symbwfomc(\Psi, n, \weights) = \sum_{\mathcal{C}} \symbwmc(\Phi_{\mathcal{C}}, \weights).$$

We can rewrite $\Phi_\mathcal{C}$ to separate the cases when $C_i \neq C_j$ and also when $C_i = C_j$ but $a \neq b$:
\begin{equation}
  \label{eq:wfomc_lineage}
  \begin{aligned}
    \Phi_{\mathcal{C}} = \bigwedge_{i,j\in[p]: i<j}\left(\bigwedge_{a\in C_i, b\in C_j} \psi(a,b)\land\psi(b,a)\right)\land \bigwedge_{i\in[p]} \left(\bigwedge_{a,b\in C_i: a < b} \left(\psi(a,b)\land\psi(b,a)\right)\land \bigwedge_{c\in C_i} \psi(c,c)\right).
  \end{aligned}
\end{equation}

Since the partition $\mathcal{C}$ can be treated as evidence we conditioned on, we know the truth values of the unary and reflexive binary atoms in $\Phi_{\mathcal{C}}$.
As we have already mentioned above, we can simplify the formula $\psi(x,y)$ by replacing every occurrence of unary and reflexive binary atoms with true or false as appropriate.
Write $\psi_{ij}(x,y)$ for the simplified version of $\psi(x,y) \land \psi(y,x)$ when $x$ and $y$ belong to $C_{i}$ and $C_{j}$ respectively, which leads to the following reformulation of \Cref{eq:wfomc_lineage}:
\begin{equation}
  \label{eq:wfomc_lineage_simplified}
  \begin{aligned}
    \Phi_{\mathcal{C}} = \bigwedge_{i,j\in[p]: i<j}\left(\bigwedge_{a\in C_i, b\in C_j} \psi_{ij}(a,b)\right)\land \bigwedge_{i\in[p]} \left(\bigwedge_{a,b\in C_i: a < b} \psi_{ii}(a,b)\land \bigwedge_{c\in C_i} \psi_{ii}(c,c)\right).
  \end{aligned}
\end{equation}

Note that the subformulas $\psi_{ij}$ do not share any atoms and each of them is fully determined by a single 2-table.
They each can be treated as an independent subproblem when searching for models of $\Phi_{\mathcal{C}}$.
Also note that the subformula $\bigwedge_{c\in C_i} \psi_{ii}(c,c)$ is simply a trivial tautology or contradiction, since $\psi(c,c)$ contains only unary and reflexive binary literals.
Therefore, to allow the existence of a model, we must first ensure that $\bigwedge_{c\in C_i} \psi_{ii}(c,c)$ becomes a tautology. That can be achieved by assigning domain elements only to so-called \emph{valid cells} first introduced by \citet{vanbremen21:fast-wfomc}.

\begin{definition}[Valid Cell]
  \label{def:valid-cell}
  A \emph{valid cell} of a first-order sentence $\Psi$ is a cell $C(x)$ of $\Psi$ such that $\Psi \land C(t)$ is satisfiable for any domain element $t$.
\end{definition}

Should we work with cells that are not valid, we would end up solving subproblems whose \wfomc is necessarily zero.
Therefore, we usually implicitly work only with valid cells and use $p$ to denote the number of valid cells rather than all cells.

\begin{example}
    Consider $\Gamma = \forall x \forall y: S(x) \to R(x, y)$ from \cref{ex:cells} again, whose four cells we have already identified.
    Clearly, $C_3(x)$ is not a valid cell, since for a domain element $i$, we would require $S(i)$ to be true, yet $R(i, i)$ to be false.

    Now, consider $n=4$ and the partition $\mathcal{C} = (\{1, 2\}, \{3\}, \{\}, \{4\})$.
    Let us further use colored notation for increased readability:
    \begin{align*}
      \Phi_\mathcal{C} &= \alpha \land \beta\\
      \alpha &= \bigwedge_{i,j\in[p]: i<j}\left(\bigwedge_{a\in C_i, b\in C_j} {\color{red}\psi(a,b)}\land{\color{Green}\psi(b,a)}\right)\\
      \beta &= \bigwedge_{i\in[p]} \left(\bigwedge_{a,b\in C_i: a < b} \left({\color{red}\psi(a,b)}\land{\color{Green}\psi(b,a)}\right)\land \bigwedge_{c\in C_i} {\color{blue}\psi(c,c)}\right)
    \end{align*}
    Following the substitution procedure of \Cref{eq:wfomc_lineage} we obtain
    \begin{align*}
    \alpha &= ({\color{red} S(1) \to R(1,3)}) \land ({\color{Green} S(3) \to R(3,1)}) \land ({\color{red} S(2) \to R(2,3)}) \land ({\color{Green} S(3) \to R(3,2)}) \land ({\color{red} S(1) \to R(1,4)})\\
    &\land ({\color{Green} S(4) \to R(4,1)}) \land ({\color{red} S(2) \to R(2,4)}) \land ({\color{Green} S(4) \to R(4,2)}) \land ({\color{red} S(3) \to R(3,4)}) \land ({\color{Green} S(4) \to R(4,3)})\\
    \\
    \beta &= ({\color{red} S(1) \to R(1,2)}) \land ({\color{Green} S(2) \to R(2,1)})\\
    &\land ({\color{blue} S(1) \to R(1,1)}) \land ({\color{blue} S(2) \to R(2,2)}) \land ({\color{blue} S(3) \to R(3,3)}) \land ({\color{blue} S(4) \to R(4,4)})
    \end{align*}

    While $\alpha \land \beta$ is ground, individual subformulas $\psi(x, y) \land \psi(y, x)$ share atoms, such as the atom $S(1)$ being shared across $\psi(1,2)\land\psi(2,1)$, $\psi(1,3)\land\psi(3,1)$, $\psi(1,4)\land\psi(4,1)$ and also $\psi(1,1)$.

    Let us now simplify $\alpha$ and $\beta$ according to \Cref{eq:wfomc_lineage_simplified}:
    \begin{align*}
    \alpha' &= ({\color{red} \bot \to R(1,3)}) \land ({\color{Green} \bot \to R(3,1)}) \land ({\color{red} \bot \to R(2,3)}) \land ({\color{Green} \bot \to R(3,2)}) \land ({\color{red} \bot \to R(1,4)})\\
    &\land ({\color{Green} \top \to R(4,1)}) \land ({\color{red} \bot \to R(2,4)}) \land ({\color{Green} \top \to R(4,2)}) \land ({\color{red} \bot \to R(3,4)}) \land ({\color{Green} \top \to R(4,3)})\\
    \\
    \beta' &= ({\color{red} \bot \to R(1,2)}) \land ({\color{Green} \bot \to R(2,1)})\\
    &\land ({\color{blue} \bot \to \bot}) \land ({\color{blue} \bot \to \bot}) \land ({\color{blue} \bot \to \top}) \land ({\color{blue} \top \to \top})
    \end{align*}

    There are no more shared atoms in $\alpha' \land \beta'$, and the search for a (weighted) model of $\Phi_\mathcal{C}$ can be decomposed into smaller searches for models of the subformulas $\psi_{ij}$.
    Naturally, a similar approach may be applied when only computing the total number of (weighted) models, rather than enumerating them all.
    We solve $\symbwmc$ for each of the mutually independent subproblems and then multiply the values together to obtain the final weighted model count.
\end{example}

Notice that when we are only interested in \wmc of $\Phi_\mathcal{C}$, the particular cell partition $\mathcal{C}$ does not matter.
Since the weights are defined on predicates, the cell configuration $(|C_1|, \dots, |C_p|)$ carries all the necessary information.
Thus, let us define partial model counts as
\begin{equation}
    \begin{aligned}
        r_{ij} := \symbwmc(\psi_{ij}(a, b), \weights), \qquad w_{k} := \symbwmc(\psi(c,c) \land C_k(c), \weights) = W(C_k, \weights),
    \end{aligned}
    \label{eq:r-w}
\end{equation}
where $a,b,c\in\dom$.
Observe that $r_{ij} = r_{ji}$ and the values are defined with respect to the formulas $\psi_{ij}$, whereas $w_k$ is defined using the non-simplified $\psi(c,c)$ so that we do not remove the unary and binary reflexive literals from the weighted model count completely.

Now, we can finally write
\begin{align}
\label{eq:2wfomc}
\symbwfomc(\Psi, n, \weights) =
\sum_{\bm{k}\in\nat^p:|\bm{k}|=n} \binom{n}{\bm{k}}&\prod_{i,j\in[p]:i<j}r_{ij}^{k_ik_j}\prod_{i\in[p]}r_{ii}^{\binom{k_i}{2}}w_i^{k_i},
\end{align}
which implies that universally quantified \fotwo sentence is domain-liftable since \Cref{eq:2wfomc} may be evaluated in time polynomial in $n$.
Since we already transformed an arbitrary \fotwo sentence into a universally quantified form, we can conclude that the entire fragment of \fotwo is domain-liftable.
Using \cref{lemma:c2+cc}, the domain-liftability result can be further extended to the fragment of \ctwo, possibly with cardinality constraints.

Note that the values from \Cref{eq:r-w} correspond to $r_{i,j}$ and $W(C_k)$ in \Cref{eq:basic-wfomc}.
Now, however, we have defined them in terms of the propositional Weighted Model Counting problem.
Intuitively, we solve small propositional subproblems, which we then cleverly combine in \Cref{eq:2wfomc} to solve the original WFOMC problem.
In terms of model enumeration (not model counting), we construct small partial models that are mutually independent (they do not share any atoms), which we then use to construct models of the input sentence by a clever application of the Cartesian product and union operations.

\subsection{First-Order Conditioning}
So far, the claim of domain-liftability of \fotwo has only focused on the case when the sentences are \emph{constant-free}.
When performing inference in SRL, we usually aim also to support so-called \emph{evidence}, which often takes the form of a conjunction of ground atoms, i.e., certain facts observed to be true or false.
We also say that we are \emph{conditioning} on some evidence, meaning we are restricting our models to satisfy additional conditions.

\citet{broeckdavis12:wfomc-evidence} have shown that the tractability result for \fotwo can be extended to conditioning on \emph{unary evidence}.
Specifically, \wfomc of an \fotwo sentence with unary evidence can be computed in time polynomial in the size of the evidence.
Unfortunately, the same does not hold for conditioning on binary relations, which turns out to be \class{\#P}-hard in general \citep{broeckdavis12:wfomc-evidence}.
Intuitively, the task is more challenging because it potentially breaks many symmetries, which are exploited to perform the task faster. 
\rev{However, even binary evidence can be processed in a domain-lifted manner as long as its Gaifman graph is of bounded (constant) treewidth \citep{kula26:evidence-gaifman-tw}, although a dedicated algorithm is required for such handling.}

Throughout this text, we usually implicitly work with sentences without evidence.
However, our algorithms support working with unary evidence in a domain-lifted manner, and we explicitly note when making adjustments to accommodate such conditioning.
We denote $\evidence{i}$ the conjunction of all evidence on the constant $i$, hence, we may always rewrite our sentences as
\begin{equation*}
    \Psi = \psi(x, y) \land \bigwedge_{i\in[n]} \evidence{i},
\end{equation*}
where $\psi(x,y)$ is a constant-free formula and $\evidence{i}$ could be an empty conjunction (i.e., a tautology) for any $i\in[n]$.

\begin{example}
Consider the domain $[n]$ such that $n\ge 3$ and the sentence
$$\psi = (\forall x \exists y: A(x) \land B(x) \to C(x, y)) \land A(1) \land (A(2) \lor C(2, 2)).$$
We can rewrite $\psi$ as
$$\psi' = (\forall x \exists y: A(x) \land B(x) \to C(x, y)) \land (\forall x: U(x) \leftrightarrow A(x) \lor C(x, x)) \land A(1) \land U(2),$$
where $U/1$ is a fresh predicate symbol.
Then, following our notation, we have
\begin{align*}
    \evidence{1} &= A(1),\\
    \evidence{2} &= U(2), \\
    \evidence{3} &= \top.
\end{align*}
\end{example}

\subsection{Linear Order Axiom}
As we have already mentioned above, many applications of lifted inference are over domains with inherent ordering on the objects involved~\citep{le20:probabilistic-time-series,vlasselaer14:dynamic-relational-models}.
Such ordering may be introduced by requiring a distinguished binary relation from our vocabulary to introduce a linear (total) ordering on the domain elements~\citep{libkin04:lo}.
Suppose a binary relation $R$ is a linear order of elements, i.e., for each model, it holds that
\begin{itemize}
    \item $R$ is reflexive, i.e., $\forall x: R(x,x)$,
    \item $R$ is anti-symmetric, i.e., $\forall x\forall y: (R(x,y)\land R(y,x)) \to (x = y)$,
    \item $R$ is transitive, i.e., $\forall x\forall y\forall z: (R(x,y) \land R(y,z))\to R(x,z)$,
    \item $R$ is a total relation, i.e., $\forall x\forall y: R(x,y)\lor R(y,x)$.
\end{itemize}

When using logics with three or more variables, it is feasible to append the sentences above to the formula one is working with.
However, in the context of \wfomc, that approach would not allow us to perform domain-lifted inference.
Since we are limited to at most two variables in our sentences, we deal with ordering elements differently.
We introduce a new syntactic construct which asserts that a specified relation $R$ must satisfy the conditions above.

\begin{definition}[Linear Order Axiom]
\label{def:loaxiom}
The \emph{linear order axiom} on the relation $R$, denoted as $\loaxiom(R)$, requires that $R$ should be a linear order of elements, i.e., $R$ is a reflexive, anti-symmetric, transitive, and total relation.
In other words, the graph $G(R)$ is an acyclic tournament with a loop at each vertex.
\end{definition}

Intuitively speaking about \cref{def:loaxiom}, if we have a logical sentence $\Psi$ possibly containing a binary predicate $R$, then a possible world $\omega$ is a model of $\Psi \land \loaxiom(R)$ if and only if $\omega$ is a model of $\Psi$ and the atoms on the predicate $R$ present in $\omega$ satisfy the linear order properties.

Often, for both clarity and brevity, we denote the binary relation constrained by the linear order axiom (relation $R$ in \cref{def:loaxiom}) by the special symbol $\leq$.
We then employ the standard infix notation $x \leq y$ to denote that $x$ is less than or equal to $y$.
We also make use of a shorthand $x<y$ meaning that $x$ is strictly less than $y$, i.e., it holds that $(x\le y) \land \neg(y\le x)$.
In \cref{sec:5-two-los}, where we analyze sentences with two linear orders, we differentiate the two distinguished relations using subscripts, i.e., as $\leq_1$ and $\leq_2$.

\begin{remark}
In the presence of the linear order axiom $\loaxiom(\lopred)$, valid cells are only those that contain the positive literal $(x \lopred x)$.
\end{remark}

\rev{
Semantically speaking, $\loaxiom(\le)$ postulates that there is some linear ordering defined by the predicate $\le$, but it does not restrict it any further.
Hence, model counting will consider all of the $n!$ possible orderings.
In some use cases, one may wish to only work with one specific ordering, which can always be expressed, without loss of generality, as the natural order $1\le2\le3\le\ldots\le n$.
For such purposes, let us define the \emph{natural linear order axiom} as well.
}

\begin{definition}{Natural Linear Order Axiom}
\label{def:natural-loaxiom}
    \rev{
    The \emph{natural linear order axiom} on the relation $\le$, denoted as $\lonat(\le)$, interpreted over the domain $[n]$, requires that $1\le2\le3\le\ldots\le n$.
    }
\end{definition}

\rev{
Note that while we could express $\lonat(\le)$ using evidence, i.e., explicitly include
$$\epsilon = (1\le1) \land (1\le2) \land (1\le3) \land \ldots \land (1\le n) \land (2\le2)\land(2\le3)\land\ldots(2\le n)\land\ldots\land(n\le n)$$ 
in our sentence, the evidence would be binary and lead to the aforementioned \class{\#P}-hardness (the Gaifman graph of $\epsilon$ is a complete graph)
Therefore, we again use an axiom to describe the situation instead.

The majority of our analysis and techniques will apply to both of the ordering contexts.
Moreover, we will switch between them as appropriate since, when the sentence is constant-free, searching for or counting models of $\Gamma \land \lonat(\lopred)$ turns out to be a subproblem of searching for or counting models of $\Gamma \land \loaxiom(\lopred)$.
As we will see, the situation becomes more complicated once we try to support evidence.
In those cases, we will carefully specify which ordering definition we are using.
}

\begin{example}
\label{ex:sequence-split}
\rev{
As a simple example of what the linear order axiom allows us to express, consider the sentence $\phi = \forall x \forall y: \psi(x, y) \wedge \lonat(\lopred)$, where
$$\psi(x, y) = (T(x) \wedge (x \lopred y)) \to T(y).$$

How can we interpret models of $\phi$?
Due to $\lonat(\lopred)$, the $\lopred$ predicate will order the domain such that $1 \leq 2 \leq 3$ for $n=3$.
Therefore, the domain will become the sequence $(1,2,3)$.

The formula $\psi(x, y)$ then seeks to split that sequence into its beginning (\textit{head}) and its end (\textit{tail}).
The predicate $T/1$ denotes the tail of the sequence.
Whenever there is a constant $c\in\dom$ for which $T(c)$ is in the model (i.e., $c$ is part of the tail), then for all constants $c'\in\dom$ that are \textit{greater than $c$}, $T(c')$ is also in the model.
All the other constants then belong to the sequence head.

If we had  $\phi = \forall x \forall y: \psi(x, y) \wedge \loaxiom(\lopred)$ instead, we would consider all other possible orderings as well, i.e., for $n=3$, we would also consider the sequences $(1, 3, 2)$, $(2,1,3)$, $(2,3,1)$, $(3,1,2)$ and $(3,2,1)$.
Note, however, that in the absence of conditioning, the splitting results would be isomorphic across different orderings.
Further restricting that, e.g., $\neg T(2)$ holds, would break that symmetry.
}
\end{example}

\section{Domain-Lifted Algorithm for WFOMC with the Linear Order}
\label{sec:3-single-lo}
In this section, we prove that the logical fragment of \ctwo extended by the linear order axiom is domain-liftable.
We do so by developing an algorithm that can compute \wfomc of a \ctwo sentence with the linear order axiom in time polynomial in the domain size.

We first present a new algorithm based on dynamic programming that computes \wfomc{} of a universally quantified \fotwo{} sentence incrementally, and does so in time polynomial in the domain size.
Note again that the assumption of universal quantification is not a limiting one, since, due to \citet{broecketal14:wfomc-skolem}, we can \emph{skolemize} in the context of \wfomc.
\rev{
Second, we show a minor adaptation of the new algorithm preserving the polynomial runtime, which allows us to compute \wfomc of both $\Psi = \psi \land \lonat(\lopred)$ and $\Psi = \psi \land \loaxiom(\lopred)$, where $\psi$ is a universally quantified \fotwo{} sentence with $\lopred\ \in\preds{\Psi}$.
}
Third, we observe that the algorithm can also accommodate unary evidence.
Finally, due to the reductions from \citet{kuzelka21:wfomc-c2}, which only require access to an oracle for WFOMC over \fotwo, we use the algorithm to compute WFOMC for any \ctwo sentence possibly containing the linear order axiom.

\subsection{A New Domain-Lifted Algorithm}
The new algorithm will operate incrementally.
The domain size will be inductively enlarged in a similar way as in the \textit{domain recursion rule} \citep{broeck11:wfomc-ufo2, kazemietal16:new-liftable-classes}.

Suppose we have partitioned the domain $[m]$ for any $m \leq n$ as $\mathcal{C}_1 = (C_1, C_2, \dots, C_p)$ and we would like to enlarge the domain with one more element, i.e., we seek $\mathcal{C}_2 = (C_1, \dots, C_l \cup \{m+1\}, \dots, C_p)$, where $C_l$ is an arbitrary cell.
Using \Cref{eq:wfomc_lineage_simplified}, we have the following relationship between the formulas $\Phi_{\mathcal{C}_1}$ and $\Phi_{\mathcal{C}_2}$:
\begin{equation}
  \begin{aligned}
  \Phi_{\mathcal{C}_2} = \Phi_{\mathcal{C}_1} \land \psi_{ll}(m+1,m+1) \land \bigwedge_{i\in[p]} \bigwedge_{a\in C_i} \psi_{il}(a,m+1).
  \end{aligned}
  \label{eq:ground_induction}
\end{equation}

Note that the conjuncts in \Cref{eq:ground_induction} still do not share any propositional variables (atoms), and thus they are all independent.
Therefore, we can compute $\symbwmc(\Phi_{\mathcal{C}_2}, \weights)$ as 
\begin{equation}
  \begin{aligned}
  \symbwmc(\Phi_{\mathcal{C}_2}, \weights) = \symbwmc(\Phi_{\mathcal{C}_1}, \weights)\cdot w_l \cdot \prod_{i\in[p]} r_{il}^{|C_i|}.
  \end{aligned}
  \label{eq:induction_wmc}
\end{equation}

Next, let $T_m(\bm{k})$ be the sum of weighted models of a sentence $\Psi$ over the domain $[m]$ whose cell configuration is $\bm{k}$.
Recall that the cell configuration $\bm{k}$ holds all the necessary information; we do not require knowing a particular partition corresponding to $\bm{k}$. 
Therefore, we have that 
\begin{align}
\label{eq:wmc-multinomial}
T_m(\bm{k}) = \binom{m}{\bm{k}}\cdot \symbwmc(\Phi_{\mathcal{C}}, \weights),    
\end{align}
where $\mathcal{C}$ is any partition whose cell configuration is $\bm{k}$ and the multinomial coefficient $\binom{m}{\bm{k}}$ counts the number of such partitions $\mathcal{C}$.

Now, we can describe the entire procedure.
First, compute \wmc for a single-element domain, which is equal to $w_l$ when assigning the given domain element to the cell $C_l$.
Next, let the function $T_i$ map each possible cell configuration on a domain of size $i$ to its corresponding weighted model count.
To compute $T_{i+1}(\bm{k})$, we must find all entries $T_i(\bm{k} - \vecdelta_j)$ such that $C_j$ is one of the cells.
Then we assign the new domain element $(i+1)$ to $C_j$ and use \Cref{eq:induction_wmc} to update the weighted model count.
Summing the results across all possible cells $C_j$ gives us $T_{i+1}(\bm{k})$.
Last, summing all entries for $T_n$ produces the final weighted model count.

Refer to \cref{alg:iwfomc} for a detailed pseudocode.
Note, however, that so far, we have only worked with \fotwo sentences. At the same time, the pseudocode incorporates changes from \cref{ssec:handle-linear-order} which enable the algorithm to compute \wfomc over \fotwo extended by the linear order axiom.
To limit ourselves to \fotwo, we would use values $w_k$ and $r_{ij}$ from \Cref{eq:r-w} instead of $\wlo_{k}$ and $\rlo_{ij}$ (\cref{iwfomc:line:wmc}), and we would return the value $\gamma$ (\cref{iwfomc:line:gamma}) rather than its multiple.

\begin{lemma}
\label{lemma:iwfomc-fo2}
Using the values $w_k$ and $r_{ij}$ from \Cref{eq:r-w}, the value $\gamma$ produced by \cref{alg:iwfomc} is equal to $\symbwfomc(\fotwoformula, n, \weights)$ of a universally quantified \fotwo{} sentence $\fotwoformula$.
Moreover, the value is computed in time polynomial in the domain size $n$.
\end{lemma}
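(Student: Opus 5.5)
We will prove by induction on $m$ that the table $T_m$ produced by \cref{alg:iwfomc} satisfies $T_m(\bm{k}) = \sum_{\mathcal{C}} \symbwmc(\Phi_{\mathcal{C}}, \weights)$ for every cell configuration $\bm{k}$ with $|\bm{k}|=m$. The sum ranges over all partitions $\mathcal{C}$ of $[m]$ with configuration $\bm{k}$, so by \Cref{eq:wmc-multinomial} it equals $\binom{m}{\bm{k}}\symbwmc(\Phi_{\mathcal{C}},\weights)$ for any fixed such $\mathcal{C}$. Once this invariant holds at $m=n$, summing over all $\bm{k}$ gives $\sum_{\mathcal{C}}\symbwmc(\Phi_{\mathcal{C}},\weights)$, which is $\symbwfomc(\fotwoformula,n,\weights)$ by the decomposition preceding \Cref{eq:2wfomc}. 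That decomposition holds because each model has a unique partition. Hence $\gamma$ has the claimed value.

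The base case $m=1$ is immediate. A single element placed in cell $C_l$ gives $\Phi_{\mathcal{C}} = \psi(c,c)\land C_l(c)$, whose weighted count is $w_l$ by \Cref{eq:r-w}, so $T_1(\vecdelta_l)=w_l$.

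For the inductive step, fix $\bm{k}$ with $|\bm{k}|=m+1$. Partitions of $[m+1]$ with configuration $\bm{k}$ are in bijection with pairs $(l,\mathcal{C}_1)$, where $l$ is the cell containing $m+1$ (so $k_l\ge1$) and $\mathcal{C}_1$ is a partition of $[m]$ with configuration $\bm{k}-\vecdelta_l$. By \Cref{eq:ground_induction} and the independence of its conjuncts, which share no atoms, \Cref{eq:induction_wmc} gives
\[
\symbwmc(\Phi_{\mathcal{C}_2},\weights)=\symbwmc(\Phi_{\mathcal{C}_1},\weights)\, w_l \prod_{i\in[p]} r_{il}^{(\bm{k}-\vecdelta_l)_i}.
\]
The multiplier depends only on $l$ and $\bm{k}$. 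Summing over $\mathcal{C}_1$ and applying the induction hypothesis therefore yields
\[
T_{m+1}(\bm{k})=\sum_{l:k_l\ge1} T_m(\bm{k}-\vecdelta_l)\, w_l\prod_{i} r_{il}^{(\bm{k}-\vecdelta_l)_i},
\]
which is exactly the update performed by the algorithm. The main point needing care is this step. The new literals of $\psi_{il}(a,m+1)$ and $\psi_{ll}(m+1,m+1)$ must be disjoint from the atoms of $\Phi_{\mathcal{C}_1}$, and the exponent must use the counts \emph{before} inserting $m+1$, so that the self-pair is not counted. Both follow from how \Cref{eq:wfomc_lineage_simplified} is defined.

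For the running time, the number of configurations with $|\bm{k}|=m$ is $\binom{m+p-1}{p-1}=O(m^{p-1})$, where $p$ is a constant depending only on $\fotwoformula$. Each entry is a sum of $p$ terms, each a product of $O(p)$ powers computable with $O(\log n)$ multiplications. Over $m\le n$ the total work is therefore $O(n^{p}\,p^2\log n)$, which is polynomial in $n$. The constants $w_k$ and $r_{ij}$ are computed once by brute force over a constant number of 2-tables.
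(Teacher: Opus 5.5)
Your proposal is correct and takes essentially the same route as the paper. Both argue by induction on the domain size: \Cref{eq:ground_induction,eq:induction_wmc} give the table update, summing $T_n$ over all configurations gives $\gamma$, and counting table entries gives the polynomial bound. Your explicit invariant $T_m(\bm{k})=\sum_{\mathcal{C}}\symbwmc(\Phi_{\mathcal{C}},\weights)$, together with the bijection between partitions $\mathcal{C}_2$ of $[m+1]$ and pairs $(l,\mathcal{C}_1)$, shows where the multinomial factor comes from more cleanly than the paper's wording does. Your $O(n^{p}p^2\log n)$ bound is also slightly sharper than the paper's $O(n^{p+1})$, but neither point changes the argument.
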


\begin{proof}
We prove the correctness by induction on the domain size.
\begin{enumerate}
    \item First, assume $n = 1$.
    Suppose we have a domain partition $\mathcal{C} = (C_1, C_2, \ldots, C_p)$ with the cell configuration $\vecdelta_l$ for some $1\le l \le p$, i.e., we assign the single domain element to the cell $C_l$.
    Following \Cref{eq:wfomc_lineage_simplified,eq:r-w}, we have
    \begin{equation*}
        \begin{aligned}
        \Phi_\mathcal{C} = \psi_{ll}(1,1) \quad \text{and} \quad \symbwmc(\Phi_\mathcal{C}, \weights) = w_l. 
        \end{aligned}
    \end{equation*}
    Therefore, $\gamma = T_1(\vecdelta_l) = w_l$ which is consistent with \cref{iwfomc:line:init}.

    \item Next, assume that $T_i(\bm{k})$ holds \wfomc of $\Psi$ over the domain $[i]$ for a particular cell configuration $\bm{k}$.
    Take an arbitrary partition $\mathcal{C}_1 = (C_1, C_2, \ldots, C_p)$ such that $(|C_1|, |C_2|, \ldots, |C_p|) = \bm{k}$ and consider adding a new domain element $(i+1)$, i.e., consider $\mathcal{C}_2 = (C_1, \ldots, C_l \cup \{i+1\}, \ldots, C_p)$ with the cell configuration $\bm{k}_{new} = \bm{k}+\vecdelta_l$.
    \Cref{eq:ground_induction,eq:induction_wmc} give us
    \begin{equation*}
        \symbwmc(\Phi_{\mathcal{C}_2}, \weights) = \symbwmc(\Phi_{\mathcal{C}_1}, \weights)\cdot w_l \cdot \prod_{j\in[p]} r_{jl}^{|C_j|}.
    \end{equation*}
    Combining that with the induction hypothesis, we have
    \begin{equation*}
        T_{i+1}(\bm{k})\! =\! \sum_{l\in[p]} T_i(\bm{k} - \vecdelta_l)\! \cdot\! w_l\! \cdot\! \prod_{j\in[p]} r_{jl}^{(\bm{k} - \vecdelta_l)_j},
    \end{equation*}
    which is exactly the value stored in the table $T_{i+1}$ at the end of the $(i+1)$-st iteration of the for-loop on \cref{iwfomc:line:main_loop}.

    \item Finally, following \Cref{eq:wmc-multinomial}, we have
    \begin{equation*}
        T_n(\bm{k}) = \binom{n}{\bm{k}} \cdot \symbwmc(\Phi_\mathcal{C}, \weights)
    \end{equation*}
    for any $n\in\nat$ and any partition $\mathcal{C}$ with the cell configuration $\bm{k}$.
    Summing across all possible cell configurations (\cref{iwfomc:line:gamma}) produces $\symbwfomc(\Psi, n, \weights)$.
    Note that the final formula is also equivalent to \Cref{eq:2wfomc}.
\end{enumerate}

As for the time complexity, we may argue as follows:
The initialization, along with the base case handling, runs in time $O(1)$ with respect to $n$.
The main loop on \cref{iwfomc:line:main_loop} runs in $O(n)$.
The first nested loop (\cref{iwfomc:line:cell_loop}) is again independent of $n$, and the second (\cref{iwfomc:line:table_loop}) runs in $O(n^p)$.
The final sum on \cref{iwfomc:line:gamma} also runs in $O(n^p)$.
Overall, we have
\begin{align*}
    O(n) \cdot O(n^p) + O(n^p) \in O(n^{p+1}),
\end{align*}
which is polynomial in the domain size $n$, since $p$ is the number of (valid) cells which only depends on the fixed formula $\fotwoformula$.
\end{proof}

\subsection{Dealing with a Linear Order}
\label{ssec:handle-linear-order}

\begin{algorithm*}[tbp]
  \caption{WFOMC for \fotwo+$\loaxiom$}
  \label{alg:iwfomc}
  \KwIn{An \fotwo sentence $\Psi = \forall x\forall y: \psi(x,y)$ with $\{\lopred\}\subseteq \preds{\Psi}$, domain size $n$, weighting functions $(\weights)$}
  \KwOut{$\symbwfomc(\Psi \land \loaxiom(\lopred), n, \weights)$}
  Compute $\wlo_{i}, \rlo_{ij}$ for every $i,j\in[p]$ \tcp*{init} \label{iwfomc:line:wmc}
  $T_1(\bm{\vecdelta}_l) \leftarrow \wlo_l$ for each $l\in[p]$ \label{iwfomc:line:init} \tcp*{base case}
  \ForEach{$i=2$ \KwTo $n$}{ \label{iwfomc:line:main_loop}
    $T_i(\bm{k}) \leftarrow 0$ for every $\bm{k}\in\mathbb{N}^p$\\
    \ForEach{$l\in[p]$}{ \label{iwfomc:line:cell_loop}
      \ForEach{$(\bm{k}_{old}, W_{old}) \in T_{i-1}$}{ \label{iwfomc:line:table_loop}
        $W_{new} \leftarrow W_{old} \cdot \wlo_l \cdot \prod_{s\in[p]} \rlo_{sl}^{(\bm{k}_{old})_s}$\label{iwfomc:line:weight-update} \tcp*{assign the element $i$ to $C_l$}
        $\bm{k}_{new} \leftarrow \bm{k}_{old} + \vecdelta_l$\\
        $T_i(\bm{k}_{new}) \leftarrow T_i(\bm{k}_{new}) + W_{new}$\label{iwfomc:line:loop_sum}
      }
    }
  }
  $\gamma \gets \sum_{\bm{k}\in\mathbb{N}^p: |\bm{k}|=n} T_n(\bm{k})$\label{iwfomc:line:gamma}\\
  \Return $n! \cdot \gamma$\label{iwfomc:line:final-sum}
\end{algorithm*}

When adding the linear order axiom to the input sentence $\fotwoformula$, each model $\omega$ of $\psi$ must introduce a domain ordering.
Assume we find the set $\Omega$ of all models for one fixed ordering.
Having a domain permutation $\pi$, $$\Omega' = \bigcup_{\omega\in\Omega}\Set{\pi(\omega)}$$ will be the set of all models with respect to the new domain ordering defined by $\pi$.
Hence, the situation is symmetric for any particular ordering of the domain.

\begin{lemma}
\label{lemma:n-factorial}
Let $\Psi$ be a formula of the form $\Psi = \forall x \forall y: \psi(x,y) \wedge \loaxiom(\lopred)$, where $\lopred$ is a binary relation from $\mathcal{P}_{\psi}$.
Let $\Delta$ be a domain over which we want to compute \wfomc of $\Psi$.

If $\omega \models \Psi$ and $\pi$ is a permutation of $\Delta$, then $\pi(\omega) \models \Psi$, where application of $\pi$ to a possible world is defined by appropriate substitution of the domain elements in ground atoms. 
Moreover, if $\pi$ is not an identity, then $\omega \neq \pi(\omega)$.
\end{lemma}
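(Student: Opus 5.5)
The plan is to prove the two claims separately. Both rest on the fact that the sentence is constant-free, so its truth is invariant under renaming domain elements, while the linear order axiom forces enough asymmetry that no non-identity renaming can fix a model.

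\emph{Invariance.} I will show that $\pi$, extended to ground atoms by $P(a_1,\dots,a_k)\mapsto P(\pi(a_1),\dots,\pi(a_k))$ and to possible worlds elementwise, is an isomorphism between the structures $\omega$ and $\pi(\omega)$ over $\Delta$. By the standard structural induction on formulas, for every formula $\alpha(x_1,\dots,x_k)$ and tuple $\bar a$ we have $\omega\models\alpha(\bar a)$ iff $\pi(\omega)\models\alpha(\pi(\bar a))$.
\begin{itemize}
\item The atomic case holds by the definition of $\pi(\omega)$.
\item The connective cases are immediate.
\item For the quantifier cases, $\pi$ is a bijection of $\Delta$, so ranging over $b\in\Delta$ is the same as ranging over $\pi(b)\in\Delta$.
\end{itemize}
Since $\forall x\forall y:\psi(x,y)$ is a sentence containing no constants, we get $\omega\models\forall x\forall y:\psi$ iff $\pi(\omega)\models\forall x\forall y:\psi$. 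For the axiom $\loaxiom(\lopred)$, the interpretation of $\lopred$ in $\pi(\omega)$ is the image of a reflexive, anti-symmetric, transitive, total relation under a bijection, and each of these four properties is preserved by relabeling vertices. Equivalently, $G(\lopred)$ remains an acyclic tournament with loops, because graph isomorphism preserves this. Hence $\pi(\omega)\models\Psi$.

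\emph{Distinctness.} Suppose $\pi$ is not the identity, and assume for contradiction that $\pi(\omega)=\omega$. Then $\pi$ is an automorphism of the finite linear order $(\Delta,\lopred^\omega)$. Choose the $\lopred$-least element $a$ with $\pi(a)\neq a$. Every $b$ strictly below $a$ is fixed by $\pi$. Since $\pi$ is a bijection, $\pi(a)$ cannot be any such fixed $b$, because $b=\pi(b)$ already. So $\pi(a)>a$ strictly.

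Consider the preimage $c=\pi^{-1}(a)$. We have $c\neq a$, and $c$ is not strictly below $a$, since those elements are fixed. Hence $c>a$. Because $\omega$ contains $a\lopred c$ and $\pi(\omega)=\omega$, it also contains $\pi(a)\lopred\pi(c)$, that is, $\pi(a)\lopred a$. This contradicts $a<\pi(a)$ together with anti-symmetry. So $\pi(\omega)\neq\omega$.

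\emph{Main obstacle.} This is the rigidity argument in the second part: finite linear orders have only trivial automorphisms. I will handle it by the minimal-moved-element argument above, or alternatively by induction on $|\Delta|$ using that an automorphism must fix the least element. The first part is routine but should explicitly use the constant-freeness of $\psi$, since it is exactly what would fail with evidence.
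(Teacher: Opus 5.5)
Your proof is correct and follows the paper's route: constant-freeness makes $\forall x\forall y:\psi(x,y)$ invariant under renaming, relabeling preserves the linear order axiom, and distinctness holds because $\pi$ must change the order relation. The paper only asserts that a non-identity permutation yields a different ordering, whereas your minimal-moved-element argument actually proves that finite linear orders have no non-trivial automorphisms, so your version fills the step the paper leaves implicit.
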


\begin{proof}
If $\omega$ is a model of $\Psi$, we can partition $\omega$ into two disjoint sets:
One holding only atoms with the predicate $\lopred$ which we denote $\omega^\lopred$ and $\omega_\psi = \omega  \setminus \omega^\lopred$.
The set $\omega^\lopred$ defines an ordering of $\Delta$, and $\omega_\psi$ is then a model of $\forall x \forall y: \psi(x,y)$ respecting the ordering defined by $\omega^\lopred$.
Applying the permutation $\pi$ to $\omega^\lopred$ will define a different domain ordering.

Since there are no constants in $\psi$, $\pi(\omega_\psi)$ will still be a model of $\forall x \forall y: \psi(x,y)$ (we only apply a different substitution to the variables in $\psi$).
Moreover, since $\omega_\psi$ respected the ordering defined by $\omega^\lopred$,$\pi(\omega_\psi)$ will respect the new ordering defined by $\pi(\omega^\lopred)$.

Hence $\pi(\omega) = \pi(\omega^\lopred) \cup \pi(\omega_\psi)$ is another model of $\Psi$ and it must be different from $\omega$, because $\pi(\omega^\lopred)$ defines a different ordering than $\omega^\lopred$.
\end{proof}

\begin{corollary}
\label{cor:n-factorial}
\rev{
Given $\Psi = \forall x \forall y: \psi(x,y) \wedge \loaxiom(\lopred)$, we can compute \wfomc of $\Psi$ by solving the task for $ \Psi' = \forall x \forall y:\psi(x,y) \wedge \lonat(\lopred)$, and then multiplying the obtained value by the number of distinct domain orderings, i.e., 
\begin{align*}
    \symbwfomc(\Psi, n, \weights) = n! \cdot \sum_{\mu\in\fomodels{\Psi}{n}^\le} W(\mu, \weights),
\end{align*}
where $\fomodels{\Psi}{n}^\leq$ denotes the set of models of $\Psi$ over the ordered domain $\dom = \{1 \leq 2 \leq \dots \leq n\}$.
}
\end{corollary}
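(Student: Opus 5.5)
The identity follows from a bijection between models and pairs (ordering, model over the natural order). For every model of $\Psi$ I will record which linear order it induces, group the models by that order, and show that every group has the same total weight as the group belonging to the natural order $1\le 2\le\dots\le n$.

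\emph{Partitioning.} Each $\omega\in\fomodels{\Psi}{n}$ satisfies $\loaxiom(\lopred)$, so $\omega^\lopred$, as in the proof of \cref{lemma:n-factorial}, is the graph of a linear order on $[n]$. A linear order on $[n]$ corresponds to exactly one permutation $\sigma$ of $[n]$ that sends the natural order onto it, namely $\sigma(i)$ is the $i$-th smallest element. Hence $\fomodels{\Psi}{n}$ is the disjoint union, over all $n!$ permutations $\sigma$, of the classes $\mathcal{M}_\sigma$ of models whose $\lopred$-atoms are exactly $\sigma(\omega_{\mathrm{nat}}^\lopred)$. Here $\omega_{\mathrm{nat}}^\lopred=\{(i\le j): i\le j\}$. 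The class for the identity permutation is exactly $\fomodels{\Psi}{n}^\le$, which coincides with the models of $\Psi'$.

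\emph{Bijection between classes.} By \cref{lemma:n-factorial}, the map $\omega\mapsto\sigma(\omega)$ sends models of $\Psi$ to models of $\Psi$. Its $\lopred$-part is $\sigma(\omega_{\mathrm{nat}}^\lopred)$, so it maps $\fomodels{\Psi}{n}^\le$ into $\mathcal{M}_\sigma$. It is injective because it is a relabelling of atoms. It is surjective because $\sigma^{-1}$, again by \cref{lemma:n-factorial}, maps $\mathcal{M}_\sigma$ back into the class for the identity.

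\emph{Weights.} The weight $W(\omega,\weights)$ depends only on how many positive and negative ground literals of each predicate $\omega$ contains. A permutation of constants maps the Herbrand base bijectively to itself and preserves predicates, so these counts are unchanged and $W(\sigma(\omega))=W(\omega)$. Summing over the $n!$ classes, each of weight $\sum_{\mu\in\fomodels{\Psi}{n}^\le}W(\mu)$, gives the claimed identity.

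The only delicate point is constant-freeness. \cref{lemma:n-factorial} needs $\psi$ to have no constants or evidence; otherwise the permuted world may violate unary evidence and the classes lose their equal weight. I will note explicitly that the corollary is stated for constant-free $\psi$, and that the evidence case is handled separately later.
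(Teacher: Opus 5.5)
Your proof is correct and follows the paper's route: the paper obtains the corollary directly from \cref{lemma:n-factorial} (permuting constants maps models of $\Psi$ to models of $\Psi$, and a non-identity permutation changes the induced order). Your partition into $n!$ order classes related by weight-preserving relabellings is the argument the paper leaves implicit, and you additionally make the surjectivity step, the weight invariance and the constant-freeness assumption explicit.
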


Using \cref{alg:iwfomc}, we can solve the subproblem as easily as computing \wfomc over \fotwo.
\cref{alg:iwfomc} incrementally enlarges the domain, one new element at a time.
That is beneficial to us, since we need to enforce that, when processing the $i$-th domain element, $i' < i$ for all previously processed domain elements $i'$.
Denote
\begin{equation}
\label{eq:psi<+psi<=}
    \begin{aligned}
        \psi^\le(x,y) &:= \psi(x,y)\land x\le y,\\
        \psi_{ij}^\le(x,y) &:= \psi_{ij}(x,y)\land x\le y,\\
    \end{aligned}
    \qquad
    \begin{aligned}
        \psi^<(x,y) &:= \psi(x,y) \land (x<y),\\
        \psi_{ij}^<(x,y) &:= \psi_{ij}(x,y) \land (x<y),
    \end{aligned}
\end{equation}
and define
\begin{equation}
  \begin{aligned}
    \rlo_{ij} := \symbwmc(\psi_{ij}^{<}(a,b), \weights), \qquad \wlo_k := \symbwmc(\psi^\le(c,c) \land C_k(c), \weights).
  \end{aligned}
  \label{eq:rlo-wlo}
\end{equation}
Recall that $\psi_{ij}(x,y)$ is the simplified formula of $\psi(x,y)\land\psi(y,x)$, where we have replaced unary and binary reflexive literals by true or false appropriately assuming that $x$ realized the cell $C_i$ and $y$ realized $C_j$.
The terms $\rlo_{ij}$ are now defined with respect to $\psi^<_{ij}$, meaning we enforce $x<y$; thus, it no longer holds that $\rlo_{ij} = \rlo_{ji}$.
Moreover, $\wlo_k$ is again defined with respect to the non-simplified formula $\psi^\le$ so that we do not remove the unary and binary reflexive literals from the weighted model count entirely.

Finally, we use the newly defined values in \cref{alg:iwfomc}.
That is all we need to ensure that $T_i(\bm{k})$ is the sum of weighted models in $\fomodels{\Psi}{i}^\le$ whose cell configuration is $\bm{k}$.

\begin{lemma}
\label{lemma:iwfomc-fo2+lo}
\rev{
The value $\gamma$ in \cref{alg:iwfomc} computes $\wfomc(\Psi \land \lonat(\lopred), n, \weights)$ for an \fotwo sentence $\Psi$ with $\lopred\in\preds{\Psi}$.
Moreover, the algorithm runs in time polynomial in the domain size $n$.
}
\end{lemma}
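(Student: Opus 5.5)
The plan is to mirror the induction in the proof of \cref{lemma:iwfomc-fo2}, but now over the naturally ordered domain $1\le 2\le\dots\le n$ imposed by $\lonat(\lopred)$. First, I would fix a cell partition $\mathcal{C}=(C_1,\dots,C_p)$ of $[m]$. Then I would write the ground formula whose models are exactly the models of $\Psi\land\lonat(\lopred)$ with that partition:
\begin{equation*}
\Phi^{\le}_{\mathcal{C}} = \bigwedge_{i,j\in[p]}\ \bigwedge_{a\in C_i,\, b\in C_j,\, a<b} \psi^<_{ij}(a,b) \ \land\ \bigwedge_{c\in[m]} \psi^\le(c,c)\land C_{\tau_c}(c).
\end{equation*}
The justification is that $\lonat(\lopred)$ fixes every atom $a\lopred b$. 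For each pair $a<b$ (in the natural order) we have $a\lopred b$ true and $b\lopred a$ false, and $\psi_{ij}(a,b)$ already encodes both $\psi(a,b)$ and $\psi(b,a)$. Conjoining $(a<b)$ therefore sets the order atoms of that 2-table to exactly the values demanded by the axiom. Reflexive order atoms are forced true via $\psi^\le(c,c)$. Thus the $\lopred$-atoms are absorbed into the local pieces and introduce no global constraint. As before, distinct conjuncts share no atoms once the cells are fixed, so $\symbwmc$ factorizes into $\prod_c \wlo_{\tau_c}\prod_{a<b}\rlo_{\tau_a\tau_b}$.

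Next comes the induction over the domain size in the order $1,2,\dots,n$, which matches the order in which \cref{alg:iwfomc} inserts elements. When element $m+1$ is added to cell $l$, every earlier element $a\in C_s$ satisfies $a<m+1$. The new factor is therefore $\wlo_l\prod_s \rlo_{sl}^{|C_s|}$, with the indices in exactly the orientation used on \cref{iwfomc:line:weight-update}; this asymmetry is why $\rlo_{sl}$ rather than $\rlo_{ls}$ appears.

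The main obstacle is that the factor $\binom{m}{\bm{k}}$ from \cref{eq:wmc-multinomial} no longer applies. The weighted model count now depends on which positions hold which cells, not just on the configuration. So I would define $T_m(\bm{k})$ directly as the sum of $\symbwmc(\Phi^\le_{\mathcal{C}})$ over all partitions $\mathcal{C}$ of $[m]$ with configuration $\bm{k}$, and prove the recursion
\begin{equation*}
T_{m+1}(\bm{k})=\sum_{l} T_m(\bm{k}-\vecdelta_l)\,\wlo_l\prod_s\rlo_{sl}^{(\bm{k}-\vecdelta_l)_s}.
\end{equation*}
This recursion holds because partitions of $[m+1]$ with configuration $\bm{k}$ correspond bijectively to pairs consisting of a cell $l$ for element $m+1$ and a partition of $[m]$ with configuration $\bm{k}-\vecdelta_l$. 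The update factor depends only on the counts $|C_s|$, so it can be pulled out of the inner sum. The base case is $T_1(\vecdelta_l)=\wlo_l$. Summing $T_n$ over all $\bm{k}$ then gives $\gamma=\symbwfomc(\Psi\land\lonat(\lopred),n,\weights)$.

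Runtime is unchanged from \cref{lemma:iwfomc-fo2}, namely $O(n^{p+1})$, since only the precomputed constants differ. Each $\wlo_k$ and $\rlo_{ij}$ is a propositional WMC over a constant-size formula.
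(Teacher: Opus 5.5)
Your proposal is correct and follows the paper's argument. It uses the same induction in the natural order, with $\psi^<_{jl}$ and $\psi^\le_{ll}$ replacing $\psi_{jl}$ and $\psi_{ll}$, and arrives at the identical recursion, base case $T_1(\vecdelta_l)=\wlo_l$, and $O(n^{p+1})$ runtime. Your one deviation is to define $T_m(\bm{k})$ as the sum of $\symbwmc(\Phi^\le_{\mathcal{C}})$ over all partitions with configuration $\bm{k}$, rather than reusing the $\binom{m}{\bm{k}}$ identity. That is the right invariant here, and it is the one the paper states in prose just before the lemma: since $\rlo_{ij}\neq\rlo_{ji}$, the weighted count depends on the specific partition, and the paper's ``the rest of the proof remains the same'' glosses over exactly this point.
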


\begin{proof}
The proof is very similar to the proof of \cref{lemma:iwfomc-fo2}.
The main difference lies in using the subformulas $\psi_{ij}^<$ and $\psi_{ij}^\le$ instead of $\psi_{ij}$.

In the case of $n=1$, we have
 \begin{equation*}
        \begin{aligned}
        \Phi_\mathcal{C} = \psi_{ll}^\le(1,1) \quad \text{and} \quad \symbwmc(\Phi_\mathcal{C}, \weights) = \wlo_l, 
        \end{aligned}
\end{equation*}
hence, $\gamma = T_1(\vecdelta_l) = \wlo_l.$

In the induction step, we obtain
\begin{equation*}
  \begin{aligned}
  \Phi_{\mathcal{C}_2} = \Phi_{\mathcal{C}_1} \land \psi_{ll}^\le(i+1,i+1) \land \bigwedge_{j\in[p]} \bigwedge_{i'\in C_j} \psi_{jl}^<(i',i+1).
  \end{aligned}
\end{equation*}
The subformula $\psi_{ll}^\le(i+1,i+1)$ enforces reflexivity for the new element, i.e., $(i+1)\le (i+1)$.
The subformula $\psi_{jl}^<(i',i+1)$ then requires $i'<(i+1)$ for all already processed elements $i' \in [i]$, i.e., by the induction hypothesis, we will have the property $1\le 2\le\ldots\le i\le i+1$.
Note that the argument of independent conjuncts still applies as well; thus, we obtain
\begin{equation*}
  \begin{aligned}
  \symbwmc(\Phi_{\mathcal{C}_2}, \weights) = \symbwmc(\Phi_{\mathcal{C}_1}, \weights)\cdot \wlo_l \cdot \prod_{j\in[p]} \rlo_{jl}^{|C_j|}.
  \end{aligned}
\end{equation*}
Clearly, then, it holds that
\begin{equation*}
        T_{i+1}(\bm{k})\! =\! \sum_{l\in[p]} T_i(\bm{k} - \vecdelta_l)\! \cdot\! \wlo_l\! \cdot\! \prod_{j\in[p]} \rlo_{jl}^{(\bm{k} - \vecdelta_l)_j}.
    \end{equation*}
    
The rest of the proof remains the same, including the time complexity part. 
\end{proof}

Combining \cref{lemma:iwfomc-fo2+lo} and \cref{cor:n-factorial}, we obtain both correctness and time complexity of the entire \cref{alg:iwfomc} and we can state the following theorem:
\begin{theorem}
\label{th:fo2+lo_liftable}
The two-variable fragment of first-order logic with the linear order axiom is domain-liftable.   
\end{theorem}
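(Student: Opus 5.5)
The plan is to assemble the theorem from the reductions already established, in three steps.

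\textbf{Step 1: reduce to a universally quantified sentence.} Take an arbitrary \fotwo sentence $\Phi$ and form $\Phi \land \loaxiom(\lopred)$. First apply the normalization of \citet{graedel97:fo2-sat} and the Skolemization of \citet{broecketal14:wfomc-skolem}. These produce a sentence $\forall x\forall y: \psi(x,y)$ and new weights $(w', \negw')$ that preserve \wfomc. The one point to check is that Skolemization interacts harmlessly with the axiom. It only introduces fresh predicates and constraints on models of $\Phi$, and it never touches the interpretation of $\lopred$. The construction is a model-by-model weighted bijection (up to cancellation by negative weights) that leaves $\lopred$ fixed, so conjoining $\loaxiom(\lopred)$ on both sides preserves the equality. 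If $\lopred$ happens not to occur in $\Phi$, we can simply add it to the vocabulary, for instance through a tautological conjunct.

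\textbf{Step 2: count over all orderings.} For $\forall x\forall y:\psi(x,y) \land \loaxiom(\lopred)$ I invoke \cref{cor:n-factorial}, which rests on \cref{lemma:n-factorial}. It says the weighted count equals $n!$ times the weighted count over the naturally ordered domain, that is, $n!\cdot\symbwfomc(\forall x\forall y:\psi(x,y)\land\lonat(\lopred), n, w', \negw')$. The symmetry argument needs the sentence to be constant-free, and it is, since we are working without evidence. \cref{lemma:iwfomc-fo2+lo} then computes the natural-order count as the value $\gamma$ of \cref{alg:iwfomc}, using $\wlo_k$ and $\rlo_{ij}$. Each of these is the weighted model count of a constant-size propositional formula, so it is computable in $O(1)$ time with respect to $n$. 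The dynamic program runs in $O(n^{p+1})$ time, and multiplying by $n!$ costs only polynomially many arithmetic operations.

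\textbf{Step 3: extend to \ctwo.} For sentences with counting quantifiers or cardinality constraints, I use \cref{lemma:c2+cc}. It explicitly allows an arbitrary conjunction of axioms $A$, here $A=\loaxiom(\lopred)$. Its reduction makes polynomially many calls to an \fotwo$+A$ oracle, which is exactly what Steps 1 and 2 supply. This gives domain-liftability of the whole fragment.

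\textbf{Main obstacle.} I expect the delicate step to be the first one: justifying that Skolemization and normalization commute with the non-first-order axiom. I would argue it directly from the definition of the axiom as a filter on the $\lopred$-atoms of a possible world. Every transformation either adds fresh predicates or rewrites $\Phi$ into an equivalent form with the same $\lopred$-interpretation. Consequently, the models that are filtered out are exactly the same on both sides.
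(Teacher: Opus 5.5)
Your proposal is correct and follows essentially the paper's own route. You Skolemize to a universally quantified sentence, use \cref{cor:n-factorial} to reduce to the natural order, evaluate that count with \cref{alg:iwfomc} via \cref{lemma:iwfomc-fo2+lo}, and lift to \ctwo through \cref{lemma:c2+cc}. You also verify that Skolemization and normalization leave the $\lopred$-interpretation fixed, so the axiom filters each cancellation group uniformly; the paper leaves this check implicit, and your argument for it is sound.
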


The result can be easily extended to the fragment of $C^2$ using \cref{lemma:c2+cc}.

\subsection{Processing Evidence}
\label{ssec:iwfomc-evidence}

\rev{
So far, we have shown that \cref{alg:iwfomc} can compute \wfomc of a sentence of the form $\forall x \forall y: \psi(x, y) \land \loaxiom(\lopred)$ in a domain-lifted way.
However, $\psi(x, y)$ must be constant-free, i.e., we do not support conditioning on evidence.
Let us now adapt the algorithm to address that issue.
Recall that supporting arbitrary conditioning is intractable, but conditioning on unary literals can be handled in a domain-lifted way \citep{broeckdavis12:wfomc-evidence}.
While \citet{kula26:evidence-gaifman-tw} have shown that some binary evidence is liftable as well, it depends on a specialized algorithm working over the evidence's Gaifman graph.
Therefore, we will only aim to account for unary evidence in this section.

As we first introduced \cref{alg:iwfomc} for computing WFOMC over the \fotwo fragment, let us now again start with the same fragment alone.
We will add the linear order as the next step.
Recall that for \fotwo, we use the values $w_k$ and $r_{ij}$ rather than $\wlo_k$ and $\rlo_{ij}$, and we return the value $\gamma$ (\cref{iwfomc:line:gamma}) rather than its multiple (\cref{iwfomc:line:final-sum}).

Supporting unary evidence in the context of \cref{alg:iwfomc} for \fotwo is quite simple due to our use of (valid) cells.
Over the course of the computation, we repeatedly enforce that a particular domain element $i$ realizes (is assigned to) a particular cell $C_l(x)$, i.e., we restrict ourselves to possible worlds $\omega$ such that $\omega \models C_l(i)$.
If we want to accommodate $\evidence{i}$ on top of that, we may assign the element $ i $ only to cells $ C_l $ such that $C_l(i) \land \evidence{i}$ is satisfiable.
In other words, we restrict ourselves to possible worlds $\omega$ such that $\omega \models C_l(i) \land \evidence{i}$.
Moreover, note that, up to the value $\gamma$, the algorithm implicitly uses a particular domain ordering, which is explicitly enforced when we use the values $\wlo_k$ and $\rlo_{ij}$.
Therefore, the very same technique works not only for the \fotwo fragment alone but also for \fotwo sentences with the natural ordering of elements encoded by the axiom $\lonat$.

Let us explicitly formulate the changes to \cref{alg:iwfomc} in order to support evidence in \fotwo+$\lonat$ sentences.
First, the base case on \cref{iwfomc:line:init} becomes
\begin{equation*}
    T_1(\bm{\vecdelta}_l) = \begin{cases}
        \;\wlo_l & \text{if } C_l(1) \land \evidence{1} \text{ is satisfiable,}\\
        \;0 & \text{otherwise}.
    \end{cases}
\end{equation*}
When we initialize the table $T_1$ (i.e., when processing the domain element 1), we essentially filter out those cells $C_l$ such that $C_l(1)$ is not compatible with $\evidence{1}$.
In other words, $\evidence{1}$ restricts which cells may be realized by the element 1.
Second, for each $i$ chosen on \cref{iwfomc:line:main_loop}, the loop on \cref{iwfomc:line:cell_loop} only goes over $l\in[p]$ such that $C_l(i) \land \evidence{i}$ is satisfiable, i.e., when processing the $i$-th domain element, we again limit our selection of cells realized by the element $i$ only to those compatible with $\evidence{i}$.
}

\begin{example}
    Consider the sentence $\Gamma = (\forall x \forall y: S(x) \to R(x,y)) \land S(1) \land \neg S(2)$.
    
    Denote the valid cells of $\forall x \forall y: (S(x) \to R(x,y))$ as
    \begin{align*}
        C_1(x) &= \neg S(x) \wedge \neg R(x,x), \\
        C_2(x) &= \neg S(x) \wedge R(x,x),\\
        C_3(x) &= S(x) \wedge R(x,x).  
    \end{align*}
    Let us sketch the work of \cref{alg:iwfomc} on $\Gamma$.
    Note that $\Gamma$ does not contain the linear order axiom; hence, we will make use of values $r_{ij}$ and $w_k$ from \Cref{eq:r-w} rather than values $\rlo_{ij}$ and $\wlo_k$ from \Cref{eq:rlo-wlo}.
    The algorithm will work as follows:
    \begin{enumerate}
        \item Since $\evidence{1}\land C_1(1)$ and $\evidence{1}\land C_2(1)$ are both unsatisfiable formulas, the table $T_1$ will be initialized such that $T_1(1,0,0)=T_1(0,1,0)=0$ and $T_1(0,0,1)=w_3$.

        \item For $i=2$, we will only go over $l\in\{1,2\}$ since $\evidence{2}\land C_3(2)$ is unsatisfiable.
        \begin{itemize}
            \item For $l=1$, we will have
                \begin{align*}
                    T_2(2,0,0) &= T_1(1,0,0)\cdot w_1 \cdot r_{11} = 0,\\
                    T_2(1,1,0) &= T_1(0,1,0)\cdot w_1 \cdot r_{12} = 0,\\
                    T_2(1,0,1) &= T_1(0,0,1)\cdot w_1 \cdot r_{13} = w_1\cdot w_3 \cdot r_{13},\\
                \end{align*}

            \item and for $l=2$, there will be
                \begin{align*}
                    T_2(1,1,0) &= T_2(1,1,0) + T_1(1,0,0)\cdot w_2 \cdot r_{12} = 0,\\
                    T_2(0,2,0) &= T_1(0,1,0)\cdot w_2 \cdot r_{22} = 0,\\
                    T_2(0,1,1) &= T_1(0,0,1)\cdot w_2 \cdot r_{23} = w_2\cdot w_3 \cdot r_{23}.\\
                \end{align*}
        \end{itemize}

        \item For $i\ge3$, $l$ will no longer be restricted and will go over all $\{1,2,3\}$.
    \end{enumerate}
\end{example}

\rev{
It may not be immediately obvious why the same procedure, extended by the multiplication by $n!$ at the end, would not work for the linear order axiom $\loaxiom$.
Since we are excluding conditioning on binary relations, we avoid evidence of the form $a < b$ for some two domain elements $a,b\in\dom$, which would clearly break the symmetry of the problem, invalidating \cref{cor:n-factorial}.
However, symmetry may also be broken by using unary evidence, as we demonstrate in the following example.
\begin{example}
    Consider the sentence $$\Gamma = \loaxiom(\lopred) \land (\forall x \forall y: ((H(x) \land T(y)) \to x\lopred y) \land (H(x) \lor T(x)) \land (\neg H(x) \lor \neg T(x))) \land H(2) \land T(1).$$
    For each domain element, either $H/1$ or $T/1$ must hold, but not both at the same time; i.e., each element is either in the sequence head or the tail.
    
    First, if we algorithmically process the domain elements with the explicit order $1\le2\le\ldots\le n$, we will find no models.
    However, even on a domain of size $n=2$, the problem has the model $\omega=\{H(2), T(1), 2\le 1, 2\le2,1\le1\}$.

    Second, even if we find the model $\omega$ (or its corresponding model count of 1), it is not correct to multiply by $n!=2$, since, as we have already pointed out, the ordering $1\le2\le3$ does not produce a model.
\end{example}

To support evidence in the presence of the linear order axiom $\loaxiom$, another approach is thus required.
A possible solution is to encode the evidence as \emph{cardinality constraints}, as in \citet{wang24:wfoms}, and then address them instead (using the techniques from \citet{kuzelka21:wfomc-c2}, i.e., \cref{lemma:c2+cc}).
Although \citet{wang24:wfoms} did not work with a linear order, it is not hard to see that their encoding works even in the presence of the axiom $\loaxiom$.
See \cref{app:evidence-by-ccs} for details.
}

\section{Successor Relations}
\label{sec:4-successor-relations}
Using the linear order axiom, we have computed WFOMC over an ordered domain in a domain-lifted manner.
Therefore, we can model various scenarios that require sequential behavior of the domain.
For instance, consider hidden Markov models, where one needs to encode the transition probabilities between hidden states, or combinatorics problems in which we arrange objects in a row so that those satisfying a particular property must be next to each other.
Note, however, that to reason about such scenarios, we may need more than a single binary relation interpreted as a linear order.
For a given domain element, we may need to access its predecessor or successor in the linear order.
We may even require access to the $k$-th predecessor or the $k$-th successor.
Naturally, such information is implicitly contained within the ordered domain, but can we obtain it explicitly?
Moreover, can we do so without incurring additional computational costs?

To answer the questions above, let us start by formally defining the successor relation we wish to study.
\rev{
We will focus on the arbitrary linear order with the axiom $\loaxiom$ while again solving the problem for $\lonat$ in the process.
}
Let us denote the relation that \emph{$y$ is the (immediate) successor of $x$} as $Succ_1(x, y)$.
Naturally, we may also say that \emph{$x$ is the (immediate) predecessor of $y$}. 
The relation will be defined with respect to the linear ordering of the domain enforced by the axiom $\loaxiom(\lopred)$.
To define $Succ_1(x, y)$ explicitly using first-order logic, we may write that
\begin{equation}
    \forall x\forall y: Succ_1(x,y) \leftrightarrow \left((x < y) \land (\neg \exists z: (x < z) \land (z < y))\right).
  \label{eq:successor}
\end{equation}
If needed, we may also use the relation $Succ_1$ to inductively define the \emph{$k$-th successor relation} such as
\begin{equation}
     \forall x\forall y: Succ_k(x,y) \leftrightarrow (\exists z: Succ_{k-1}(x,z) \land Succ_1(z,y)),
     \label{eq:kth-successor}
\end{equation}
where $Succ_k(x,y)$ expresses that $y$ is the $k$-th successor of $x$.

Clearly, \Cref{eq:successor,eq:kth-successor} would not allow us to perform domain-lifted computations, since they require three distinct variables.
Before we address that issue, let us consider two examples that showcase the utility of successor relations.

\begin{example}
\label{ex:basic-succ}
Consider modeling the following problem from combinatorics:
    ``How many ways can we put 3 math books and 5 English books on a shelf if all the math books must stay together and all the English books must also stay together?''
    
    Using the predicate $Succ_1(x, y)$ defined with respect to a linear order, we may encode the task as follows:
    \begin{align*}
        &\left(\forall x: (m(x) \lor e(x)) \land \neg (m(x) \land e(x))\right) \\
        \land\;&\left(\forall x: e_{first}(x) \leftrightarrow \left( e(x) \land \neg(\exists y: e(y) \land Succ_1(y,x))\right)\right) \\
        \land\;&\left(\forall x: m_{first}(x) \leftrightarrow \left( m(x) \land \neg(\exists y: m(y) \land Succ_1(y,x))\right)\right) \\
        \land\;&(|e_{first}| \le 1) \land (|m_{first}| \le 1) \land (|m| = 3) \land (|e| = 5),
    \end{align*}
where $e(x)$ and $m(x)$ denote the book $x$ to be an English book or a math book, respectively, and $e_{first}$ and $m_{first}$ denote the first English book and the first math book on the shelf.
The cardinality constraints $(|e_{first}| \le 1)$ and $(|m_{first}| \le 1)$ ensure that all English books and all math books are together.
\end{example}

\begin{example}
\label{ex:grid-kxn}
    Consider encoding a grid $k \times n$ where $k$ is a constant.
    Assuming an ordered domain with access to the immediate and $k$-th successor relations, we can encode the grid as shown in \cref{fig:grid-kxn}.
    The $Succ_1$ relation represents the vertical (blue) adjacency, and the $Succ_k$ relation represents the horizontal (red) adjacency.
    See \cref{app:grid-by-successors} for an exact encoding.

\begin{figure}[tbp]
    \centering
    \begin{tikzpicture}[
        % Modern syntax for defining node styles
        roundnode/.style={circle, draw, inner sep=0pt, minimum size=2mm}
    ]
        % Parameters (Renamed to avoid 'calc' library conflicts)
        \def\cols{8}   % number of columns (was \n)
        \def\rows{4}   % number of rows (was \k)
        \def\xstep{1.0}
        \def\ystep{1.0}
        
        % Draw nodes
        \foreach \i in {1,...,\cols}{
            \foreach \j in {1,...,\rows}{
                \node[roundnode] (v\i\j) at ({\xstep*\i},{-\ystep*\j}) {};
            }
        }
        
        % Horizontal red arrows
        \foreach \j in {1,...,\rows}{
            \foreach \i [evaluate=\i as \ip using int(\i+1)] in {1,...,\numexpr\cols-1}{
                \path[->, red] (v\i\j) edge (v\ip\j);
            }
        }
        
        % Vertical blue arrows (upward)
        \foreach \i in {1,...,\cols}{
            \foreach \j [evaluate=\j as \jp using int(\j+1)] in {1,...,\numexpr\rows-1}{
                \path[<-, blue] (v\i\jp) edge (v\i\j);
            }
        }
        
        % Diagonal blue arrows (upright)
        \foreach \i [evaluate=\i as \j using int(\i+1)] in {1,...,\numexpr\cols-1}{
            \path[->, blue] (v\i\rows) edge (v\j1);
        }
        
        % Braces and labels (Using the safe \cols and \rows variables)
        \draw[decorate, decoration={brace, amplitude=6pt}] 
            ($(v11)+(-0.1,0.2)$) -- ($(v\cols1)+(0.1,0.2)$)
            node[midway, yshift=10pt]{$n$};
            
        \draw[decorate, decoration={brace, amplitude=6pt}] 
            ($(v1\rows)+(-0.2,-0.1)$) -- ($(v11)+(-0.2,-0.1)$)
            node[midway, xshift=-12pt]{$k$};
            
    \end{tikzpicture}
    \caption{A grid encoded using the immediate and the $k$-th successor relations.}
    \label{fig:grid-kxn}

    \Description{The figure shows domain elements represented as vertices formed into a grid. The grid is enforced by a vertical adjacency, representing the immediate successor relation, going downwards between lines and diagonally from the last line to the first one, and a horizontal adjacency going left-to-right between columns, representing the $k$-th successor relation.}
    
\end{figure}
\end{example}

Instead of the three-variable equivalences in \Cref{eq:successor,eq:kth-successor}, having a linear order relation $\lopred$, we may express the successor relationships using $\ctwo$ sentences.
For instance, to encode $Succ_1(x, y)$, we could proceed as follows:
\begin{itemize}
    \item First, we would denote the first and the last element in the sequence:
    \begin{equation}
    \label{eq:succ1:first-last}
        \begin{aligned}
            &(\forall x : First(x) \leftrightarrow \forall y: \ (x\lopred y))   \\
            \land\;&(\forall x : Last(x) \leftrightarrow \forall y: \ (y\lopred x)) 
        \end{aligned}
    \end{equation}

    \item Next, we would require that every element except for the last has exactly one successor, and every element except for the first has exactly one predecessor:
    \begin{equation}
    \label{eq:succ1:bijection-like}
        \begin{aligned}
            &(\forall x : \lnot Last(x) \to \exists^{=1} y: \ Succ_1(x,y)) \\
            \land\; &(\forall x : \lnot First(x) \to \exists^{=1} y: \ Succ_1(y,x)) \\
        \end{aligned}
    \end{equation}

    \item Finally, we would restrict the relation to only \emph{go from left to right}:
    \begin{equation}
    \label{eq:succ1:left2right}
        \begin{aligned}
            \forall x \forall y &: Succ_1(x,y) \to (x < y)    
        \end{aligned}
    \end{equation}        
\end{itemize}
We refer the readers to \Cref{app:successor-by-lo} to check the correctness of the sentences as well as their possible generalization to the $k$-th successor relation.

Let us denote the conjunction of \Cref{eq:succ1:first-last,eq:succ1:bijection-like,eq:succ1:left2right} by $\xi$.
The sentence $\xi$ captures the immediate successor relationship with respect to the linear order relation $\lopred$ using the fragment of \ctwo.
By \cref{lemma:c2+cc}, we may, therefore, use \Cref{alg:iwfomc} to perform domain-lifted inference over an ordered domain while also having access to the immediate successor relation.
Unfortunately, the encoding comes with non-negligible additional computational costs.
\begin{remark}
\label{remark:n^cp}
    Suppose we want to compute $\symbwfomc(\Psi \land \xi \land \loaxiom(\lopred), n, \weights)$.
    If $\Psi$ has $p$ (valid) cells, the input sentence will have $4 \cdot c \cdot p$ (valid) cells where $c$ is a positive integer constant.
\end{remark}
The multiple $4=2^2$ comes from the two fresh relations $First$ and $Last$.
While we could, in the context of \cref{alg:iwfomc}, handle the new relations using evidence (filtering out most of the new cells), since we know that only two positive literals, specifically $First(1)$ and $Last(n)$, must hold, the primary problem is the other factor.
The constant $c$ is due to the added complexity of dealing with counting quantifiers, which is a consequence of the transformations underlying \cref{lemma:c2+cc}.
The form of $c$ has been analyzed in more detail by \citet{toth24:c2-bound-to-beat}.

Recall that the number of cells determines the degree of our polynomial runtime.
While from a theoretical standpoint, we maintain polynomial complexity in $n$, in practice, the polynomial degree becomes too high.%
\footnote{Note that we may develop more efficient successor encodings than the one in $\xi$. However, all encodings known to us incur a similar increase in the polynomial's degree.}
Let us, therefore, develop a different way of handling the successor relations rather than using raw \ctwo to access them via the linear order axiom.
Namely, we generalize \cref{alg:iwfomc} to support an extended version of the linear order axiom, which provides access to successor relations directly.

\subsection{A Faster Algorithm for the Linear Order Axiom with its Successor Relations}
In this section, we devise a second \wfomc algorithm that natively supports the linear order axiom along with its successor relations, and exhibits a much better runtime than using \cref{alg:iwfomc} along with the sentences in \Cref{eq:succ1:first-last,eq:succ1:bijection-like,eq:succ1:left2right} would.

Let us start with the simplest case: working only with the linear order and its immediate successor relation.
Recall that in \cref{alg:iwfomc}, $T_m(\bm{k})$ denotes the weighted sum of models with a given cell configuration $\bm{k}$ over an ordered domain $1\leq 2\leq \dots \leq m$.
We can view the update from $T_m(\bm{k})$ to $T_{m+1}(\bm{k'})$ as appending a new element $(m+1)$ to the ordered domain such that $i < m+1$ for all elements $i \leq m$.
If we further require that $T_m(\bm{k})$ only accounts for the models consistent with the immediate successor relation $Succ_1$, then we can use the update rule even for computing \wfomc with the immediate successor relation.
However, how can we enforce such an additional requirement?
As it turns out, accomplishing that is easier than it may seem.
In addition to the already stored cell configuration, we also need to record the cell assignment of the previous element $m$ to ensure that the updated weight considers only models containing $Succ_1(m, m+1)$.

Let $\fomodels{\Psi}{n}^{\leq,Succ_1}$ denote the set of models of $\Psi$ over the domain $[n]$ such that the predicates $\leq$ and $Succ_1$ are interpreted as a standard linear order relation and its immediate successor relation, respectively.
We use $T_m(\bm{k}, t)$ to denote the weighted sum of models in $\fomodels{\Psi}{m}^{\leq,Succ_1}$ such that the element $m$ is assigned to the cell $C_t$ and the cell configuration is $\bm{k}$.
The weights $\rlo_{ij}$ and $\wlo_k$ are redefined as
\begin{equation*}
  \begin{aligned}
    \rlo_{ij} &= \symbwmc(\psi_{ij}^<(a,b)\land \neg Succ_1(a,b)\land  \neg Succ_1(b,a),\weights),\\
    \wlo_k &= \symbwmc(\psi^\le(c,c) \land C_k(c) \land \neg Succ_1(c,c), \weights).
  \end{aligned}
\end{equation*}
Informally, $\rlo_{ij}$ represents the 2-tables containing two elements that are not adjacent in the linear order.
To account for two consecutive elements, we introduce a new weight $\rpred_{ij}$ as
\begin{equation}
  \begin{aligned}
    \rpred_{ij} = \symbwmc(&\psi_{ij}^<(a,b)\land Succ_1(a,b) \land\neg Succ_1(b,a), \weights).
  \end{aligned}
  \label{eq:rpred}
\end{equation}
Note again that both $\rlo_{ij}$ and $\rpred_{ij}$ are not symmetric and $\wlo_k$ is defined in terms of the non-simplified formula $\psi^\le$ rather than $\psi_{ij}^\le$.

Then $T_{m+1}(\bm{k}+\vecdelta_t, t)$ can be computed as
\begin{equation}
  \begin{aligned}
  T_{m+1}(\bm{k}+\vecdelta_t, t) = \Bigg(\sum_{l\in[p]} T_m(\bm{k}, l) \cdot \rlo_{lt}^{k_l-1}\cdot \rpred_{lt} \cdot \prod_{i\in[p]:i\neq l} \rlo_{it}^{k_i}\Bigg) \cdot \wlo_{t}.
  \end{aligned}
  \label{eq:wfomc_incremental_pred1}
\end{equation}
Intuitively, we want to evaluate the weighted model count on the domain $[m+1]$ with the cell configuration $\bm{k}+\vecdelta_t$ (i.e., such that the element $m+1$ realizes the cell $C_t$ and $\bm{k}$ is the cell configuration for the domain $[m]$).
Since the new element realizes $C_t$, we multiply by $\wlo_t$.
Then we sum over all possible assignments of the element $m$; and for a particular assignment of $m$ to the cell $C_l$, we do the following:
First, we multiply by $\rlo_{lt}$ for each domain element $c\in C_l$ such that $c<m$, since those elements are not (immediate) predecessors of $m+1$.
Second, we multiply by $\rpred_{lt}$ to account for the consecutive pair $(m,m+1)$.
Finally, we multiply by $\rlo_{it}$ for all other domain elements assigned to $C_i$ ($i\neq l$) to account for all the remaining non-consecutive pairs of elements.

Finally, the \wfomc{} of a sentence with the immediate successor relation $Succ_1$ can be computed as 
\begin{equation}
n!\cdot \sum_{\bm{k}\in\mathbb{N}^p: |\mathbf{k}|=n}\sum_{t\in[p]} T_n(\bm{k}, t).    
\end{equation}

Now, let us generalize to the case when one wishes to work with the $k$-th successor relation.
Note that the successor relations are all still with respect to the domain ordering defined by a linear order axiom $\loaxiom(\lopred)$.
For such purposes, we introduce an \emph{extended linear order axiom} denoting not only the linear order relation itself, but also its successor relations.

\begin{definition}[Extended Linear Order Axiom]
    The extended linear order axiom on a relation $\lopred$, denoted as $\loaxiom(\lopred, Succ_1, \ldots, Succ_k)$ requires that $\lopred$ is a linear order of elements and $Succ_i$ is the $i$-th successor relation with respect to the ordering defined by $\lopred$ for each $i \in [k]$.
\end{definition}

We now generalize \cref{alg:iwfomc} to support the extended linear order axiom.
We directly present the algorithm in~\cref{alg:iwfomc2} without repeating the verbose explanation, which should be clear from the previous section.
We use $T_m(\bm{k}, t_1, \dots, t_k)$ to denote the weighted sum of models on an ordered domain such that the element $(m-s+1)$ is assigned to the cell $C_{t_s}$ for every $s\in[\min(m,k)]$, and the cell configuration is $\bm{k}$.
The weights $\rlo_{ij}$ and $\wlo_k$ are now defined respectively as
\begin{align*}
  \rlo_{ij}&=\symbwmc(\psi_{ij}^<(a,b)\land \bigwedge_{s\in[k]} (\neg Succ_s(a,b)\land \neg Succ_s(b,a)), \weights),\\
  \wlo_k &=\symbwmc(\psi^\leq(c,c) \land C_k(c) \land \bigwedge_{s\in[k]} \neg Succ_s(c,c), \weights).\\
\end{align*}
For each $s\in[k]$, we define the weight $\rpred_{ij,s}$ as
\begin{align*}
  \rpred_{ij,s} = \symbwmc(&\psi_{ij}^<(a,b)\land Succ_s(a,b) \land \neg Succ_s(b,a)\land
  \bigwedge_{s'\in[k]: s' \neq s} (\neg Succ_{s'}(a,b)\land \neg Succ_{s'}(b,a)), \weights).
\end{align*}

\cref{alg:iwfomc2} performs dynamic programming similar to \cref{alg:iwfomc}.
For $m=1$, the algorithm initializes $T_1(\vecdelta_l, l, \dots, l)$ for every $l\in[p]$ (\cref{line:base_case}).
For $m \ge 2$, the value of $T_m(\bm{k}, t_1, \dots, t_k)$ is computed in the loop on \cref{line:cell_loop}.
We note that when updating $T_m$, only the first $s$ successor relations that already exist in $[m]$ are considered (\cref{line:pred_loop}).

\begin{algorithm*}[tbp]
  \caption{WFOMC for \fotwo+$\loaxiom(\lopred, Succ_1,\dots, Succ_k)$}
  \label{alg:iwfomc2}
  \DontPrintSemicolon
  \KwIn{An \fotwo sentence $\Psi = \forall x\forall y: \psi(x,y)$ with $\{\lopred, Succ_1,\dots, Succ_k\} \subseteq \preds{\Psi}$, integer $n$, functions $(\weights)$}
  \KwOut{$\symbwfomc(\Psi \land \loaxiom(\lopred, Succ_1,\dots, Succ_k), n, \weights)$}
  Compute $\wlo_{i}, \rlo_{ij}$ and $\rpred_{ij,s}$ for every $i,j\in[p]$ and $s\in[k]$\\
  $T_1(\vecdelta_l, l, \dots, l) \leftarrow \wlo_l$ for each $l\in[p]$ \label{line:base_case} \tcp*{$k$ copies of $C_l$}
  \ForEach{$i=2$ \KwTo $n$}{ \label{line:main_loop}
    $T_i(\bm{k}, t_1,\dots,t_k) \leftarrow 0$ for every $\bm{k}\in\mathbb{N}^p$, $t_1,\dots,t_k\in[p]$\\
    \ForEach{$l\in[p]$}{ \label{line:cell_loop}
      \ForEach{$(\bm{k}_{old}, t_1, \dots t_k, W_{old}) \in T_{i-1}$}{ \label{line:old_loop}
        $\bm{k}_{new} \leftarrow \bm{k}_{old} + \vecdelta_l$\\
        $\bm{k}, W \gets \bm{k}_{old}, W_{old}$ \\
        \ForEach{$s\in[\min(i-1, k)]$}{ \label{line:pred_loop}
          $W \leftarrow W \cdot \rpred_{t_sl,s}$ \tcp*{deal with the $s$-th predecessor}
          $\bm{k} \leftarrow \bm{k} - \vecdelta_{t_s}$\tcp*{remove one element from the $s$-th cell}
          \lIf(\tcp*[f]{update the cell assignment for $T_i$}){$s = 1$}{
            $t_s' \leftarrow l$ 
            \textbf{else} $t_s' \leftarrow t_{s-1}$ 
          }
        }
        $T_i(\bm{k}_{new}, t_1',\dots,t_k') \leftarrow T_i(\bm{k}_{new}, t_1',\dots,t_k') + W\cdot \wlo_l \cdot \prod_{j\in[p]} \rlo_{jl}^{k_j}$ \label{line:pred_update}
      }
    }\label{line:cell_end}
  }
  \Return $n!\cdot \sum_{\bm{k}\in\mathbb{N}^p: |\mathbf{k}|=n}\sum_{t_1,\dots,t_k\in[p]} T_n(\bm{k}, t_1,\dots,t_k)$
\end{algorithm*}

\begin{theorem}\label{thm:general_linear_order}
  The two-variable fragment of first-order logic with the extended linear order axiom %$\loaxiom(\le, Succ_1, \dots, Succ_k)$
  is domain-liftable.
\end{theorem}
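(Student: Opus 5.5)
The argument has two parts: Algorithm~\ref{alg:iwfomc2} computes the right value, and it runs in time polynomial in $n$. We follow the template of Lemma~\ref{lemma:iwfomc-fo2+lo} and Corollary~\ref{cor:n-factorial}.

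\textbf{Reduction to the natural order.} We first observe that Lemma~\ref{lemma:n-factorial} extends verbatim to $\loaxiom(\lopred, Succ_1,\dots,Succ_k)$. A permutation $\pi$ maps the $s$-th successor relation of $\omega^\lopred$ to the $s$-th successor relation of $\pi(\omega^\lopred)$. Hence the symmetry argument gives the factor $n!$, and it suffices to compute the weighted sum over models on the ordered domain $1\le 2\le\dots\le n$. On that domain, $Succ_s(a,b)$ holds if and only if $b=a+s$.

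\textbf{Correctness of the table.} For a fixed partition $\mathcal{C}$ of $[m]$, write $\Phi^{\le,Succ}_{\mathcal{C}}$ for the ground conjunction. It contains the reflexive part $\psi^\le(c,c)\land C_{\tau_c}(c)\land\bigwedge_s\neg Succ_s(c,c)$ for each element $c$. For each pair $a<b$ it contains $\psi^<_{\tau_a\tau_b}(a,b)$ conjoined with the successor literals forced by the natural order: exactly $Succ_{b-a}(a,b)$ is positive if $b-a\le k$, and all others are negative. These pair conjuncts share no atoms. Therefore the WMC factorizes as $\prod_c \wlo_{\tau_c}\cdot\prod_{a<b}\rho(a,b)$, where $\rho(a,b)=\rpred_{\tau_a\tau_b,\,b-a}$ if $b-a\le k$ and $\rho(a,b)=\rlo_{\tau_a\tau_b}$ otherwise.

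We then prove by induction on $m$ the following invariant. $T_m(\bm{k},t_1,\dots,t_k)$ equals the sum of this product over all partitions of $[m]$ with configuration $\bm{k}$ in which element $m-s+1$ lies in $C_{t_s}$ for $s\le\min(m,k)$. Entries $t_s$ with $s>m$ carry no meaning and are fixed by the base case on line~\ref{line:base_case}.

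For the step, we append $m+1$ to cell $C_l$. The new factors are $\wlo_l$, then $\rpred_{t_s l,s}$ for each of the $\min(m,k)$ nearest predecessors $m+1-s$, then $\rlo_{jl}$ for every remaining earlier element. The remaining elements number $k_j$ in cell $j$ after we subtract one from $\bm{k}$ at $t_s$ for each handled predecessor. This is exactly lines~\ref{line:pred_loop}--\ref{line:pred_update}. The shift $t'_1=l$, $t'_s=t_{s-1}$ restores the invariant. Summing over all cells $l$ and all old states gives every partition of $[m+1]$ exactly once, because a partition of $[m+1]$ is determined by its restriction to $[m]$ together with the cell of $m+1$. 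The final sum over all $\bm{k}$ and $t_1,\dots,t_k$, multiplied by $n!$, then yields the WFOMC.

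\textbf{Runtime and extensions.} The table $T_i$ has $O(n^p\cdot p^k)$ entries, and each update costs $O(k+p)$ multiplications (with powers computed by repeated squaring or precomputed). The loop runs $n$ times, so the total is $O(n^{p+1}p^{k+1})$, which is polynomial in $n$ because $p$ and $k$ depend only on the sentence. Extension to \ctwo with cardinality constraints follows from Lemma~\ref{lemma:c2+cc}, since that lemma is axiom-agnostic.

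\textbf{Main obstacle.} The delicate point is the bookkeeping at small $m<k$. There, successor edges exist only for $s\le m$, and the placeholder entries $t_s$ with $s>m$ must never be used in a weight factor and never be subtracted from $\bm{k}$. The guard $s\le\min(i-1,k)$ on line~\ref{line:pred_loop} ensures exactly this, and we verify that the subtraction step keeps $\bm{k}$ nonnegative. We also confirm that no model on the natural order makes $Succ_s(b,a)$ true for $a<b$, so the negated converse literals in $\rpred$ and $\rlo$ cost nothing.
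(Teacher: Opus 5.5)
Your proposal is correct and follows essentially the same route as the paper: the $n!$ symmetry reduction to the natural order, the factorization of each partition's WMC into $\wlo$, $\rpred_{\cdot,s}$ and $\rlo$ factors, an induction on $m$ establishing the meaning of $T_m(\bm{k},t_1,\dots,t_k)$, and a polynomial bound on the table size. The only difference is bookkeeping, and it concerns how the induction step is matched to the pseudocode. The paper's invariant fixes every placeholder $t_s$ with $s>m$ to the cell of element $1$ (it refers to element $\max(m-s+1,1)$), so the entries that the loop on line~\ref{line:pred_loop} does not rewrite (it stops at $\min(i-1,k)$) remain correct, whereas under your ``no meaning'' convention you must apply the shift $t'_s=t_{s-1}$ for all $s\in[k]$, in particular at $s=m+1$, and not only inside that loop.
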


\begin{proof}
To prove the claim, it is sufficient to show both correctness and polynomial-time complexity of \cref{alg:iwfomc2}.
The correctness can be shown by a similar argument as for \cref{alg:iwfomc}.
We will point out only the critical step of the induction.

Any \fotwo{} sentence with the extended linear order axiom can be transformed into $$\Psi = \forall x\forall y: \psi(x,y)\land \loaxiom(\leq, Succ_1, \dots, Succ_k)$$ by the same transformation as for the \fotwo{} fragment.
Consider the \wfomc{} of $\Psi$ over a domain of size $n$ under weighting functions $(\weights)$.

Let $\fomodels{\Psi}{n}^{\leq,Succ_1,\dots,Succ_k}$ denote the set of models of $\Psi$ over the domain $[n]$ where the predicates $\leq$, $Succ_1, \dots, Succ_k$ are interpreted as a standard order relation and its successor relations, respectively.
We show that $T_m(\bm{k}, t_1, \dots, t_k)$ computed in \cref{alg:iwfomc2} is the weighted sum of models in $\fomodels{\Psi}{m}^{\leq,Succ_1,\dots,Succ_k}$ such that the cell configuration is $\bm{k}$ and the element $\max(m-s+1, 1)$ is assigned to the cell $C_{t_s}$ for every $s\in[k]$.

Consider any $m\in[n]$.
For every $s\in[k]$, denote by
$$\Omega_s^m = \{(1, 1+s), (2, 2+s), \dots, (m - s, m)\}$$
the set of consecutive pairs of elements in the ordered domain $1\le 2\le \dots \le m$ with a distance of $s$.
Let $$\Omega = \bigcup_{s\in[k]} \Omega_s^m.$$
Define 
\begin{align*}
  \fotwoformula_{ij}^{\leq,\bot}(x,y) = &\fotwoformula_{ij}^\leq(x,y)\land \bigwedge_{s\in[k]} \neg Succ_s(x,y),\\
  \fotwoformula_{ij}^{<,\bot}(x,y) = &\fotwoformula_{ij}^<(x,y)\land \bigwedge_{s\in[k]} \Big(\neg Succ_s(x,y)\land \neg Succ_s(y,x)\Big),
\end{align*}
and for every $s\in[k]$, define
\begin{align*}
  \fotwoformula^{<, s}_{ij}(x,y) = \fotwoformula_{ij}^{<}(x,y)\land \bigwedge_{t\in[k]: t\neq s} \Big(\neg Succ_t(x,y)\land \neg Succ_t(y,x)\Big)\land Succ_s(x,y)\land \neg Succ_s(y,x).
\end{align*}
Consider a cell partition $\mathcal{C} = (C_1, \dots, C_p)$ of $[m]$.
The ground formula $\Phi_{\mathcal{C}}$ from \Cref{eq:wfomc_lineage} can now be written as
\begin{align*}
  \Phi_{\mathcal{C}} = &\bigwedge_{i,j\in[p]: i<j}\Bigg(\bigwedge_{\substack{a\in C_i, b\in C_j:\\ a < b\land (a,b)\notin \Omega}} \fotwoformula_{ij}^{<,\bot}(a,b)\land \bigwedge_{s\in[k]}\bigwedge_{\substack{a\in C_i, b\in C_j:\\ (a,b)\in \Omega_s^m}} \fotwoformula_{ij}^{<,s}(a,b)\Bigg)\land \\
  &\bigwedge_{i\in[p]} \Bigg(\bigwedge_{\substack{a\in C_i, b\in C_i:\\ a < b\land \{a,b\}\notin \Omega}} \fotwoformula_{ii}^{<,\bot}(a,b)\land \bigwedge_{s\in[k]}\bigwedge_{\substack{a\in C_i, b\in C_i:\\ \{a,b\}\in \Omega_s^m}} \fotwoformula_{ii}^{<,s}(a,b)\land \bigwedge_{c\in C_i} \fotwoformula_{ii}^{\le,\bot}(c,c)\Bigg).
\end{align*}
Similar to \Cref{eq:ground_induction}, we can write the relationship between two partitions $\mathcal{C}_1 = (C_1, \dots, C_p)$ and $\mathcal{C}_2 = (C_1, \dots, C_l\cup \{m+1\}, \dots, C_p)$ as
\begin{align*}
  \Phi_{\mathcal{C}_2} = \Phi_{\mathcal{C}_1}\land \bigwedge_{i\in[p]}\Bigg(\bigwedge_{\substack{a\in C_i: \\(a,h+1)\notin \Omega}} \fotwoformula_{il}^{<,\bot}(a,m+1)\land \bigwedge_{s\in[k]}\bigwedge_{\substack{a\in C_i:\\ (a,m+1)\in \Omega_s^{m+1}}} \fotwoformula_{il}^{<,s}(a,m+1)\Bigg)\land \fotwoformula_{ll}^{\le,\bot}(m+1,m+1).
\end{align*}
Notice that each conjunct in the above formula is independent, and all these conjuncts are independent of $\Phi_{\mathcal{C}_1}$.
Define
\begin{align*}
  \rlo_{ij} &:= \symbwmc(\fotwoformula_{ij}^{<,\bot}(a,b), \weights), \qquad \wlo_i := \symbwmc(\fotwoformula^{\leq,\bot}(c,c), \weights)
\end{align*}
and for every $s\in[k]$, define
\begin{align*}
  \rpred_{ij,s} &= \symbwmc(\fotwoformula_{ij}^{<,s}(a,b), \weights).
\end{align*}
Then the weighted model count of $\Phi_{\mathcal{C}_2}$ can be computed from the weighted model count of $\Phi_{\mathcal{C}_1}$ as
\begin{align*}
  \symbwmc(\Phi_{\mathcal{C}_2}, \weights) = \symbwmc(\Phi_{\mathcal{C}_1}, \weights) \cdot \wlo_l \cdot \prod_{i\in[p]}\left( \prod_{\substack{a\in C_i\\ (a,m+1)\notin \Omega}} r_{il} \cdot \prod_{i\in[p]} \prod_{\substack{a\in C_i\\ (a,m+1)\in \Omega_s^m}} \rpred_{il,s}\right).
\end{align*}
The above equation directly leads to the update of $T_{m+1}(\bm{k}_{new}, t_1, \dots, t_k)$ in \cref{alg:iwfomc2} (from \cref{line:pred_loop} to \cref{line:pred_update}).
By the meaning of $T_m(\bm{k}, t_1, \dots, t_k)$, the correctness of the algorithm follows.

Next, let us consider the algorithm's complexity.
The outer loop at \cref{line:main_loop} runs $\mathcal{O}(n)$ times, and the loop for cells at \cref{line:cell_loop} is in $\mathcal{O}(p)$.
The inner loop goes through all possible $(\bm{k}_{old}, t_1, \dots, t_k, W_{old})$ in $T_{m-1}$, whose number can be bounded as $\mathcal{O}(p^k\cdot n^{p-1})$.
Thus, the overall complexity is $\mathcal{O}(p^{k+1}\cdot n^p)$, which is polynomial in $n$.
\end{proof}

Compare the newly established time complexity of \cref{alg:iwfomc2} with the bound in \cref{remark:n^cp}.
Previously, the polynomial's degree was multiplied by a potentially high constant; now, we only have a constant as the polynomial's coefficient.

\begin{remark}
  When there are inconsecutive successor relations in the linear order axiom, e.g., $\loaxiom(\lopred, Succ_1, Succ_{10})$, the complexity of \cref{alg:iwfomc2} is still $\mathcal{O}(p^{k+1}\cdot n^p)$, where $k$ is the highest order of the successor relations, as we need to maintain the cell assignment of the last $k$ elements in $T_{m-1}$ in order to obtain the correct $t_k$ for $T_m$ (which is $t_{k-1}$ in $T_{m-1}$).
\end{remark}

\begin{remark}
  \cref{alg:iwfomc2} can be modified to support unary evidence similarly as \cref{alg:iwfomc} was modified in \cref{ssec:iwfomc-evidence}.
\end{remark}

\subsection{Cyclic Successor Relation}
\label{ssec:cyclic-successor}
In some applications, we may need to consider the (closed) cyclic successor relation, where the successor of the last element is the first element.
It is straightforward to encode the cyclic successor relation $CySucc/2$ with the immediate successor relation $Succ_1$:
\begin{equation}
  \begin{aligned}
    &(\forall x: First(x) \leftrightarrow \forall y: (x \leq y))\\
    \land\;&(\forall x: Last(x) \leftrightarrow \forall y: (y \leq x))\\
    \land\;&(\forall x\forall y: CySucc(x,y) \leftrightarrow (Succ_1(x,y)\lor (Last(x)\land First(y)))).
  \end{aligned}
  \label{eq:cyclic_pred}
\end{equation}
\rev{
Note, however, that simply appending \Cref{eq:cyclic_pred} to the input sentence with $p$ valid cells would result in $4p$ valid cells of the new sentence, as we have already pointed out in \cref{remark:n^cp}.
Nevertheless, let us revisit the idea of defining the $First$ and $Last$ relations using unary evidence.
Specifically, for the fixed ordering $1\le2\le3\le\cdots\le n$ enforced by the axiom $\lonat(\le)$, we do not need to define $First$ and $Last$ in terms of the linear order relation $\lopred$.
Instead, we can append unary evidence $\epsilon$ such that $$\epsilon =First(1) \land \neg Last(1) \land \neg First(n) \land Last(n) \land \bigwedge_{i=2}^{n-1} \left(\neg First(i) \land \neg Last(i)\right).$$

Following the cell-filtering idea described in \cref{ssec:iwfomc-evidence}, at most $p$ cells will have to be considered at any given iteration of our algorithm.
The extra $3p$ cells created by introducing the new relations will be excluded.
Note that as long as we guarantee that the relations $First$ and $Last$ do not appear anywhere else in the sentence (a reasonable assumption given the fact, that those relations are only auxiliary), and the original sentence contained no evidence to begin with, we can use the cell-filtering approach even to compute WFOMC in the presence of the (extended) linear order axiom $\loaxiom$.

Hence, to support the cyclic successor relation $CySucc$ with only small additional overhead, we only need to append $\epsilon$ and the sentence
$$\forall x\forall y: CySucc(x,y) \leftrightarrow (Succ_1(x,y)\lor (Last(x)\land First(y)))$$
to the input theory.
}

\begin{example}
    \label{ex:ws-like}
    To demonstrate the modeling capabilities of the cyclic successor relation, consider the following theory:
    \begin{equation*}
        \begin{aligned}
            &(\forall x : \neg Edge(x, x))\\
            \land\;&(\forall x \forall y : Edge(x,y) \to Edge(y,x))\\   
            \land\;&(\forall x \forall y : CySucc(x,y) \to Edge(x,y))\\
            \land\;&(|Edge|=2\cdot(n+m))
        \end{aligned}
    \end{equation*}

    The first two sentences require $G(Edge)$ to be an undirected graph without loops.
    The cardinality constraint further requires that there are $n+m$ edges in the graph.\footnote{The factor of two comes from the fact that the relation $Edge/2$ is inherently directed.}
    Finally, the sentence with $CySucc$ partially restricts the graph structure, enforcing a cyclic chain going over all vertices (i.e., the sentence dictates the use of $n$ afforded edges).
    Hence, we are modeling a cyclic chain with $m$ additional edges chosen at random. Our techniques enable us to perform exact lifted inference over such a structure.
    The graph is shown in \cref{fig:ws-simplified} for $n=6$ and $m=3$ with red dashed lines showing possible random connections.

    The model at hand can be viewed as a simplified version of the random graph model of \citet{watts98:ws-model}. See \cref{sec:7-experiments} for details.

    \begin{figure}[hbt]
        \centering
\begin{tikzpicture}[
    % Node Style
    node_style/.style={circle, draw=black, thick, fill=white}, 
    % Edge Styles
    cycle_edge/.style={draw=black, ultra thick}, 
    chord_edge/.style={draw=red!80, thick, dashed}, 
]

    % --- PARAMETERS ---
    \def\n{6}
    \def\radius{1.5cm}

    % --- 1. DRAW NODES ---
    % Start at 120 degrees so Node 1 is Top-Left and Node 2 is Top-Right.
    % This maintains the compact "flat" top/bottom edges.
    \foreach \i in {1,...,\n} {
        \node[node_style] (n\i) at ({120 - 360/\n * (\i - 1)}:\radius) {\i};
    }

    % --- 2. DRAW THE CYCLE (Solid) ---
    \foreach \i [evaluate=\i as \next using {int(mod(\i,\n)+1)}] in {1,...,\n} {
        \draw[cycle_edge] (n\i) -- (n\next);
    }

    % --- 3. DRAW SHORTCUTS (Dashed) ---
    \draw[chord_edge] (n1) -- (n4);
    \draw[chord_edge] (n1) -- (n5);
    \draw[chord_edge] (n3) -- (n6);

\end{tikzpicture}
        
        \caption{An example of the simplified Watts-Strogatz model on 6 nodes with $K=2$ and 3 additional (rewired) edges.}
        \label{fig:ws-simplified}
\Description{The figure shows an undirected graph on six vertices with a single circuit going over all the nodes.
Three additional dashed edges are present, specifically \{1, 4\}, \{1, 5\}, \{3, 6\}.}
        
    \end{figure}

\end{example}

\section{Hardness of Two Linear Order Axioms}
\label{sec:5-two-los}
Let us now extend our analysis beyond a single axiom, such as the linear order we have seen so far.
Can we perform domain-lifted inference over sentences with two linear order axioms?
Specifically, can we work with two distinct domain orderings simultaneously while maintaining tractability?
The question could be generalized to ask whether we may perform domain-lifted inference over two or more axioms (not necessarily linear orders).
The work of \citet{kuangetal24:wfomc-polynomials} has studied such a setting for scenarios where the axioms are all on a single binary relation (e.g., a relation $R$ is both acyclic and connected).
However, our original question about two linear orders was posed in a way that required axioms on two distinguished binary relations.
Hence, the techniques used therein are not applicable here.

In this section, we show that \wfomc over sentences with two linear order relations (i.e., with the linear order axiom on two distinguished binary relations) is, in fact, a \class{\#P_1}-hard problem and, thus, (likely) not domain-liftable.
Hence, the task is also not tractable for any axioms \emph{more general} than the linear order, such as the acyclicity axiom, since a linear order is an acyclic structure with further restrictions.
\rev{
Note that for these theoretical results, we will focus only on the arbitrary linear ordering enforced by the axiom $\loaxiom$, which is a more standard construct used in mathematical logic.
}

Similar to the proofs of undecidability results of \citet{otto01:unsat-fo2+8order,kieronski11-unsat-fo2+lo}, we prove the hardness by reducing WFOMC of a particular sentence to a hard tiling problem.
We first define a tiling problem whose number of valid tilings is \class{\#P_1}-hard to compute.
Then, we introduce an intermediate axiom, the \emph{grid axiom}, which requires that domain elements form a grid, enabling two distinguished binary relations $H$ and $V$ to represent the horizontal and vertical successor relations, respectively.
We show that WFOMC of \fotwo sentences with a grid axiom is \class{\#P_1}-hard by encoding the \class{\#P_1}-hard tiling problem using the grid.
Subsequently, we show that the grid axiom can be implemented by two linear order relations, which produces the hardness result for two linear orders.

\subsection{The Tiling Problem}

The hardness comes from a variant of the tiling problem.
A \emph{tiling system} is a tuple $(\mathcal{T}, R_H, R_V)$ where $\mathcal{T}$ is a set of tiles and $R_H, R_V \subseteq \mathcal{T} \times \mathcal{T}$ are the horizontal and vertical constraints.
A \emph{tiling} of $(\mathcal{T}, R_H, R_V)$ on a grid is a mapping $T: \nat \times \nat \to \mathcal{T}$ such that
\begin{itemize}
  % \item $(T(i,j), T(i,j+1)) \in R_H$ for each $1 \le i \le n$ and $1 \le j < m$, and
  % \item $(T(i,j), T(i+1,j)) \in R_V$ for each $1 \le i < n$ and $1 \le j \le m$.
  \item $(T(i,j), T(i,j+1)) \in R_H$ for each $i\in\nat$ and $j\in\nat$, and
  \item $(T(i,j), T(i+1,j)) \in R_V$ for each $i\in\nat$ and $j\in\nat$.
  
\end{itemize}
Intuitively, the problem is to assign a set of tiles, e.g., we can think of a tile as a square with colors on its four edges that we cannot rotate, which are the classical Wang tiles \citep{wang61:-tiles}, to a grid (a first quadrant) under the constraint that only specified pairs of tiles can be horizontally or vertically adjacent.
Refer to \Cref{fig:tiling_problem} for a sketch of the problem.

\begin{figure}[t]
    \centering
    \begin{tikzpicture}
        
        % Define styles for the tiles
        \tikzstyle{placed} = [draw=black, thick, fill=blue!15]
        \tikzstyle{unplaced} = [draw=gray, thick, dashed, fill=gray!5]

        % 1. Draw the "unplaced" (dashed/gray) tiles first so they sit behind
        \foreach \x/\y in {4/0, 5/0, 3/1, 4/1, 5/1, 2/2, 3/2, 4/2, 1/3, 2/3, 3/3, 0/4, 1/4, 2/4} {
            \path[unplaced] (\x,\y) rectangle (\x+1,\y+1);
        }

        % 2. Draw the "placed" (solid) tiles originating from the bottom left
        \foreach \x/\y in {0/0, 1/0, 2/0, 3/0, 0/1, 1/1, 2/1, 0/2, 1/2, 0/3} {
            \path[placed] (\x,\y) rectangle (\x+1,\y+1);
        }

        % 3. Label specific tiles i, j, k
        % We choose i=(1,1), j=(2,1), k=(1,2)
        \node at (1.5, 1.5) {\Large $a$};
        \node at (2.5, 1.5) {\Large $b$};
        \node at (1.5, 2.5) {\Large $c$};

        % Draw subtle arrows to show the adjacency relations
        \draw[->, thick, black!70, >=Stealth] (1.7, 1.5) -- (2.3, 1.5);
        \draw[->, thick, black!70, >=Stealth] (1.5, 1.7) -- (1.5, 2.3);

        % 4. Draw the Axes (1st Quadrant)
        \draw[->, ultra thick, >=Stealth] (0,0) -- (7.5,0) node[right] {$x$};
        \draw[->, ultra thick, >=Stealth] (0,0) -- (0,6.5) node[above] {$y$};

        % 5. Add ellipses to indicate infinity
        % Horizontal infinity
        \node at (6.5, 0.5) {\Large $\dots$};
        \node at (6.5, 1.5) {\Large $\dots$};
        
        % Vertical infinity
        \node at (0.5, 5.5) {\Large $\vdots$};
        \node at (1.5, 5.5) {\Large $\vdots$};

        % Diagonal infinity
        % \node at (5.5, 3.5) {\Large $\ddots$};
        % \node at (4.5, 4.5) {\Large $\ddots$};

        % 6. Add the Legend
        \node[draw=black, thick, fill=white, align=left] at (5.5, 5.5) {
            $(a,b) \in R_H$ \\[1ex]
            $(a,c) \in R_V$
        };

    \end{tikzpicture}
    \caption{A sketch of the tiling problem. Solid tiles represent the area already successfully tiled, while dashed tiles represent the expanding frontier. Adjacent placed tiles $a, b$, and $c$ must satisfy the horizontal and vertical matching relations, $R_H$ and $R_V$. The grid extends infinitely along the positive $x$ and $y$ axes.}
    \Description{A 2D plot showing the first quadrant of a Cartesian coordinate system. Solid blue square tiles are packed into the bottom-left corner next to the origin. Three specific solid tiles are labeled a, b, and c, where b is to the right of a, and c is above a. Arrows point from a to b, and from a to c. A legend box in the top right states T(a,b) in R_H and T(a,c) in R_V. Adjacent to the solid tiles is a frontier of dashed gray square tiles. Ellipsis dots indicate that the grid extends infinitely to the right and upwards.}
    \label{fig:tiling_problem}
\end{figure}

Tiling problems tend to be hard; in fact, the general tiling problem presented above is undecidable.
That can be proved by reducing the problem of whether, given a TM $M$, \emph{$M$ halts on the empty input} to an instance of the tiling problem \citep{lewis98:tiling-npc}.%
\footnote{Halting on empty input is a special case of the well-known halting problem whose undecidability was proved already by \citet{turing36:tm}.
Any instance of the halting problem can be reduced to halting on empty input by ``hard-coding'' the input, i.e., we construct a new TM which first writes the original input onto the empty tape and then behaves as the original machine.}
The reduction essentially boils down to having each row of the tiling represent the tape of $M$ at a given time step; the horizontal dimension represents the tape, and the vertical dimension represents time.
The first row of the tiling represents the empty tape, and each successive row then represents the state of the tape after $M$ takes one applicable transition.
If the TM never halts on the empty input, then we may tile rows to infinity.
If $M$ halts after $k$ steps, then we may tile at most $k$ rows.

To avoid the undecidability of the tiling problem, we may define a less ambitious problem, specifically the \emph{bounded tiling problem}.
That problem inputs the tiling system $(\mathcal{T}, R_H, R_V)$, the size of the grid $n\times n$, and the first row of the tiling $t_1, \cdots, t_n$, and asks if there is a tiling on the grid consistent with the given first row.
It was proved in \citet{lewis98:tiling-npc} that the problem is \class{NP}-complete as it encodes the accepting paths of a polynomial-time nondeterministic Turing machine.
Moreover, the properties of the encoding can be summarized in the following lemma.

\begin{lemma}{(Summarization of the Proof of \citet[Theorem 7.2.1]{lewis98:tiling-npc})}
\label{lemma:bounded-tiling}
For any one-tape nondeterministic Turing machine, there is a tiling system $(\mathcal{T}, R_H, R_V)$ such that any input $x$ of the Turing machine can be transformed to a tile sequence $t_1, \ldots, t_{q+2} \in \mathcal{T}$ where $q$ is the time bound for running $x$, and each accepting path of the Turing machine on input $x$ uniquely corresponds to a tiling with the first row $t_1, \ldots, t_q$ and vice versa.
In particular, the tiles $t_3, \ldots, t_{|x|+2}$ represent the input $x$, and $t_{|x|+3}, \ldots, t_q$ are the same tile representing the default empty symbol on the tape.
\end{lemma}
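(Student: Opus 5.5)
The lemma summarizes the classical Cook–Levin-style tiling construction of \citet{lewis98:tiling-npc}, so the plan is to recall that construction and check each claimed property. Fix a one-tape nondeterministic TM $M$ with state set $Q$, tape alphabet $\Sigma$ (including the blank) and transition relation $\delta$. The tile set $\mathcal{T}$ has three kinds of tiles. \emph{Symbol tiles} carry a single symbol $a\in\Sigma$; they represent a cell the head is not on. \emph{Head tiles} carry a pair $(s,a)$ with $s\in Q$ and $a \in \Sigma$; they represent the head scanning $a$ in state $s$. \emph{Boundary tiles} mark the left end of the tape, which explains the two extra leading positions $t_1,t_2$ in the first row.

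The constraints are built as follows.
\begin{itemize}
  \item $R_V$ says that row $i+1$ is obtained from row $i$ by one move of $M$. A symbol tile stays the same vertically, unless the head moves into that cell; in that case it becomes the corresponding head tile.
  \item A head tile $(s,a)$ becomes the written symbol $b$ for a chosen transition $(s,a)\to(s',b,D)\in\delta$. To keep the move local, head tiles are also annotated with the chosen transition, or with an ``incoming from left/right'' marker.
  \item $R_H$ then checks that horizontally adjacent tiles agree on whether the head crosses between them and in which direction.
  \item Halting accepting states are made absorbing, so that a computation shorter than $q$ steps extends uniquely to the remaining rows. Rejecting halting states get no successor tiles, so those rows cannot be completed.
\end{itemize}

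The first row is $t_1 t_2$ (boundary plus the head tile in the initial state on the first input symbol, or on blank if $x$ is empty). Next come $t_3,\dots,t_{|x|+2}$ encoding $x$, and then blank tiles up to width $q$. Since $M$ runs for at most $q$ steps, the head never leaves the first $q$ cells, so a $q\times q$ grid suffices.

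The main step, and the one I expect to need the most care, is the bijection between accepting paths and tilings. I would prove by induction on the row index $i$ that:
\begin{itemize}
  \item every valid tiling of rows $1..i$ encodes exactly one sequence of $i-1$ configurations linked by transitions, together with the sequence of chosen transitions;
  \item conversely, every such computation prefix has exactly one tiling.
\end{itemize}
Uniqueness is where the transition annotations matter. Without them, two distinct nondeterministic choices producing the same configuration would collapse into one tiling. Conversely, tiles must not introduce spurious freedom, for example in symbol cells far from the head. Symbol tiles have a forced vertical successor, and the head position is forced by the horizontal markers, so each row is determined by the previous row together with the transition choice.

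Finally I would verify the ``in particular'' clause, which is immediate from the first-row encoding. The last row must contain an accepting-state tile, and this is enforced by making accepting states the only ones with successors at the last step, which the absorbing convention provides. The result is a one-to-one correspondence with accepting paths of $M$ on $x$.
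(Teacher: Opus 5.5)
The paper gives no proof of its own here. It summarizes the Lewis--Papadimitriou construction, and your tile design is a reasonable reconstruction of it: symbol, head and boundary tiles, an absorbing accepting state, a dead-end rejecting state, and extra annotations for injectivity. The step that fails is the grid size together with the acceptance argument.

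A $q\times q$ grid does not give the bijection. The claim that ``accepting states are the only ones with successors at the last step'' enforces nothing about the last row, because no constraint acts below it. A configuration that has not yet halted, or a rejecting halting configuration, sitting in the last row is perfectly tileable. A rejecting path is excluded only if its dead end lies strictly inside the grid. Since every path halts by time $q$, you need rows for times $0,\dots,q+1$, i.e.\ $q+2$ rows. With only $q$ rows you are counting padded computation prefixes of length $q-1$ that have not rejected earlier, not accepting paths.

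The width has the same problem. The head starts in column $2$ and may make up to $q$ right moves, so you need $q+2$ columns. Otherwise a right-moving head tile in the last column has no right neighbour, so the pairwise $R_H$ check is vacuous. The head then disappears, and the remaining rows are filled by pure copying, giving further spurious tilings. This is why the lemma's first row is $t_1,\dots,t_{q+2}$, and why the paper later uses a $(cn^2+2)\times(cn^2+2)$ grid for a machine halting within $cn^2$ steps. With that sizing, your row-by-row induction goes through, and the last row automatically contains the absorbing accepting tile.

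Two smaller points need aligning with the statement. First, the first row is a fixed tile sequence, so the nondeterministic choice must not live on the row-$1$ head tile. Annotate each row with the transition that \emph{produced} it (row $1$ carries none), rather than with the transition about to be taken.

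Second, the indexing $t_3,\dots,t_{|x|+2}$ for the input reflects the Lewis--Papadimitriou convention: the machine starts scanning the blank cell between the left-end marker and the input. So $t_2$ is the start-state head tile on a blank and does not depend on $x$. With your convention (head on the first input symbol), $t_2$ depends on $x$ and the first input symbol is represented twice. The paper's later reduction relies on $t_1,t_2$ being fixed tiles followed by $n$ copies of $t_3$.
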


Let us now take inspiration from the bounded tiling problem and define our own counting version of a tiling problem, which will be useful in our setting of \wfomc.
Similarly to the bounded tiling, we will use a bounded grid, and the first row will be restricted to a specific format.
Note that the task will be to count all possible tilings, rather than determine whether a tiling exists.

\begin{definition}[1-1-N-M Counting Tiling Problem]
Given a tiling system $(\mathcal{T}, R_H, R_V)$ and four distinguished tiles $t_1, t_2, t_3, t_4 \in \mathcal{T}$, the \emph{1-1-N-M counting tiling problem} $\tiling{\mathcal{T}, R_H, R_V, t_1, t_2, t_3, t_4}$ inputs two positive integers $n,m$ in unary and asks to compute the number of tilings of $(\mathcal{T}, R_H, R_V)$ on a grid of $(n+m+2)$ rows and $(n+m+2)$ columns where the first row consists of one tile of $t_1$, one tile of $t_2$, $n$ tiles of $t_3$ and $m$ tiles of $t_4$ in order.
\end{definition}

\begin{remark}
Since the 1-1-N-M counting tiling problem takes $n$ and $m$ as input in unary, we measure its time complexity in terms of $n+m$.
\end{remark}

\begin{lemma}
\label{lemma:counting-tiling}
There is a \class{\#P_1}-hard 1-1-N-M counting tiling problem $\tiling{\mathcal{T}, R_H, R_V, t_1, t_2, t_3, t_4}$.
\end{lemma}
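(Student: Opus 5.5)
We reduce from the \class{\#P_1}-hard problem of \cref{lemma:utm}. Let $M$ be the multi-tape linear-time nondeterministic TM given there, with unary input alphabet, such that counting its accepting paths on input $1^n$ is \class{\#P_1}-hard. First we convert $M$ into an equivalent one-tape nondeterministic machine $M'$ running in time $O(N^2)$ on inputs of length $N$. The standard multi-tape-to-one-tape simulation must be done so that it preserves the number of accepting paths. This holds if each nondeterministic choice of $M$ is simulated by exactly one nondeterministic branching of $M'$, and all bookkeeping steps are deterministic. We also pad $M'$ so that every computation on $1^n$ runs for exactly $q(n) = c(n+1)^2$ steps for a fixed constant $c$, for instance by halting states that loop deterministically. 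Then every accepting path corresponds to exactly one full-length computation.

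Next we apply \cref{lemma:bounded-tiling} to $M'$. This yields a tiling system $(\mathcal{T}, R_H, R_V)$ whose accepting-path count on $x = 1^n$ equals the number of tilings of a $q\times q$-type grid with the prescribed first row. That row consists of two special tiles $t_1, t_2$ (left boundary and head/initial state), then $n$ copies of a tile $t_3$ representing the input symbol $1$, then blank tiles $t_4$. Choosing $m = q(n) - n$ (the exact offset fixed by the lemma's indexing), the first row becomes exactly $t_1, t_2, t_3^{\,n}, t_4^{\,m}$ and the grid is $(n+m+2)\times(n+m+2)$. Since $q$ is polynomial, $m$ is polynomial in $n$ and can be written in unary in polynomial time.

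We must also check that the square grid matches the tableau. There are $n+m+2$ rows, i.e. at least $q+1$ time steps, and acceptance is sticky, so extra rows are filled deterministically by repeating the halting configuration. The extra columns beyond the reachable tape stay blank. If necessary we adjust $\mathcal{T}$ so that rows after halting are forced copies and rejecting halts have no continuation. Under these conditions, tilings biject with accepting paths.

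Hence $\#\mathrm{acc}_M(1^n) = \tiling{\mathcal{T},R_H,R_V,t_1,t_2,t_3,t_4}(n, m(n))$ with one oracle call, which gives \class{\#P_1}-hardness.

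The main obstacle is making the correspondence exactly one-to-one. The one-tape simulation must be parsimonious, and the tiling encoding must leave no freedom in the extra rows and columns of the square grid. We would handle this by strengthening $R_V$ so that a halted accepting configuration propagates uniquely, a rejecting configuration has no successor row, and blank regions untouched by the head are forced to copy downward.
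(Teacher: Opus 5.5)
Your proposal is correct and takes essentially the same route as the paper. Both convert the \class{\#P_1}-hard linear-time machine of \cref{lemma:utm} into a one-tape machine with a quadratic time bound, encode it via \cref{lemma:bounded-tiling}, and set $m$ to the time bound minus $n$, so that the first row is $t_1, t_2, t_3^{\,n}, t_4^{\,m}$ on an $(n+m+2)\times(n+m+2)$ grid. Your added care about a parsimonious one-tape simulation and about forced, unique continuations after halting spells out what the paper leaves implicit when it appeals to the bounded-tiling lemma.
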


%\begin{proof}[Proof Sketch]
The lemma is proved mainly by encoding the Turing machine in \Cref{lemma:utm} in the same way as in \Cref{lemma:bounded-tiling} so that each tiling of certain $n,m$ corresponds to an accepting path of the Turing machine on the input $n$ with time and space bound of $n+m+2$.
Recall that the TM in \Cref{lemma:utm} is a multi-tape linear time nondeterministic Turing machine with unary alphabet and computing its number of accepting paths on a given input is \class{\#P_1}-hard.
%\end{proof}

\begin{proof}
Let $M$ be the Turing machine in \Cref{lemma:utm}.
Let us first transform $M$ to a single-tape TM $M'$ which preserves all other properties of $M$ (i.e., nondeterminism, unary input alphabet, and \class{\#P_1}-hardness of computing the number of accepting paths).
The transformation can be accomplished using well-known tricks, as described, e.g., in \citet[Proof of Theorem 1.6]{arora09:computational-complexity}.
Most notably, we \emph{interleave} the tapes into one, i.e., we store the first symbol from each tape, then the second symbol, and so on.
$M'$ will only require a constant number of passes over its tape to simulate one step of $M$, and its time complexity will be the square of the complexity of $M$.
Hence, we may find a constant $c$ such that $M'$ always terminates within $cn^2$ steps for any input of size $n$, and we can bound the size of its tape by the same value.

Since $M'$ is a one-tape nondeterministic TM, by \cref{lemma:bounded-tiling}, we may encode it as a bounded tiling problem.
Let $(\mathcal{T}, R_H, R_V)$ be the tiling system to encode $M'$.
Moreover, as the input of $M'$ is in unary, we use the same encoding to obtain the first row $t_1, \ldots, t_{cn^2+2}$ so that $t_3 = \ldots = t_{n+2}$ is the tile representing the input symbol (i.e., the only element of the unary input alphabet), and $t_{n+3} = \ldots = t_{cn^2+2}$ is the tile representing the default empty symbol.
By the same argument as in \citet[Theorem 7.2.1]{lewis98:tiling-npc}, each tiling on the $(cn^2+2) \times (cn^2+2)$ grid with the specified first row corresponds to an accepting path of $M'$ on the input size $n$.
Therefore, $\tiling{\mathcal{T}, R_H, R_V, t_1, t_2, t_3, t_{n+3}}$ on input $n$ and $m = cn^2-n$ equals the number of accepting paths of $M'$ on the input size $n$.
\end{proof}

\subsection{The Grid Axiom}
\label{ssec:grid-axiom}
\cref{lemma:counting-tiling} states that there are hard 1-1-N-M counting tiling problems, specifically \class{\#P_1}-hard.
Let us now define an intermediate axiom, namely the \emph{grid axiom}, which will be able to encode such a hard counting problem.
The grid axiom requires that the elements of the domain be arranged into a grid, and that the horizontal and vertical successor relations be accessed via the binary relations $H$ and $V$, respectively.

\begin{definition}{(The Grid Axiom)}
\label{def:grid-axiom}
The \emph{grid axiom} over a domain of size $n^2$, denoted by $\gridaxiom(H,V)$, involves two binary relations $H$ and $V$ such that there is a way to arrange the $n^2$ elements into a grid of $n$ rows and $n$ columns, and
\begin{itemize}
  \item $H(x,y)$ is true if and only if $x$ is at the $i$-th row and $j$-th column and $y$ is at the $i$-th row and $j+1$-th column, for some $1 \le i \le n$ and $1 \le j < n$, and
  \item $V(x,y)$ is true if and only if $x$ is at the $i$-th row and $j$-th column and $y$ is at the $i+1$-th row and $j$-th column, for some $1 \le i < n$ and $1 \le j \le n$.
\end{itemize}
\end{definition}

With the grid axiom, we can encode the 1-1-N-M counting tiling problem in \Cref{lemma:counting-tiling} such that each valid tiling corresponds to the same number of models of the sentence (specifically, given one tiling, the number of models is equal to the factorial of the grid size since we may reorder the domain elements without changing the grid structure), and thus computing the number of models of the sentence is as hard as computing the 1-1-N-M counting tiling problem.

\begin{lemma}
\label{lemma:grid-tm}
There is an \fotwo sentence with a grid axiom whose WFOMC is \class{\#P_1}-hard.
\end{lemma}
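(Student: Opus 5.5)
We reduce the \class{\#P_1}-hard problem $\tiling{\mathcal{T}, R_H, R_V, t_1, t_2, t_3, t_4}$ from \cref{lemma:counting-tiling} to WFOMC of a single fixed \fotwo sentence $\Psi$ with the grid axiom $\gridaxiom(H,V)$. On input $(n,m)$ we query the oracle with domain size $N^2$, where $N=n+m+2$, and recover the tiling count from the answer.

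\textbf{Encoding the tiles and adjacency.} For each tile $t\in\mathcal{T}$ introduce a unary predicate $T_t$. Require that every element carries exactly one tile: $\forall x: \bigvee_{t} T_t(x) \land \bigwedge_{t\neq t'} \neg(T_t(x)\land T_{t'}(x))$. The adjacency constraints are then
\[
\forall x\forall y: H(x,y)\to \bigvee_{(t,t')\in R_H} T_t(x)\land T_{t'}(y),
\]
and the analogous sentence for $V$ and $R_V$.

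\textbf{Pinning the first row.} Define a unary predicate $Top$ for the first row by $\forall x: Top(x)\leftrightarrow \neg\exists y: V(y,x)$. Define $Start$ by $\forall x: Start(x)\leftrightarrow \neg\exists y: H(y,x)$, which marks the first column. Define $Sec$ by $\forall x: Sec(x)\leftrightarrow \exists y: H(y,x)\land Start(y)$, which marks the second column.

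On $Top$ elements we enforce:
\begin{itemize}
\item $Start\to T_{t_1}$ and $Sec\to T_{t_2}$;
\item any other element carries $T_{t_3}$ or $T_{t_4}$;
\item the row has the form $t_3^{a}t_4^{b}$, via $\forall x\forall y: Top(x)\land H(x,y)\land T_{t_4}(x)\to T_{t_4}(y)$;
\item $t_2$ is followed by $t_3$ whenever $n\ge1$.
\end{itemize}
Since $a+b=N-2$ is forced by the grid, fixing $a=n$ is all that remains. We impose this with the cardinality constraint $|T_{t_3}\land Top| = n$: introduce an auxiliary $U(x)\leftrightarrow Top(x)\land T_{t_3}(x)$ and require $|U|=n$.

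One caveat: if the tiles $t_1,\ldots,t_4$ also appear outside the first row, their predicates cannot be counted globally, which is why we count the auxiliary $U$ instead. The cardinality constraint depends on $n$, so strictly $\Psi$ is not fixed. We handle this in one of two ways. The first is to invoke \cref{lemma:c2+cc}, whose reduction to \fotwo works with symbolic weights: give $U$ weight $z$, compute the WFOMC polynomial in $z$ by interpolation over polynomially many weight values, and extract the coefficient of $z^n$. The second, if the tiling system permits, is to use $t_3\neq t_4$ and rely on that interpolation directly. Either way only polynomially many oracle calls to WFOMC of a fixed sentence are needed, and the sentence is \fotwo up to this weight trick.

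\textbf{Counting argument (main obstacle).} We must show that the weighted count equals $(N^2)!$ times the number of valid tilings, or a known multiple of it. Given a model, the grid axiom fixes an arrangement into an $N\times N$ grid, and $H$ and $V$ determine each element's position. We need to argue that for fixed $H,V$ the positions are uniquely determined, since $H$ and $V$ fix the corner and propagate from there. The delicate point is that a grid has no nontrivial automorphisms preserving both the direction of $H$ and the direction of $V$. Consequently, each labelled domain realization corresponds to exactly one bijection between the domain and grid cells, of which there are $(N^2)!$. The tile predicates then encode exactly a tiling satisfying $R_H$, $R_V$, and the first-row condition, and the auxiliary predicates are uniquely determined.

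With unit weights this gives $\symbwfomc = (N^2)!\cdot \tiling{\cdot}(n,m)$, and dividing yields the count in polynomial time, establishing \class{\#P_1}-hardness. I expect the main care to be needed in this rigidity/uniqueness step and in handling the $n$-dependent first-row constraint via interpolation.
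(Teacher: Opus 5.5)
Your proposal is correct and is essentially the paper's proof. Both reduce from the \class{\#P_1}-hard 1-1-N-M counting tiling problem over a domain of size $(n+m+2)^2$, with one unary predicate per tile, the $R_H$/$R_V$ constraints imposed along $H$/$V$, border predicates read off from $H$ and $V$, an $n$-dependent count on the first row removed by weight interpolation (the paper cites \cref{lemma:c2+cc} for this), and the factor $((n+m+2)^2)!$ from the rigidity of the directed grid.

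The only real difference is the first-row gadget. You use a top-row $t_4$-block closed under $H$-successors plus a single count on $Top\land T_{t_3}$, whereas the paper chains predicates $R_1T_1,\dots,R_1T_4$ with two counts. Yours is if anything more careful: read literally, the paper's biconditionals put $R_1T_3$ on every top-row cell from the third column on, which clashes with $|R_1T_3|=n$. As written, your gadget tacitly needs $t_3\neq t_4$ and $t_3,t_4\notin\{t_1,t_2\}$. This holds for the tiles built in the proof of \cref{lemma:counting-tiling}, and can be avoided anyway by marking first-row positions with auxiliary predicates that then imply the tile predicates.
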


The proof will assume that the number of distinct tiles (i.e., the size of the set $\mathcal{T}$) is bounded a constant as we will introduce a new predicate $T_i$ for each unique tile $t_i\in\mathcal{T}$.
Afterwards, the proof consists only of a technical construction of a first-order theory that properly links the grid axiom and a tiling.
See \Cref{fig:grid_axiom_mapping} for a sketch.

\begin{figure}[t]
    \centering
    \begin{tikzpicture}[
        % --- Styles for Left Side (Logical Domain) ---
        d_node/.style={circle, draw=black, thick, fill=white, minimum size=0.6cm, inner sep=0pt, font=\small},
        h_edge/.style={->, >={Stealth[round]}, draw=black!80, thick, shorten >=1pt, shorten <=1pt},
        v_edge/.style={->, >={Stealth[round]}, draw=black!80, thick, shorten >=1pt, shorten <=1pt},
        % --- Styles for Right Side (Tiling Grid) ---
        tile_grid_line/.style={draw=gray!50, thin},
        fill_T1/.style={draw=blue!80, thick, fill=blue!10, pattern=crosshatch dots, pattern color=blue!40},
        fill_T2/.style={draw=red!80, thick, fill=red!10, pattern=north east lines, pattern color=red!40},
        % --- Styles for Mapping Arrows ---
        map_arrow/.style={->, >={Stealth[scale=1.3]}, draw=gray!60, ultra thick, dashed, shorten >=10pt, shorten <=10pt},
        map_label/.style={fill=white, font=\footnotesize\sffamily, draw=gray!40, thin, rounded corners, inner sep=3pt, align=center}
    ]

    % Grid step size to space nodes and tiles without using global scaling
    \def\gstep{1.25} 

    % ================= LEFT SIDE: LOGICAL DOMAIN =================
    \begin{scope}[local bounding box=logicDomain]
        % 1. Draw Nodes with scrambled labels
        % Shifted by +0.5*\gstep so the node centers perfectly align with the tile centers on the right.
        \foreach \x/\y/\lbl in {
            0/3/d_{13}, 1/3/d_{2},  2/3/d_{15}, 3/3/d_{8},
            0/2/d_{5},  1/2/d_{11}, 2/2/d_{1},  3/2/d_{16},
            0/1/d_{9},  1/1/d_{4},  2/1/d_{7},  3/1/d_{14},
            0/0/d_{3},  1/0/d_{10}, 2/0/d_{12}, 3/0/d_{6}
        } {
            \node[d_node] (D-\x-\y) at (\x*\gstep + 0.5*\gstep, \y*\gstep + 0.5*\gstep) {$\lbl$};
        }

        % 2. Draw H relations (Left-to-Right)
        \foreach \y in {0,1,2,3} {
            \foreach \x [count=\nextx from 1] in {0,1,2} {
                \draw[h_edge] (D-\x-\y) -- (D-\nextx-\y);
            }
        }
        % 3. Draw V relations (Top-to-Bottom / Downwards)
        \foreach \x in {0,1,2,3} {
            \foreach \y/\prevy in {1/0, 2/1, 3/2} {
                \draw[v_edge] (D-\x-\y) -- (D-\x-\prevy);
            }
        }
    \end{scope}
    
    % ================= RIGHT SIDE: TILING GRID =================
    \begin{scope}[xshift=7cm, local bounding box=tilingGrid]
        % 1. Draw Axes (1st Quadrant)
        \draw[->, ultra thick, >=Stealth] (0,0) -- (4.5*\gstep,0) node[right] {$x$};
        \draw[->, ultra thick, >=Stealth] (0,0) -- (0,4.5*\gstep) node[above] {$y$};

        % 2. Draw faint background grid (4x4 squares)
        \draw[tile_grid_line, step=\gstep] (0, 0) grid (4*\gstep, 4*\gstep);

        % 3. Place specific tiles perfectly aligned with grid squares
        % Target corresponding to d_13 (top-left)
        \path[fill_T1] (0*\gstep, 3*\gstep) rectangle (1*\gstep, 4*\gstep);
        \node[font=\bfseries\large] (T-0-3) at (0.5*\gstep, 3.5*\gstep) {$t_1$};
        
        % Target corresponding to d_7 (middle-right)
        \path[fill_T1] (2*\gstep, 1*\gstep) rectangle (3*\gstep, 2*\gstep);
        \node[font=\bfseries\large] (T-2-1) at (2.5*\gstep, 1.5*\gstep) {$t_1$};
        
        % Target corresponding to d_3 (bottom-left)
        \path[fill_T2] (0*\gstep, 0*\gstep) rectangle (1*\gstep, 1*\gstep);
        \node[font=\bfseries\large] (T-0-0) at (0.5*\gstep, 0.5*\gstep) {$t_2$};

        % Place a few context tiles for flavor (where the old mappings used to be)
        \path[fill_T1] (1*\gstep, 2*\gstep) rectangle (2*\gstep, 3*\gstep);
        \node[font=\bfseries\large] at (1.5*\gstep, 2.5*\gstep) {$t_1$};
        
        \path[fill_T2] (2*\gstep, 0*\gstep) rectangle (3*\gstep, 1*\gstep);
        \node[font=\bfseries\large] at (2.5*\gstep, 0.5*\gstep) {$t_2$};
        
        \path[fill_T2] (3*\gstep, 2*\gstep) rectangle (4*\gstep, 3*\gstep);
        \node[font=\bfseries\large] at (3.5*\gstep, 2.5*\gstep) {$t_2$};
    \end{scope}

    % ================= MAPPING ARROWS =================
    % Connect specific domain elements to their corresponding tiles
    % Arching D-0-3 to T-0-3 over the top of the grids
    \draw[map_arrow] (D-0-3) to[out=30, in=150] node[map_label, midway, above=2pt] {Fact holds:\\$T_1(d_{13})$} (T-0-3);
    
    % Mid-grid mapping for D-2-1 to T-2-1
    \draw[map_arrow] (D-2-1) to[out=15, in=165] node[map_label, midway] {Fact holds:\\$T_1(d_{7})$} (T-2-1);
    
    % Arching D-0-0 to T-0-0 under the bottom of the grids
    \draw[map_arrow] (D-0-0) to[out=-30, in=-150] node[map_label, midway, below=2pt] {Fact holds:\\$T_2(d_{3})$} (T-0-0);

    \end{tikzpicture}
    
    \caption{A sketch illustrating the relationship between a logical model satisfying the grid axiom and a valid tiling. On the left, domain elements $d_i$ are arranged into a logical structure by the $H$ and $V$ relations (e.g., the facts $H(d_{3}, d_{10})$ and $V(d_{11},d_4)$ hold). On the right, the geometric interpretation shows tiles occupying grid squares corresponding to those domain elements. The dashed arrows demonstrate how facts holding in the logical domain dictate the placement of specific tile types in the grid.}
    
    \Description{A diagram with two side-by-side structures illustrating the mapping between a logical model and a geometric tiling. The left side shows 16 circular nodes in a 4x4 layout, representing domain elements with arbitrarily scrambled indices. They are connected by horizontal arrows pointing rightward (the H relation) and vertical arrows pointing downward (the V relation). The right side shows a geometric Cartesian grid with x and y axes. Certain 1x1 squares in this grid are filled with tiles: t_1 (blue dotted) and t_2 (red striped). Three thick dashed arrows map specific nodes from the logical side to specific tiles on the geometric side: node d_13 maps to a top-left t_1 tile with an arrow arching over the top, node d_7 maps to another t_1 tile in the middle, and node d_3 maps to a bottom-left t_2 tile with an arrow arching underneath.}
    \label{fig:grid_axiom_mapping}
\end{figure}

\begin{proof}
Let $\tiling{\mathcal{T}, R_H, R_V, t_1, t_2, t_3, t_4}$ be an instance of the \class{\#P_1}-hard 1-1-N-M counting tiling problem presented in \Cref{lemma:counting-tiling}.
We now encode $\tiling{\mathcal{T}, R_H, R_V, t_1, t_2, t_3, t_4}$ for the input $n,m$ by an \fotwo sentence with a grid axiom $\gridaxiom(H,V)$ over a domain of size $(n+m+2)^2$.
Elements are expected to be arranged into $n+m+2$ rows and $n+m+2$ columns, where the element at the $i$-th row and the $j$-th column represents the cell at the $i$-th row and the $j$-th column of the tiling grid.
Let $\mathcal{T}=\{t_1, \ldots, t_k\}$.
We introduce fresh unary relations $T_1(x), \ldots, T_k(x)$ where $T_i(x)$ being true indicates that the cell $x$ is mapped to the tile $t_i$.

Formally, we define the following sentences:
\begin{itemize}
  \item The top and left borders of the grid are identified by unary relations $Top$ and $Left$ according to $H$ and $V$.
  \begin{equation}
  \label{eq:tm1}
    \begin{aligned}
      &(\forall x : Top(x) \leftrightarrow \forall y: \lnot V(y,x)) \\
      \land \; &(\forall x : Left(x) \leftrightarrow \forall y: \lnot H(y,x)).
    \end{aligned}
  \end{equation}

  \item Each grid cell is mapped to a unique tile.
  \begin{equation}
  \label{eq:tm2}
    \forall x : \bigvee_{i=1}^k \left( T_i(x) \land \bigwedge_{j \neq i} \lnot T_j(x) \right).
  \end{equation}

  \item The tiling satisfies the constraints $R_H, R_V$.
  \begin{equation}
  \label{eq:tm3}
    \begin{aligned}
        &\left(\forall x \forall y : H(x,y) \to \bigvee_{(t_i, t_j \in R_H)} (T_i(x) \land T_j(y))\right)  \\
        \land \; & \left(\forall x \forall y : V(x,y) \to \bigvee_{(t_i, t_j \in R_V)} (T_i(x) \land T_j(y))\right).
    \end{aligned}
  \end{equation}
  
  \item The first row consists of one tile of $t_1$, one tile of $t_2$, $n$ tiles of $t_3$ and $m$ tiles of $t_4$ in order.
  \begin{equation}
  \label{eq:tm4}
    \begin{aligned}
        & (\forall x : R_1T_1(x) \leftrightarrow (Top(x) \land Left(x))) \\
        \land \;& (\forall x : R_1T_2(x) \leftrightarrow \exists y: (H(y,x) \land R_1T_1(y))) \\
        \land \;& (\forall x : R_1T_3(x) \leftrightarrow \exists y: \bigl(H(y,x) \land (R_1T_2(y) \lor R_1T_3(y))\bigl)) \\
        \land \;& (\forall x : R_1T_4(x) \leftrightarrow \exists y: \bigl(H(y,x) \land (R_1T_3(y) \lor R_1T_4(y))\bigl)) \\
        \land \;& (|R_1T_3| = n) \\
        \land \;& (|R_1T_4| = m) \\
        \land \;& \bigwedge_{i=1}^4 (\forall x : R_1T_i(x) \to T_i(x)),
    \end{aligned}
  \end{equation}
  where $R_1T_i$ ($i = 1,2,3,4$) are auxiliary fresh unary relations.
\end{itemize}

Let $\Psi$ be the conjunction of \Cref{eq:tm1,eq:tm2,eq:tm3,eq:tm4}.
Each valid tiling of $\tiling{\mathcal{T}, R_H, R_V, t_1, t_2, t_3, t_4}$ is encoded by $((n+m+2)^2)!$ models of $\Psi \land \gridaxiom(H,V)$ over a domain of size $(n+m+2)^2$ due to the $((n+m+2)^2)!$ ways to organize a grid of $(n+m+2)^2$ elements.
Consequently, any algorithm computing FOMC of $\Psi \land \gridaxiom(H, V)$ in time polynomial in the domain size yields an algorithm to compute the number of valid tilings of $\tiling{\mathcal{T}, R_H, R_V, t_1, t_2, t_3, t_4}$ in time polynomial in $n+m$, thus the FOMC is \class{\#P_1}-hard.

Note that $\Psi$ is an \fotwo sentence with cardinality constraints.
By \Cref{lemma:c2+cc}, there is another \fotwo sentence $\Psi'$ without cardinality constraints such that WFOMC of $\Psi'$ with a grid axiom is \class{\#P_1}-hard as well.
\end{proof}

While proving the hardness of the grid axiom is a stepping stone for our primary negative result, its own significance should not be disregarded.
Unfortunately, tractable inference over grids is likely not possible.
That may seem particularly discouraging when thinking about various probabilistic inference tasks on grids, e.g., the aforementioned 2-dimensional Ising model with constant interaction strength, which are equivalent to WFOMC~\citep{broecketal21:lifting-intro}.
Nevertheless, all hope is not lost due to our results in \cref{sec:4-successor-relations} which suggest that we may use the $k$-th successor relation for grids of size $k\times n$ where $k$ is a constant with respect to $n$.
See \cref{ex:grid-kxn} for details.

\subsection{Implementing the Grid Axiom by Two Linear Order Axioms}
Next, we finally turn to the two-variable fragment with two linear order axioms.
We show that having the linear order axiom on two distinguished binary relations allows us to encode the grid axiom.
Hence, we prove that WFOMC with two linear orders is at least as hard as WFOMC with the grid axiom, i.e., \class{\#P_1}-hard.

Let us start by discussing why similar grid encodings available in the literature on finite satisfiability with several linear orders \citep{otto01:unsat-fo2+8order,kieronski11-unsat-fo2+lo} are not applicable in model counting. 
Those encodings aim to ensure that every model of the constructed sentence satisfies a property called \emph{grid-like} defined in \citet{otto01:unsat-fo2+8order}.
A model is grid-like if a grid can be homomorphically embedded into it, i.e., there is a mapping $G$ from the grid cells to domain elements along with two binary relations $H, V$ such that the horizontal (resp. vertical) successor relation of any pair of cells $x,y$ implies $H(G(x), G(y))$ (resp. $V(G(x),G(y))$.
Using the relations $H$ and $V$, one can encode the tiling problem using a similar technique as in \cref{ssec:grid-axiom}.
However, $G$ is neither required to be injective nor surjective, and the encoding makes no restriction on unmapped elements or elements that are not horizontally or vertically adjacent.
This results in non-unified numbers of ways to extract a grid from a model and allows for the same grid to be extracted from different models.
For example, one can split the elements into four groups, each encoding a grid.
Additionally, given a model encoding a valid tiling, one can add more true ground literals for $H$ and $V$.
Therefore, computing FOMC or WFOMC of such sentences does not make sense when determining the number of valid tilings.

That is why we give a new encoding of a true grid defined by \Cref{def:grid-axiom} using two linear order relations such that each grid exactly corresponds to $(n^2)!$ models and thus the number of accepting paths of the Turing machine encoded by the grid can be counted precisely.
\Cref{fig:L1L2} shows the target shape of our grid with two linear orders.
Our encoding benefits from the tractability of emulating arbitrary counting quantifiers and cardinality constraints in polynomial time with oracles for WFOMC as stated in \Cref{lemma:c2+cc}, which is not accessible in the finite satisfiability problem when we only have access to the SAT oracle for \fotwo.

\begin{lemma}
\label{lemma:grid-2lo}
Let $\Psi$ be an \fotwo sentence with distinguished binary relations $H$ and $V$, and $n \ge 2$ be an integer.
There is an \fotwo sentence $\Psi'$ with distinguished binary relations $\lopred_1$ and $\lopred_2$ such that WFOMC of $(\Psi \land \gridaxiom(H,V), n^2)$ can be reduced to WFOMC of $(\Psi' \land \loaxiom(\lopred_1) \land \loaxiom(\lopred_2), n^2)$.
\end{lemma}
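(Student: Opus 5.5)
The plan is to realise the grid by taking $\lopred_1$ as the \emph{row-major} order of the grid and $\lopred_2$ as the \emph{column-major} order, and then to define $H$ and $V$ inside $\Psi'$ from these two orders. Once a grid arrangement of the $n^2$ elements is fixed, the row-major and column-major orders are uniquely determined. So if every model of the encoding induces exactly one grid, and every grid induces exactly one pair $(\lopred_1,\lopred_2)$, then the models of $\Psi' \land \loaxiom(\lopred_1)\land\loaxiom(\lopred_2)$ are in weight-preserving bijection with the models of $\Psi\land\gridaxiom(H,V)$. We make the fresh auxiliary predicates (except $\lopred_1,\lopred_2$) have weight $1$ on both literals, so that $\symbwfomc$ is preserved exactly. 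Throughout we are free to use counting quantifiers and cardinality constraints, since by \cref{lemma:c2+cc} they can be eliminated afterwards with only an oracle for \fotwo{} with the same two axioms.

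\textbf{Construction, in order.}
\begin{enumerate}
\item Using the \ctwo{} encoding $\xi$ of \Cref{eq:succ1:first-last,eq:succ1:bijection-like,eq:succ1:left2right}, introduce the immediate successor relations $S_1$ of $\lopred_1$ and $S_2$ of $\lopred_2$.
\item Introduce unary predicates $E_1$ and $E_2$, meaning ``last element of a row'' and ``last element of a column'', with $(|E_1|=n)$ and $(|E_2|=n)$. Require that the $\lopred_1$-maximum satisfies $E_1$, and that between two consecutive $E_1$-elements there are exactly $n$ elements. This can be phrased with counting quantifiers as follows: for every $x$, the number of $y$ with $y\lopred_1 x$ and $E_1(y)$, multiplied by $n$, equals the number of predecessors of $x$ rounded down. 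In practice one would instead introduce unary row-start and row-end markers together with the constraint $|E_1|=n$, and let the next item force the block lengths. Do the same for $E_2$.
\item Set $H(x,y)\leftrightarrow S_1(x,y)\land\neg E_1(x)$ and $V(x,y)\leftrightarrow S_2(x,y)\land\neg E_2(x)$.
\item Add the coupling axioms that force the two orders to be transposes of each other. Here we use the observation that in a grid, $x\lopred_1 y \land x\lopred_2 y$ holds iff $x$ is dominated by $y$ coordinatewise. The coupling is imposed in three parts:
\begin{itemize}
\item $H(x,y)\to x<_2 y$ and $V(x,y)\to x<_1 y$;
\item the first block of $\lopred_1$ (the top row) and the first block of $\lopred_2$ (the left column) meet exactly in the common minimum;
\item counting constraints $\forall x\,\exists^{\le 1}y: H(y,x)$ and $\forall x\,\exists^{\le 1}y: V(y,x)$, plus $\forall x:\neg E_2(x)\to\exists^{=1}y: V(x,y)$.
\end{itemize}
\item Let $\Psi'$ be $\Psi$ conjoined with all of the above.
\end{enumerate}

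\textbf{Correctness.} Soundness (grid $\Rightarrow$ model) is a routine check: take row-major and column-major orders and verify each conjunct. The direction that needs work is completeness (model $\Rightarrow$ grid). Here I would argue by induction along $\lopred_1$ that the $k$-th element of the $i$-th $E_1$-block is the $i$-th element of the $k$-th $E_2$-block. This would show that $(\text{block index in }\lopred_1,\ \text{position in block})$ is a bijection onto $[n]\times[n]$ under which $H$ and $V$ are exactly the horizontal and vertical successors. Both orders are then determined by the grid, which gives uniqueness, and hence the factor $(n^2)!$ on both sides matches.

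\textbf{Main obstacle.} I expect the main obstacle to be exactly this completeness step: ruling out ``twisted'' pairs of orders that satisfy all the local two-variable constraints but do not arise from a grid. The difficulty is that the natural monotonicity statement, that $V$ preserves the $\lopred_1$-order within consecutive rows, mentions four elements and is not expressible in \fotwo. My first attempt would be to replace it with a global counting argument. The cardinality constraints force both block systems to have exactly $n$ blocks of size $n$, and the domination condition $H\subseteq{<_2}$, $V\subseteq{<_1}$ restricts the inversions between the two orders. Combining these, I would try to show that the only permutation relating $\lopred_1$ to $\lopred_2$ compatible with them is the transpose permutation. If that fails, I would strengthen the encoding with extra unary ``parity'' predicates: alternate colours by row and by column, and require $V$ to respect the colour pattern. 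Pinning down such an encoding locally should force the transpose structure.
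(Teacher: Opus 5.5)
Your architecture matches the paper's: row-major $\le_1$, column-major $\le_2$, $H=S_1\land\neg E_1(x)$, $V=S_2\land\neg E_2(x)$, $H\subseteq{<_2}$, $V\subseteq{<_1}$, and counting constraints eliminated afterwards. But the completeness direction you leave open is a genuine gap, and with the coupling you propose it is actually false.

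Take $n=3$. Let $\le_1$ be $a,b_1,b_2,c,d_1,d_2,e,f,g$ with $E_1=\{b_2,d_2,g\}$, and let $\le_2$ be $a,c,e,b_1,b_2,f,d_1,d_2,g$ with $E_2=\{e,f,g\}$. Then:
\begin{itemize}
\item the two orders agree inside every $E_1$-block and inside every $E_2$-block, so $H\subseteq{<_2}$ and $V\subseteq{<_1}$;
\item every block has exactly $n$ elements, so even your inexpressible block-length requirement holds;
\item the first blocks $\{a,b_1,b_2\}$ and $\{a,c,e\}$ meet only in the common minimum $a$;
\item all three of your counting constraints are satisfied.
\end{itemize}
Yet $H(b_1,b_2)$ and $V(b_1,b_2)$ both hold, so $b_1$ and $b_2$ would have to share a row and a column. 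This model is not a grid. Hence neither your induction along $\le_1$ nor a global inversion count can succeed, and the first-block condition does not rescue it. Two smaller points: your per-element formulation of the block lengths compares two unbounded counts and is not expressible in $\mathbf{C}^2$; and $\exists^{\le 1}y: H(y,x)$ and $\exists^{\le 1}y: V(y,x)$ are already implied by $H\subseteq S_1$ and $V\subseteq S_2$.

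The missing idea, which is exactly what the paper adds, is to forbid the two successor relations from sharing an edge: $\forall x\forall y: \neg S_1(x,y)\lor\neg S_2(x,y)$. This holds for the intended row-major/column-major pair when $n\ge 2$, and it kills the edge $b_1\to b_2$ above. It also turns completeness into a short pigeonhole argument that needs no local control of block lengths.
\begin{itemize}
\item Since $H\subseteq{<_2}$, each row is traversed by $\le_2$ in its $\le_1$-order.
\item Now let $V(x,y)$. Then $y$ is not in an earlier row, because $x<_1 y$.
\item Nor is $y$ in the same row. Since $S_2(x,y)$, no element lies $\le_2$-strictly between $x$ and $y$. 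By row-monotonicity no element of the row lies $\le_1$-strictly between them, and since rows are $\le_1$-intervals, no element at all does. That would give $S_1(x,y)$, which the new constraint forbids.
\item So each of the $n$ $E_2$-blocks is a $V$-chain meeting each of the $n$ rows at most once. With $n^2$ elements, each block must meet each row exactly once.
\end{itemize}
Therefore every row has $n$ elements, and the $k$-th element of row $i$ is the $i$-th element of the $k$-th $E_2$-block. This is the grid, and $H$ and $V$ are exactly its horizontal and vertical successors. The paper runs the same argument with $Left/Right$ and $Top/Bottom$ markers in place of your $E_1,E_2$ (descending $\le_2$-paths ending at the $n$ $Bottom$ elements). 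Once this constraint is present, your first-block condition is unnecessary.
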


\begin{proof}
Our goal is to arrange the $n^2$ elements to a grid of $n$ rows and $n$ columns using $\lopred_1$ and $\lopred_2$ so that the horizontal and vertical successor relations $H, V$ satisfying \Cref{def:grid-axiom} can be defined.
To begin, we again define three useful relations with respect to each linear order, namely the first element $First_i$, the last element $Last_i$, and the immediate successor $S_i$.%
\footnote{We use just $S$ rather than $Succ$ to refer to the successor relation here to emphasize that $S_1$ and $S_2$ are two distinct immediate successor relations defined with respect to, in order, $\le_1$ and $\le_2$.}
Hence, we reuse \Cref{eq:succ1:first-last,eq:succ1:bijection-like,eq:succ1:left2right} for each of our linear order predicates $\lopred_i$ where $i \in \{1,2\}$:
\begin{equation}
    \label{eq:2lo-succ1}
    \begin{aligned}
        & (\forall x:  First_i(x) \leftrightarrow \forall y: \ (x \lopred_i y )) \\
        \land \; &(\forall x: Last_i(x) \leftrightarrow \forall y: \ (y \lopred_i x )) \\
        \land \; &(\forall x: \lnot Last_i(x) \to \exists^{=1} y: \ S_i(x,y)) \\
        \land \; & (\forall x: \lnot First_i(x) \to \exists^{=1} y: \ S_i(y,x)) \\
        \land \; & (\forall x \forall y: (S_i(x,y) \to (x <_i y))
    \end{aligned}
\end{equation}

\paragraph{Building the Skeleton using $\lopred_1$}
We now build a basic skeleton of the grid by making constraints on the $\lopred_1$ predicate.
We mark the elements at the left border and the right border by unary relations $Left$ and $Right$ with the following constraints:
\begin{itemize}
  \item Each border contains $n$ elements, and the two borders are disjoint.
  \begin{equation}
  \label{eq:2lo-LeftRight1}
    \begin{aligned}
        & (\forall x : \lnot Left(x) \lor \lnot Right(x)) \\
        \land\; &(|Left| = n) \\
        \land\; &(|Right| = n) \\
    \end{aligned}
  \end{equation}
  
  \item The first element of $\lopred_1$ is at the left border, and the last element is at the right border.
  \begin{equation}
  \label{eq:2lo-LeftRight2}
  \begin{aligned}
    & (\forall x : First_1(x) \to Left(x)) \\
    \land \; &(\forall x: Last_1(x) \to Right(x)).  
  \end{aligned}
  \end{equation}
  
  \item $Right$ and $Left$ should be adjacent except for $First$ and $Last$.
  \begin{equation}
  \label{eq:2lo-LeftRight3}
    \forall x \forall y: S_1(x,y) \to (Right(x) \leftrightarrow Left(y))
  \end{equation}
\end{itemize}

Now, we have exactly $n$ $Left$-$Right$ pairs in order.
Each pair can be treated as an indicator of the start and the end of a row.
Therefore, elements are arranged in the shape shown in \Cref{fig:L1-1}, i.e., there are $n$ rows, each starting with a $Left$ mark and ending with a $Right$ mark, but each row can be of different length.

\begin{figure}
  \centering
      \begin{tikzpicture}
        \tikzstyle{roundnode}=[circle, draw, inner sep=0pt, minimum size=2mm]
        \tikzstyle{roundnodeleft}=[circle, draw, inner sep=0pt, minimum size=2mm, label=left:{\small $Left$}]
        \tikzstyle{roundnoderight}=[circle, draw, inner sep=0pt, minimum size=2mm, label=right:{\small $Right$}]
        \foreach \j in {0,3} {
            \node[roundnodeleft] (a0\j) at (0,\j) {};
            \foreach \i in {1,2} \node[roundnode] (a\i\j) at (\i,\j) {};
            \node[roundnoderight] (a3\j) at (3,\j) {};
            \foreach \i in {0,1,2}
                \pgfmathtruncatemacro{\k}{\i+1}
                \path[->,red] (a\i\j) edge (a\k\j);
        }
        \node[roundnodeleft] (a02) at (0,2) {};
        \node[roundnoderight] (a42) at (4,2) {};
        \foreach \i in {1,2,3} \node[roundnode] (a\i2) at (\i,2) {};
        \foreach \i in {0,1,2,3}
            \pgfmathtruncatemacro{\k}{\i+1}
            \path[->,red] (a\i2) edge (a\k2);
        \node[roundnodeleft] (a01) at (0,1) {};
        \node[roundnode] (a11) at (1,1) {};
        \node[roundnoderight] (a21) at (2,1) {};
        \path[->,red] (a01) edge (a11);
        \path[->,red] (a11) edge (a21);
        \path[->,red] (a33) edge (a02);
        \path[->,red] (a42) edge (a01);
        \path[->,red] (a21) edge (a00);
      \end{tikzpicture}
  \caption{A possible skeleton after enforcing constraints in \Cref{eq:2lo-LeftRight1,eq:2lo-LeftRight2,eq:2lo-LeftRight3}, which consists of $n$ rows, each starting with $Left$ and ending with $Right$. The red route represents the $\lopred_1$ relation.}
  \label{fig:L1-1}
\Description{The figure shows 16 domain elements represented as vertices ordered into a sequence by the immediate successor relation.
The elements are arranged into four rows by designating a column of Left nodes and a column of Right nodes.
The enforced rows have different lengths.}
\end{figure}

Now, let us mark the top and bottom nodes with unary relations $Top$ and $Bottom$.
If a node is marked with such a label, all nodes within the same row will have the same label.
However, only the row containing $First_1$ can be $Top$, and only the row containing $Last_1$ can be $Bottom$.
Therefore, only the first row is marked $Top$ and only the last row is marked $Bottom$.
We further restrict the two rows to length $n$.
\begin{equation}
\label{eq:2lo-TopBottom}
  \begin{aligned}
    & (\forall x \forall y: (S_1(x,y) \land \lnot Right(x)) %\\ & \:\:\:\:\:\:\:\:\:\:\:\:\: 
    \to \bigl((Top(x) \leftrightarrow Top(y)) \land (Bottom(x) \leftrightarrow Bottom(y))\bigl) \\
    \land\; & (\forall x: First_1(x) \leftrightarrow (Left(x) \land Top(x))) \\
    \land\; & (\forall x: Last_1(x) \leftrightarrow (Right(x) \land Bottom(x))) \\
    \land\; & (|Top| = n) \\
    \land\; & (|Bottom| = n)
  \end{aligned}
\end{equation}

\paragraph{Restricting the Shape Using $\lopred_2$}
Next, we force the skeleton to be a grid by making constraints using the other linear order relation $\lopred_2$.

\begin{itemize}
  \item Both linear orders share the same starting and ending positions.
  \begin{equation}
  \label{eq:2lo-L2-1}
    \forall x: (First_1(x) \leftrightarrow First_2(x)) \land (Last_1(x) \leftrightarrow Last_2(x))
  \end{equation}

  \item If $y$ is the $\lopred_1$-successor of $x$ and $x$ is not at the right border, $x$ should be $\lopred_2$-smaller than $y$.
  \begin{equation}
  \label{eq:2lo-L2-2}
    \forall x \forall y: (S_1(x,y) \land \lnot Right(x)) \to (x\lopred_2 y)
  \end{equation}

  \item If $y$ is the $\lopred_2$-successor of $x$ and $x$ is not at the bottom border, $x$ should be $\lopred_1$-smaller than $y$.
  \begin{equation}
  \label{eq:2lo-L2-3}
    \forall x \forall y: (S_2(x,y) \land \lnot Bottom(x)) \to (x\lopred_1 y)
  \end{equation}

  \item The $\lopred_1$-successor and the $\lopred_2$-successor of element $x$ are different.
  \begin{equation}
  \label{eq:2lo-L2-4}
    \forall x \forall y: \lnot S_1(x,y) \lor \lnot S_2(x,y)
  \end{equation}
\end{itemize}

The following three observations imply the route of $\lopred_2$ and the arrangement of elements:
\begin{enumerate}
  \item For any element $x$, the elements in the same row with $x$ but on its left should be $\lopred_2$-smaller than $x$ due to \Cref{eq:2lo-L2-2}.
  In other words, each row will be visited by $\lopred_2$ from left to right.

  \item For each $x$ not at the bottom, let $y$ be its successor in $\lopred_2$.
  $y$ cannot lie in the rows above $x$ due to \Cref{eq:2lo-L2-3}, or in the same row with $x$ because otherwise, by Observation (1), $y$ can only be the element next to $x$ at its right but that is against \Cref{eq:2lo-L2-4}.

  \item Based on the two observations above, the route of $\lopred_2$ can be partitioned into several paths, each of which only descends until reaching the bottom.
  The length of each path is at most $n$, as there are only $n$ rows, and there can be at most $n$ paths since the bottom side has a length of $n$.
  Given that the size of the domain is $n^2$, there must be exactly $n$ paths, each of length $n$.
  Therefore, $\lopred_2$ follows the blue route in \Cref{fig:L1L2} and the arrangement of elements forms a square grid.
\end{enumerate}

\begin{figure}
  \centering
      \begin{tikzpicture}
        \tikzstyle{roundnode}=[circle, draw, inner sep=0pt, minimum size=2mm]
        \foreach \i in {0,1,2,3}
            \foreach \j in {0,1,2,3}
                \node[roundnode] (a\i\j) at (\i,\j) {};
        \foreach \i in {0,1,2} {
            \foreach \j in {0,1,2,3}
                \pgfmathtruncatemacro{\k}{\i+1}
                \path[->,red] (a\i\j) edge (a\k\j);
            \pgfmathtruncatemacro{\k}{\i+1}
            \path[->,red] (a3\k) edge (a0\i);
        }
        \foreach \j in {0,1,2} {
            \foreach \i in {0,1,2,3}
                \pgfmathtruncatemacro{\k}{\j+1}
                \path[->,blue] (a\i\k) edge (a\i\j);
            \pgfmathtruncatemacro{\k}{\j+1}
            \path[->,blue] (a\j0) edge (a\k3);
        }
        \node[] (Top) at (1.5,3.5) {\small $Top$} ;
        \node[] (Bottom) at (1.5,-0.5) {\small $Bottom$} ;
        \node[] (Left) at (-0.5,1.5) {\small $Left$};
        \node[] (Right) at (3.6,1.5) {\small $Right$};
      \end{tikzpicture}
  \caption{The elements form a grid using the linear orders $\lopred_1$ (the red route) and $\lopred_2$ (the blue route).}
  \label{fig:L1L2}
\Description{The figure shows 16 domain elements represented as vertices ordered into a four-by-four grid.
The grid is defined by two immediate successor relations: one going left-to-right, as before, which defines the Left and Right columns, and the other going top-to-bottom, which defines the Top and Bottom rows.}
\end{figure}

Given the grid structure of elements, the horizontal and vertical successor relations can be defined as follows.
\begin{equation}
\label{eq:2lo-get-hv}
  \begin{aligned}
    & (\forall x \forall y: H(x,y) \leftrightarrow (S_1(x,y) \land \lnot Right(x))) \\
    \land\; &(\forall x \forall y: V(x,y) \leftrightarrow (S_2(x,y) \land \lnot Bottom(x))).
  \end{aligned}
\end{equation}

One can verify that \Cref{def:grid-axiom} is satisfied by assigning coordinates to elements based on the grid in Observation (3).

\paragraph{Obtaining $\symbwfomc(\Psi, n^2)$}
Let $\Psi_l$ be the conjunction of $\Psi$ and \Cref{eq:2lo-succ1,eq:2lo-LeftRight1,eq:2lo-LeftRight2,eq:2lo-LeftRight3,eq:2lo-TopBottom,eq:2lo-L2-1,eq:2lo-L2-2,eq:2lo-L2-3,eq:2lo-L2-4,eq:2lo-get-hv}, which is a \ctwo sentence with cardinality constraints.
By the analysis above, we obtain a unique grid from each model of $\Psi_l \land \loaxiom(\lopred_1) \land \loaxiom(\lopred_2)$.
In addition, each assignment of elements into a grid uniquely corresponds to a model of such a sentence.
Therefore,
\begin{equation*}
  \begin{aligned}
    & \symbwfomc(\Psi \land \gridaxiom(H,V), n^2, \weights)
    = \symbwfomc(\Psi_l \land \loaxiom(\lopred_1) \land \loaxiom(\lopred_2), n^2, \weights)
  \end{aligned}
\end{equation*}
holds for any weighting functions $(\weights)$.

Finally, by \Cref{lemma:c2+cc}, there is a sentence $\Psi'$ in \fotwo without cardinality constraints such that the problem of computing \wfomc of $(\Psi_l \land \loaxiom(\lopred_1) \land \loaxiom(\lopred_2), n^2)$, as well as the problem of computing \wfomc of $(\Psi \land \gridaxiom(H,V), n^2)$, can be reduced to \wfomc of $(\Psi' \land \loaxiom(\lopred_1) \land \loaxiom(\lopred_2), n^2)$.
\end{proof}

We are now ready to state the main hardness result of our work.
\begin{theorem}\label{thm:2lo}
There is an \fotwo sentence with two linear order relations whose WFOMC is \class{\#P_1}-hard to compute.
\end{theorem}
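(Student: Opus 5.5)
The theorem follows by chaining \Cref{lemma:grid-tm} with \Cref{lemma:grid-2lo}.

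By \Cref{lemma:grid-tm}, there is an \fotwo sentence $\Psi$ over distinguished binary relations $H,V$ such that computing $\symbwfomc(\Psi \land \gridaxiom(H,V), N, \weights)$ is \class{\#P_1}-hard. Its proof shows this when the domain size is a perfect square $N = (n+m+2)^2$, and only for such sizes is the grid axiom non-vacuous. We fix this $\Psi$, together with the (possibly negative) weights that \Cref{lemma:c2+cc} introduces when eliminating the cardinality constraints.

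Next, we apply \Cref{lemma:grid-2lo} with side length $n+m+2 \ge 2$. It yields an \fotwo sentence $\Psi'$ with distinguished relations $\lopred_1,\lopred_2$ such that $\symbwfomc(\Psi \land \gridaxiom(H,V), (n+m+2)^2)$ reduces to $\symbwfomc(\Psi' \land \loaxiom(\lopred_1)\land\loaxiom(\lopred_2), (n+m+2)^2)$. Two points need checking.

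First, $\Psi'$ must not depend on $n$ or $m$. The construction in \Cref{lemma:grid-2lo} uses cardinality constraints such as $|Left|=n$, whose bounds vary with the input. However, \Cref{lemma:c2+cc} removes cardinality constraints via polynomially many oracle calls on a fixed sentence, for example through interpolation on symbolic weights. The resulting sentence $\Psi'$ is therefore fixed, and only the weights and the number of oracle calls depend on the domain size.

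Second, the reduction must run in time polynomial in $n+m$. The domain size $(n+m+2)^2$ is polynomial in $n+m$, and the reductions in \Cref{lemma:grid-tm} and \Cref{lemma:grid-2lo} are each polynomial. Their composition is therefore polynomial as well.

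Putting these together: given a \class{\#P_1} function $f$, we compute it by \Cref{lemma:counting-tiling} and \Cref{lemma:grid-tm} using an oracle for \wfomc of $\Psi \land \gridaxiom(H,V)$. Each such call is in turn answered, through \Cref{lemma:grid-2lo}, by polynomially many calls to an oracle for \wfomc of the fixed sentence $\Psi' \land \loaxiom(\lopred_1)\land\loaxiom(\lopred_2)$. Hence \wfomc of this two-linear-order sentence is \class{\#P_1}-hard.

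The main obstacle is the bookkeeping of the first point: verifying that the input-dependent cardinality constraints in both lemmas can all be discharged by the fixed-sentence oracle reduction of \Cref{lemma:c2+cc}. The underlying correctness of the grid encoding itself is already established in \Cref{lemma:grid-2lo}.
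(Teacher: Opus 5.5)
Your proposal is correct and follows the paper's proof, which simply chains \Cref{lemma:grid-tm} with \Cref{lemma:grid-2lo}. Your additional check is a detail the paper leaves implicit: the input-dependent cardinality bounds (such as $|Left|=n$ and $|R_1T_3|=n$) are discharged by \Cref{lemma:c2+cc} through weight interpolation on one fixed sentence, so a single $\Psi'$ serves every input.
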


\begin{proof}
It follows naturally from \Cref{lemma:grid-2lo,lemma:grid-tm}.
\end{proof}

As a consequence of our hardness result, we may also claim that any fragment with two axioms \emph{more general} than the linear order is also not tractable.
By \emph{more general} we mean any axiom that can be used to express linear order, one example being the \emph{acyclicity axiom} \citep{malhotra23:wfomc-axioms,kuangetal24:wfomc-polynomials} which we will denote $\acyclicityaxiom(R)$, and it will simply require the graph $G(R)$ to be acyclic.

As pointed out in \citet{kuangetal24:wfomc-polynomials}, a linear order relation $\lopred$ can be encoded in the following way:
We first attach an acyclicity axiom to a binary relation $R$ and require $G(R)$ to be a tournament, and then let $\lopred$ be the copy of $R$ with the additional property of reflexivity.
Formally, a linear order relation $\lopred$ can be encoded by the following sentence:
\begin{equation}
\label{eq:acyclicity-by-lo}
  \begin{aligned}
    & \acyclicityaxiom(R) \land (\forall x \forall y: (x \neq y) \to (R(x,y) \lor R(y,x)) \\
    \land\; & (\forall x \forall y: (x \neq y) \to (x \lopred y \leftrightarrow R(x,y))) \\
    \land\; & (\forall x : x\lopred x).
  \end{aligned}
\end{equation}

The equality symbol can be eliminated, e.g., by introducing a fresh binary predicate $Eq$ such that
\begin{equation*}
        (\forall x: Eq(x, x)) \land (\forall x \exists^{=1} y: Eq(x, y)).%
\footnote{Note that dealing with equality using \ctwo is a generalization of a trick presented already in \citet[Lemma 3.5]{beame15:wfomc-fo3} which was designed to eliminate the equality relation from \fotwo.}
\end{equation*}
Again, by \cref{lemma:c2+cc}, we may reduce \wfomc of a sentence conjoined with \Cref{eq:acyclicity-by-lo} to \wfomc over \fotwo.

Since we may express the linear order in \fotwo using the acyclicity axiom, we obtain another hardness result for \fotwo with two acyclic relations.
\begin{corollary}
There is an \fotwo sentence with two acyclic relations whose WFOMC is \class{\#P_1}-hard to compute.
\end{corollary}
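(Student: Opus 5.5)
The plan is a reduction from \cref{thm:2lo}. Take the \fotwo sentence $\Phi \land \loaxiom(\lopred_1) \land \loaxiom(\lopred_2)$ whose \wfomc is \class{\#P_1}-hard. Replace each linear order axiom $\loaxiom(\lopred_i)$ by the encoding of \Cref{eq:acyclicity-by-lo}, using a fresh binary relation $R_i$ with the axiom $\acyclicityaxiom(R_i)$. Let $\Phi_A$ be the resulting sentence, so that
\begin{equation*}
\Phi_A = \Phi \land \bigwedge_{i\in\{1,2\}} \Bigl(\acyclicityaxiom(R_i) \land \xi_i(R_i,\lopred_i)\Bigr),
\end{equation*}
where $\xi_i$ is the conjunction of the tournament, copy and reflexivity conjuncts of \Cref{eq:acyclicity-by-lo}. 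We take $w(R_i)=\negw(R_i)=1$ so that weights are preserved.

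\textbf{Step 1: the encoding yields exactly a linear order.} A loopless tournament $G(R_i)$ that is acyclic is transitive. Hence $\lopred_i$ (the copy of $R_i$ off the diagonal, plus loops) is reflexive, antisymmetric, total and transitive. Conversely, each linear order $\lopred_i$ determines $R_i$ uniquely: it is the strict part of $\lopred_i$, and it is acyclic. So the models of $\Phi_A$ are in weight-preserving bijection with the models of $\Phi\land\loaxiom(\lopred_1)\land\loaxiom(\lopred_2)$, and the two \wfomc values coincide for every domain size.

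\textbf{Step 2: removing equality and counting quantifiers.} I would replace $x\neq y$ by $\neg Eq(x,y)$, adding $(\forall x: Eq(x,x))\land(\forall x\exists^{=1}y: Eq(x,y))$ and setting $w(Eq)=\negw(Eq)=1$. This forces $Eq$ to be exactly the identity, so the count is unchanged. The sentence is now \ctwo with two acyclicity axioms. \Cref{lemma:c2+cc} explicitly allows an arbitrary conjunction of axioms, so it gives an \fotwo sentence $\Phi'$ such that \wfomc of the \ctwo sentence reduces in polynomial time to \wfomc of $\Phi' \land \acyclicityaxiom(R_1)\land\acyclicityaxiom(R_2)$.

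\textbf{Step 3: concluding hardness.} Composing these polynomial-time reductions with \cref{thm:2lo} shows that an oracle for the latter \wfomc computes every \class{\#P_1} function. This is the claimed hardness.

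\textbf{Main obstacle.} The only delicate point is Step 2: checking that \cref{lemma:c2+cc} applies when the axioms sit on two distinct relations. Its proof only manipulates the \ctwo part and queries the oracle with the axioms conjoined unchanged, so this should go through, and I would check it explicitly. Step 1 amounts to the standard fact that an acyclic tournament is transitive.
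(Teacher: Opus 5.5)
Your proposal is correct and follows essentially the same route as the paper. You encode each linear order by an acyclic tournament as in \Cref{eq:acyclicity-by-lo}, eliminate equality via the fresh predicate $Eq$, and apply \cref{lemma:c2+cc}, which explicitly allows an arbitrary conjunction of axioms, and which the paper already uses with two distinguished relations in \cref{lemma:grid-2lo}. Your Step 1 (an acyclic, hence loopless, tournament is transitive, giving a weight-preserving bijection when $w(R_i)=\negw(R_i)=1$) spells out a verification that the paper leaves implicit.
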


\section{Domain-Liftability of a Linear Order and Another Successor}
\label{sec:6-lo+succ}
The hardness result from the previous section is negative.
We cannot perform domain-lifted inference over logical fragments with linear order axioms on two distinguished binary relations.
Consequently, we cannot perform domain-lifted inference over logical fragments with two axioms more general than the linear order, such as the acyclicity axiom, either.
However, this does not preclude the possibility of tractable inference with axioms on two relations.
In this section, we consider the case when we have one linear order axiom and one immediate successor relation of another linear order.
I.e., the domain is ordered in two distinct ways, and we have explicit access to the first linear order (including its successor relations). However, from the second ordering, we only see the immediate successor relation.
By devising a third polynomial-time algorithm, we prove that such a logical fragment is domain-liftable.
In other words, after the previous hardness result for two linear order axioms, we weaken the power of the second one in exchange for tractability.

As we start talking about a successor relation that is part of an unknown linear order, we reuse \Cref{eq:successor} to define yet another axiom for our syntax.

\begin{definition}[Successor Axiom]\label{def:succ}
The \emph{successor axiom}, denoted by $\succaxiom(R)$, requires that $R$ should be the successor relation of a linear order. That is, let $\lopred$ be some linear order relation, then $R$ satisfies
\begin{equation*}
  \forall x \forall y : \Bigl(R(x,y) \leftrightarrow \bigl( (x < y) \land \bigl( \lnot \exists z: (x < z) \land (z < y) \bigl) \bigl) \Bigl).
\end{equation*}
In other words, the graph $G(R)$ is the longest directed path.
\end{definition}

\subsection{Yet Another Recursion}
In developing the aforementioned algorithm, we will once again build on the domain recursion rule and the idea of computing WFOMC recursively.
Recall \Cref{eq:basic-wfomc}, which was our starting point for performing any domain-lifted WFOMC computation over two logical variables:
\begin{equation*}
  \begin{aligned}
    & \symbwfomc(\Psi, n, w, \negw)
    = \sum_{\tau_1, \dots, \tau_n \in [p]} \ \prod_{k=1}^n w_{\tau_k} \prod_{1 \leq i < j \leq n} r_{\tau_i \tau_j},
  \end{aligned}
\end{equation*}
where
\begin{equation*}
    \begin{aligned}
        w_k = W(C_k, \weights)\qquad \text{and} \qquad
        r_{st} = \sum_{\substack{\pi \in D, \\ C_{s}(a) \land C_{t}(b) \land \pi(a,b) \models \psi(a,b) \land \psi(b,a)}} W(\pi, \weights)
    \end{aligned}
\end{equation*}
for $C = \{C_1, C_2, \dots, C_p\}$ being the (valid) cells of $\Psi$ and $D$ being the set of all 2-tables of $\Psi$.
Define
\begin{equation}\label{eq:f-def}
  f_m( \tau_1, \dots, \tau_m) = \prod_{k=1}^m w_{\tau_k} \prod_{1 \leq i < j \leq m} r_{\tau_i \tau_j}
\end{equation}
for each $m \in [n]$ and $\tau_1, \dots, \tau_m \in [p]$.
Note that the function $f_m( \tau_1, \dots, \tau_m)$ is a slightly more general version of $T_m(\bm{k})$ introduced in \cref{sec:3-single-lo}.
While $T_m(\bm{k})$ only considers a cell configuration $\bm{k}$ (i.e., how many domain elements have been assigned to each cell) over $m$ elements, the function $f_m( \tau_1, \dots, \tau_m)$ specifies the particular cell $C_{\tau_i}$ for each element from $[m]$.

Then, following the same idea as in \Cref{eq:induction_wmc}, for $1 \le m < n$, it holds that
\begin{equation}
    \label{eq:f-recursion}
    \begin{aligned}
        & f_{m+1}( \tau_1, \dots, \tau_{m+1})
        = f_m(\tau_1, \dots, \tau_m) \cdot w_{\tau_{m+1}} \cdot \prod_{i=1}^{m} r_{\tau_i \tau_{m+1}},
    \end{aligned}
\end{equation}
and the WFOMC can be obtained as
\begin{equation}
\label{eq:f-answer}
  \symbwfomc(\Psi, n, w, \negw) = \sum_{\tau_1, \dots, \tau_n \in [p]} f_n(\tau_1, \dots, \tau_n).
\end{equation}

\subsection{The Algorithm}
We will again consider $\Psi$ to be a universally quantified \fotwo sentence which can be accomplished using the same Skolemization procedure as before \citep{broecketal14:wfomc-skolem}.
Then our task is to compute $\symbwfomc(\Psi \land \loaxiom(\lopred) \land \succaxiom(S), n, w, \negw)$ where $\lopred$ and  $S$ are distinguished binary relations from $\preds{\Psi}$.
Again, let $C = \{C_1, C_2, \dots, C_p\}$ be the set of possible cells (1-types) of $\Psi$, and $C_{\tau_i}$ ($\tau_i \in [p]$) be the cell realized by the element $i$.

Each of the two axioms implies an order of elements.
We can, without loss of generality, fix the order of $\lopred$ to be $1 \leq 2 \leq 3 \leq \dots \leq n$.
For any other possible order of $\lopred$, applying a permutation on the indices of elements maps the models to those for the fixed $\lopred$ order while preserving the weight, which is formally stated in \cref{lemma:n-factorial}.

\paragraph{Computing WFOMC for a Fixed $S$-Order}
As a warm-up, let us also fix the order of $S$ as a permutation of $[n]$.
We compute the WFOMC recursively following the idea of \Cref{eq:f-def,eq:f-recursion,eq:f-answer}.
For each $s,t \in [p]$ and $k \in \{1,2,3\}$, define
\begin{equation*}
  \begin{aligned}
  % r_{s,t,k} = \sum_{\substack{\pi \in D, \\ C_{s}(a) \land C_{t}(b) \land \pi(a,b) \models \psi(a,b) \land \psi(b,a) \land \phi_k(a,b)}} W(\pi),
  r_{s,t,k} &= \symbwmc(\psi_{st}(a,b)\land\phi_k(a,b), \weights),\\
  \wlo_s &= \symbwmc(\psi^\le(c,c) \land C_s(c) \land \neg S(c,c), \weights),
  \end{aligned}
\end{equation*}
where $a,b,c\in\dom$ are arbitrary domain elements, $\psi_{st}(a,b)$ is still defined as the simplified version of $\psi(a,b)\land\psi(b,a)$ given that $C_s(a)\land C_t(b)$ holds, and
\begin{equation*}
  \begin{aligned}
    \phi_1(a,b) &= (a < b) \land \lnot S(a,b) \land \lnot S(b,a), \\
    \phi_2(a,b) &= (a < b) \land S(a,b) \land \lnot S(b,a), \\
    \phi_3(a,b) &= (a < b) \land \lnot S(a,b) \land S(b,a).
  \end{aligned}
\end{equation*}

The term $r_{s,t,k}$ is a refinement of $r_{\tau_i\tau_j}$ in \Cref{eq:f-def} since $\lopred$ and $S$ are totally interpreted and, therefore, $r_{\tau_i\tau_j}$ can only take one of the three values from $\{r_{\tau_i,\tau_j,1}, r_{\tau_i,\tau_j,2}, r_{\tau_i,\tau_j,3}\}$.
Note that we always compute $r_{\tau_i,\tau_j}$ for those $i,j$ such that $i<j$ and we fix the order of $\lopred$ as the natural order, hence $\phi_k(a,b)$ does not involve the cases in which $b < a$ holds.
% The term $r_{s,t,k}$ is a refinement of $r_{\tau_i,\tau_j}$ in \Cref{eq:f-def} since $\lopred$ and $S$ are totally interpreted and, therefore, $r_{\tau_i,\tau_j}$ can only take one of the three values from $\{r_{\tau_i,\tau_j,1}, r_{\tau_i,\tau_j,2}, r_{\tau_i,\tau_j,3}\}$.
% Note that we always compute $r_{\tau_i, \tau_j}$ for those $i,j$ such that $i<j$ and we fix the order of $\lopred$ as the natural order, hence $\phi_k(a,b)$ does not involve the cases in which $\lnot (a\lopred b) \land (b\lopred a)$ holds.

The function $f$ in \Cref{eq:f-def} can be adapted as follows:
\begin{equation}
\label{eq:g-def}
  g_m(\tau_1, \dots, \tau_m) = \prod_{i=1}^m \wlo_{\tau_i} \prod_{1 \le i < j \le m} r_{\tau_i,\tau_j,\kappa_{i,j}},
\end{equation}
where each $\kappa_{i,j}$ takes the only value from $\{1,2,3\}$ such that $\phi_{\kappa_{i,j}}(i,j)$ is satisfied.
We then obtain the recursive computation for $g$ as
\begin{equation}\label{eq:g-recursion}
  \begin{aligned}
  & g_{m+1}(\tau_1, \dots, \tau_{m+1})
  = g_m(\tau_1, \dots, \tau_{m}) \cdot \wlo_{\tau_{m+1}} \cdot \prod_{i=1}^{m} r_{\tau_i,\tau_{m+1},\kappa_{i,m+1}}.
  \end{aligned}
\end{equation}

\paragraph{Computing WFOMC for All $S$-Orders}
The computation above has two main drawbacks.
It computes WFOMC for a fixed order of $S$, and a concrete cell sequence $(\tau_1, \dots, \tau_{m+1})$ is used to determine the weights.
Those two issues might lead to $n! \cdot p^n$ possible $g$-values to be computed.
The key to addressing the issues is figuring out how the terms $r_{\tau_i,\tau_{m+1},\kappa_{i,m+1}}$ can appear in \Cref{eq:g-recursion} (i.e., the finer selection of $\kappa_{i,m+1}$ from the three values with respect to $S$).

Imagine that we place the elements $1, 2, \dots, n$ in a line in order (defined by $\lopred$) from left to right.
The order implied by $S$ is like a string connecting the $n$ elements.
If we focus on the first $m$ elements, ignoring other elements, the string breaks into several \emph{segments}.
Formally, a segment is a maximal sequence of elements $x_1, x_2, \dots, x_k$ such that every element in the sequence is within the first $m$ elements of the order of $\lopred$, and $S(x_1, x_2) \land \dots \land S(x_{k-1}, x_k)$ holds.
We call the element $x_1$ the head of the segment, and $x_k$ the tail.
An example is shown in \Cref{fig:segments} where we consider the order of $S$ as $5 \to 2 \to 8 \to 1 \to 6 \to 4 \to 7 \to 10 \to 3 \to 9$.
Looking at only the prefix $\{1,2,\dots,6\}$, there are three segments $5 \to 2$, $1 \to 6 \to 4$ and $3$.

\begin{figure}
  \centering
      \begin{tikzpicture}[scale=0.9]
        \tikzstyle{vertex}=[circle,draw, scale=0.8, minimum size=3mm]
        \tikzstyle{redvertex}=[circle,draw,red, scale=0.8, minimum size=3mm]
        \node[redvertex] (5) at (5,2.8) {$5$};
        \node[redvertex] (2) at (2,2.8) {$2$};
        \node[vertex] (8) at (8,2.1) {$8$};
        \node[redvertex] (1) at (1,1.4) {$1$};
        \node[redvertex] (6) at (6,1.4) {$6$};
        \node[redvertex] (4) at (4,0.7) {$4$};
        \node[vertex] (7) at (7,0.7) {$7$};
        \node[vertex] (10) at (10,0.7) {$10$};
        \node[redvertex] (3) at (3,0) {$3$};
        \node[vertex] (9) at (9,0) {$9$};
        \path[->,red,thick] (5) edge (2);
        \path[->] (2) edge (8);
        \path[->] (8) edge (1);
        \path[->,red,thick] (1) edge (6);
        \path[->,red,thick] (6) edge (4);
        \path[->] (4) edge (7);
        \path[->] (7) edge (10);
        \path[->] (10) edge (3);
        \path[->] (3) edge (9);
        \draw[dashed,thick] (6.5,-0.4)--(6.5,3.2);
      \end{tikzpicture}
  \caption{An illustration of the segments (red nodes and red arrows) of the order of $S$ ($5 \to 2 \to 8 \to 1 \to 6 \to 4 \to 7 \to 10 \to 3 \to 9$) on a prefix of elements $\{1, 2, \dots, 6\}$.}
  \label{fig:segments}
\Description{The figure shows a sequence of 10 domain elements connected by another successor relation forming the sequence (5, 2, 8, 1, 6, 4, 7, 10, 3, 9).
The elements (of the first linear order) 7 to 10 are cut off, indicating that we consider only the prefix 1 to 6.
With such a cut-off, we have three segments with respect to the successor relation, specifically, (5,2), (1, 6, 4), and a singleton (3).}
\end{figure}

%This observation suggests the selection of $\kappa_{i,m}$ as follows: If $i$ is the head of a segment within the first $m-1$ elements but is replaced by $m$, then $\kappa_{i,m} = 2$; If $i$ is the tail of a segment within the first $m-1$ elements but is replaced by $m$, $\kappa_{i,m}=3$; otherwise $\kappa_{i,m}=1$.
%\lnote{give an example of the computation of $\kappa_{i,m}$?}

Denote a segment with the head realizing the cell $C_s$ and the tail realizing $C_t$ by $\segment{s}{t}$.
Let $\bm{k} = (k_1, \dots, k_p)$ be a cell configuration.
% be the vector of length $p$ where $k_i$ is the number of elements $k \in [m]$ such that $\tau_k = i$. In other words, $\bm{k}$ is the cell configuration for the first $m$ elements.
Observe that the value $\lambda = \prod_{i=1}^{m} r_{\tau_i,\tau_{m+1},\kappa_{i,m+1}}$ can be classified into one of the following four types depending on $\tau_{m+1}$ and the connection between the element $m+1$ and the former $m$ elements with respect to $S$:

\begin{itemize}
  \item The new element $m+1$ merges two segments $\segment{a}{b}$ and $\segment{c}{d}$ by linking the tail of $\segment{a}{b}$ to $m+1$ and linking $m+1$ to the head of $\segment{c}{d}$. If $b \neq c$, then $\lambda$ takes the value
  \begin{equation*}
    \begin{aligned}
    \lambda_{\text{merge1}}(b,c,\tau_{m+1},\bm{k}) =\ & r_{b,\tau_{m+1},2} \cdot r_{c,\tau_{m+1},3} \cdot \left(r_{b,\tau_{m+1},1}\right)^{k_b-1}
     \cdot \left(r_{c,\tau_{m+1},1}\right)^{k_c-1} \cdot \prod_{\substack{s \in [p],\\ s \neq b, \\ s \neq c}} \left(r_{s,\tau_{m+1},1}\right)^{k_s}.
    \end{aligned}
  \end{equation*}
  If $b=c$ (i.e., the same cell is realized by the tail of the first segment and by the head of the second segment), then $\lambda$ takes the value
  \begin{equation*}
    \begin{aligned}
    \lambda_{\text{merge2}}(b,\tau_{m+1},\bm{k}) =\ & r_{b,\tau_{m+1},2} \cdot r_{b,\tau_{m+1},3}
     \cdot \left(r_{b,\tau_{m+1},1}\right)^{k_b-2}
    \cdot \prod_{\substack{s \in [p],\\ s \neq b}} \left(r_{s,\tau_{m+1},1}\right)^{k_s}.
    \end{aligned}
  \end{equation*}
  \item $m+1$ extends a segment $\segment{a}{b}$ serving as the new head. In that case, $\lambda$ takes the value
  \begin{equation*}
    \lambda_{\text{head}}(a,\tau_{m+1},\bm{k}) =\ r_{a,\tau_{m+1},3} \cdot \left(r_{a,\tau_{m+1},1}\right)^{k_a-1} \cdot \prod_{\substack{s \in [p],\\ s \neq a}} \left(r_{s,\tau_{m+1},1}\right)^{k_s}.
  \end{equation*}
  \item $m+1$ extends a segment $\segment{a}{b}$ serving as the new tail. Then, $\lambda$ takes the value
  \begin{equation*}
    \lambda_{\text{tail}}(b,\tau_{m+1},\bm{k}) =\ r_{b,\tau_{m+1},2} \cdot \left(r_{b,\tau_{m+1},1}\right)^{k_b-1} \cdot \prod_{\substack{s \in [p],\\ s \neq b}} \left(r_{s,\tau_{m+1},1}\right)^{k_s}.
  \end{equation*}
  \item $m+1$ creates a new segment containing itself only. In that scenario, $\lambda$ takes the value
  \begin{equation*}
    \lambda_{\text{only}}(\tau_{m+1},\bm{k}) = \prod_{s \in [p]} \left(r_{s,\tau_{m+1},1}\right)^{k_s}.
  \end{equation*}
\end{itemize}

The computation of $\lambda$ suggests that we no longer need to care about the concrete order of $S$ and the cell sequence $(\tau_1, \dots, \tau_{m+1})$, but rather $\bm{k}$, $\tau_{m+1}$, the behavior of $m+1$ with respect to $S$, and the number of cell pairs realized by heads and tails of segments.
%Define $\bm{k}$ similarly as $\bm{k}$ but within elements $1, \dots, m+1$, and
Let $\bm{\rho} = (\rho_{1,1}, \dots, \rho_{p,p})$ be the vector of length $p^2$ where $\rho_{s,t}$ is the number of segments $\segment{s}{t}$ within elements $1, \dots, m$.
Now we can adapt \Cref{eq:g-def} by grouping orders of $S$ and the possible cell sequences $(\tau_1, \dots, \tau_m)$ that have the same $\bm{k}$ and $\bm{\rho}$.
Define $h_m(\bm{k}, \bm{\rho})$ as
\begin{equation*}
  h_m( \bm{k}, \bm{\rho}) = \sum_{\substack{\tau_1, \dots, \tau_m \text{ satisfying } \bm{k},\\ \text{segments of } 1, \dots, m \\ \text{satisfying } \bm{\rho}}} \  \prod_{i=1}^m \wlo_{\tau_i} \prod_{1 \le i < j \le m} r_{\tau_i,\tau_j,\kappa_{i,j}}.
\end{equation*}
Note that by the analysis above, the value of $\prod_{1 \leq i < j \leq m} r_{\tau_i,\tau_j,\kappa_{i,j}}$ is also unique given $(\tau_1, \dots, \tau_m)$ and the segments of $1, \dots, m$.

%The value of $h$ for $m$ elements can be computed from $h$ for $m-1$ elements by enumerating the cell $C_{\tau}$ of $m$ and the behavior of $m$ with respect to $S$.
%Denote by $\bm{k}-\bm{\vecdelta}_l$ the vector of size $p$ such that
%\begin{equation*}
%  k^{-\tau}_{s} = \begin{cases}
%    k_{s}-1, & s=\tau, \\
%    k_{s}, & \mbox{otherwise.}
%  \end{cases}
%\end{equation*}
%
%Similarly, $\bm{\rho}^{-(a,b)}$ denotes the vector of size $p^2$ such that
%\begin{equation*}
%  \rho^{-(a,b)}_{s,t} = \begin{cases}
%    \rho_{s,t} - 1, & s=a,\ t=b, \\
%    \rho_{s,t}, & \mbox{otherwise,}
%  \end{cases}
%\end{equation*}
%and $\bm{\rho}^{+(a,b)}$ is defined similarly.
%The operation can be nested as well, e.g., $\bm{\rho}^{-(a,b)+(c,d)}$ refers to $\left(\bm{\rho}^{-(a,b)}\right)^{+(c,d)}$.

We compute $h_{m+1}(\bm{k},\bm{\rho})$ given the cell $C_{l}$ realized by the element $m+1$ and given the behavior of $m+1$, i.e., $\beta\in \{\text{merge1},\text{merge2},\text{head},\text{tail},\text{only}\}$, denoted by $h_{m+1}(\bm{k},\bm{\rho} | l,\beta)$:
\begin{itemize}
  \item If $\beta=\text{merge1}$, then
  $m+1$ merges two segments $\segment{a}{b}$ and $\segment{c}{d}$ by linking the tail of $\segment{a}{b}$ to $m+1$ and linking $m+1$ to the head of $\segment{c}{d}$ for some $a,b,c,d \in [p]$ and $b \neq c$.
  Elements $1, \dots, m$ have cells consistent with $\bm{k}-\bm{\vecdelta}_l$ and segments consistent with $\bm{\rho}^{\text{merge1}} = \bm{\rho}-\bm{\vecdelta}_{a,d}+\bm{\vecdelta}_{a,b}+\bm{\vecdelta}_{c,d}$.
  Then,
      \begin{equation*}
        \begin{aligned}
           h_{m+1}(\bm{k},\bm{\rho} | l,\text{merge1})
           = \sum_{\substack{a,b,c,d \in [p],\ b \neq c, \\ \bm{\rho}^{\text{merge1}} \ge 0}} \Bigl(h_{m}( \bm{k}-\bm{\vecdelta}_l, \bm{\rho}^{\text{merge1}}) \cdot \eta \cdot \wlo_l \cdot \lambda_{\text{merge1}}(b,c,l,\bm{k}-\bm{\vecdelta}_l)\Bigl),
        \end{aligned}
      \end{equation*}
      where
      \begin{equation*}
        \eta = \begin{cases}
          \rho^{\text{merge1}}_{a,b} \cdot \rho^{\text{merge1}}_{c,d}, & a \neq c \text{ or } b \neq d, \\
          \rho^{\text{merge1}}_{a,b}\left(\rho^{\text{merge1}}_{a,b}-1\right), & \mbox{otherwise.}
        \end{cases}
      \end{equation*}
      
  \item $\beta=\text{merge2}$:
  $m+1$ merges two segments $\segment{a}{b}$ and $\segment{b}{d}$ by linking the tail of $\segment{a}{b}$ to $m+1$ and linking $m+1$ to the head of $\segment{b}{d}$ for some $a,b,d \in [p]$.
  Elements $1, \dots, m$ have cells consistent with $\bm{k}-\bm{\vecdelta}_l$ and segments consistent with $\bm{\rho}^{\text{merge2}} = \bm{\rho}-\bm{\vecdelta}_{a,d}+\bm{\vecdelta}_{a,b}+\bm{\vecdelta}_{b,d}$.
  Then,
      \begin{equation*}
        \begin{aligned}
           h_{m+1}(\bm{k},\bm{\rho} | l,\text{merge2})
           = \sum_{a,b,d \in [p],\ \bm{\rho}^{\text{merge2}} \ge 0} \Bigl(h_{m}( \bm{k}-\bm{\vecdelta}_l, \bm{\rho}^{\text{merge2}}) \cdot \eta \cdot \wlo_l \cdot \lambda_{\text{merge2}}(b,l,\bm{k}-\bm{\vecdelta}_l)\Bigl),
        \end{aligned}
      \end{equation*}
      where
      \begin{equation*}
        \eta = \begin{cases}
          \rho^{\text{merge2}}_{a,b} \cdot \rho^{\text{merge2}}_{b,d}, & a \neq b \text{ or } b \neq d, \\
          \rho^{\text{merge2}}_{a,b}\left(\rho^{\text{merge2}}_{a,b}-1\right), & \mbox{otherwise.}
        \end{cases}
      \end{equation*}
      
  \item $\beta=\text{head}$:
  $m+1$ extends a segment $\segment{a}{b}$ serving as a new head for some $a,b \in [p]$.
  Elements $1, \dots, m$ have cells consistent with $\bm{k}-\bm{\vecdelta}_l$ and segments consistent with $\bm{\rho}^{\text{head}} = \bm{\rho}-\bm{\vecdelta}_{l,b}+\bm{\vecdelta}_{a,b}$. Then,
      \begin{equation*}
        \begin{aligned}
          h_{m+1}(\bm{k},\bm{\rho} | l,\text{head})
          =&  \sum_{a,b \in [p],\ \bm{\rho}^{\text{head}} \ge 0} \Bigl(h_{m}( \bm{k}-\bm{\vecdelta}_l, \bm{\rho}^{\text{head}}) \cdot \rho^{\text{head}}_{a,b} \cdot \wlo_l \cdot \lambda_{\text{head}}(a,l,\bm{k}-\bm{\vecdelta}_l\Bigl).
        \end{aligned}
      \end{equation*}
      
  \item $\beta=\text{tail}$:
  $m+1$ extends a segment $\segment{a}{b}$ serving as a new tail for some $a,b \in [p]$.
  Elements $1, \dots, m$ have cells consistent with $\bm{k}-\bm{\vecdelta}_l$ and segments consistent with $\bm{\rho}^{\text{tail}} = \bm{\rho}-\bm{\vecdelta}_{a,l}+\bm{\vecdelta}_{a,b}$.
  Then,
      \begin{equation*}
        \begin{aligned}
          h_{m+1}(\bm{k},\bm{\rho} | l,\text{tail})
          =&  \sum_{a,b \in [p],\ \bm{\rho}^{\text{tail}} \ge 0} \Bigl(h_{m}( \bm{k}-\bm{\vecdelta}_l, \bm{\rho}^{\text{tail}}) \cdot \rho^{\text{tail}}_{a,b} \cdot \wlo_l \cdot \lambda_{\text{tail}}(b,l,\bm{k}-\bm{\vecdelta}_l)\Bigl).
        \end{aligned}
      \end{equation*}
      
  \item $\beta=\text{only}$:
  $m+1$ creates a new segment containing itself only.
  Elements $1, \dots, m$ have cells consistent with $\bm{k}-\bm{\vecdelta}_l$ and segments consistent with $\bm{\rho}^{\text{only}} = \bm{\rho}-\bm{\vecdelta}_{l,l}$.
  Then,
      \begin{equation*}
        \begin{aligned}
          &h_{m+1}(\bm{k},\bm{\rho} | l,\text{only})
          =  h_{m}( \bm{k}-\bm{\vecdelta}_l, \bm{\rho}^{\text{only}}) \cdot \wlo_l \cdot \lambda_{\text{only}}(l,\bm{k}-\bm{\vecdelta}_l).
        \end{aligned}
      \end{equation*}
\end{itemize}

Summing the above terms up, we have
\begin{equation*}
  \begin{aligned}
    h_{m+1}( \bm{k}, \bm{\rho})
    = \sum_{\substack{l \in [p],\\ k_{l}>0}} \ \sum_{\substack{\beta\in \{\text{merge1},\text{merge2},\\ \text{head},\text{tail},\text{only}\}}} h_m(\bm{k},\bm{\rho} | l,\beta).
  \end{aligned}
\end{equation*}

Finally, the WFOMC for the fixed order of $\lopred$ can be obtained when each of the $n$ elements realizes some cell and there is only $1$ segment, i.e.,
\begin{equation*}
  \gamma = \sum_{\bm{k}\in\mathbb{N}^p: |\bm{k}|=n} \sum_{\bm{\rho}\in\mathbb{N}^{p^2}: |\bm{\rho}|=1} h_n(\bm{k}, \bm{\rho}).
\end{equation*}

As we have already mentioned above, every other order of $\lopred$ has the same weight as the one with the natural order of numbers.
Therefore,
\begin{equation*}
  \symbwfomc(\Psi \land \loaxiom(L) \land \succaxiom(S), n, w, \negw) = n! \cdot \gamma.
\end{equation*}
The procedure is summarized in \Cref{alg:lo+succ}.

\begin{algorithm}[t]
\caption{WFOMC for \fotwo+$\loaxiom$+$\succaxiom$}
\label{alg:lo+succ}
\KwIn{An \fotwo sentence $\Psi = \forall x \forall y: \psi(x, y)$ with $\{\lopred, S\} \subseteq \preds{\Psi}$, domain size $n$, weighting functions $(\weights)$}
\KwOut{$\symbwfomc(\Psi \land \loaxiom(\lopred) \land \succaxiom(S), n, \weights)$}

$h_1(\bm{\vecdelta}_l,\bm{\vecdelta}_{l,l}) \gets \wlo_l \text{ for each } l \in [p]$ \\
\For{$m \gets 2$ \KwTo $n$} {
    $h_m( \bm{k}, \bm{\rho}) \gets 0$ for every $\bm{k} \in \mathbb{N}^p$ and $\bm{\rho} \in \mathbb{N}^{p^2}$\\
    \ForEach{\upshape $\bm{k}, \bm{\rho}$ \textbf{ such that } $|\bm{k}|=m$} {
        \ForEach{\upshape $l \in [p]$ \textbf{ such that } $k_{l}>0$} {
            \ForEach{\upshape $\beta\in \{\text{merge1},\text{merge2},\text{head},\text{tail},\text{only}\}$} {
                compute $h_m(\bm{k},\bm{\rho} | l,\beta)$ \\
                $h_{m}( \bm{k}, \bm{\rho}) \gets h_m( \bm{k}, \bm{\rho}) + h_m(\bm{k},\bm{\rho} | l,\beta)$
            }
        }
    }
}
$\gamma \gets \sum_{\bm{k}\in\mathbb{N}^p: |\bm{k}|=n} \sum_{\bm{\rho}\in\mathbb{N}^{p^2}: |\bm{\rho}|=1} h_n(\bm{k},\bm{\rho})$ \\
\Return $n! \cdot \gamma$
\end{algorithm}

\begin{theorem}\label{thm:lo+succ}
An \fotwo sentence with a linear order relation and a successor relation of another linear order is domain-liftable.
\end{theorem}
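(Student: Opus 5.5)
To prove \cref{thm:lo+succ}, we show that \cref{alg:lo+succ} computes $\symbwfomc(\Psi \land \loaxiom(\lopred) \land \succaxiom(S), n, \weights)$ correctly and in time polynomial in $n$. Using the Skolemization of \citet{broecketal14:wfomc-skolem} and \cref{lemma:c2+cc}, any \fotwo (or \ctwo) sentence reduces to the universally quantified form $\forall x\forall y: \psi(x,y)$ assumed by the algorithm. By \cref{lemma:n-factorial}, which still applies because $\succaxiom(S)$ is invariant under domain permutations, it then suffices to show that $\gamma$ equals the weighted sum over models in which $\lopred$ is the natural order $1 \le 2 \le \dots \le n$.

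\textbf{Correctness.} The proof is by induction on $m$, with invariant: $h_m(\bm{k},\bm{\rho})$ equals $\sum g_m(\tau_1,\dots,\tau_m)$, where the sum runs over pairs consisting of a cell sequence with configuration $\bm{k}$ and a \emph{partial $S$-structure} on $[m]$ whose segment-type counts are $\bm{\rho}$. A partial $S$-structure is a set of vertex-disjoint directed paths covering $[m]$, i.e., the restriction of some Hamiltonian path on $[n]$ to the prefix $[m]$. We must check two things.

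First, the segment structure on $[m+1]$ determines uniquely how element $m+1$ attaches to the segment structure on $[m]$, via exactly one behaviour $\beta$ together with a choice of which concrete segments it touches. Conversely, each such choice produces a distinct structure on $[m+1]$. This gives a bijection.

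Second, for each case, the product $\prod_{i\le m} r_{\tau_i,\tau_{m+1},\kappa_{i,m+1}}$ equals the corresponding $\lambda_\beta$. This holds because $m+1$ is $S$-adjacent only to the tail it extends (type $2$) and the head it precedes (type $3$); every other earlier element contributes type $1$. The multiplicities $\eta$, $\rho^{\text{head}}_{a,b}$ and $\rho^{\text{tail}}_{a,b}$ count the concrete segments of the required types. When both merged segments have the same type, the count is $\rho(\rho-1)$, since two distinct ordered segments must be chosen. The merge cannot join a segment to itself, because that would create a cycle.

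At $m=n$, requiring $|\bm{\rho}|=1$ selects exactly the Hamiltonian paths, which are precisely the successor relations of linear orders. Moreover, $g_n$ is exactly the weight of the ground formula restricted by the evidence that $\lopred$ is natural and $S$ is fixed, as in the derivation of \Cref{eq:g-recursion}.

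\textbf{Main obstacle.} The delicate step is showing the bijection is exact, with no double counting. In particular, the segments are unlabeled beyond their $(\text{head},\text{tail})$ cell types, yet the concrete partial structure must be recoverable. This works because the identity of element $m+1$ and its two $S$-neighbours among earlier elements determine the previous structure. I would argue this by showing the map "delete $m+1$" sends partial structures on $[m+1]$ to partial structures on $[m]$. Its fibre over a fixed structure with counts $\bm{\rho}'$ then has exactly $\eta$ (respectively $\rho'_{a,b}$, or $1$) members of each type. Finally, the update formulas are well defined, with all exponents nonnegative, because the types are only summed when $\bm{\rho}'\ge 0$. Whenever a type is present, $k_b\ge 1$, and $k_b\ge 2$ in the merge2 case since two distinct segments share cell $b$.

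\textbf{Complexity.} There are $O(n^{p})$ vectors $\bm{k}$ and $O(n^{p^2})$ vectors $\bm{\rho}$, since every entry is at most $n$. Each update sums over at most $p^5$ terms, and each $\lambda$ costs $O(p)$ multiplications of powers. The total running time is therefore $O(n\cdot n^{p+p^2}\cdot \mathrm{poly}(p))$, which is polynomial in $n$ for a fixed sentence. Together with the reductions above, this establishes domain-liftability.
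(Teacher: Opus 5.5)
Your proposal is correct and takes the same approach as the paper. Like the paper, you prove correctness of \cref{alg:lo+succ} by induction on $m$ via the segment decomposition, with $\eta$, $\rho^{\text{head}}_{a,b}$ and $\rho^{\text{tail}}_{a,b}$ counting concrete segments and the $n!$ factor coming from \cref{lemma:n-factorial}, and you give the same $O(n^{p^2+p+1})\cdot\mathrm{poly}(p)$ count of $h$-values; you also spell out the ``delete $m+1$'' bijection that the paper only sketches.

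One small slip: the invariant must range over \emph{all} directed path covers of $[m]$, not over restrictions of Hamiltonian paths on $[n]$. Those restrictions have at most $n-m+1$ segments, a bound the recursion does not preserve, but the slip is harmless because only $|\bm{\rho}|=1$ is kept at $m=n$.
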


\begin{proof}
    Existence of \cref{alg:lo+succ} proves the statement.
    
    The correctness would again be proved by induction, with the critical step outlined in the derivation above.
    As for the time complexity, the number of $h$-values we need to compute is 
    $$n \cdot \mathcal{O}(n^p) \cdot \mathcal{O}(n^{p^2}) \in \mathcal{O}(n^{p^2+p+1})$$ and each computation takes time polynomial in $p$.
    Therefore, the fragment is domain-liftable.
\end{proof}

The result can, once again, be extended to the fragment of \ctwo with cardinality constraints using  \cref{lemma:c2+cc}.
We can also support unary evidence, as we did for \cref{alg:iwfomc,alg:iwfomc2}.
Last but not least, using our \ctwo encodings of the $k$-th successor in \cref{app:successor-by-lo} with respect to a linear order relation, we may also extend the domain-liftability result from having a linear order axiom to having its extended version.

\begin{corollary}
\ctwo with an extended linear order axiom and a successor relation of another linear order (and with cardinality constraints) is domain-liftable.
\end{corollary}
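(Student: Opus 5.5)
The corollary asks for \ctwo with an extended linear order axiom $\loaxiom(\lopred, Succ_1,\dots,Succ_k)$, a successor axiom $\succaxiom(S)$, and cardinality constraints. I would reduce it to \cref{thm:lo+succ} in two steps: first remove the successor relations $Succ_1,\dots,Succ_k$ by expressing them in \ctwo over the plain linear order $\lopred$, and then remove counting quantifiers and cardinality constraints with \cref{lemma:c2+cc}.

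\textbf{Step 1: eliminating the successors of $\lopred$.} Given $\Psi \land \loaxiom(\lopred, Succ_1,\dots,Succ_k) \land \succaxiom(S)$, I replace the extended axiom by $\loaxiom(\lopred)$ and conjoin the \ctwo encoding of the successor relations from \cref{app:successor-by-lo}. For $Succ_1$ this is the sentence $\xi$ built from \Cref{eq:succ1:first-last,eq:succ1:bijection-like,eq:succ1:left2right}. For larger indices I would use the appendix's generalization, e.g. chaining through auxiliary predicates such as $First$ and $Last$ together with their $k$-th variants, and exactly-one counting quantifiers.

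The property I need from this encoding is \emph{uniqueness}: for every fixed interpretation of $\lopred$ as a linear order, the encoding has exactly one satisfying interpretation of the fresh predicates, namely the true successor relations and the true boundary markers. This property is what makes the reduction weight-preserving, so I would state it as a lemma.
\begin{itemize}
  \item The weight of every fresh auxiliary predicate is set to $w=\negw=1$.
  \item Under these weights, $\symbwfomc$ of the original sentence equals $\symbwfomc$ of the rewritten one for every domain size and every weighting of the original predicates.
\end{itemize}
Checking uniqueness is the step I expect to be the main obstacle. The counting quantifiers must pin each $Succ_s$ down completely, so that no spurious extra pairs are allowed. The boundary elements also need care: the first and last $s$ elements under $\lopred$ have no $s$-th successor or predecessor, and the encoding must handle them without admitting alternative models. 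I would prove uniqueness by induction on $s$. The base case is $Succ_1$, as in the appendix. For the inductive step, $Succ_s(x,y)$ is forced via an exactly-one quantifier composed with $Succ_{s-1}$ and $Succ_1$ using auxiliary unary markers for the boundary. Throughout, only two variables are reused.

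\textbf{Step 2: removing counting and cardinality constraints.} After Step 1 we have a \ctwo sentence with cardinality constraints, together with the axioms $\loaxiom(\lopred) \land \succaxiom(S)$. \Cref{lemma:c2+cc} holds regardless of the axioms present. It therefore reduces this problem, via polynomially many oracle calls, to $\symbwfomc$ of an \fotwo sentence $\Psi' \land \loaxiom(\lopred)\land\succaxiom(S)$.

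\textbf{Step 3: conclusion.} By \cref{thm:lo+succ}, each of these oracle calls is answered in time polynomial in $n$ by \cref{alg:lo+succ}. Step 1 only enlarges the sentence by an amount independent of $n$, and the number of cells $p$ stays constant, so the whole computation remains polynomial in $n$. This establishes domain-liftability.
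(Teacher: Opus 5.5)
Your top-level reduction matches the paper's own argument. You replace the extended axiom by $\loaxiom(\lopred)$ plus the \ctwo{} definitions of $Succ_1,\dots,Succ_k$ from \cref{app:successor-by-lo}. You then remove counting quantifiers and cardinality constraints with \cref{lemma:c2+cc}, which holds whatever axioms are present. Finally, \cref{thm:lo+succ} answers the resulting \fotwo{} calls with $\loaxiom(\lopred)\land\succaxiom(S)$. Your insistence on a uniqueness lemma, with $w=\negw=1$ on the auxiliary predicates, is exactly what makes this weight-preserving.

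The gap is in how you propose to build and verify the encoding for $s\ge 2$. Forcing $Succ_s$ by composing $Succ_{s-1}$ with $Succ_1$ is \Cref{eq:kth-successor}. That needs $x$, the intermediate element and $y$ bound at the same time, i.e.\ three variables; no reuse of variables and no exactly-one quantifier gets around this in \ctwo{}. Boundary markers, $\exists^{=1}$ constraints and $x<y$ alone do not pin $Succ_s$ down either. Take $s=2$ and $n=4$. Both $\{(1,3),(2,4)\}$ and $\{(1,4),(2,3)\}$ give every element outside the last two exactly one left-to-right out-edge, and every element outside the first two exactly one in-edge. So the uniqueness lemma you want would be false for such an encoding.

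The missing ingredient is the cyclic colouring of \Cref{eq:app:succk-colors,eq:app:succk-left2right}. There are $k$ unary colours, propagated along $Succ_1$ from the first element, so they are uniquely determined once $Succ_1$ is. In addition, $Succ_k$ may only link elements of the same colour. The colour classes are then the residue classes of positions modulo $k$, and the first and last $k$ elements are precisely the endpoints of these classes. Uniqueness of $Succ_k$ therefore reduces to $k$ independent copies of the $Succ_1$ argument, one per class. No induction on $s$ is needed: each $Succ_s$ is encoded directly with its own $s$ colours.

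When you write out the uniqueness lemma, also define the position markers by biconditionals, e.g.\ $\forall y\colon Pos_{i+1}(y)\leftrightarrow\exists x\colon(Pos_i(x)\land Succ_1(x,y))$. With the one-way implications of \Cref{eq:app:succk-pos}, $Pos_i$ atoms on other elements are left unconstrained. That inflates the count, and it can even exempt a middle element from both $\exists^{=1}$ constraints, which admits a wrong $Succ_k$.
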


% \subsection{Multiple Successor Relations}

% \kqp{To be decided if we add this section}

% This algorithm can be adapted to \fotwo with any constant number of successor relations by using a (much) more comprehensive configuration $\sigma$. 

\section{Experiments}
\label{sec:7-experiments}
We have implemented \Cref{alg:iwfomc,alg:iwfomc2,alg:lo+succ} in Python%
\footnote{The implementation as well as the code to recreate all the experiments is available at \url{https://github.com/jan-toth/wfomc-over-ordered-domains}.}
and we evaluate them on various tasks, including performing exact lifted inference in Markov Logic Networks \citep{richardson06:mlns} with structure similar to the random graphs of \citet{watts98:ws-model} or several problems from combinatorics involving counting permutations taken from the MATH dataset~\citep{hendrycks21:math-dataset}.
Solving similar combinatorics problems using lifted techniques has been motivated by the work of \citet{totis23:cola+coso}.

Although our implementation is not optimized for performance, it still provides valuable insight into how the algorithms scale with problem size, especially by demonstrating their polynomial runtime.
We also compare our performance to existing alternative software that could be used to solve the tasks at hand without our algorithms.
We mainly compare against propositional (weighted) model counters since existing lifted inference software usually does not support linearly ordered domains.
One exception is \emph{RecursiveWFOMC}, i.e., a scalability-oriented improvement of \cref{alg:iwfomc}, which was introduced by \citet{meng24:recursive-wfomc}.
Note, however, that same as \cref{alg:iwfomc}, RecursiveWFOMC supports domain-liftable computations with a linear order axiom, yet without implicit access to its successor relations.
Hence, we are often left with comparisons against \emph{GANAK} \citep{shubham19:ganak,soos25:ganak2}, which is arguably a state-of-the-art WMC solver, as reported by its authors, and the solver \emph{d4} \citep{lagniez17:d4}, which sometimes outperforms GANAK.%
\footnote{The WMC solvers are available at \url{https://github.com/meelgroup/ganak} and  \url{https://github.com/crillab/d4v2}. The implementation of RecursiveWFOMC is available within our code.}

When we utilize lifted algorithms that do not implicitly support successor relations for problems that require successor access, we encode the successors explicitly using \Cref{eq:app:old_succ1}, which offers a runtime improvement over the encoding proposed in \cref{sec:4-successor-relations} as we further discuss in \cref{app:successor-by-lo}.
For the propositional solvers, we only include the runtime of the algorithms themselves; we exclude the runtime of producing a ground CNF, which is expected by the solvers as input (generating the CNF files has never been a bottleneck in our experiments).
Note that when we ground the problems, we append the grounding of the properties enumerated in \cref{def:loaxiom} of the linear order, or even the properties from \Cref{eq:successor,eq:cyclic_pred} whenever necessary.
That might seem like we are artificially adding complexity, since we might, similarly to how \cref{alg:iwfomc,alg:iwfomc2,alg:lo+succ} work, fix one linear order and multiply by $n!$ afterwards.
While that strategy would improve the running times of the propositional solvers, even outperforming \cref{alg:iwfomc2} at times, it is not applicable in general.
Recall \cref{ex:basic-succ}, where we organized $k_e$ English books and $k_m$ math books on a shelf so that each book category would be in one cluster.
If we fix the linear ordering of the books here, we will have two distinct solutions (either English books go first or the math books do), and we must multiply by $k_e! \cdot k_m!$ to obtain the correct overall model count.
Hence, fixing the order for propositional solvers does not necessarily yield a correct solution; therefore, we do not consider this strategy in our experiments.

In all of our experiments, even when we represent the model considered as a Markov Logic Network (MLN), we eventually solve a WFOMC instance of a specific first-order sentence (essentially performing inference over the models it represents).
To convert an MLN to a WFOMC problem, we employ reduction from \citet{broecketal14:wfomc-skolem}.
We refer readers to \cref{app:mlns} for details on both MLNs and their reduction to WFOMC.

We conducted all the experiments on a machine with an Intel(R) Core(TM) i7-14700K CPU, 64GB of RAM, and the Ubuntu 24.04.3 LTS operating system.
We did not limit RAM usage, but we set a 3-hour timeout.
If any entries in the figures below are missing, it is because their computation exceeded one of the available resources.
Additionally, we use the labels \texttt{Incremental\_v1}, \texttt{Incremental\_v2} and \texttt{Incremental\_v3} to refer to \cref{alg:iwfomc,alg:iwfomc2,alg:lo+succ}, respectively.
All of our experiments show, as evidenced by the figures below, that \cref{alg:iwfomc2} offers much better performance than any other existing solvers, sometimes offering a speed-up of several orders of magnitude while also scaling to non-trivial domain sizes with dozens or even hundreds of domain elements.
A similar property is shown for \cref{alg:lo+succ}, although its scaling is, as one might expect, due to the quadratic function in the polynomial degree (cf. proof of \cref{thm:lo+succ}), somewhat worse.

\subsection{Illustrative Examples}
Before delving into comparisons on complex problems or larger datasets, let us recall \cref{ex:sequence-split,ex:ws-like} which we used to illustrate the modeling capabilities of the linear order as well as its cyclic successor relation:
\begin{equation*}
    \begin{aligned}
         \Phi_1 = \forall x \forall y: (T(x) \wedge (x \lopred y)) \to T(y)
    \end{aligned}
    \qquad
    \begin{aligned}
    \Phi_2 = &(\forall x : \neg Edge(x, x))\\
    \land\;&(\forall x \forall y : Edge(x,y) \to Edge(y,x))\\   
    \land\;&(\forall x \forall y : CySucc(x,y) \to Edge(x,y))\\
    \land\;&(|Edge|=2\cdot(n+m))
    \end{aligned}
\end{equation*}
% \begin{minipage}{.45\textwidth}
%   \begin{align*}
%     \Phi_1 = \forall x \forall y: (T(x) \wedge (x \lopred y)) \to T(y)
%   \end{align*}
% \end{minipage}%
% \hfill
% \begin{minipage}{.45\textwidth}
%   \begin{align*}
%     \Phi_2 = &(\forall x : \neg Edge(x, x))\\
%     \land\;&(\forall x \forall y : Edge(x,y) \to Edge(y,x))\\   
%     \land\;&(\forall x \forall y : CySucc(x,y) \to Edge(x,y))\\
%     \land\;&(|Edge|=2\cdot(n+m))
%   \end{align*}
% \end{minipage}

While $\Phi_1$ splits a sequence of elements into a head and a tail, $\Phi_2$ defines a graph with a single cyclic chain going through all the vertices and $m$ additional edges with no other restrictions.
\cref{fig:illustrative-examples} shows running times for computing $\symbfomc(\Phi_1,n)$ and $\symbfomc(\Phi_2,n)$ over various domain sizes.
On smaller domains, we compare against other solvers; on larger domains, we showcase the polynomial running time trend as the domain increases.
Since $\Phi_1$ is a very simple problem not requiring any of the improvements developed in \cref{sec:4-successor-relations}, we only utilize \cref{alg:iwfomc} from the lifted techniques, which already allows us to scale to hundreds of domain elements. At the same time, the propositional counters do not even manage a domain of size 15.
As for $\Phi_2$, \cref{alg:iwfomc2} outperforms all alternatives by a large margin, scaling up to hundreds of domain elements, while the best of others struggle on a domain with $n=17$.

\begin{figure}[t]
    \centering
    
    \begin{subfigure}[b]{0.23\textwidth}
        \centering
        \includegraphics[width=\textwidth]{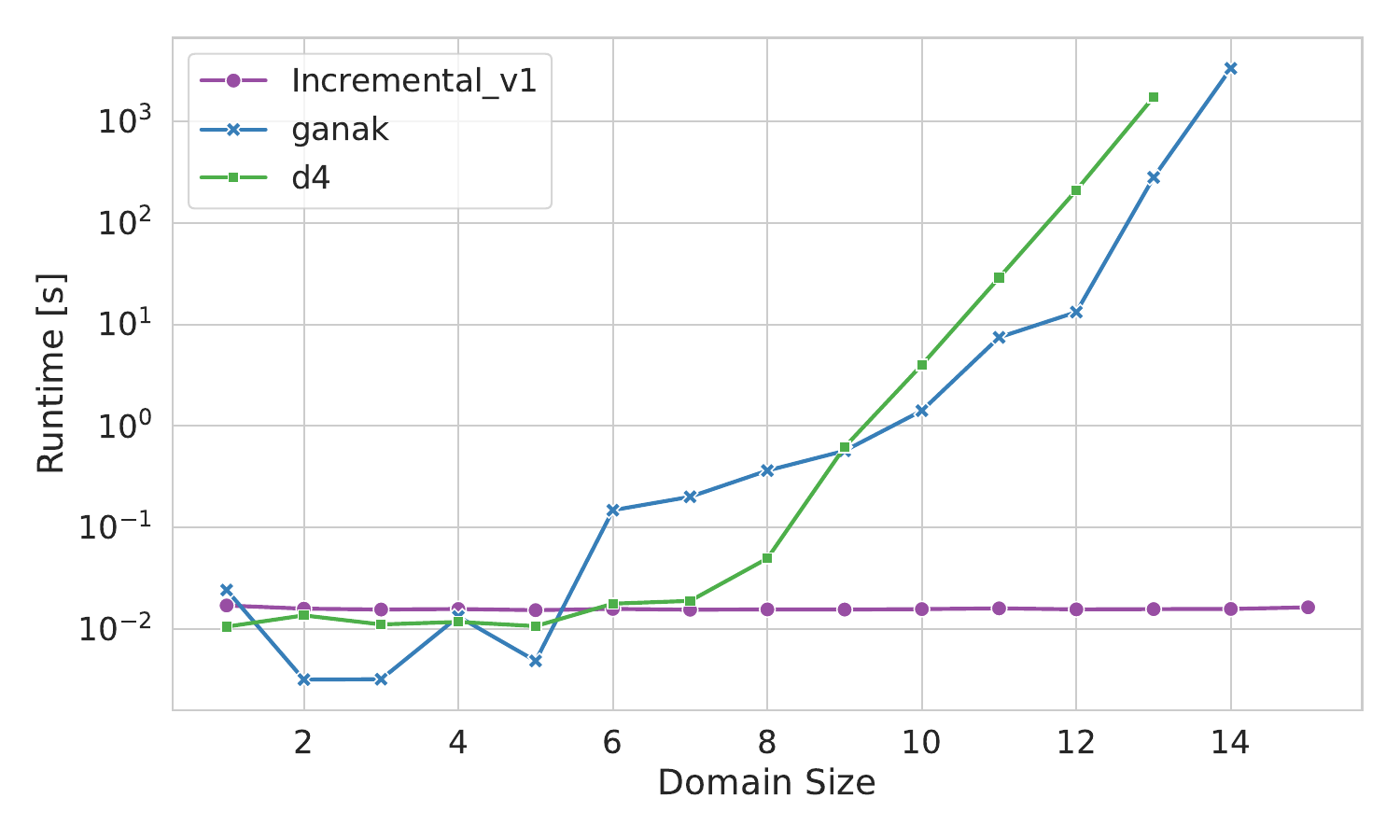}
        \caption{$\Phi_1$ for $n\le15$}
        % \label{fig:exp_a}
    \end{subfigure}
    \hfill
    \begin{subfigure}[b]{0.23\textwidth}
        \centering
        \includegraphics[width=\textwidth]{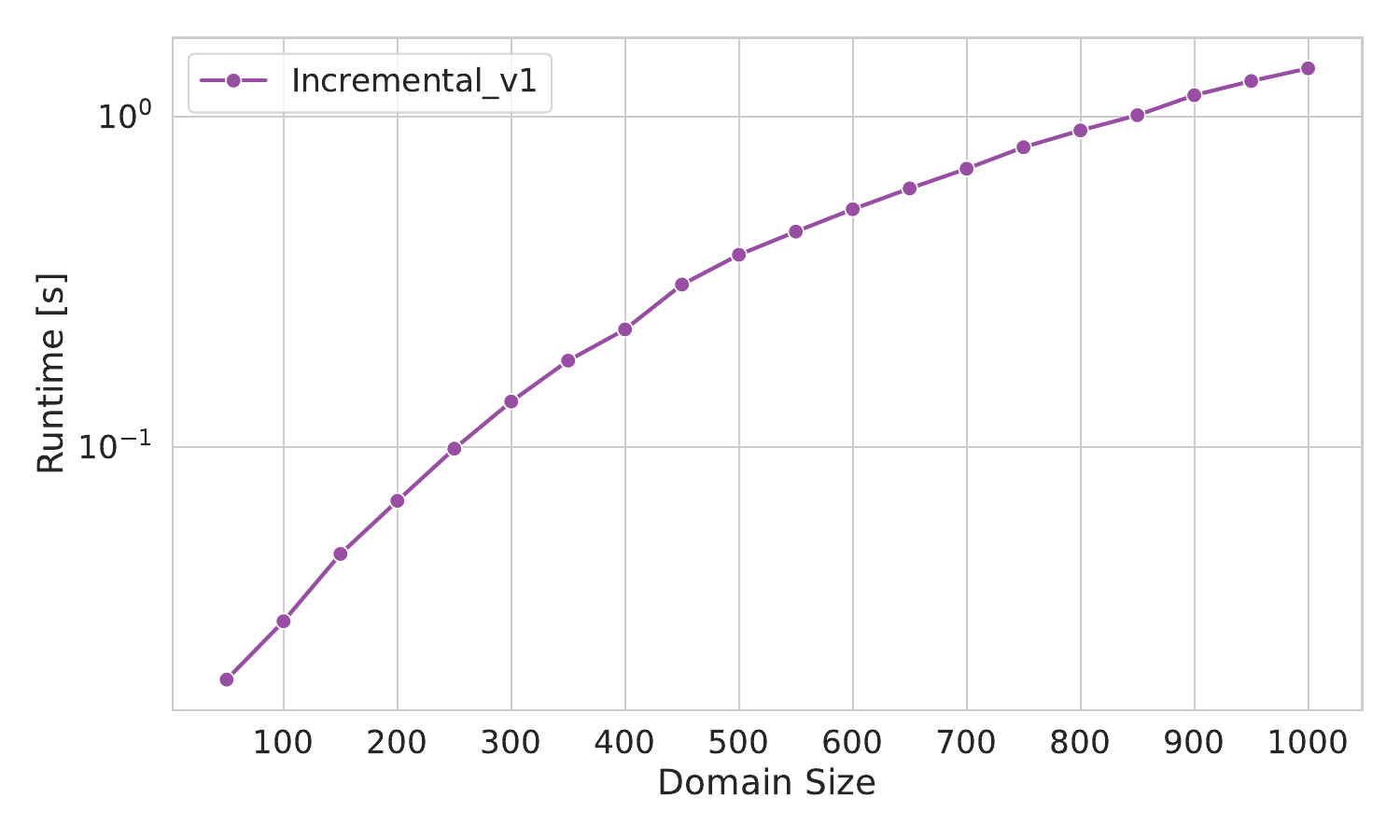}
        \caption{$\Phi_1$ for $n\le 1000$}
        % \label{fig:exp_b}
    \end{subfigure}
    \hfill
    \begin{subfigure}[b]{0.23\textwidth}
        \centering
        \includegraphics[width=\textwidth]{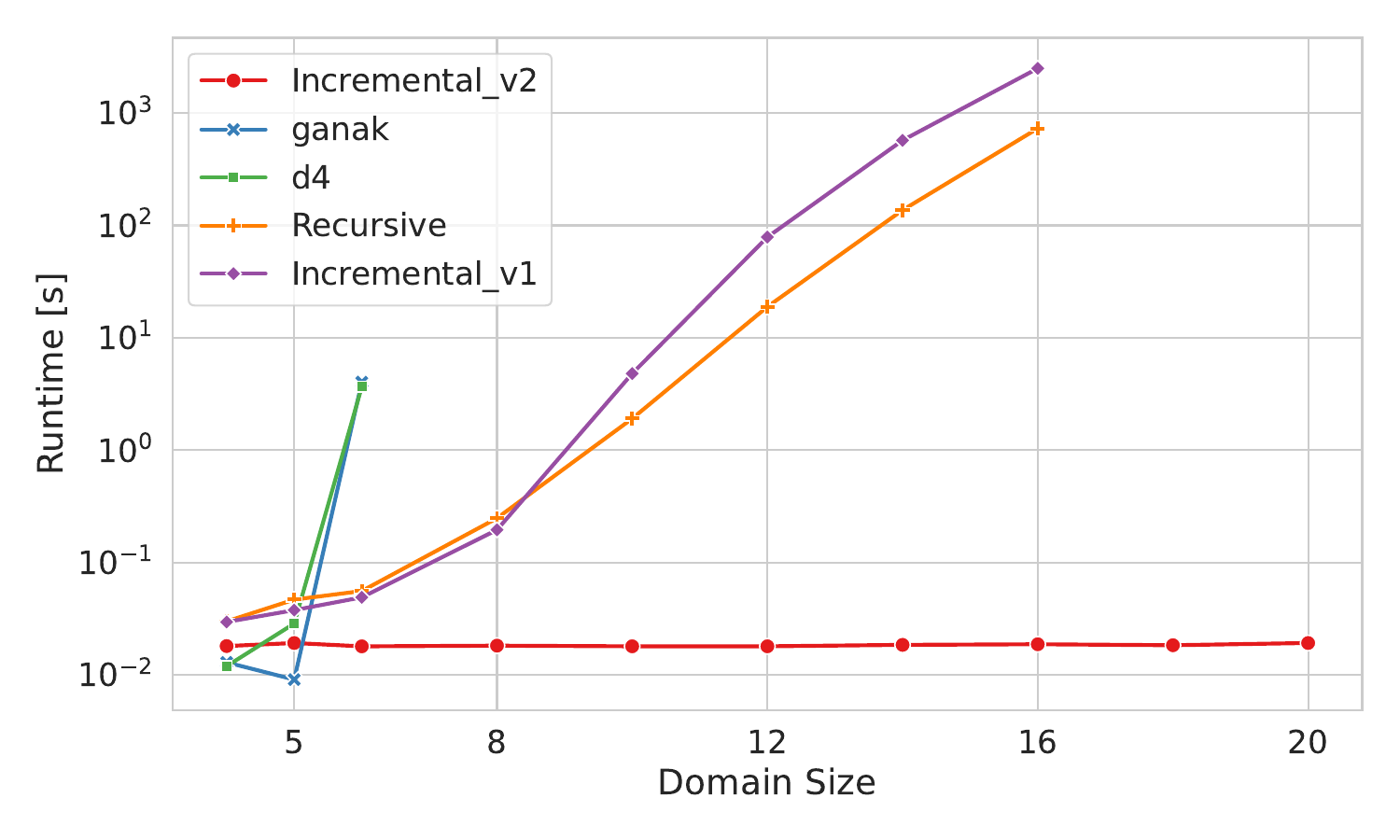}
        \caption{$\Phi_2$ for $n\le20$, $m=n$}
        % \label{fig:exp_c}
    \end{subfigure}
    \hfill
    \begin{subfigure}[b]{0.23\textwidth}
        \centering
        \includegraphics[width=\textwidth]{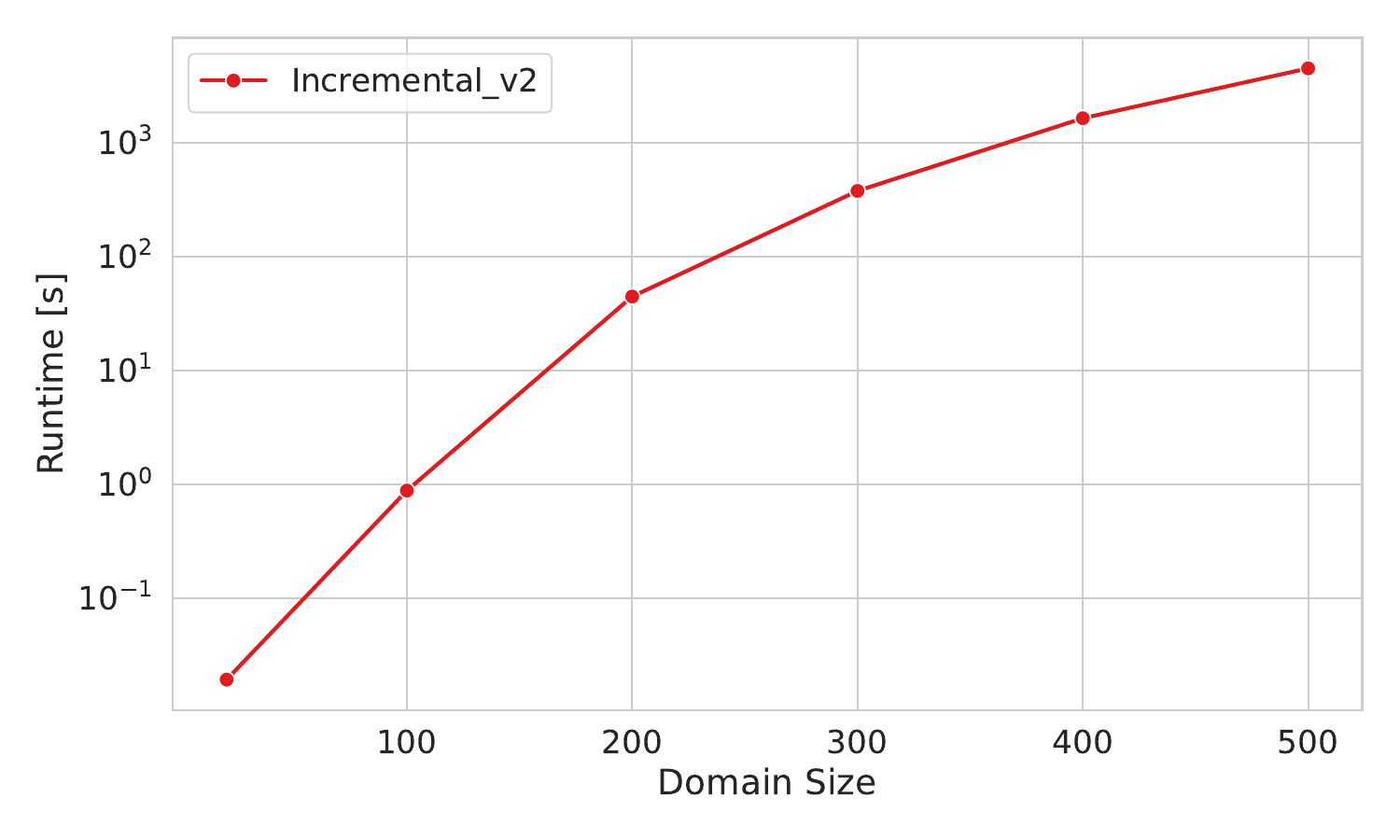}
        \caption{$\Phi_2$ for $n\le500$, $m=n$}
        % \label{fig:exp_d}
    \end{subfigure}
    
    \caption{Runtime comparisons and WFOMC scaling behavior for $\Phi_1$ and $\Phi_2$.}
    \label{fig:illustrative-examples}
\Description{Runtime measurements for the illustrative examples.
For the first problem, we compare Algorithm 1 with GANAK and d4. The propositional solvers can only solve instances with domain sizes of at most 14, whereas the lifted approach demonstrates constant scaling behavior on such small problems.
On problems scaling up a thousand domain elements, it demonstrates a clear polynomial tendency.

For the second problem, we compare all three lifted approaches and the propositional counters.
The propositional solvers only solve instances with domain sizes up to six, while Algorithm 1 and RecursiveWFOMC manage to get to problem sizes of 16.
Algorithm 2 demonstrates constant complexity for such small domain sizes; on larger domains, up to 500 domain elements, it shows a clear polynomial trend.
}    
\end{figure}

\subsection{Problems with the Linear Order and Its Successors}
Now, let us turn to more complex problems requiring the linear order and possibly some of its successor axioms.
We present 3 synthetic problems of varying complexity over arguably interesting structures requiring an ordered domain, as well as a set of 28 problems from the MATH dataset~\citep{hendrycks21:math-dataset} concerning permutation counting, which we managed to encode as first-order theories with the linear order.
When we talk about various problem complexity, we refer to the number of (valid) cells $p$, which, as evidenced by the proofs of \cref{th:fo2+lo_liftable,thm:general_linear_order}, determine the degree of our polynomial running times.%
\footnote{Note that using $p$ does not allow us to gauge the entire problem complexity when counting quantifiers or cardinality constraints are present due to \cref{lemma:c2+cc}, which relies on repeated calls to an oracle for \fotwo.
That is why, in \cref{fig:illustrative-examples}, $\Phi_1$ shows better scaling behavior than $\Phi_2$, even though $\Phi_1$ has two valid cells, while $\Phi_2$ only has one.
Nevertheless, with potential counting in mind, $p$ remains a useful (and accurate) complexity measure.}

\paragraph{Watts-Strogatz Model}
The WS model is a procedure for generating random graphs with properties such as short average paths and high clustering \citep{watts98:ws-model}.
Given $n$ ordered nodes, we first connect each node to its $K$ closest neighbors (assuming $K$ is an even integer) by undirected edges.
If the sequence reaches the end or beginning, we wrap around to the other end.
For every node, we \emph{rewire} each of its \emph{rightmost} edges with probability $\beta$.
Rewiring refers to switching the other endpoint to any other vertex.
We have already approximated the WS model for $K=2$ by the sentence $\Phi_2$, where we substituted the rewired edges with $m$ additional edges chosen at random.
Now, however, we design a Markov Logic Network $\Phi_{ws}$ that aims to capture the model exactly.
For brevity and easier exposition, we keep $K=2$.
\begin{align*}
    \Phi_{ws} = \{\;&(\infty,\; \forall x : \neg WiredEdge(x, x)),\\
    &(\infty,\; \forall x \forall y : \neg WiredEdge(x,y) \lor \neg WiredEdge(y,x)),\\   
    &(\infty,\; \forall x \exists^{=1} y : WiredEdge(x,y)),\\
    &(\infty,\; \forall x \forall y : Edge(x, y) \leftrightarrow (WiredEdge(x,y) \lor WiredEdge(y,x))),\\
    &(w_1,\; WiredEdge(x,y) \land \neg CySucc(x,y)),\\   
    &(w_2,\; WiredEdge(x,y) \land CySucc(x,y))\;\}
\end{align*}
The quantified sentences (the hard rules with infinite weight) define the undirected edges of the graph in terms of wired edges, which are directed, are not loops, and do not go both ways between any pair of vertices.
The soft rules represent two mutually exclusive events when there is a wired edge going from $x$ to $y$.
Each wired edge is either part of the cyclic chain or not, meaning it has been rewired or not, respectively.
The weights $w_1$ and $w_2$ are determined from the probability $\beta$.
Note that we could model a scenario for $K>2$, although we would require the generalized cyclic successors up to $K/2$.
After converting $\Phi_{ws}$ to a WFOMC instance, the number of valid cells for the problem is two.

The performance of each of the algorithms on $\Phi_{ws}$ is compared in \cref{fig:ws2_small}, while \cref{fig:ws2_large} showcases that \cref{alg:iwfomc2} can scale much further than any of the other tested approaches.
\cref{alg:iwfomc2} again demonstrates the best performance by a significant margin, solving WFOMC instances with up to 90 domain elements within the three-hour limit.
Note, however, that GANAK, albeit a propositional counter with exponential complexity, outperforms both \cref{alg:iwfomc} and RecursiveWFOMC, showing that a provably polynomial runtime of a lifted algorithm does not immediately necessitate better performance.

\paragraph{Hidden Markov Model with Higher-Order Dependencies}
Next, in order to evaluate performance for the general $k$-th successor relations, we consider a hidden Markov model with higher-order dependencies.
Specifically, we design an MLN $\Phi_{hmm}$ for simple weather predictions.
However, only \cref{alg:iwfomc2} implicitly supports the $k$-th successor relation, and the explicit encodings from \cref{app:successor-by-lo} scale poorly for both \cref{alg:iwfomc} and RecursiveWFOMC.
Therefore, we do not use them to compute the WFOMC of $\Phi_{hmm}$.
Nevertheless, to showcase their behavior, we also provide a standard hidden Markov Model $\Phi_{hmm_2}$ that drops the higher-order transitions.
\begin{equation*}
\begin{aligned}
\Phi_{hmm} = \{ \;
&(\infty, \forall x: \neg (Sn(x) \land Rn(x))), \\
&(\infty, \forall x: Sn(x) \lor Rn(x)), \\
&(\infty, (Sn(x) \land Succ_1(x,y)) \to Rn(x)), \\
&(0.5, S(x) \to Rn(x)), \\
&(1.0, (S(x) \land Succ_1(x,y))\to S(y)), \\
&(0.4, (S(x) \land Succ_2(x,y))\to S(y)), \\
&(0.1, (S(x) \land Succ_3(x,y))\to S(y))\; \}
\end{aligned}
\qquad
\begin{aligned}
    \Phi_{hmm_2} = \{\;
    &(\infty, \forall x: \neg (Sn(x) \land Rn(x))), \\
    &(\infty, \forall x: Sn(x) \lor Rn(x)), \\
    &(\infty, (Sn(x) \land Succ_1(x,y)) \to Rn(x)), \\
    &(0.5, S(x) \to Rn(x)), \\
    &(1.0, (S(x) \land Succ_1(x,y))\to S(y)), \\
\end{aligned}
\end{equation*}
Intuitively, the weather of someday is either rainy or sunny (the first two rules), and it is influenced by a hidden state $S(x)$ (the fourth rule).
The bottom soft rules model the transitions between hidden states.
Computing the probability of the third rule (i.e., ``If it is sunny today, it will rain tomorrow'') being true would give us our weather forecast.
The number of valid cells for both problems is now four.

The performance comparisons are now depicted in \cref{fig:higher_hmm_small,fig:hmm_small}.
One can observe the polynomial behavior of \cref{alg:iwfomc2} on larger domains in \cref{fig:higher_hmm_large,fig:hmm_large}.
With the higher number of cells, \cref{alg:iwfomc2} no longer scales to hundreds of domain elements. However, it still performs much better than any other algorithm considered, successfully computing WFOMC on domains more than four times larger than those the other solvers could handle.

\begin{figure}[bt]
    \centering

    \begin{subfigure}[b]{0.23\textwidth}
        \centering
        \includegraphics[width=\textwidth]{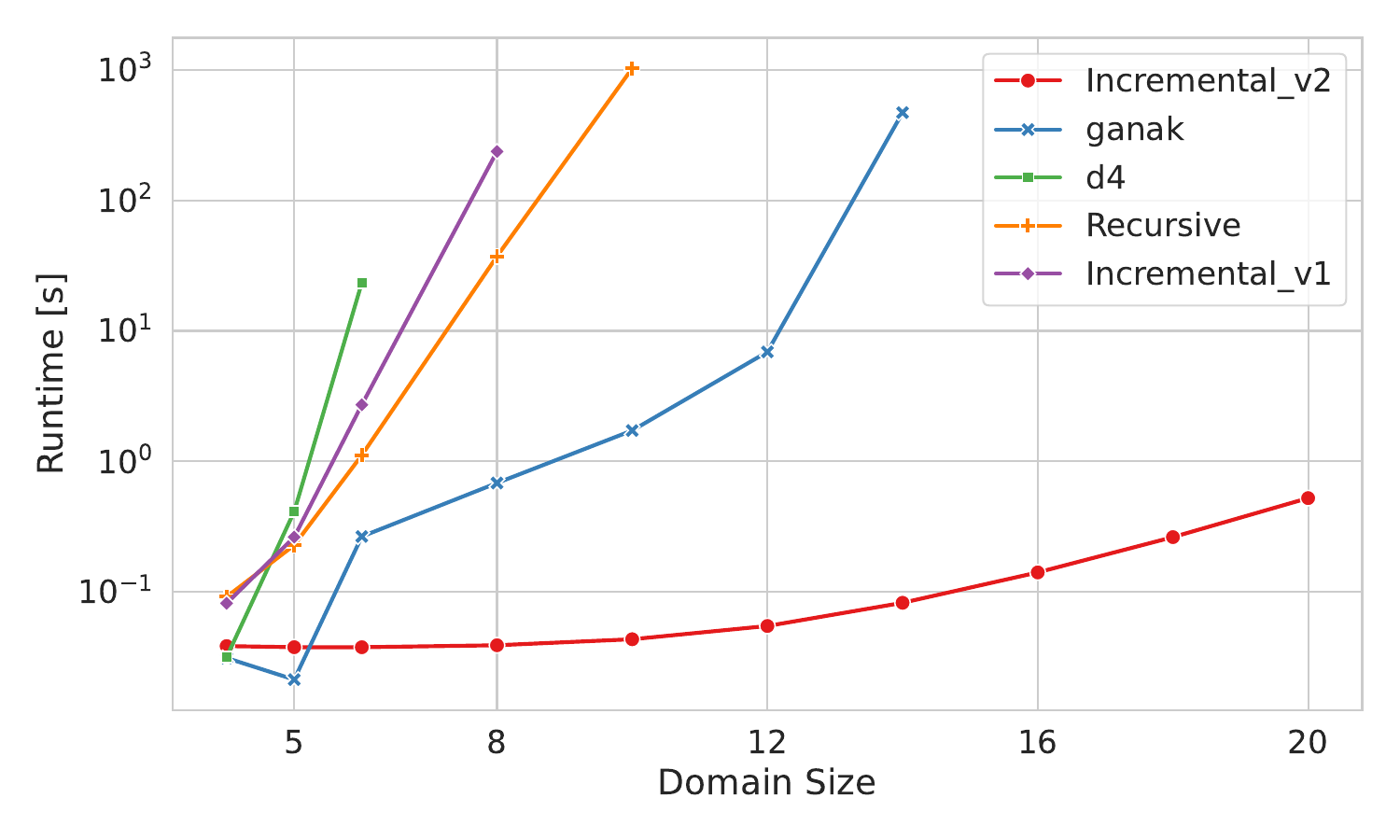}
        \caption{$\Phi_{ws}$ for $n\le20$}
        \label{fig:ws2_small}
    \end{subfigure}
    \hspace{6em}
    \begin{subfigure}[b]{0.23\textwidth}
        \centering
        \includegraphics[width=\textwidth]{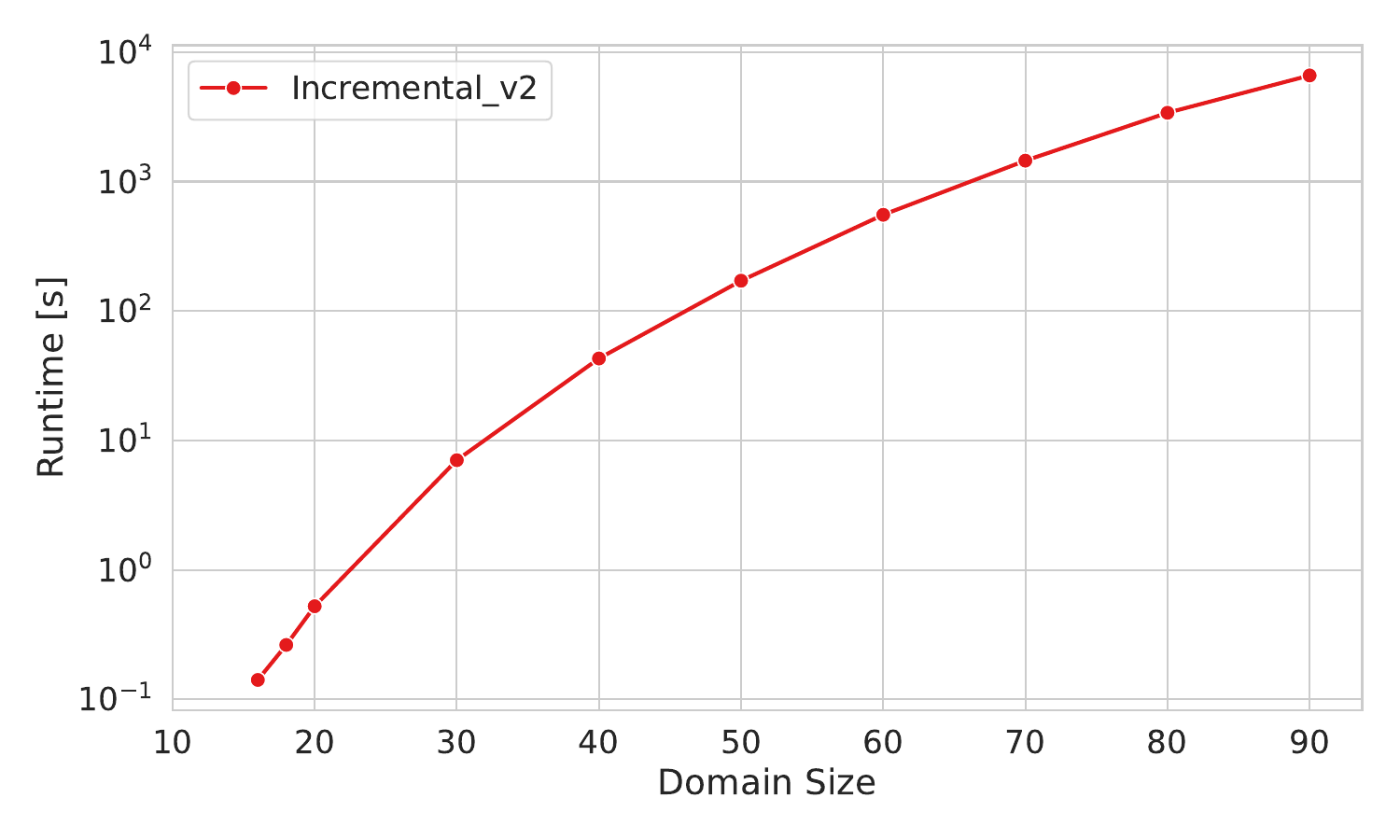}
        \caption{$\Phi_{ws}$ for $n\le 90$}
        \label{fig:ws2_large}
    \end{subfigure}

    \begin{subfigure}[b]{0.23\textwidth}
        \centering
        \includegraphics[width=\textwidth]{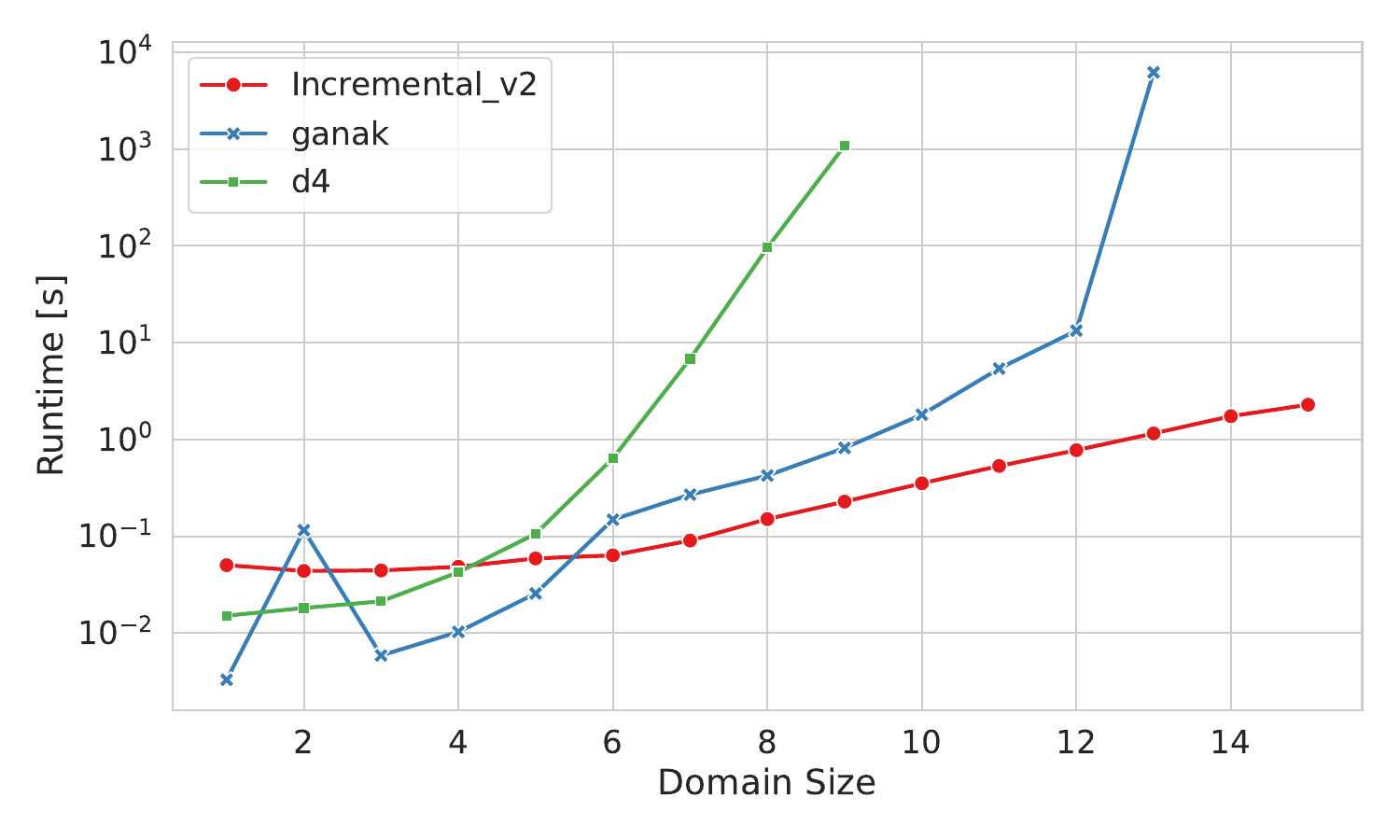}
        \caption{$\Phi_{hmm}$ for $n\le15$}
        \label{fig:higher_hmm_small}
    \end{subfigure}
    \hfill
    \begin{subfigure}[b]{0.23\textwidth}
        \centering
        \includegraphics[width=\textwidth]{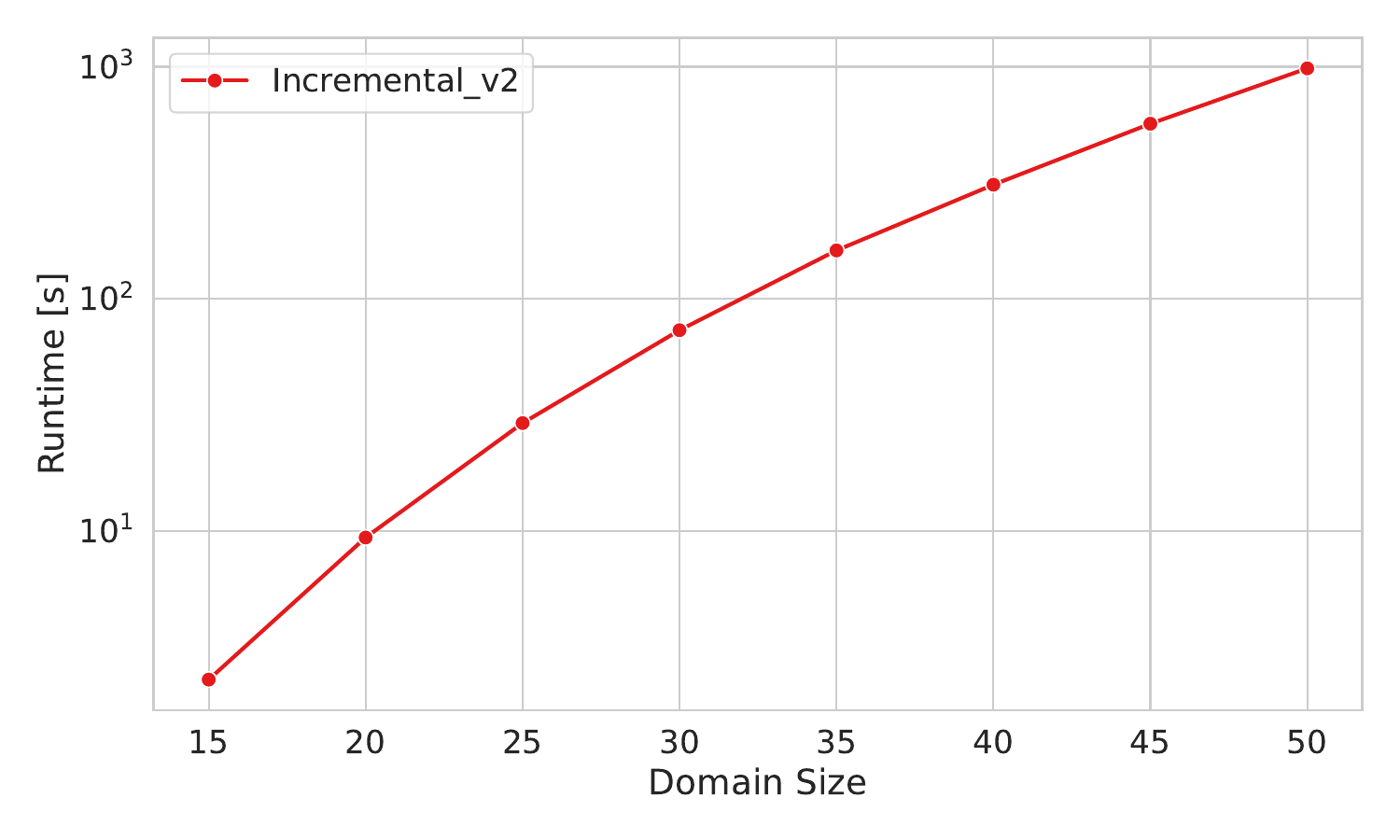}
        \caption{$\Phi_{hmm}$ for $n\le50$}
        \label{fig:higher_hmm_large}
    \end{subfigure}
    \hfill
    \begin{subfigure}[b]{0.23\textwidth}
        \centering
        \includegraphics[width=\textwidth]{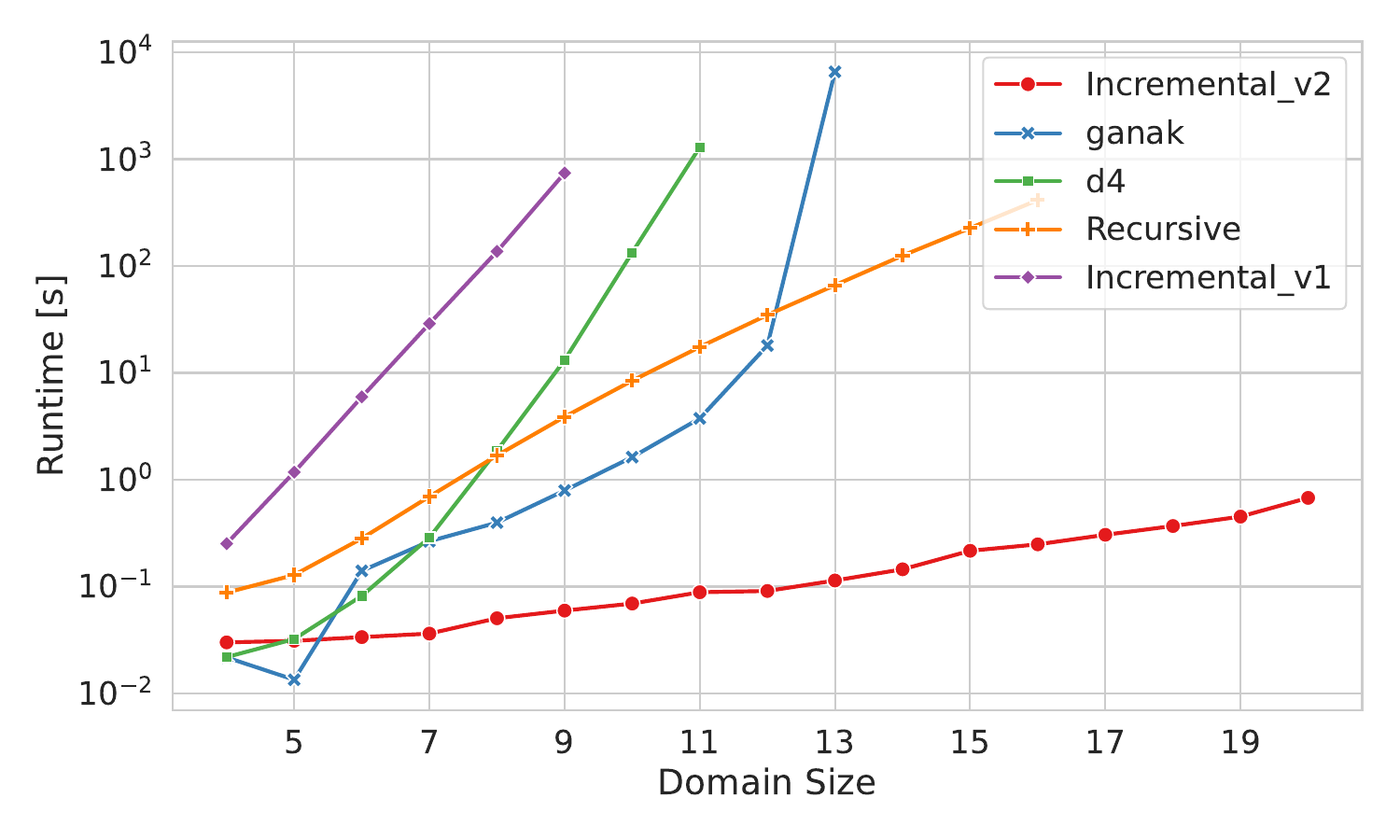}
        \caption{$\Phi_{hmm_2}$ for $n\le20$}
        \label{fig:hmm_small}
    \end{subfigure}
    \hfill
    \begin{subfigure}[b]{0.23\textwidth}
        \centering
        \includegraphics[width=\textwidth]{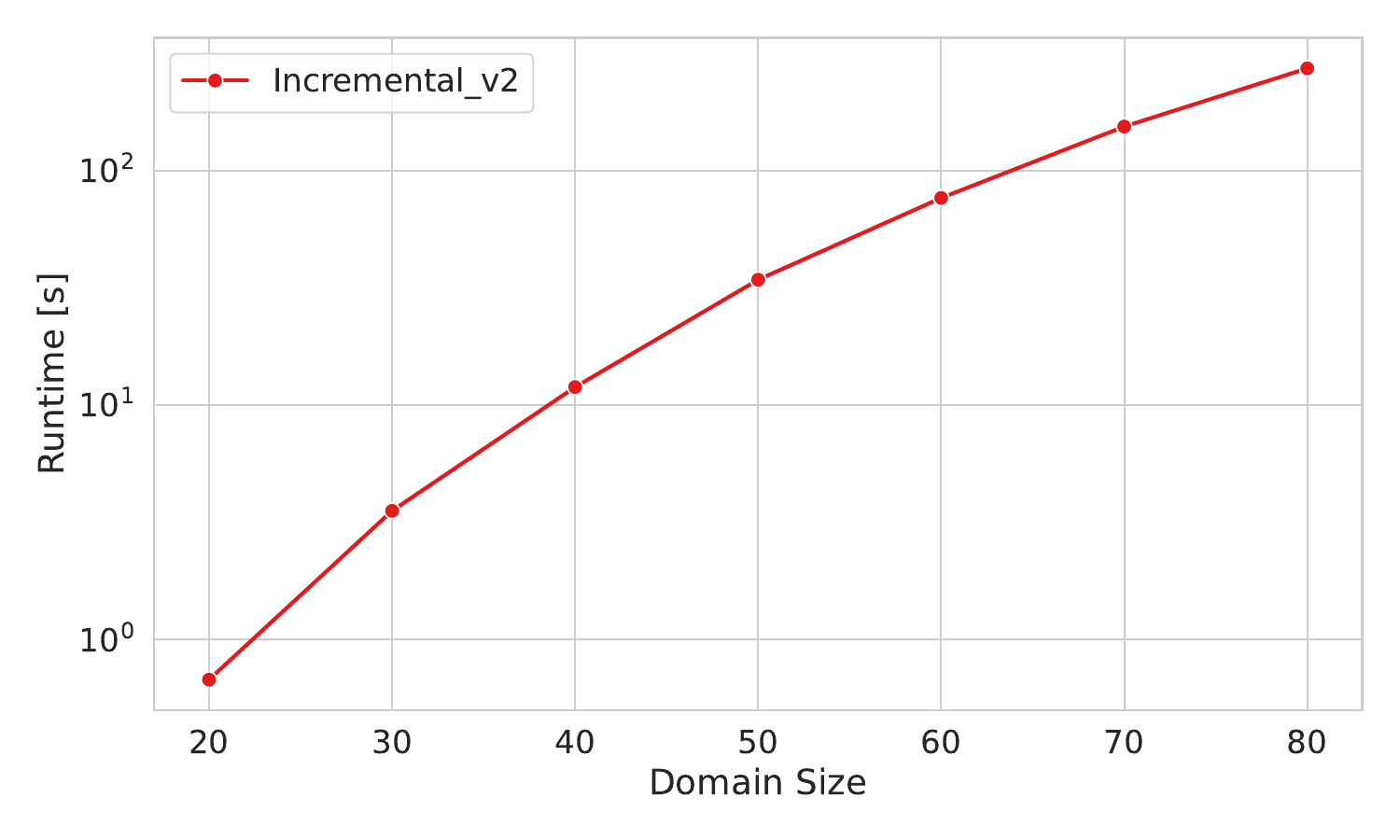}
        \caption{$\Phi_{hmm_2}$ for $n\le 80$}
        \label{fig:hmm_large}
    \end{subfigure}
    
    \caption{Runtime comparisons and WFOMC scaling behavior for $\Phi_{ws}$, $\Phi_{hmm}$ and $\Phi_{hmm_2}$.}
    \label{fig:lo-problems}
\Description{More runtime comparisons showing similar behavior of Algorithm 2 on the Watts-Strogatz and HMM problems.}
    
\end{figure}

\paragraph{Combinatorics Math Problems}
Finally, we evaluate performance on combinatorics problems involving counting permutations.
The problem set comes from the ``counting\_and\_statistics'' category of the MATH dataset~\citep{hendrycks21:math-dataset}.
We extracted a total of 28 combinatorics problems and encoded them using the linear order axiom along with its immediate and cyclic successor relations.
We refer readers to our source code, which includes a JSON file containing a natural-language description of each problem we selected.
For illustration, recall \cref{ex:basic-succ} asking, ``How many ways can we put 3 math books and 5 English books on a shelf if all the math books must stay together and all the English books must also stay together?'' which corresponds to problem 7.
Another example is problem 137, which asks ``In how many ways can we seat 8 people around a table if Alice and Bob will not sit next to each other?''

\cref{fig:comb_cactus_x1} shows the cactus plot for solving the combinatorics math problems.
While \cref{alg:iwfomc2} offers a substantial improvement in runtime compared to both \cref{alg:iwfomc} and RecursiveWFOMC, its advantage over the propositional counters is not as pronounced.
Therefore, we also provide \cref{fig:comb_cactus_x2,fig:comb_cactus_x3}, which show solving the same problems with domain sizes scaled up by a factor of two and three, respectively.
Apart from the cactus plots, concrete running time can be inspected for each problem in \cref{fig:comb_bar_x1,fig:comb_bar_x2,fig:comb_bar_x3}.

\begin{figure}[tb]
    \centering
    
    \begin{subfigure}[b]{0.23\textwidth}
        \centering
        \includegraphics[width=\textwidth]{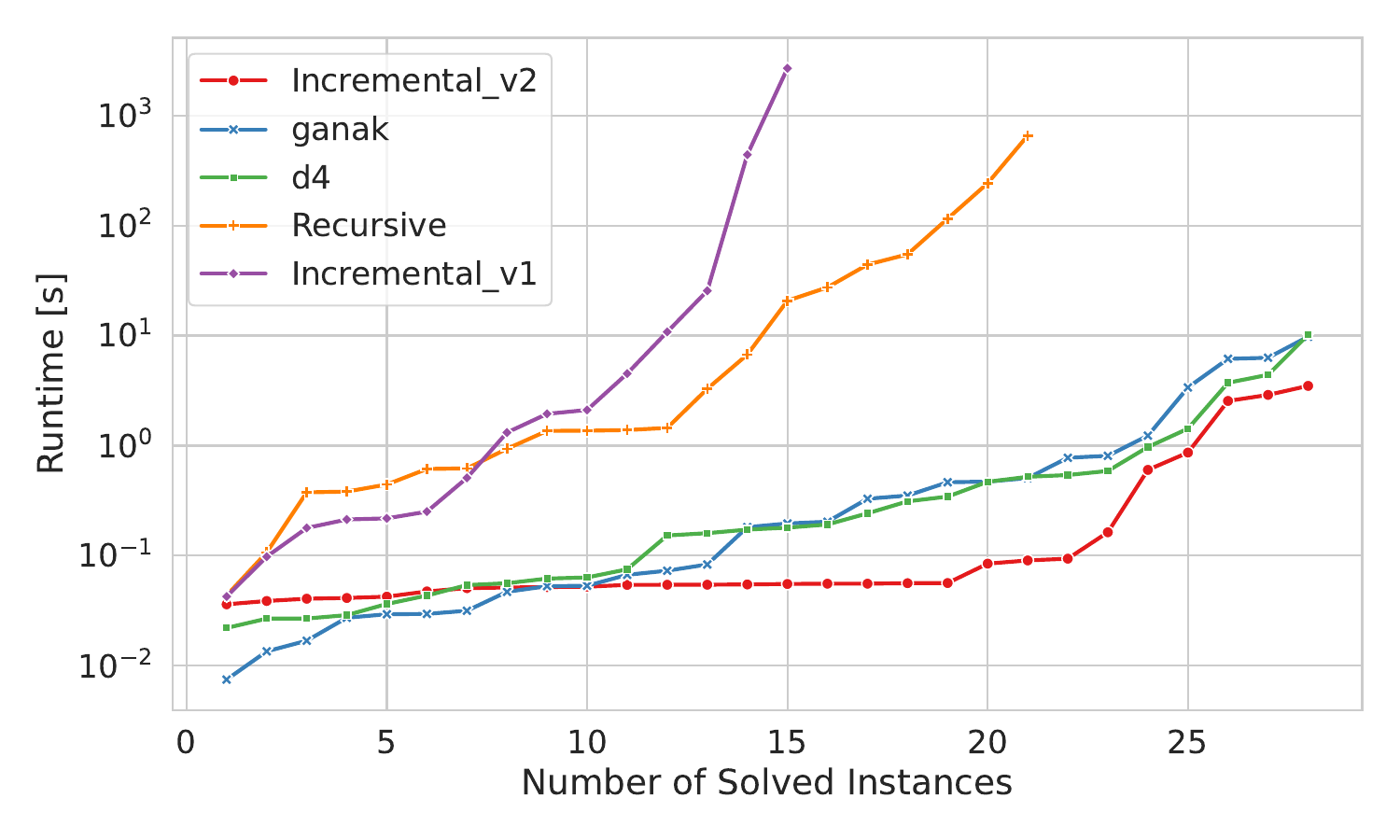}
        \caption{Original MATH problems}
        \label{fig:comb_cactus_x1}
    \end{subfigure}
    \hspace{3em}
    \begin{subfigure}[b]{0.23\textwidth}
        \centering
        \includegraphics[width=\textwidth]{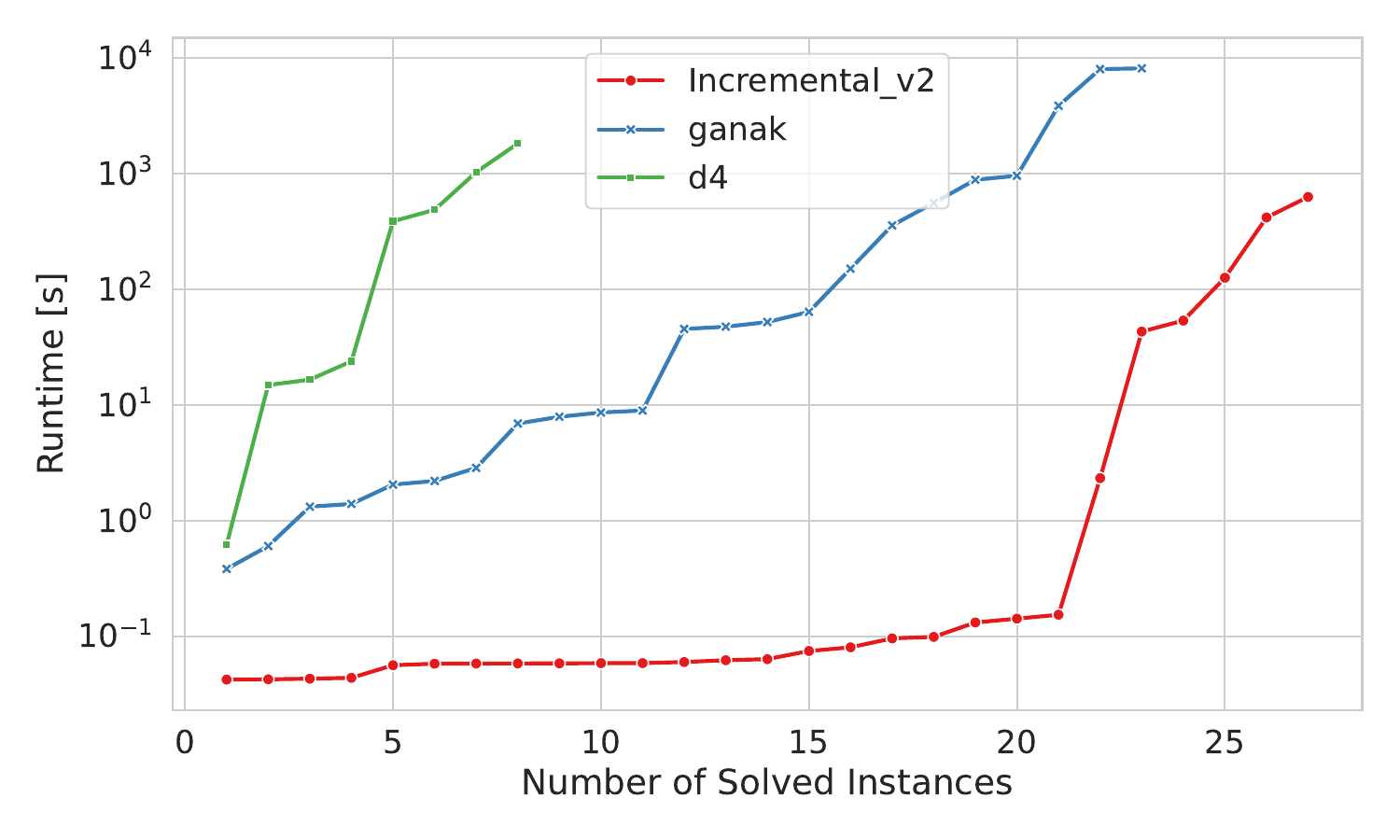}
        \caption{MATH problems with $2n$}
        \label{fig:comb_cactus_x2}
    \end{subfigure}
    \hspace{3em}
    \begin{subfigure}[b]{0.23\textwidth}
        \centering
        \includegraphics[width=\textwidth]{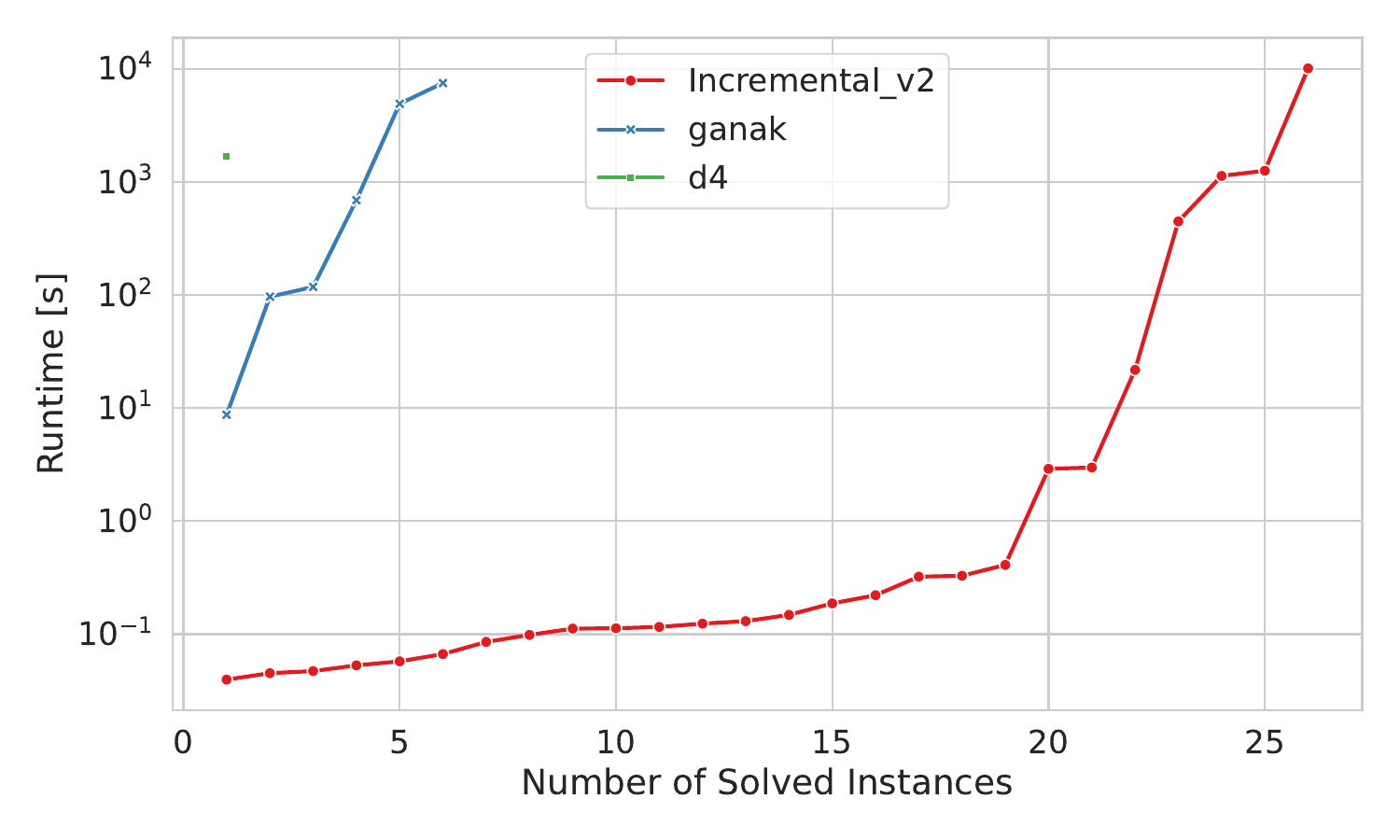}
        \caption{MATH problems with $3n$}
        \label{fig:comb_cactus_x3}
    \end{subfigure}

    \begin{subfigure}[b]{0.23\textwidth}
        \centering
        \includegraphics[width=\textwidth]{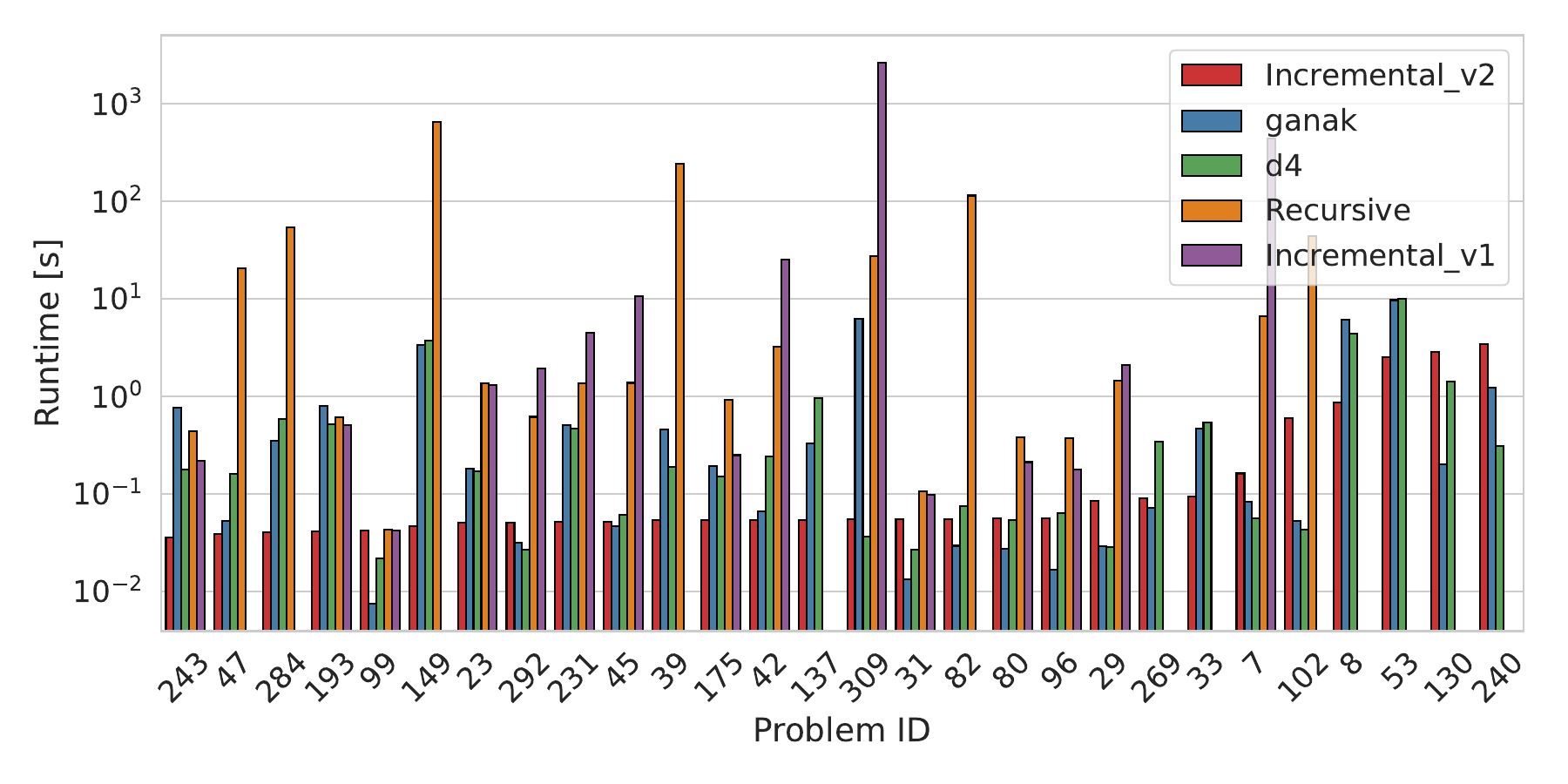}
        \caption{Runtimes on original MATH problems}
        \label{fig:comb_bar_x1}
    \end{subfigure}
    \hspace{3em}
    \begin{subfigure}[b]{0.23\textwidth}
        \centering
        \includegraphics[width=\textwidth]{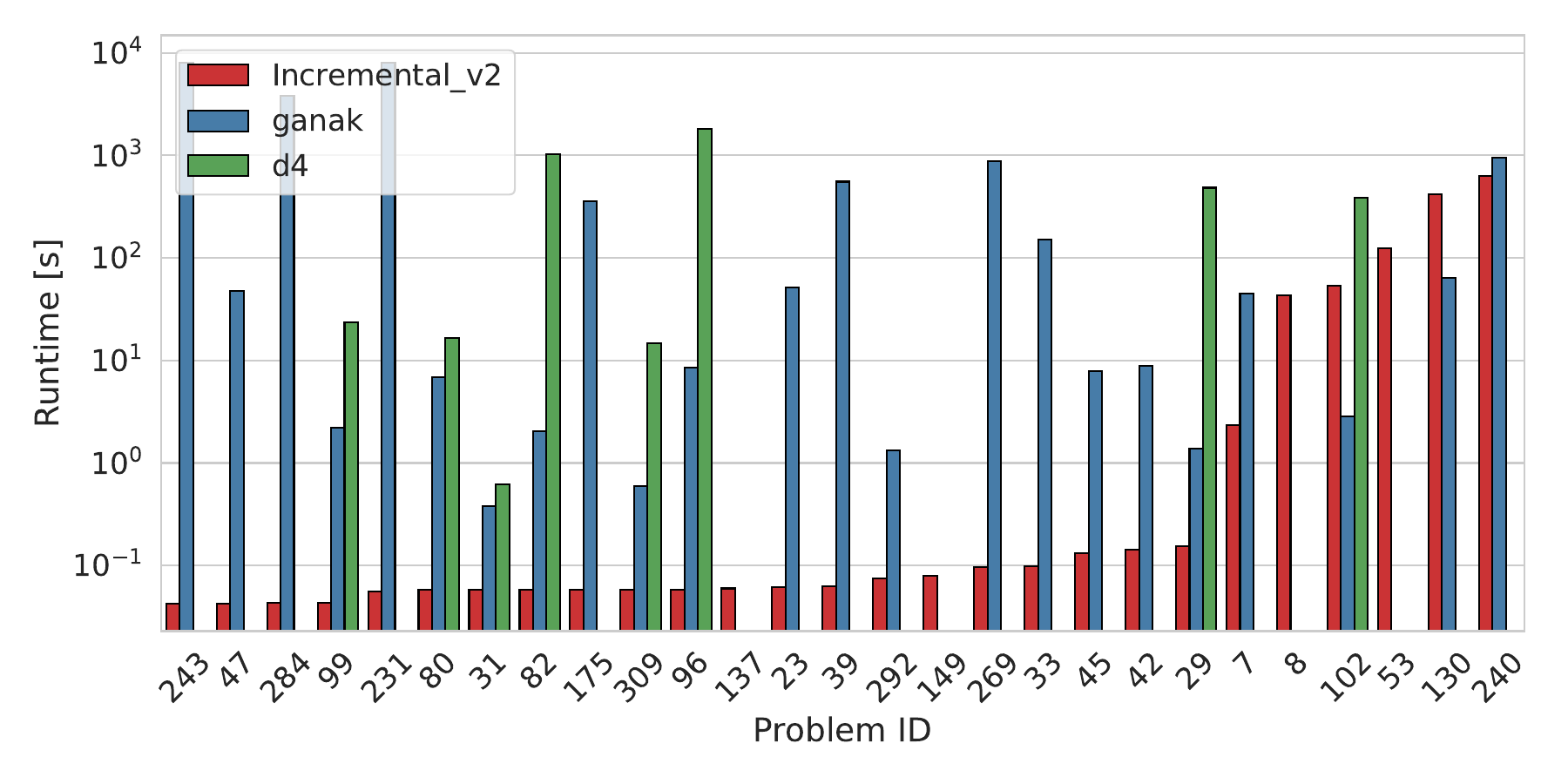}
        \caption{Runtimes on MATH problems with $2n$}
        \label{fig:comb_bar_x2}
    \end{subfigure}
    \hspace{3em}
    \begin{subfigure}[b]{0.23\textwidth}
        \centering
        \includegraphics[width=\textwidth]{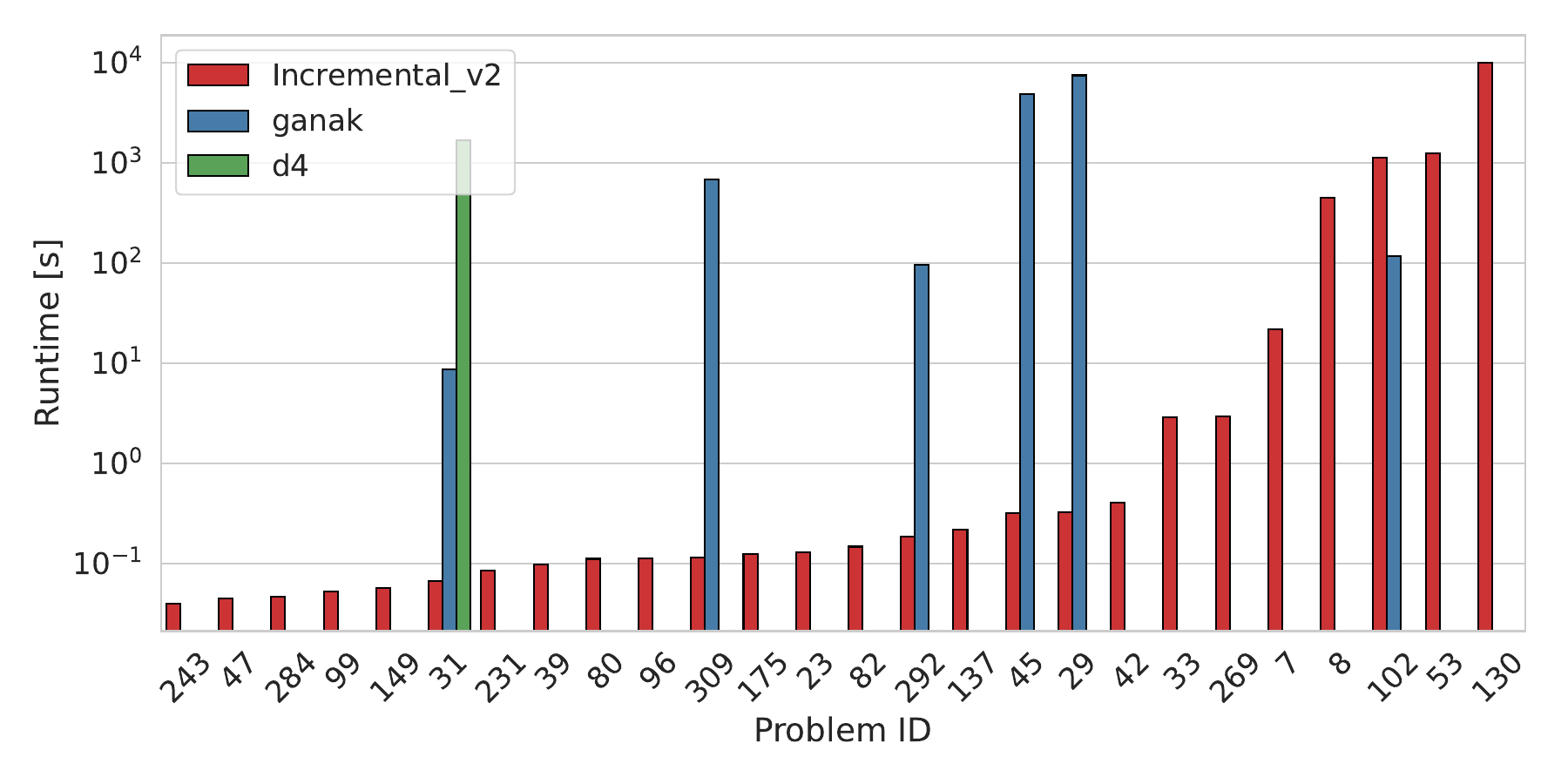}
        \caption{Runtimes on MATH problems with $3n$}
        \label{fig:comb_bar_x3}
    \end{subfigure}
    
    \caption{Cactus plots and runtime comparisons for problems from the MATH dataset for scaled-up domain sizes.}
    \label{fig:comb_cactus}

\Description{Runtime comparisons over the subset of the MATH dataset that we have extracted for counting purposes.
While the running times are similar for Algorithm 2 and the propositional counters on the original problems (the other lifted techniques are much worse already), the advantage of the lifted approach is quite pronounced when we enlarge the domain by a factor of two or even three.}
    
\end{figure}

\subsection{Problems with the Linear Order and Another Successor}
Unfortunately, we were unable to find any suitable problems for \cref{alg:lo+succ} (i.e., problems requiring both linear order and the successor relation of another linear order) in the MATH dataset; hence, we only construct our own simple benchmarking problems of increasing complexity.
We again choose the number of valid cells as our measure of problem complexity, since it still determines the degree of our polynomial runtime, as shown in the proof of \cref{thm:lo+succ}.
Whenever applicable for any problem, we match the sequence produced by computing WFOMC of that problem on domain sizes $n\in\{1,2,3,\ldots\}$ to the On-Line Encyclopedia of Integer Sequences \citep{oeis}, abbreviated as OEIS, to obtain a \emph{human-readable} description of the problem.
We consider five sentences, four of which are motivated solely by their varying numbers of valid cells, while the fifth is arguably more practical.
For each problem, it holds that $\le$ denotes the linear order relation as usual, and $S$ denotes the successor of another linear order, i.e., $\loaxiom(\le) \land \succaxiom(S)$ is implicitly conjoined to the problem.
The problems are as follows:
\begin{itemize}
    \item Sentence $\Phi_3$ has one valid cell, and if we fix the linear order, its sequence coincides with the OEIS entry A000670\footnote{\url{https://oeis.org/A000670}}, which asks for \emph{the number of ways to partition $n$ elements to disjoint subsets and arrange them into a sequence}.
    \begin{align*}
        \Phi_3 = \; &(\forall x \forall y: B(x, y) \to S(x, y))\\
        \land\; &(\forall x \forall y: (S(x, y) \wedge (x\le y)) \to B(x, y))
    \end{align*}

    \item $\Phi_4$ has two valid cells, and with fixed linear order, it corresponds to A000629\footnote{\url{https://oeis.org/A000629}}, i.e., \emph{the number of ways to partition $n+1$ elements to disjoint subsets and arrange them into a necklace}.
    \begin{align*}
        \Phi_4 = \forall x \forall y: (S(x, y) \land (x\le y)) \to (U(x) \leftrightarrow \lnot U(y))
    \end{align*}

    \item $\Phi_5$ has four valid cells, yet unfortunately, no matching entry in the OEIS. It first denotes the tail of the linear order by $U_1$. Second, if an element is in the tail, then its $S$-successor must be in the set $U_2$.
    \begin{align*}
        \Phi_5 = \; &(\forall x \forall y: ((U_1(x) \land (x\le y)) \to U_1(y))\\
        \land\; &(\forall x \forall y: ((U_1(x) \wedge S(x, y)) \to U_2(y))
    \end{align*}

    \item The sentence $\Phi_6$ is an augmentation of $\Phi_5$ with five valid cells. On top of already defined constraints, there is a binary relation $B(x,y)$ which, if true, requires $x$ to be in the set $U_1$ (the tail) and $y$ in $U_2$.
    \begin{align*}
    % \Phi_6 =  \;&\Phi_5 \land \forall x \forall y: B(x, y) \to (U_1(x) \wedge U_2(y))\\
        \Phi_6 = \; &(\forall x \forall y: ((U_1(x) \land (x\le y)) \to U_1(y))\\
        \land\; &(\forall x \forall y: ((U_1(x) \wedge S(x, y)) \to U_2(y))\\
        \land\; &(\forall x \forall y: B(x, y) \to (U_1(x) \wedge U_2(y)))
    \end{align*}

    \item Finally, $\Phi_{cards}$ counts \emph{the number of ways a deck of cards can be shuffled so that no two previously neighboring cards stay next to each other after the shuffling}. While $\Phi_{cards}$ has only one valid cell, it uses the $Succ_1$ relation rather than the linear order relation itself, and, from a practical standpoint, augmentations of the sentence could be used to answer various questions about the probability of card distributions after a deck has been shuffled.
    When fixing the linear order again, $\Phi_{cards}$ corresponds to the sequence A002464\footnote{\url{https://oeis.org/A002464}} in OEIS, which can be interpreted as Hertzsprung's problem, i.e., ways to arrange $n$ non-attacking kings on an $n \times n$ board, with one king in each row and column.
    \begin{align*}
        \Phi_{cards} = \; &(\forall x \forall y: S(x,y) \to \neg Succ_1(x,y))\\
        \land\; &(\forall x \forall y: S(x,y) \to \neg Succ_1(y,x))
    \end{align*}
\end{itemize}

Since there are no other lifted approaches that would support the axioms $\loaxiom(\le) \land \succaxiom(S)$, we compare our performance only to the propositional counters.
The comparison is shown in \cref{fig:lops}.
While \cref{alg:lo+succ} can easily scale to hundreds of domain elements on problems with a single valid cell, its scaling capabilities diminish significantly on more complex problems such as $\Phi_5$ and $\Phi_6$, barely scaling to twice the domain size of what WMC solvers can handle.
Nevertheless, our lifted algorithm still outperforms the propositional approaches on all tested instances and demonstrates a clear polynomial trend.

\begin{figure}[tbp]
    \centering

    \begin{subfigure}[b]{0.23\textwidth}
        \centering
        \includegraphics[width=\textwidth]{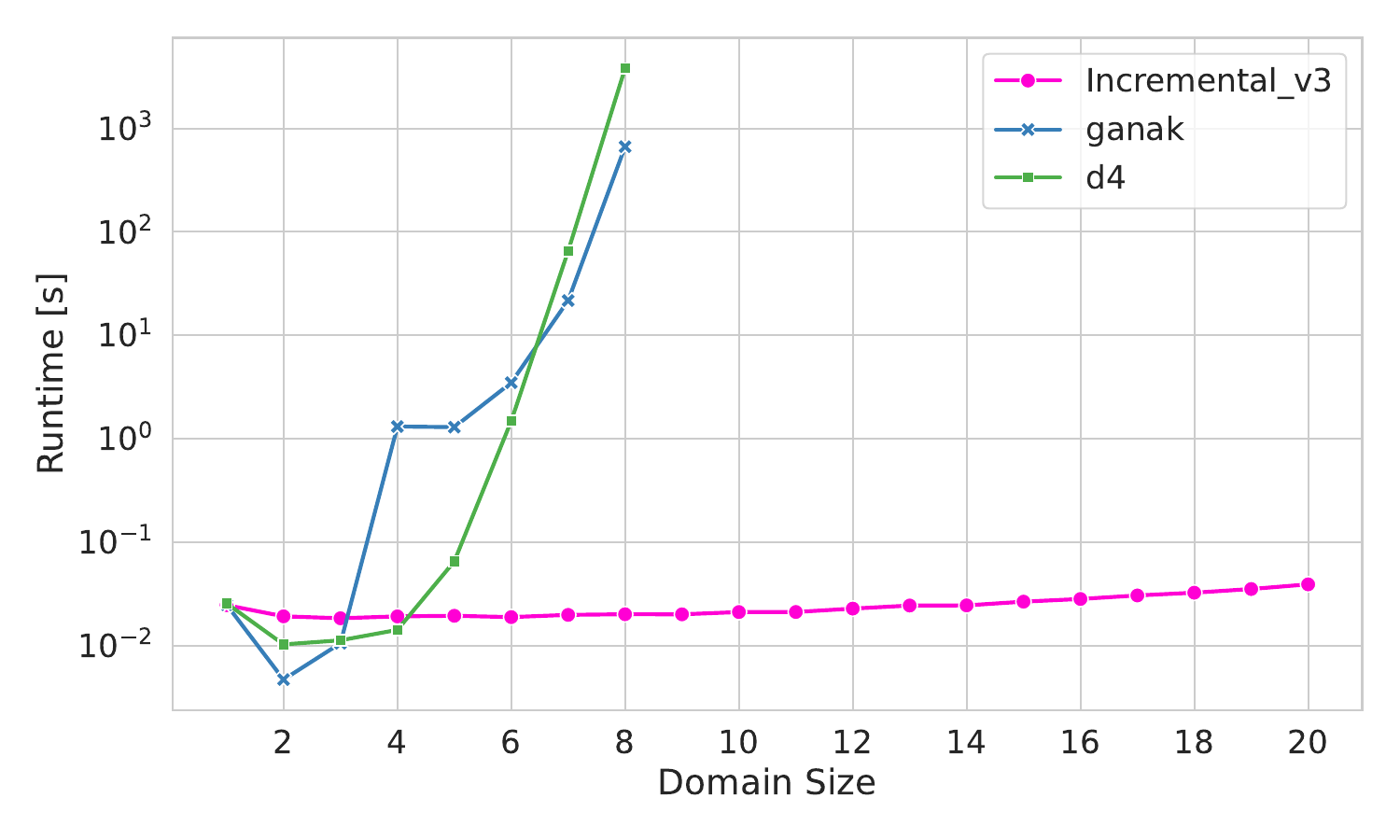}
        \caption{$\Phi_{cards}$ for $n\le20$}
        \label{fig:lops_cards}
    \end{subfigure}
    \hfill
    \begin{subfigure}[b]{0.23\textwidth}
        \centering
        \includegraphics[width=\textwidth]{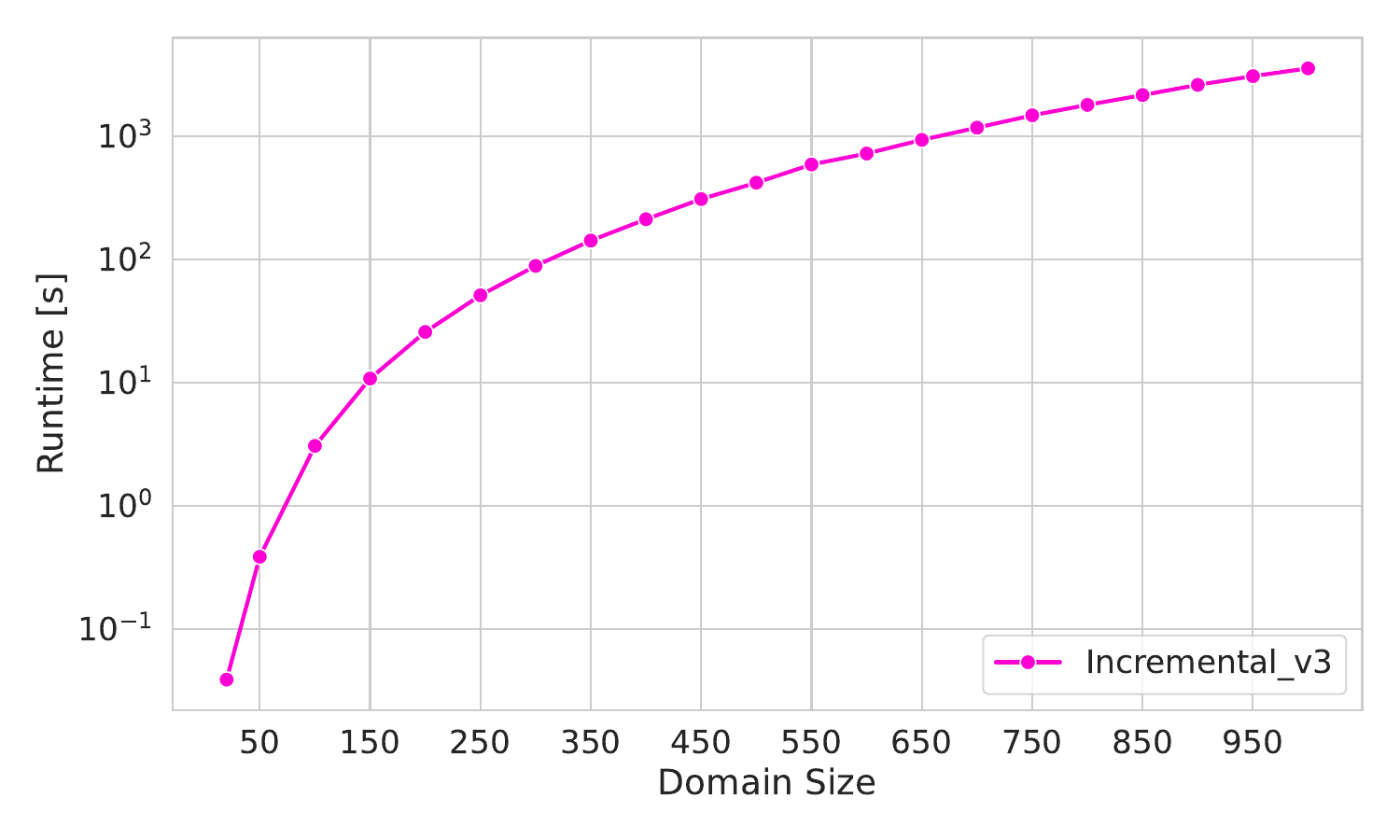}
        \caption{$\Phi_{cards}$ for $n\le1000$}
        \label{fig:lops_cards_large}
    \end{subfigure}
    \hfill
    \begin{subfigure}[b]{0.23\textwidth}
        \centering
        \includegraphics[width=\textwidth]{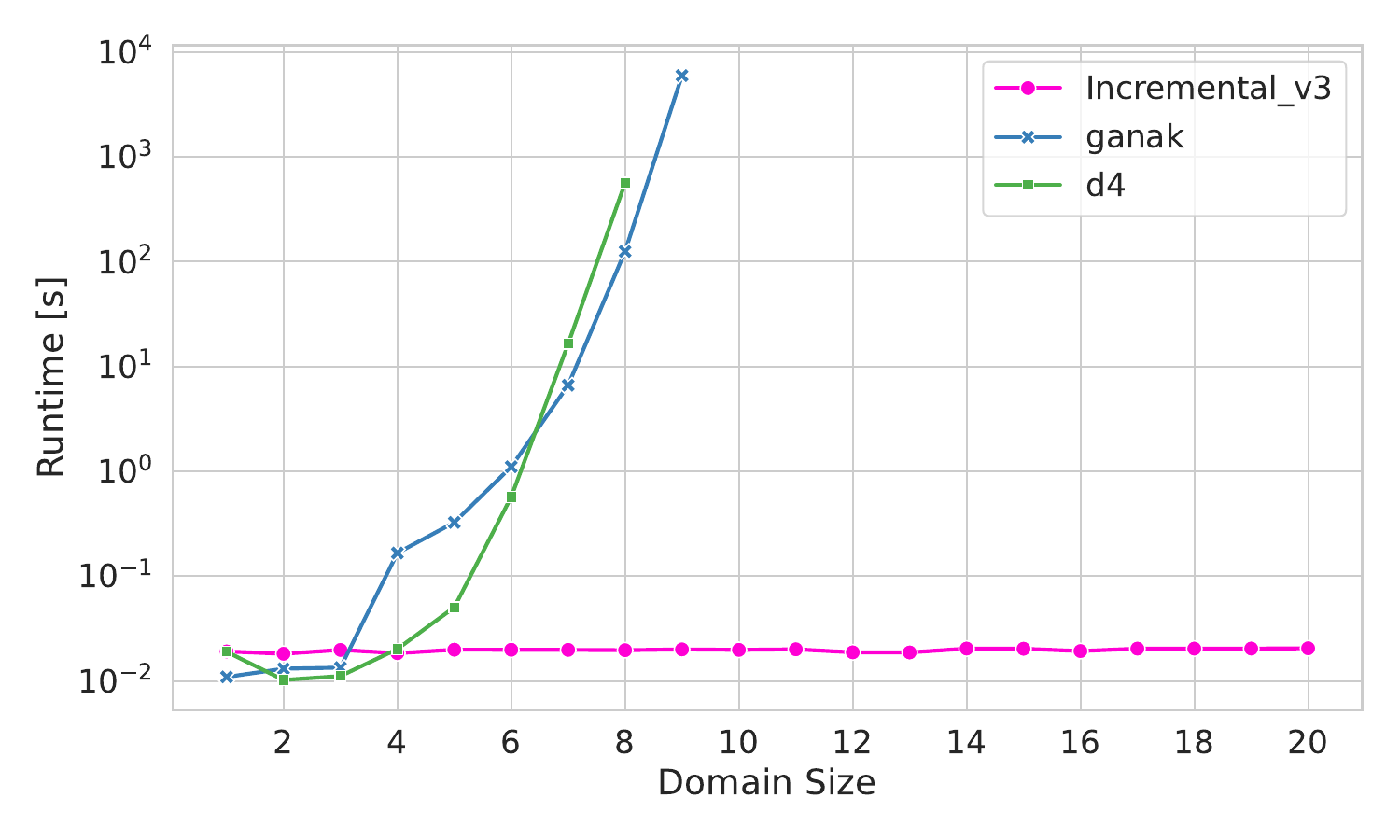}
        \caption{$\Phi_{3}$ for $n\le20$}
        \label{fig:lops_p1}
    \end{subfigure}
    \hfill
    \begin{subfigure}[b]{0.23\textwidth}
        \centering
        \includegraphics[width=\textwidth]{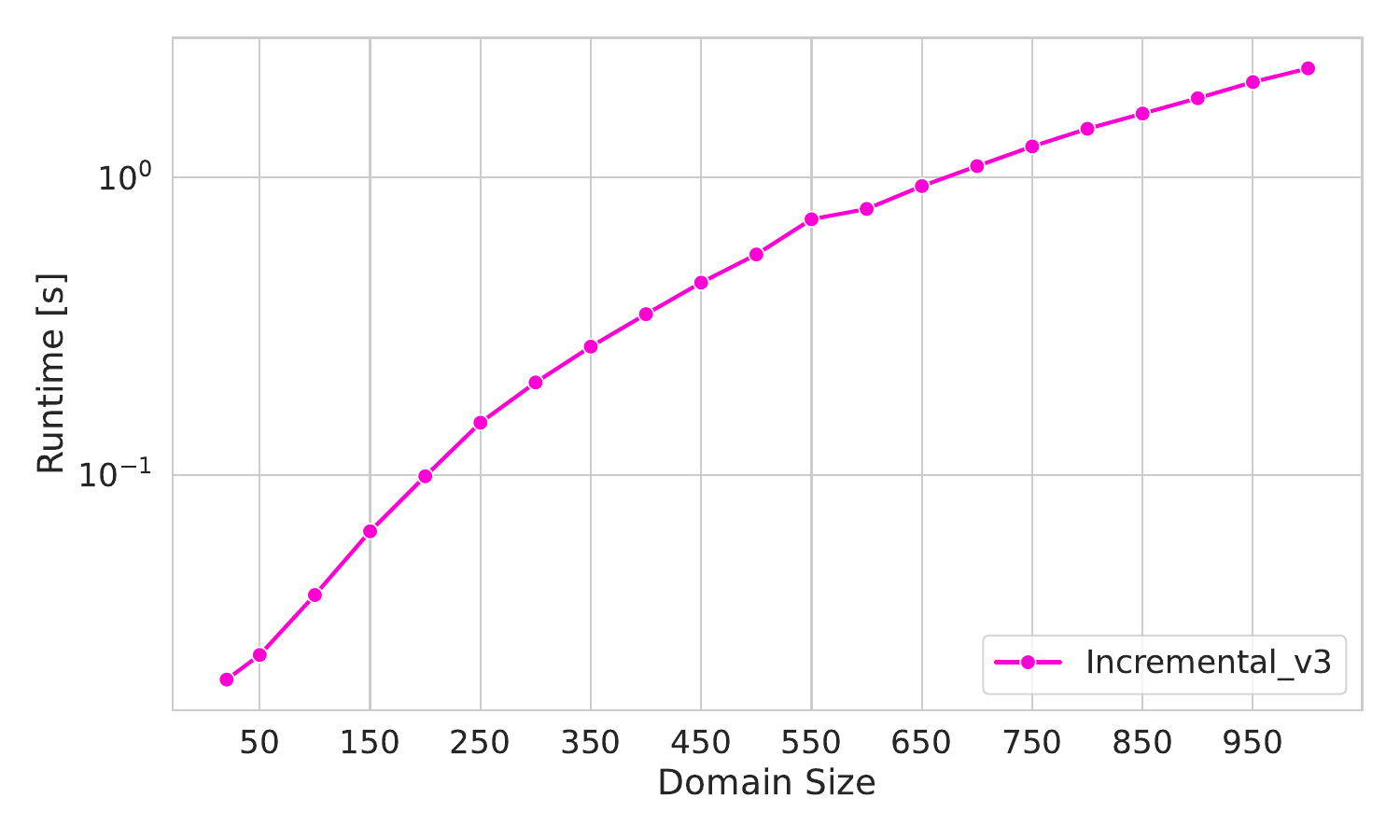}
        \caption{$\Phi_{3}$ for $n\le 1000$}
        \label{fig:lops_p1_large}
    \end{subfigure}

        \begin{subfigure}[b]{0.23\textwidth}
        \centering
        \includegraphics[width=\textwidth]{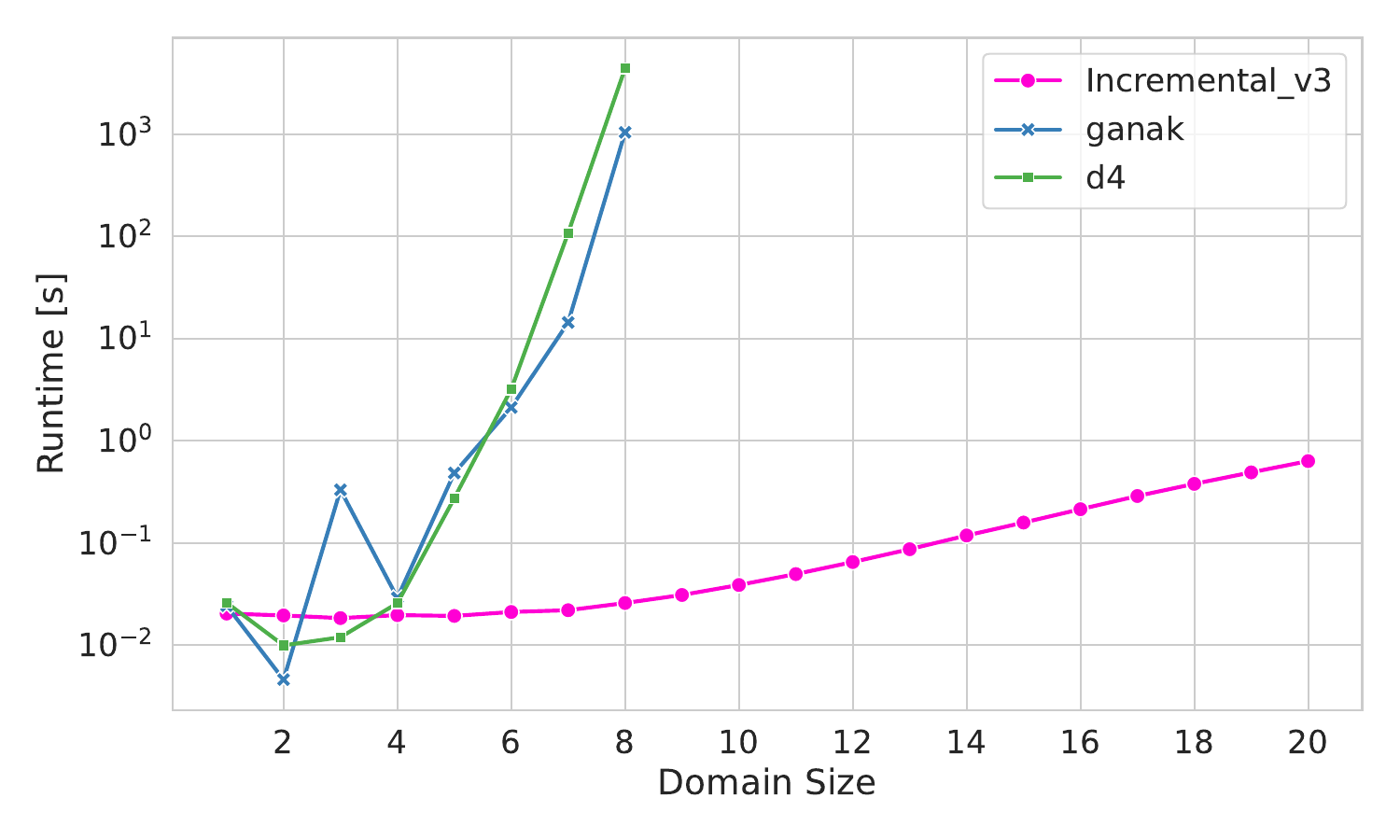}
        \caption{$\Phi_{4}$ for $n\le20$}
        \label{fig:lops_p2}
    \end{subfigure}
    \hfill
    \begin{subfigure}[b]{0.23\textwidth}
        \centering
        \includegraphics[width=\textwidth]{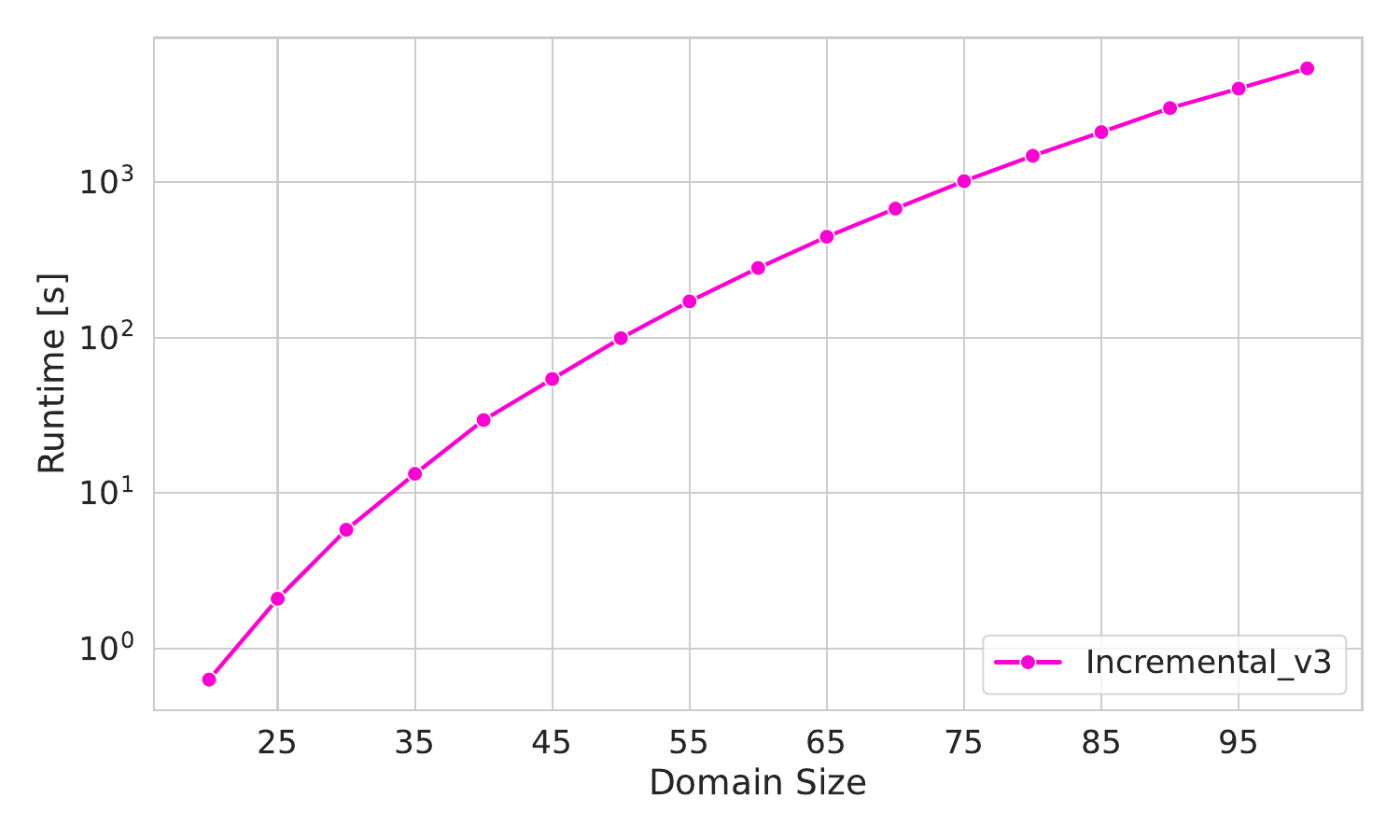}
        \caption{$\Phi_{4}$ for $n\le100$}
        \label{fig:lops_p2_large}
    \end{subfigure}
    \hfill
    \begin{subfigure}[b]{0.23\textwidth}
        \centering
        \includegraphics[width=\textwidth]{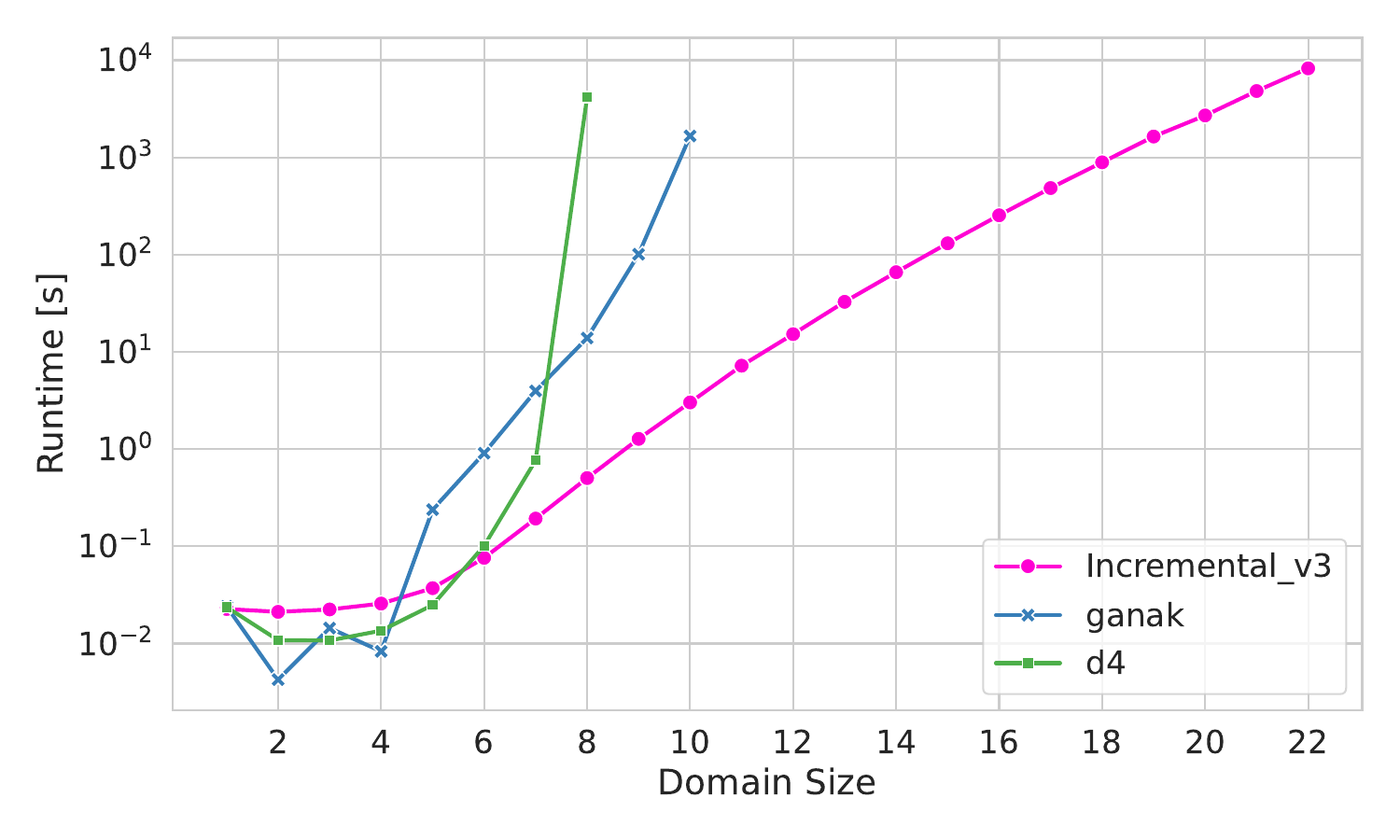}
        \caption{$\Phi_{5}$ for $n\le22$}
        \label{fig:lops_p4}
    \end{subfigure}
    \hfill
    \begin{subfigure}[b]{0.23\textwidth}
        \centering
        \includegraphics[width=\textwidth]{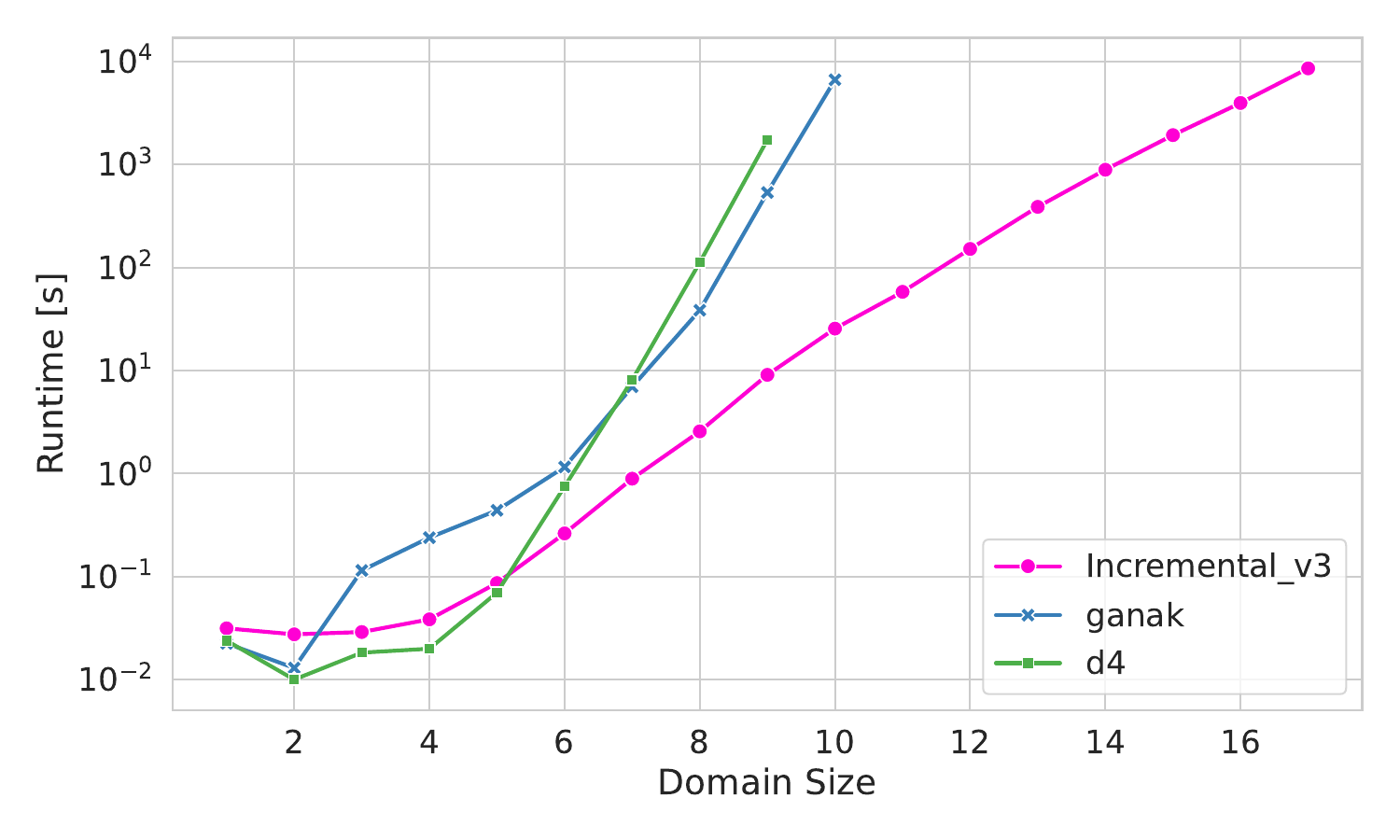}
        \caption{$\Phi_{6}$ for $n\le 17$}
        \label{fig:lops_p5}
    \end{subfigure}
    
    \caption{Runtime comparisons and scaling behavior for problems with $\loaxiom$ and $\succaxiom$}. %$\Phi_{3}$, $\Phi_{4}$, $\Phi_{5}$, $\Phi_{6}$ and $\Phi_{cards}$ }
    \label{fig:lops}

\Description{Runtime measurements for the problems with the linear order axiom and the successor axiom.
We compare Algorithm 3 to GANAK and d4.
The largest instances the propositional counters can solve are for domains of size 8-10. At the same time, the lifted approach goes much further for problems with at most two valid cells and still almost twice as far for problems with four or five valid cells.
The polynomial tendency of Algorithm 3 can also be observed.
}
    
\end{figure}

% The resulting runtimes can be found in Figure \ref{fig:execution}. In all cases, the logarithm of the running time exhibits a bend as the domain size $n$ grows. This behavior provides evidence that the runtime of our algorithm is indeed polynomial in $n$. Moreover, our algorithm scales better than weighted model counters when the domain size is large.

\section{Conclusion}
\label{sec:8-coclusion}
In this work, we investigated the boundary of domain-liftability for Weighted First-Order Model Counting over ordered domains.
We introduced a family of algorithms based on the domain recursion rule, which we dub \emph{IncrementalWFOMC}.

We first established that the two-variable fragment of first-order logic, extended with a single linear order axiom, is domain-liftable.
To support this, we presented the base version of IncrementalWFOMC (\cref{alg:iwfomc}), which computes the weighted model count over an ordered domain in time polynomial in the domain size.

Our experimental results, however, revealed that while domain-liftability guarantees polynomial asymptotic complexity, it does not strictly guarantee superior runtime in practice.
As observed in \cref{fig:ws2_small}, unoptimized lifted algorithms can sometimes be outperformed by state-of-the-art propositional weighted model counters on smaller domains.
Hence, while domain-liftability is essential for scaling to larger domains, practical usability often requires handling specific structural constraints, such as the successor relations, natively.

Addressing this, we extended the framework to support general successor relations, developing \cref{alg:iwfomc2}.
This version natively supports the linear order axiom along with its successor relations up to the $k$-th successor.
By avoiding the overhead of explicit successor encoding, the algorithm demonstrated significant practical superiority, scaling to domains orders of magnitude larger than those solvable by propositional approaches.

To explore the theoretical limits of tractability, we introduced a second linear order.
We proved that WFOMC for \fotwo with two linear order relations is \class{\#P_1}-hard.
This negative result also implies the intractability of other axioms that can encode linear orders, such as the acyclicity axiom on two distinguished relations.

To bridge the gap between these positive and negative results, we analyzed a \emph{hybrid} setting featuring one full linear order and access to the successor relation of a second, unknown linear order.
We proved that this setting is domain-liftable by providing \cref{alg:lo+succ} to solve it.
Therefore, the tractability boundary for ordered domains lies strictly beyond the setting where the immediate successor relation of the second linear order is known.

Notably, our findings align with established results on the decidability of finite satisfiability.
For instance, \citet{manuel10:unsat-fo2+2successor} proved that \fotwo with two successor relations is decidable, whereas
\fotwo with two linear order relations and their successor relations is undecidable.
Our results imply that WFOMC with only two successor relations is domain-liftable, while we establish that strengthening both successor axioms to linear order axioms leads to \class{\#P_1}-hardness.
Although the techniques for hardness proofs in the two problems are not identical, both essentially rely on encoding a grid and embedding a hard problem (e.g., the tiling problem) using first-order fragments.
A natural question arises: is there a unified way to transform the undecidability of finite satisfiability to the \class{\#P_1}-hardness of WFOMC, or conversely, mapping decidability to domain-liftablity?

Apart from the question above, future work should focus on refining the tractability boundary, either by characterizing the intractability threshold directly or, at least, by narrowing the identified gap.
Finally, given the practical utility of the IncrementalWFOMC family for combinatorial counting and probabilistic inference, further improvements, including an optimized implementation, should also be investigated.

\section*{Acknowledgments}
Jan T\'{o}th and Ond\v{r}ej Ku\v{z}elka were supported by the Czech Science Foundation project 23-07299S \emph{Statistical Relational Learning in Dynamic Domains}.
Kuncheng Zou and Yuanhong Wang were supported by National Natural Science Foundation of China (No.62506141).
V\'{a}clav K\r{u}la was supported by the Central Europe Leuven Strategic Alliance (CELSA) project \emph{Towards Scalable Algorithms for Neuro-Symbolic AI}.
Yuyi Wang was supported by the Natural Science Foundation of Hunan (Grant No. 2024JJ5128).

%%
%% The next line prints the references.
\bibliographystyle{plainnat} % Or another author-year BST like 'agsm' or 'apalike'
\bibliography{refs}    % Base name of your .bib file (without .bib)

%%
%% If your work has an appendix, this is the place to put it.
\appendix

\newpage
\section{Handling Unary Evidence via Cardinality Constraints}
\label{app:evidence-by-ccs}
\rev{
To solve WFOMC involving unary evidence (in the presence of the linear order axiom), the given evidence can be logically transformed into cardinality constraints \citep{wang24:wfoms}.

Given a set of ground unary literals, we can categorize domain elements based on their \emph{evidence type}, i.e., a set of literals on a single variable consistent with evidence specified for the domain element.
Note that the total number of evidence types is bounded by a constant with respect to $n$.
Each relation may appear in an evidence type positively, negatively or be missing.

\begin{example}
Consider the domain $\Delta = [3] = \{1,2,3\}$ and unary evidence $\epsilon = \{P(1), \neg Q(1), R(2)\}$.

For this domain, the evidence $\epsilon$ yields three distinct evidence types, one for each domain element.
Since $\evidence{1} = P(1) \land \neg Q(1)$, the evidence type of the domain element 1 is $\sigma^1 = \{P(x), \neg Q(x)\}$, or logically, $P(x) \land \neg Q(x)$.
Analogously, we have $\sigma^2 = \{R(x)\}$ since $\evidence{2} = R(2)$.
Finally, $\evidence{3} = \top$ and $\sigma^3 = \emptyset$.

Assuming $\preds{\epsilon} = \{P,Q,R\}$ is the entire vocabulary, there is a total of $3^{|\preds{\epsilon}|}=3^3$ possible evidence types.
\end{example}

Assume the evidence yields $m$ distinct evidence types, denoted as $\sigma^1, \sigma^2, \dots, \sigma^m$, and let $n_i$ represent the exact number of domain elements realizing each evidence type $\sigma^i$.
Introduce a fresh, auxiliary unary predicate $\xi^i$ for each distinct evidence type $\sigma^i$.
The input sentence is then conjoined with
$$\left(\bigwedge_{i\in[m]} (\forall x: \xi^i(x) \to \sigma^i(x))\right) \land \left(\forall x: \bigvee_{i\in[m]}\xi^i(x)\right) \land \left(\bigwedge_{1\le i<j\le n} (\forall x:\neg \xi^i(x) \lor \neg \xi^j(x))\right).$$
Every element in the domain must satisfy exactly one $\xi^i$ predicate.
If $\xi^i(x)$ holds true for an element, the corresponding evidence formula $\sigma^i(x)$ must hold true as well.
Now, the specific unary evidence can be replaced by a global cardinality constraint of $(|\xi^i| = n_i)$ for each evidence type.

Denote $\Gamma$ the original sentence, $\epsilon$ the unary evidence and $\Gamma'$ the new sentence. 
Then it holds that
$$\symbwfomc(\Gamma\land\epsilon,n,\weights) = \frac{1}{\binom{n}{n_1.n_2,\ldots,n_m}}\cdot\symbwfomc(\Gamma'\land\bigwedge_{i\in[m]}(|\xi^i|=n_i),n,\weights).$$
The scaling factor accounts for all possible symmetric partitions of the domain, since the evidence $\epsilon$ specifies exact domain elements having the evidence type $\sigma^i$, not just that $n_i$ elements satisfy it.

For complete technical details, complexity analysis, and a formal proof of soundness regarding this reduction, we refer the readers to \citet[Appendix A]{wang24:wfoms}.
}

\section{Encoding Successor Relations using Counting Quantifiers and Linear Order Axiom}
\label{app:successor-by-lo}
% \begin{itemize}
%   \item The smallest element in the order
%   \begin{equation}
%     \forall x \left( First(x) \leftrightarrow \forall y \ L(x,y) \right).
%   \end{equation}
%   \item The largest element in the order
%   \begin{equation}
%     \forall x \left( Last(x) \leftrightarrow \forall y \ L(y,x) \right).
%   \end{equation}
%   \item The successor relation of the order
%   \begin{equation}
%     \begin{aligned}
%       & \forall x (\lnot Last(x) \to \exists_{=1} y \ Succ_1(x,y)) \\
%       \land & \forall x (\lnot First(x) \to \exists_{=1} y \ Succ_1(y,x)) \\
%       \land & \forall x \forall y (Succ_1(x,y) \to L(x,y)),
%     \end{aligned}
%   \end{equation}
% \end{itemize}

% \begin{equation}
%     \begin{aligned}
%         &\forall x: \neg Perm(x,x) \land \\
%         &\forall x\exists_{=1} y: Perm(x,y) \land \forall y\exists_{=1} x: Perm(x,y) \land\\
%         &\forall x\forall y: Pred_1(x,y)\Rightarrow Perm(x,y)\land \\
%         &\forall x\forall y: Pred_1(x,y)\Rightarrow (x \le y)\land\\
%         &|Pred_1| = n-1,
%     \end{aligned}
%     \label{eq:old_pred1}
% \end{equation}

Suppose we have a linear order relation $\lopred$.
Let us consider a possible world $\omega^\lopred$ containing only ground atoms that define the linear order.
We may think of any successor relation of $\lopred$ as a subset of $\omega^\lopred$, and we may use \ctwo sentences to capture that particular subset.

\subsection{The Immediate Successor Relation}
Let us start by capturing the immediate successor relation $Succ_1$.%
\footnote{While we have already demonstrated how to express such a subset using \Cref{eq:succ1:first-last,eq:succ1:bijection-like,eq:succ1:left2right}, we repeat the construction here to keep the section self-contained.}
For example, if we work with the natural order $1 \leq 2 \leq 3 \leq \ldots \leq n$, we will want to capture the set $\{Succ_1(1, 2), Succ_1(2, 3), \ldots, Succ_1(n-1,n)\}\subset \omega^\lopred$.
We may express such a subset (for an arbitrary domain ordering) in the following way:
\begin{itemize}
    \item Find the first and the last element.
    \begin{equation}
    \label{eq:app:succ1-first-last}
        \begin{aligned}
            &(\forall x : First(x) \leftrightarrow \forall y: \ (x\lopred y)) \\
            \land\; &(\forall x : Last(x) \leftrightarrow \forall y: \ (y\lopred x))
        \end{aligned}
    \end{equation}

    \item Each element except the last one has exactly one successor, and each element except the first one has exactly one predecessor.
    \begin{equation}
    \label{eq:app:succ1-bijection-like}
    \begin{aligned}
        &(\forall x : \lnot Last(x) \to \exists^{=1} y: \ Succ_1(x,y)) \\
        \land\; &(\forall x : \lnot First(x) \to \exists^{=1} y: \ Succ_1(y,x))
    \end{aligned}
    \end{equation}

    \item The successor relation only goes along the edges $G(\lopred)$, i.e., it only goes from smaller elements to greater ones.
    \begin{equation}
    \label{eq:app:succ1-left2right}
        \forall x \forall y : Succ_1(x,y) \to (x < y)
    \end{equation}
\end{itemize}

Denote the conjunction of \Cref{eq:app:succ1-first-last,eq:app:succ1-bijection-like,eq:app:succ1-left2right} as $\Psi_{Succ_1}$.

\begin{lemma}
    Sentence $\Psi_{Succ_1}$ defines the binary relation $Succ_1$ as the (immediate) successor relation with respect to the linear order relation $\lopred$.
\end{lemma}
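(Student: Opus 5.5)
We must show that in every model of $\Psi_{Succ_1} \land \loaxiom(\lopred)$ over $[n]$, the interpretation of $Succ_1$ coincides with the immediate successor relation of the order defined by $\lopred$. Conversely, that relation together with the induced $First$ and $Last$ satisfies $\Psi_{Succ_1}$. Since $\lopred$ is a linear order, we may name the elements $a_1 < a_2 < \dots < a_n$. By \Cref{eq:app:succ1-first-last}, $First$ holds exactly at $a_1$ and $Last$ exactly at $a_n$, because in a total order the unique minimum is the only element below everything, and similarly for the maximum. The intended relation is $E = \{(a_i, a_{i+1}) : i \in [n-1]\}$.

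\emph{Soundness.} We check that $Succ_1 = E$ satisfies \Cref{eq:app:succ1-bijection-like,eq:app:succ1-left2right}. Every $a_i$ with $i<n$ has the unique out-neighbour $a_{i+1}$. Every $a_i$ with $i>1$ has the unique in-neighbour $a_{i-1}$. Each pair $(a_i, a_{i+1})$ satisfies $a_i < a_{i+1}$.

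\emph{Completeness.} This is the main step. Suppose $Succ_1$ satisfies the constraints. Define $\sigma(a_i)$ as the unique $y$ with $Succ_1(a_i,y)$ for $i<n$. By \Cref{eq:app:succ1-left2right}, $\sigma(a_i) > a_i$, so the index of $\sigma(a_i)$ exceeds $i$. We prove $\sigma(a_i) = a_{i+1}$ by induction downward from $i=n-1$, using a counting argument instead.
\begin{itemize}
\item By \Cref{eq:app:succ1-left2right}, no pair $(a_n, y)$ is in $Succ_1$, since no $y$ satisfies $a_n < y$. So $a_n$ has no out-edge and $Succ_1$ has exactly $n-1$ edges, one from each $a_i$ with $i<n$.
\item The in-degree constraint gives each $a_j$ with $j>1$ exactly one in-edge, which accounts for exactly $n-1$ edges. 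Hence $a_1$ has in-degree $0$, and $\sigma$ is a bijection from $\{a_1,\dots,a_{n-1}\}$ onto $\{a_2,\dots,a_n\}$.
\item A bijection $\sigma$ with $\mathrm{idx}(\sigma(a_i)) > i$ must be $i \mapsto i+1$. Induct from the top: $a_n$ has a preimage, and only $a_{n-1}$ can map there, since every smaller $a_i$ has its unique image elsewhere once the remaining targets are assigned. A cleaner version: summing, $\sum_{i=1}^{n-1}(\mathrm{idx}(\sigma(a_i)) - i) = \sum_{j=2}^{n} j - \sum_{i=1}^{n-1} i = n-1$, and each term is at least $1$, so every term equals $1$.
\end{itemize}
Thus $Succ_1 = E$.

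The only delicate point is the edge count, specifically ruling out an extra edge leaving $Last$. That edge is excluded by strictness in \Cref{eq:app:succ1-left2right}, which is also why the degree sums match exactly. The sum-of-differences argument then finishes the proof without any case analysis.
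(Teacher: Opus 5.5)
Your proof is correct, but it takes a different route from the paper. The paper argues locally, from left to right. It first fixes $Succ_1(1,2)$: if the unique out-edge of $1$ skipped $2$, vertex $2$ would have no admissible predecessor. It similarly fixes $Succ_1(n-1,n)$. Then it assumes a long edge $Succ_1(i,i+k)$ with $k\ge 2$ and derives a contradiction by comparing the out-edges available among the vertices left of $i$ with the in-edges still required by those vertices and by $i$ and $i+1$. That is essentially a pigeonhole count on a prefix, and the paper states it only informally.

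You instead argue globally. Strictness rules out any out-edge of $a_n$. Degree counting then makes the out-neighbour map $\sigma$ a bijection $\{a_1,\dots,a_{n-1}\}\to\{a_2,\dots,a_n\}$ with strictly increasing index. Finally, the displacement sum $\sum_i(\mathrm{idx}(\sigma(a_i))-i)=n-1$ has $n-1$ terms, each at least $1$, so every displacement equals $1$.

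What your approach buys:
\begin{itemize}
\item It handles the boundary and general cases uniformly, with no case analysis.
\item It is fully rigorous where the paper's general step is only sketched.
\item It also checks the converse, that the intended relation together with the induced $First$ and $Last$ satisfies $\Psi_{Succ_1}$. This matters for counting, since each linear order must extend to exactly one model.
\end{itemize}
The paper's approach identifies the edges position by position without first establishing global bijectivity.

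One presentational fix: drop your first ``induct from the top'' sketch. Its justification (``every smaller $a_i$ has its unique image elsewhere once the remaining targets are assigned'') is circular as written. If you want that induction, the correct form is as follows. $\sigma(a_{n-1})=a_n$ is forced, because $a_n$ is the only element above $a_{n-1}$. Injectivity then leaves only $a_{n-1}$ as the image of $a_{n-2}$, and so on downward.
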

\begin{proof}
    Without loss of generality, let us only consider the natural ordering $1 \leq 2 \leq \ldots \leq n$ for the domain $[n]$.
    When arguing about particular atoms of $Succ_1$ being true or false, let us think in terms of the graph $G(Succ_1)$ and talk about edges.
    Let us also think of the domain (the vertices) as a sequence ordered from left to right in ascending order.
    Due to \Cref{eq:app:succ1-left2right}, the edges of $G(Succ_1)$ can only go from left to right.
    
    We start with the edge cases.
    For the first element 1 (i.e., the atom $First(1)$ is true), there must be exactly one outgoing edge (due to \Cref{eq:app:succ1-bijection-like}).
    Suppose that edge leads to a vertex $j\ge3$.
    Then there cannot be any incoming edge to the vertex 2 (also due to \Cref{eq:app:succ1-bijection-like}), which is a contradiction.
    Hence, we must have $Succ_1(1,2)$.
    Similarly, for the last element $n$, as there must be exactly one outgoing edge from $n-1$ and $n$ is a necessary target, $Succ_1(n-1,n)$ holds.

    For a vertex at a general position in the sequence, suppose we have the edge $Succ_1(i,i+k)$ where $k\ge2$.
    Then there must be an edge outgoing from one of the $\{2,\ldots,i-1\}$ vertices into the vertex $i+1$.
    There must be another edge going from the same set $\{2,\ldots,i-1\}$ to the vertex $i$.
    Hence, there are $i-3$ edges left to connect the elements $\{2,\ldots,i-1\}$ in a manner consistent with \Cref{eq:app:succ1-bijection-like}.
    However, that is not possible.
    Therefore, we must have the edge $Succ_1(i,i+1)$.
\end{proof}

Note that there are other ways to capture the immediate successor, especially if we also allow cardinality constraints, e.g.,
\begin{equation}
\label{eq:app:old_succ1}
    \begin{aligned}
        &(\forall x: \neg Perm(x,x)) \\
        \land\;&(\forall x\exists^{=1} y: Perm(x,y))\\
        \land\; &(\forall y\exists^{=1} x: Perm(x,y)) \\
        \land\; &(\forall x\forall y: Succ_1(x,y)\to Perm(x,y)) \\
        \land\; &(\forall x\forall y: Succ_1(x,y)\to (x \lopred y))\\
        \land\; &(|Succ_1| = n-1).
    \end{aligned}
\end{equation}
The relation $Perm/2$ must be a derangement (a bijection without fixed points) with exactly $n-1$ \emph{left-to-right} transitions.
Hence, $Perm$ coincides with the previously defined cyclic successor relation $CySucc$, and since $Succ_1$ is merely a copy of $Perm$ excluding the edge $Perm(n,1)$, the successor relation is properly defined.
The encoding from \Cref{eq:app:old_succ1} may lead to more efficient computation since it only introduces one new predicate $Perm$ instead of two, i.e., $First$ and $Last$.
However, from a theoretical standpoint, both encodings increase runtime only by a constant factor with respect to the domain size.

\subsection{A General Successor Relation}
Now, let us try to generalize $\Psi_{Succ_1}$ above to the case of the $k$-th successor relation ($Succ_k$), i.e., for a constant $k$, we want to identify pairs of elements (of the linear order) that have exactly $(k-1)$ elements in between.

In this case, we will require more auxiliary relations than just two.
Let $Pos_j(x)$ denote that the element $x$ is at the $j$-th position in the sequence.
Also consider $k$ distinct colors denoted by unary relations $C_1, C_2,\ldots, C_k$.
Last but not least, we will leverage the relation $Succ_1$, which we already consider to be defined using one of the encodings above.
Now, we proceed as follows:
\begin{itemize}
    \item We start by identifying the first $k$ and the last $k$ elements of the sequence.
    \begin{equation}
    \label{eq:app:succk-pos}
        \begin{aligned}
            &\forall x : \left( Pos_1(x) \leftrightarrow \forall y \ (x\lopred y) \right)\\
            \land\;&\forall x : \left( Pos_n(x) \leftrightarrow \forall y \ (y\lopred x) \right)\\
            \land\;&\bigwedge_{i=1}^{k-1} \forall x \forall y : Pos_{i}(x) \land Succ_1(x, y) \to Pos_{i+1}(y)\\
            \land\;&\bigwedge_{n-k+2}^{n}\forall x \forall y : ( Pos_{i}(x) \land Succ_1(y, x)) \to Pos_{i-1}(y)
        \end{aligned}
    \end{equation}

    \item Next, we color each element such that the first element receives the first color, the second element the second color, and so on, until the $k$-th element with the $k$-th color. Then, the sequence of colors repeats. %it holds $C_1(1), C_2(2),\ldots,C_k(k),C_1(k+1),C_2(k+2),\ldots,C_k(2k),\ldots$
    \begin{equation}
    \label{eq:app:succk-colors}
    \begin{aligned}
        &\left(\forall x : \bigvee_{i=1}^{k} C_i(x)\right)\\
        \land\;&\bigwedge_{1\leq i < j \leq k}(\forall x : \lnot C_i(x) \lor \lnot C_j(x))\\
        \land\;&(\forall x : Pos_1(x) \to C_1(x))\\
        \land\;&\bigwedge_{i=1}^{k-1}(\forall x \forall y : (C_i(x) \land Succ_1(x, y)) \to C_{i+1}(y))\\
        \land\;&(\forall x \forall y : (C_k(x) \land Succ_1(x, y)) \to C_1(y))
    \end{aligned}
\end{equation}

    \item The successor relation must again only go from left to right, and only elements of the same color can be connected.
    \begin{equation}
    \label{eq:app:succk-left2right}
        \begin{aligned}
            &(\forall x \forall y : Succ_k(x, y) \to (x < y))\\
            \land\; &(\forall x \forall y : Succ_k(x, y) \to \bigvee_{i=1}^{k} \left( C_i(x) \land C_i(y) \right))
        \end{aligned}
    \end{equation}

    \item Finally, we again ensure the \emph{bijection-like} behavior excluding the first $k$ or the last $k$ elements when appropriate.
    \begin{equation}
    \label{eq:app:succk-bijection-like}
        \begin{aligned}
            &\left(\forall x : \left(\bigwedge_{i=n-k+1}^n \lnot Pos_{i}(x)\right) \to \exists^{=1} y \ Succ_k(x,y)\right)\\
            \land\;&\left(\forall x : \left(\bigwedge_{i=1}^k \lnot Pos_{i}(x)\right) \to \exists^{=1} y \ Succ_k(y,x)\right)
        \end{aligned}
    \end{equation}
\end{itemize}

Denote the conjunction of \Cref{eq:app:succk-pos,eq:app:succk-colors,eq:app:succk-left2right,eq:app:succk-bijection-like} as $\Psi_{Succ_k}$.
\begin{lemma}
    Sentence $\Psi_{Succ_k}$ defines the binary relation $Succ_k$ as the $k$-th successor relation with respect to the linear order relation $\lopred$.
\end{lemma}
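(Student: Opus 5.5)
Without loss of generality I will fix the natural ordering $1\lopred 2\lopred\cdots\lopred n$ (the argument is invariant under relabelling, as in \cref{lemma:n-factorial}). I will show that in every model of $\Psi_{Succ_k}$ we have $Succ_k(x,y)$ if and only if $y=x+k$. By the preceding lemma, $Succ_1$ is already the immediate successor relation, so I can use it freely. The proof proceeds in three steps.

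\textbf{Step 1: the auxiliary unary predicates are forced.} From \Cref{eq:app:succk-pos}, $Pos_1$ holds exactly at $1$ and $Pos_n$ exactly at $n$. Propagating along $Succ_1$ forward gives $Pos_i(i)$ for $i\le k$, and propagating backward gives $Pos_i(i)$ for $i\ge n-k+1$. These are only implications, so I also need the converse: no other element carries these labels. If the intended reading is that extra $Pos$ labels are admissible, this must be handled by adding equivalences; I will first check whether the counting constraints already rule such labels out. Similarly, \Cref{eq:app:succk-colors} gives, by induction along $Succ_1$, that element $j$ has exactly colour $C_{((j-1)\bmod k)+1}$; mutual exclusivity and totality make this colouring unique.

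\textbf{Step 2: $Succ_k$ is an injective partial matching inside each colour class.} By \Cref{eq:app:succk-left2right}, every $Succ_k$-edge goes from left to right and joins two elements of the same colour, i.e.\ $y\equiv x \pmod k$ and $y>x$, so $y\ge x+k$. By \Cref{eq:app:succk-bijection-like}, every $x\le n-k$ has exactly one outgoing edge and every $y\ge k+1$ has exactly one incoming edge. For the remaining elements I will argue directly. An element $x>n-k$ cannot have an outgoing edge, since it would need a target $\ge x+k>n$. An element $y\le k$ cannot have an incoming edge, since its source would have to be $\le y-k\le 0$. So these elements have no edges without relying on the $Pos$ labels, which is convenient.

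\textbf{Step 3: edges are forced to be $x\mapsto x+k$.} Fix a colour class $c, c+k, c+2k,\dots, c+tk$. Restricted to this class, $Succ_k$ is a set of left-to-right edges in which every element except the last has out-degree exactly one and every element except the first has in-degree exactly one. This is exactly the situation of the $Succ_1$ lemma on a chain of length $t+1$, so I will reuse its argument. A cleaner version is a counting argument: the first $j$ elements of the chain must send exactly $j$ edges, and all of them must land in elements $2,\dots,j+1$ of the chain. Otherwise the element $j+1$ cannot receive its required in-edge from the left, since any edge into it comes only from the first $j$ elements. By induction on $j$, this forces the edge from the $j$-th element to go to the $(j+1)$-th element, i.e.\ $Succ_k(x,x+k)$.

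\textbf{Main obstacle.} The delicate point is Step 1's converse: ensuring that the $Pos_i$ predicates hold exactly at the intended positions, so that \Cref{eq:app:succk-bijection-like} exempts exactly the right elements. Step 2 shows the exemptions are harmless in one direction. Spurious $Pos$ labels would, however, wrongly exempt elements from the counting constraints. My plan is to argue that a spurious $Pos_i$ label is impossible because it propagates along $Succ_1$ to a spurious $Pos_1$ or $Pos_n$, which contradicts the biconditionals for those predicates. If that argument fails, I will read the $Pos$ clauses as biconditionals, which is evidently the intent.
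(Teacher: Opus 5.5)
Your decomposition matches the paper's proof: the colours split the domain into $k$ chains, \Cref{eq:app:succk-left2right} confines $Succ_k$-edges to left-to-right edges inside a chain, and the $Succ_1$ argument is rerun on each chain. Your counting induction in Step~3 is a correct and cleaner version of that argument. The gap is the converse in Step~1, and your primary plan for it cannot work. The clauses of \Cref{eq:app:succk-pos} push labels \emph{away} from the endpoints. For $i<k$, $Pos_i$ forces $Pos_{i+1}$ on the $Succ_1$-successor, and in the end block $Pos_i$ forces $Pos_{i-1}$ on the predecessor. So no spurious label ever propagates back to $Pos_1$ or $Pos_n$, and $Pos_k$ and $Pos_{n-k+1}$ occur in no premise at all. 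The degree constraints do not exclude such labels either. Worse, spurious labels genuinely break the statement.

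Take $k=2$, $n=6$ with the intended $\lopred$, $Succ_1$, colours and $Pos$ labels. Add $Pos_5(2)$ and $Pos_2(4)$, and set $Succ_2=\{(1,3),(3,5),(4,6)\}$. Only $Pos_1$ and $Pos_6$ occur in premises of \Cref{eq:app:succk-pos}, and all edges go left to right inside a colour class. The elements not exempt from the out-degree clause ($1,3,4$) have out-degree exactly one. The elements not exempt from the in-degree clause ($3,5,6$) have in-degree exactly one. This is therefore a model of $\Psi_{Succ_2}$ in which $Succ_2(2,4)$ is false. More generally, for $n\ge 2k$ and any $a\le n-k$, adding $Pos_{n-k+1}(a)$ and $Pos_k(a+k)$ and deleting the edge $(a,a+k)$ gives such a model.

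So, with \Cref{eq:app:succk-pos} read literally, the lemma is false. Your fallback of reading the $Pos$ clauses as biconditionals is therefore a necessary repair of the sentence, not an optional reading. The paper's one-line proof tacitly assumes this repair. A repair that stays in \fotwo{} replaces the propagation implications by $\forall y\colon Pos_{i+1}(y)\leftrightarrow\exists x\colon(Pos_i(x)\land Succ_1(x,y))$ for $i<k$. For $n-k+2\le i\le n$ it uses $\forall y\colon Pos_{i-1}(y)\leftrightarrow\exists x\colon(Pos_i(x)\land Succ_1(y,x))$. Alternatively, add $|Pos_i|=1$ for each $i$. Induction from the exact $Pos_1$ and $Pos_n$ then pins every label to its intended element, and your Steps~2 and~3 go through unchanged.

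For the WFOMC use of the encoding, uniqueness of all auxiliary predicates matters, not just uniqueness of $Succ_k$. A lone spurious $Pos_k$ on a middle element leaves $Succ_k$ intact, since the degree count in its chain still forces in-degree one. It nevertheless creates an extra model, so the unrepaired encoding would overcount.
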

\begin{proof}
    The key observation is that, due to \Cref{eq:app:succk-left2right}, we only connect domain elements of the same color that have clearly $k-1$ elements in between.
    Thus, we have $k$ identical subproblems, each of which can be proven by the same argument we used for $Succ_1$.
\end{proof}

\section{Encoding a Grid using General Successor Relations}
\label{app:grid-by-successors}

The basic idea of encoding a $k\times n$ grid is very similar to the encoding in the proof of \cref{lemma:grid-2lo}.
Let there be a linear order relation $\lopred$.
We will again use the immediate successor relation to represent the vertical adjacency $V/2$.
For the horizontal adjacency $H/2$, however, we will leverage the information of a bounded \emph{height} $k$ and use the $k$-th successor relation.
Both successor relations will be defined with respect to the same linear order as $\lopred$.
The corner cases will again be handled by introducing auxiliary predicates $Left/1$, $Right/1$, $Top/1$, and $Bottom/1$.

\begin{itemize}
    \item First, start by denoting the first and last elements, respectively.
    \begin{equation}
    \label{eq:app:grid-first-last}
        \begin{aligned}
            &(\forall x : First(x) \leftrightarrow \forall y: \ (x\lopred y)) \\
            \land\; &(\forall x : Last(x) \leftrightarrow \forall y: \ (y\lopred x))
        \end{aligned}
    \end{equation}

    \item Then, we can define the top and bottom elements in the grid using the $k$-th successor relation.
    \begin{equation}
    \label{eq:app:grid-top-bottom}
        \begin{aligned}
            & (\forall x: First(x) \to Top(x)) \\
            \land\; &(\forall x\forall y: (Top(x) \land Succ_k(x,y)) \to Top(y)) \\
            \land\; &(\forall x: Last(x) \to Bottom(x)) \\
            \land\; &(\forall x\forall y: (Bottom(x) \land Succ_k(y,x)) \to Bottom(y))
        \end{aligned}
    \end{equation}

    \item The cardinality constraints are used to ensure that there are exactly $n$ elements in the top and bottom rows.
    \begin{equation}
    \label{eq:app:grid-n-cols}
        (|Top| = n) \land (|Bottom| = n) 
        % \begin{aligned}
        %     &(|Top| = n) \\
        %     \land\; &(|Bottom| = n)      
        % \end{aligned}
    \end{equation}

    \item With the top and bottom elements defined, we can easily define the vertical adjacency relation $V/2$.
    \begin{equation}
    \label{eq:app:grid-vertical}
        \begin{aligned}
            \forall x\forall y: V(x,y) \leftrightarrow \left(\neg Top(y)\land \neg Bottom(x) \land Succ_1(x,y)\right).
        \end{aligned}
    \end{equation}

    \item The left and right boundaries of the grid can be defined similarly.
    \begin{equation}
    \label{eq:app:grid-left-right}
        \begin{aligned}
            &(\forall x: First(x) \to Left(x))\\
            \land\;&(\forall x\forall y: (Left(x) \land V(x,y)) \to Left(y))\\
            \land\;&(\forall x: Last(x) \to Right(x)) \\
            \land\;&(\forall x\forall y: (Right(x) \land V(y,x)) \to Right(y)) \\
            \land\;&(|Left| = k) \land (|Right| = k)\\
            % \land\;&(|Right| = k)
        \end{aligned}
    \end{equation}

    \item Finally, the horizontal adjacency relation $H/2$ can be defined as well.
    \begin{equation}
    \label{eq:app:grid-horizontal}
        \begin{aligned}
          \forall x\forall y: H(x,y) \leftrightarrow \left(\neg Left(y)\land \neg Right(x) \land Succ_k(x,y)\right).
            \end{aligned}
    \end{equation}
\end{itemize}

See \cref{fig:app:grid-kxn} for a demonstration of how the grid encoding works.
The blue edges represent the immediate successor relation, of which we omit the \emph{diagonal} edges when defining $V/2$ in \Cref{eq:app:grid-vertical}.
The red edges then show the $k$-th successor relation, which coincides with the horizontal adjacency $H/2$.

% Denote the conjunction of sentences in \Cref{eq:app:grid-first-last,eq:app:grid-top-bottom,eq:app:grid-n-cols,eq:app:grid-vertical,eq:app:grid-left-right,eq:app:grid-horizontal} by $\Psi_{k,n-grid}$.
% Note that $\Psi_{k,n-grid}$ is in the \fotwo{} fragment with cardinality constraints.
% By \cref{lemma:c2+cc}, we can reduce \wfomc over $\Psi_{k,n-grid}$ to \wfomc over \fotwo.
% By \Cref{thm:general_linear_order}, the \wfomc{} of the sentence with an additional grid constraint can be computed in time polynomial in the domain size $n$ if the height $k$ of the grid is bounded by a constant.

\begin{figure}[htb]
  \centering
        \begin{tikzpicture}
            \tikzstyle{roundnode}=[circle, draw, inner sep=0pt, minimum size=2mm]
            
            % Parameters
            \def\n{8}   % number of columns
            \def\k{4}   % number of rows
            \def\xstep{1.0}
            \def\ystep{1.0}
            
            % Draw nodes
            \foreach \i in {1,...,\n}{
                \foreach \j in {1,...,\k}{
                      \node[roundnode] (v\i\j) at ({\xstep*\i},{-\ystep*\j}) {};
                }
            }
            
            % Horizontal red arrows
            \foreach \j in {1,...,\k}{
                \foreach \i [evaluate=\i as \ip using int(\i+1)] in {1,...,\numexpr\n-1}{
                      \path[->, red] (v\i\j) edge (v\ip\j);
            }
            }
            
            % Vertical blue arrows (upward)
            \foreach \i in {1,...,\n}{
                \foreach \j [evaluate=\j as \jp using int(\j+1)] in {1,...,\numexpr\k-1}{
                      \path[<-, blue] (v\i\jp) edge (v\i\j);
            }
            }
            
            % Diagonal blue arrows (upright)
            \foreach \i [evaluate=\i as \j using int(\i+1)] in {1,...,\numexpr\n-1}{
                \path[->, blue] (v\i\k) edge (v\j1);
            }
            
            % Braces and labels
            \draw[decorate, decoration={brace, amplitude=6pt}] 
            ($(v11)+(-0.1,0.2)$) -- ($(v\n1)+(0.1,0.2)$)
            node[midway, yshift=10pt]{$n$};
            \draw[decorate, decoration={brace, amplitude=6pt}] 
            ($(v1\k)+(-0.2,-0.1)$) -- ($(v11)+(-0.2,-0.1)$)
            node[midway, xshift=-12pt]{$k$};
            
            % Border names
            \node[] (Top) at (4.5,0) {\small $Top$} ;
            \node[] (Bottom) at (4.5,-4.5) {\small $Bottom$} ;
            \node[] (Left) at (-0.3,-2.63) {\small $Left$};
            \node[] (Right) at (8.7,-2.63) {\small $Right$};
              
        \end{tikzpicture}
\caption{A $k\times n$ grid expressed using the immediate successor relation (blue) and the $k$-th successor relation (red).}
\label{fig:app:grid-kxn}
\Description{The figure shows domain elements represented as vertices formed into a grid with k rows and n columns. The grid is enforced by a vertical adjacency, representing the immediate successor relation, going downwards between lines and diagonally from the last line to the first, and a horizontal adjacency, going left-to-right between columns, representing the k-th successor relation.
The relations are used to define the Top and Bottom rows, as well as the Left and Right columns.}
\end{figure}

\section{Markov Logic Networks}
\label{app:mlns}

Markov Logic Networks \citep{richardson06:mlns}, often abbreviated as MLNs, are a popular model in statistical relational learning offering a straightforward integration of first-order logic and probability distributions.
An MLN $\Phi$ is a set of weighted first-order logic formulas (possibly with free variables) with weights taking on values from the real domain or infinity:
$$\Phi = \Set{(w_1, \alpha_1), (w_2, \alpha_2), \ldots, (w_k, \alpha_k)}$$
Given a domain $\Delta$, the MLN defines a probability distribution over possible worlds such as
\begin{align*}
    % Pr_{\Phi, \Delta}(\omega) = \frac{\llbracket \omega \models \Phi_{\infty} \rrbracket}{Z} \exp\left(\sum_{(w_i, \alpha_i) \in \Phi_{\real}} w_i \cdot N(\alpha_i, \omega) \right)
    Pr_{\Phi, \Delta}(\omega) = 
\begin{cases} 
    \frac{1}{Z} \exp\left(\sum_{(w_i, \alpha_i) \in \Phi_{\real}} w_i \cdot N(\alpha_i, \omega) \right) & \text{if } \omega \models \Phi_{\infty}, \\
    0 & \text{otherwise.}
\end{cases}
\end{align*}
where $\Phi_{\real}$ denotes formulas with real-valued weights (soft constraints), $\Phi_{\infty}$ denotes formulas with infinity-valued weights (hard constraints), $Z$ is the normalization constant ensuring valid probability values, and $N(\alpha_i, \omega)$ is the number of substitutions to free variables of $\alpha_i$ that produce a grounding of those free variables that is satisfied in $\omega$.
The distribution formula is equivalent to that of a Markov Random Field \citep{koller09:pgms}.
Hence, an MLN, along with a domain, defines a probabilistic graphical model, and inference in the MLN is thus inference over that model.

Inference (and also learning) in MLNs is reducible to \wfomc{} \citep{broecketal14:wfomc-skolem}.
For each $(w_i, \alpha_i(\bm{x}_i)) \in \Phi_\real$, introduce a new formula $\forall \bm{x}_i: \xi_i(\bm{x}_i) \leftrightarrow \alpha_i(\bm{x}_i)$, where $\xi_i$ is a fresh predicate, $w(\xi_i) = \exp(w_i), \overline{w}(\xi_i) = 1$ and $w(Q) =  \overline{w}(Q) = 1$ for all other predicates $Q$.
Hard formulas are added to the theory as additional conjuncts.
Denoting the new theory by $\Gamma$ and a query by $\phi$, we can perform the inference as
\begin{align*}
    Pr_{\Phi, \Delta}(\phi) = \frac{\wfomc(\Gamma \wedge \phi, |\Delta|, w, \negw)}{\wfomc(\Gamma, |\Delta|, w, \negw)}.
\end{align*}

\begin{example}
    Recall $\Phi_{hmm}$ from \cref{sec:7-experiments}, i.e.,
\begin{align*}
    \Phi_{hmm} = \{ \;
    &(\infty, \forall x: \neg (Sn(x) \land Rn(x))), \\
    &(\infty, \forall x: Sn(x) \lor Rn(x)), \\
    &(\infty, (Sn(x) \land Succ_1(x,y)) \to Rn(x)), \\
    &(0.5, S(x) \to Rn(x)), \\
    &(1.0, (S(x) \land Succ_1(x,y))\to S(y)), \\
    &(0.4, (S(x) \land Succ_2(x,y))\to S(y)), \\
    &(0.1, (S(x) \land Succ_3(x,y))\to S(y))\; \}.
\end{align*}

Transforming $\Phi_{hmm}$ into a theory processable by WFOMC gives us
\begin{align*}
\Gamma_{hmm} = \; 
    &(\forall x:\neg (Rn(x) \land Sn(x)) \land(Rn(x) \lor Sn(x))) \\
    \land\; &(\forall x \forall y: (Sn(x)\land Succ_1(x,y)) \to Rn(x)) \\
    \land\; &(\forall x \forall y: \xi_0(x) \leftrightarrow (S(x) \to Rn(x)))\\
    \land\; &(\forall x \forall y: \xi_1(x) \leftrightarrow (S(x) \land Succ_1(x,y)) \to S(y))\\
    \land\; &(\forall x \forall y: \xi_2(x) \leftrightarrow (S(x) \land Succ_2(x,y)) \to S(y))\\
    \land\; &(\forall x \forall y: \xi_3(x) \leftrightarrow (S(x) \land Succ_3(x,y)) \to S(y)),
\end{align*}
with weights set such that $w(\xi_i) = \exp(w_i)$ and $\negw(\xi_i)=1$ for all $i\in[4]$
and $w(P) = \negw(P) = 1$ for all other predicates $P\in\preds{\Gamma_{hmm}}$.
Then the partition function of $\Phi_{hmm}$ over the domain $[n]$ is equal to
$$\symbwfomc(\Gamma_{hmm}\land Linear(\le, Succ_1, Succ_2, Succ_3), n, \weights).$$
\end{example}

\section{Experiments Details}
In this section, we provide details about our experiments omitted in \cref{sec:7-experiments}.

\subsection{Combinatorics Problems}
When performing experiments on the subset of problems from the MATH dataset \citep{hendrycks21:math-dataset}, we have scaled the problems by factors of 2 and 3 to clearly demonstrate the superiority of \cref{alg:iwfomc2} over all other tested approaches.
We provide \cref{tab:math_domain_sizes} so that any interested reader may assess what problem sizes (denoted $n$), as well as how complex problems (in terms of the number of valid cells $p$), we have been solving.
A natural-language specification for each problem is available in our source code repository.%
\footnote{Available at \url{https://github.com/jan-toth/wfomc-over-ordered-domains}.}

Note problem 193, which we chose not to scale up for our experiments.
That is a choice based entirely on the problem semantics, since the problem asks ``How many nine-digit numbers can be made using each of the digits 1 through 9 exactly once, with the digits alternating between odd and even?''
In theory, we could increase the number of digits considered (i.e., change the base of the numbers constructed). However, we chose not to, and we still report performance on 27 combinatorics problems.

\begin{table}[htb]
    \centering
    \caption{Domain sizes and number of valid cells for each encoded MATH problem. Problem \textbf{193} (marked *) could not be scaled up.}
    \label{tab:math_domain_sizes}
    
    % Three groups of 3 columns (ID, Domain, Cells) separated by whitespace
    \begin{tabular}{ccc @{\hskip 0.4in} ccc @{\hskip 0.4in} ccc}
        \toprule
        \textbf{ID} & \textbf{$n$} & \textbf{$p$} & 
        \textbf{ID} & \textbf{$n$} & \textbf{$p$} & 
        \textbf{ID} & \textbf{$n$} & \textbf{$p$} \\
        \midrule
        
        7   & 8  & 6 & 53  & 11 & 7 & 175 & 7  & 3 \\
        8   & 10 & 6 & 80  & 6  & 3 & \textbf{193*} & \textbf{9} & \textbf{2} \\ % Emphasized
        23  & 8  & 4 & 82  & 6  & 3 & 231 & 8  & 4 \\
        29  & 7  & 5 & 96  & 6  & 3 & 240 & 9  & 9 \\
        31  & 4  & 3 & 99  & 6  & 3 & 243 & 7  & 2 \\
        33  & 8  & 4 & 102 & 8  & 9 & 269 & 8  & 4 \\
        39  & 8  & 4 & 130 & 11 & 9 & 284 & 8  & 2 \\
        42  & 9  & 4 & 137 & 8  & 3 & 292 & 7  & 4 \\
        45  & 8  & 4 & 149 & 10 & 2 & 309 & 5  & 3 \\
        47  & 7  & 2 &     &    &   &     &    &   \\ % Last row for the first group only
        
        \bottomrule
    \end{tabular}
\end{table}

\subsection{Proofs of Matchings to OEIS}
\begin{lemma}
Given a fixed linear ordering, $\symbfomc(\Phi_3, n)$ equals the $n$-th term of A000670, which is the number of ways to partition $n$ elements into disjoint subsets and arrange them into a sequence.
\end{lemma}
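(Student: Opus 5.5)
Fix the linear order $\lopred$ to be the natural order $1\le 2\le\dots\le n$. Then a model of $\Phi_3$ (with $\succaxiom(S)$ implicitly conjoined) consists of a choice of $S$ and a choice of $B$. The relation $S$ is the successor relation of some linear order, i.e.\ a Hamiltonian directed path $\sigma_1\to\sigma_2\to\dots\to\sigma_n$, where $\sigma$ is a permutation of $[n]$. The two conjuncts of $\Phi_3$ say that $B\subseteq S$ and that every $S$-edge $(x,y)$ with $x\le y$ lies in $B$. The remaining $S$-edges, the \emph{descents} with $\sigma_i>\sigma_{i+1}$, may each independently be in $B$ or not. No other atoms are constrained. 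Hence, for fixed $\lopred$,
\begin{equation*}
\symbfomc(\Phi_3,n)=\sum_{\sigma\in S_n} 2^{\mathrm{des}(\sigma)},
\end{equation*}
where $\mathrm{des}(\sigma)$ is the number of descents of $\sigma$. The first step is to verify this characterisation carefully. Reflexive atoms are harmless: $S(x,x)$ is false because $S$ is a path, so $B(x,x)$ is false as well.

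The second step is to show that $\sum_{\sigma}2^{\mathrm{des}(\sigma)}$ equals the number of ordered set partitions (Fubini number) $a(n)$ of A000670. I would use a direct bijection. Given $\sigma$ and a subset $D'$ of its descent positions (those edges \emph{not} in $B$, say), cut the word $\sigma_1\cdots\sigma_n$ at every ascent and at every descent in $D'$. The resulting consecutive blocks form a sequence of nonempty sets. Conversely, an ordered set partition $(A_1,\dots,A_k)$ determines a unique pair: write each block in decreasing order and concatenate. Within a block every adjacent pair is a descent that is not cut. At the boundaries between blocks we cut; such a boundary is either an ascent, which is forced, or a descent that belongs to $D'$. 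This gives the inverse map.

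The key check is that the map from pairs $(\sigma,D')$ to ordered partitions is well defined and injective. Forward, every block is decreasing, because ascents are always cut. So each block is recovered as a set, with its internal order forced. The cut positions are exactly the block boundaries, so $D'$ is recovered from them. This bijection is the main (and only nontrivial) step; the subtle point to get right is the roles of the two kinds of edges. The forced members of $B$ are the ascents ($x\le y$, so the edge must be in $B$). Descent edges are free, and we must decide consistently whether "in $B$" corresponds to cut or not cut. Either convention works by symmetry, since each descent contributes a factor of $2$ regardless.

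Alternatively, as a cross-check, I would verify the recurrence $a(n)=\sum_{k\ge1}\binom{n}{k}a(n-k)$ for small $n$. For $n=1,2,3$ the sum gives $1$, $1+2=3$, and $1+2\cdot4+4\cdot1=13$, since the descent counts over $S_3$ are $0,1,1,1,1,2$, giving $1+4\cdot2+4=13$. These match A000670.
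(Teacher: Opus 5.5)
Your proof is correct and uses the same argument as the paper's. An ordered partition is expanded into a permutation by writing each block in decreasing order, so $S$-ascents are forced block boundaries and $S$-descents are free, which gives $\sum_{\sigma}2^{\mathrm{des}(\sigma)}$ models for the fixed order. For a cleaner write-up, take $D'$ to be the descents \emph{in} $B$, so that $B$ is exactly the set of block boundaries, matching the forced ascent atoms. Your ``not in $B$'' choice is still a bijection but gives $B$ a mixed meaning, and the paper's informal gloss of $B$ as marking ``the same subset'' should likewise be read as marking a boundary.
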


\begin{proof}
The sequence can be expanded to a permutation in the way that for each subset, we write down the numbers in descending order.
For $\Phi_3$ when fixing the linear order $\lopred$, the successor relation $S$ represents a permutation of domain elements.
The binary predicate $B(x,y)$ indicates whether two adjacent elements $x$ and $y$ belong to the same subset.
According to the expansion, if $(x\lopred y)$, they must belong to different subsets.
However, if $(y\lopred x)$, both cases are allowed.
Therefore, there is a bijective mapping from the models of $\Phi_3$ to arrangements of partitions.
\end{proof}

\begin{lemma}
Given a fixed linear ordering, $\symbfomc(\Phi_4, n)$ equals the $n$-th term of A000629, which is the number of ways to partition $n+1$ elements to disjoint subsets and arrange them into a necklace.
\end{lemma}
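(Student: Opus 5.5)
The plan is to reduce $\Phi_4$ to the descent statistic of the permutation encoded by $S$, and then reuse the computation already carried out for $\Phi_3$ in the previous lemma. Fix the linear order $\lopred$ to be $1\le2\le\dots\le n$. The successor axiom $\succaxiom(S)$ then makes $S$ encode a permutation $\pi=(\pi_1,\dots,\pi_n)$ of $[n]$ with $S(\pi_i,\pi_{i+1})$. Every model is a pair $(\pi,U)$.

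First, for a fixed $\pi$, I count the admissible interpretations of $U$. The only constraint is on $S$-edges $(\pi_i,\pi_{i+1})$ with $\pi_i<\pi_{i+1}$, i.e.\ ascents of $\pi$. There it forces $U(\pi_i)\ne U(\pi_{i+1})$. Descents impose nothing. The constraint graph is therefore a disjoint union of paths: cut the sequence $\pi$ at each descent to get maximal ascending runs. On each run, choosing $U$ of its first element determines the rest by alternation, so each run contributes a factor of $2$. There are $\mathrm{des}(\pi)+1$ runs. Hence
\begin{equation*}
\symbfomc(\Phi_4,n)=\sum_{\pi\in S_n}2^{\mathrm{des}(\pi)+1}=2\sum_{\pi\in S_n}2^{\mathrm{des}(\pi)} .
\end{equation*}

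Second, I identify $\sum_\pi 2^{\mathrm{des}(\pi)}$ with the $n$-th term of A000670. The same counting applied to $\Phi_3$ gives exactly this sum, because in $\Phi_3$ the relation $B$ is forced on ascending $S$-edges and is free on descending ones. So by the previous lemma this sum equals the number of ordered set partitions of $[n]$. It follows that $\symbfomc(\Phi_4,n)=2\cdot\mathrm{A000670}(n)$.

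Third, I show that A000629$(n)$ is twice A000670$(n)$ for $n\ge1$, working from the stated combinatorial description. Take a necklace (cyclic arrangement) of blocks of a set partition of $[n+1]$. Rotate it so that the block containing $n+1$ comes first; this gives a canonical linear arrangement. Deleting $n+1$ produces an ordered partition of $[n]$, with two cases:
\begin{itemize}
\item if $\{n+1\}$ was a singleton block, it simply disappears;
\item otherwise the first block shrinks but stays nonempty.
\end{itemize}
Conversely, every ordered partition of $[n]$ arises exactly once in each way: either prepend the block $\{n+1\}$, or add $n+1$ to the first block. This gives a $2$-to-$1$ correspondence and the factor $2$.

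The main obstacle I expect is this last step, matching the OEIS interpretation precisely. The combinatorial description involves rotation equivalence, so the rotation normalization must be applied carefully to avoid overcounting, and the edge case $n=0$ has to be set aside. If the necklace argument becomes delicate, I would instead cite the OEIS formula $a(n)=2\,\mathrm{A000670}(n)$ for $n\ge1$ directly.
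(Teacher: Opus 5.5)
Your proof is correct, but it takes a different route from the paper's.

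\textbf{The paper's route.} It gives one direct bijection. It prepends $n+1$ to the permutation encoded by $S$ and fixes its color to $U=\top$; the step from $n+1$ to $\pi_1$ is a descent, so it is unconstrained. The blocks of the partition of $[n+1]$ are then read off as the maximal monochromatic runs. Because ascents force a color change, every block appears in descending order, so the arrangement can be recovered. The block of $n+1$ comes first, which matches how the paper cuts the necklace into a sequence.

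\textbf{Your route.} You factor the count through the descent statistic, giving $\symbfomc(\Phi_4,n)=2\sum_{\pi}2^{\mathrm{des}(\pi)}$. You then reuse the $\Phi_3$ lemma to identify the sum with A000670$(n)$. Finally, you prove A000629$(n)=2\,$A000670$(n)$ by a separate $2$-to-$1$ argument. Your rotation normalization is sound, since distinct blocks admit no nontrivial rotational symmetry.

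\textbf{Comparison.} Your version makes the count explicit as an Eulerian polynomial evaluated at $2$. It also cleanly separates the logical analysis of $\Phi_4$ from the combinatorics of the two sequences. The cost is that it is less direct and depends on the $\Phi_3$ lemma, which the paper only sketches. You could avoid that dependence by citing directly that the Eulerian polynomial at $2$ gives the Fubini numbers. The paper's version is self-contained but only sketched.

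The two arguments are closely related. Your dichotomy of prepending $\{n+1\}$ versus adding $n+1$ to the first block is exactly the paper's free choice of whether $\pi_1$ gets the same color as $n+1$. That choice is where the factor $2$ comes from in both proofs.
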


\begin{proof}
One can always break the necklace into a sequence by picking the subset containing element $n+1$ and concatenating the subsets in a clockwise direction.
In $\Phi_4$, we color each element by two colors: $U(x)$ being true and $U(x)$ being false.
One can construct the bijective mapping from the models of $\Phi_4$ to arrangements of partitions by requiring that two adjacent elements are in the same subset if and only if they have the same color.
Since $n+1$ is always the first element in the permutation, we fix its color as $U(x)=\top$, omit it, and count the colorings in the remaining $n$ elements.
\end{proof} 

\begin{lemma}
    Given a fixed linear ordering, $\symbfomc(\Phi_{cards}, n)$ equals the $n$-th  term of A002464, which is the number of permutations of length $n$ without rising or falling successions (also known as Hertzsprung's problem, i.e., ways to arrange $n$ non-attacking kings on an $n \times n$ chessboard, with one king in each row and column).
\end{lemma}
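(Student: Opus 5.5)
Fix the linear order $\lopred$ to be the natural order $1\le 2\le\dots\le n$, so that a model of $\Phi_{cards}$ (with $\loaxiom(\lopred)$ and its successor $Succ_1$ fixed) is determined entirely by the interpretation of $S$. By $\succaxiom(S)$, the relation $S$ is the immediate successor relation of some linear order on $[n]$. Equivalently, $S$ is a Hamiltonian directed path $\sigma_1\to\sigma_2\to\dots\to\sigma_n$, and such paths are in bijection with permutations $\sigma=(\sigma_1,\dots,\sigma_n)$ of $[n]$. So the plan is to show that the sentence $\Phi_{cards}$ cuts out exactly the permutations counted by A002464. The count then follows immediately, because $Succ_1$ and $\lopred$ carry no freedom once the order is fixed and no other predicates occur.

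\textbf{Translating the constraints.} Since $Succ_1(a,b)$ holds exactly when $b=a+1$, the two conjuncts say the following for every $x,y$ with $S(x,y)$. The first forbids $y=x+1$ and the second forbids $x=y+1$. Under the bijection, $S(\sigma_i,\sigma_{i+1})$ holds for each $i<n$ and for no other pairs. Hence $\Phi_{cards}$ holds iff $|\sigma_{i+1}-\sigma_i|\neq 1$ for all $i\in[n-1]$, that is, iff $\sigma$ has no rising succession ($\sigma_{i+1}=\sigma_i+1$) and no falling succession ($\sigma_{i+1}=\sigma_i-1$). This is precisely the defining property of A002464, so the model count equals that term.

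\textbf{The king interpretation.} To connect with the king formulation stated in the lemma, place a king at $(i,\sigma_i)$ for each $i$. This puts one king in each row and column. Two kings attack each other iff they are in adjacent rows and their columns differ by at most $1$, which for distinct columns means $|\sigma_{i+1}-\sigma_i|=1$. So non-attacking placements correspond bijectively to succession-free permutations.

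The only point needing care is that the interpretation of $S$ really is a bijective image of permutations. Each linear order has a unique successor relation, and distinct linear orders give distinct successor relations, since the path determines the order. I expect no real obstacle here; it is the routine identification of Definition~\ref{def:succ} with Hamiltonian paths. Edge cases $n=1$ (one model) and $n=2$ (zero models) should be checked against the OEIS indexing.
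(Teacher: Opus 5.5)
Your proof is correct and follows essentially the same route as the paper: you fix $\lopred$, identify the interpretation of $S$ with a permutation of $[n]$, and read the two conjuncts of $\Phi_{cards}$ as forbidding rising and falling successions. Your version makes explicit several steps the paper's short argument leaves implicit, namely the bijection between successor relations and permutations, the king reformulation, and the small-$n$ sanity checks against the OEIS indexing.
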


\begin{proof}
When  fixing the linear order $\lopred$ for $\Phi_{cards}$, the successor relation $S$ represents a permutation of domain elements.
When we have a pair of neighboring elements in the permutation, i.e., $S(x,y)$, then the sentence $\Phi_{cards}$ explicitly prohibits that $x$ was either the immediate predecessor of the immediate successor of $y$, which directly translates to no rising and no falling successions in the permutation.
\end{proof}

\end{document}